\newtheorem{lemma}     {Lemma}[section]
\newtheorem{thm}   [lemma]{Theorem}
\newtheorem{teorema1}   [lemma]{Theorem}
\newtheorem{prop}      [lemma]{Proposition}
\newtheorem{coro}    [lemma]{Corollary}
\newtheorem{cong1}      [lemma]{Conjecture}
\newtheorem{remark1}    [lemma]{Remark}
\newtheorem{defin}     [lemma]{Definition}
\numberwithin{equation}{section}
\newcommand{\und}{\underline}
     \newcommand{\nn}{\nonumber}
\newcommand{\1}{\text{\bf 1}}
\newcommand{\dis}{\displaystyle}
\newcommand{\mmmintone}[1]{{\dis{\int\kern -.43cm
-}}_{\kern-.21cm\substack{#1}}\;\;}
\newcommand{\mmmintwo}[2]{{\dis{\int\kern -.43cm
-}}_{\kern-.21cm\substack{#1}}^{\substack{#2}}\;\;}
\newcommand{\submint}{{\scriptstyle{\int\kern -.66em -}}}
\newcommand{\submintone}[1]{{\scriptstyle{\int\kern -.66em
-}}_{\scriptscriptstyle{\kern-.21em\substack{#1}}}}
\newcommand{\fracmint}{{\textstyle{\int\kern -.88em -}}}
\newcommand{\fracmintone}[1]{{\textstyle{\int\kern -.88em
-}}_{\scriptscriptstyle{\kern-.21em\substack{#1}}}\;}
\def\mintone{\protect\mmmintone}
\newcommand{\eps}{\epsilon}
\newcommand{\ga}{\gamma}
\newcommand{\Ga}{\Gamma}
\newcommand{\Om}{\Omega}
\newcommand{\om}{\omega}
\newcommand{\si}{\sigma}
\newcommand{\Si}{\Sigma}
\newcommand{\la}{\lambda}
\newcommand{\La}{\Lambda}
\newcommand{\nada}[1]{}
\begin{document}

%\title[Continuum Potts model]{Potts models in the continuum. Uniqueness and
%exponential decay  in the restricted ensembles}

\title[Continuum Potts model]{Coexistence of ordered and disordered
phases in Potts models in the continuum}

\date{\today}
\author [A.\ De Masi] {Anna De Masi}
\address{Anna de Masi, Dipartimento di Matematica Pura ed Applicata,
Universit\`a di L'Aquila, Via Vetoio (Coppito) 67100 L'Aquila,
Italy} \email{demasi@univaq.it}

\author[I.\ Merola] {Immacolata Merola}
\address{Immacolata Merola, Dipartimento di Matematica Pura ed Applicata,
Universit\`a di L'Aquila, Via Vetoio (Coppito) 67100 L'Aquila,
Italy} \email{merola@univaq.it}

\author[E.\ Presutti] {Errico Presutti}
\address{Errico Presutti, Dipartimento di Matematica, Universit\`a
di Roma Tor Ver\-ga\-ta, 00133 Roma, Italy}
\email{presutti@mat.uniroma2.it}

\author[Y.\ Vignaud] {Yvon Vignaud}
\address{Yvon Vignaud, TU Berlin- Fakult\"{a}t II, Institut f\"{u}r
Mathematik D-10623 Berlin, Germany}
\email{vignaud@mail.tu-berlin.de}

\thanks{\hskip -.4cm {\it 2000 AMS Subject Classification.}\/
82B21, 82B26. \vskip.2em \noindent {\it Key words and phrases.}\/
Phase transition in Continuum Particle Systems. Pirogov Sinai
theory. }

\begin{abstract}

This is the second of two papers on a continuum version of the Potts
model, where particles are points in $\mathbb R^d$, $d\ge 2$, with a
spin which may take $S\ge 3$ possible values. Particles with
different spins repel each other via a Kac pair potential of range
$\ga^{-1}$, $\ga>0$. In this paper we prove phase transition, namely
we prove that if the scaling parameter of the Kac potential is
suitably small, given any temperature there is a value of the
chemical potential such that at the given temperature and chemical
potential there exist $S+1$ mutually distinct DLR measures.

\end{abstract} \maketitle

\tableofcontents

%%%%%%%%%%%%%%%%%%%%%%%%%%%%%%%%%%%%%%%%%%%%%%%%%%%%%%%%%%%%%%%%
\section{{\bf Introduction}}
        \label{sec:1}
%%%%%%%%%%%%%%%%%%%%%%%%%%%%%%%%%%%%%%%%%%%%%%%%%%%%%%%%%%%%%%%%

The conjecture that {\em mean field phase diagrams are well
approximated by systems with long range interactions} cannot be
taken literally as  it obviously fails in one dimensional systems
(if the second moment of the interaction is finite), moreover the
mean field critical exponents are [believed to be] different from
those computed for finite range interactions. With proper caveat
however the conjecture is generally regarded as correct and indeed
there are  mathematical proofs mainly referring to specific models
and focused on the occurrence of  phase transitions. The choice of
the approximating hamiltonian is not at all arbitrary and the
results so far have been obtained for reflection  positive
interactions, \cite{chayes},  and for Kac potentials, \cite{CP}. The
former  choice is clearly motivated by a powerful and well developed
theory, the latter class seems more general, in particular includes
systems of particles in the continuum as the one  considered in the
present paper. We will in fact study here a continuum version of the
classical Potts model. Its mean field  free energy is
   \begin{equation}
      \label{z1.1}
F^{\rm mf}_{\beta,\la}(\rho) = \frac 12 \sum_{s\ne s'}
\rho_s\rho_{s'} -\la \sum_s \rho_s - \frac 1 \beta
\mathcal{S}(\rho),\qquad \mathcal{S}(\rho)=- \sum_s \rho_s [\log
\rho_s-1 ]
     \end{equation}
$\rho=\{\rho_1,..,\rho_S\}\in \mathbb R_+^S$, $\rho_s$ represents
the density of particles with spin $s$, $s\in \{1,..,S\}$, $S\ge 3$;
$\beta$ the inverse temperature; $\la$ the chemical potential.

Despite the simplicity of the model its thermodynamics, which is
defined by minimizing $F^{\rm mf}_{\beta,\la}(\rho)$ over $\rho \in
\mathbb R_+^S$, has a rather interesting structure. In  \cite{GMRZ}
and \cite{DMPV2}  it is proved that the resulting phase diagram is
characterized by a critical curve $\la=\la_\beta, \beta >0$, as  in
Figure 1.

\begin{figure}
\centering
\includegraphics[width=.8\textwidth]{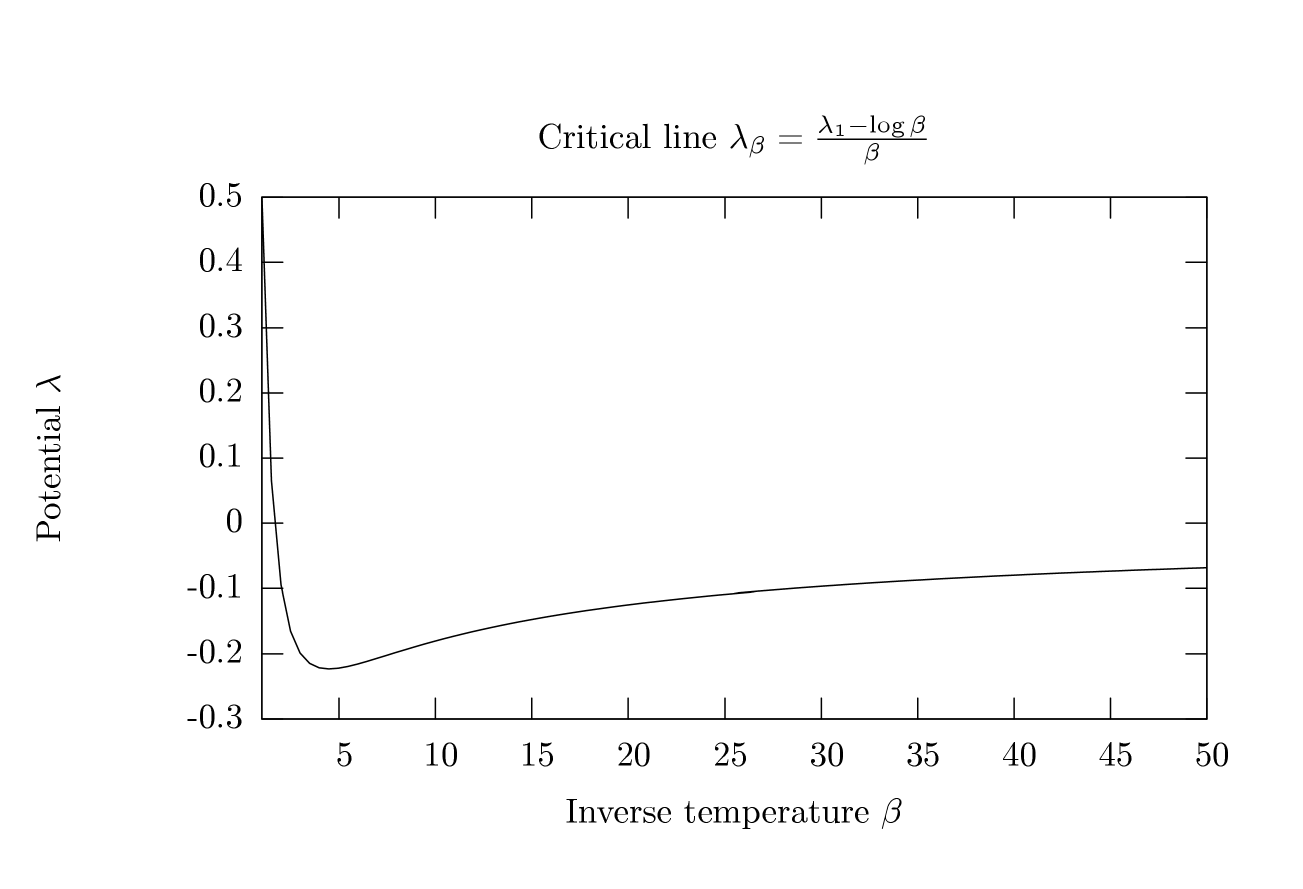}
\caption{Phase Diagram of the Mean field Potts gas}
\label{phasediagram}
\end{figure}

$F^{\rm mf}_{\beta,\la_\beta }$ has $S+1$ minimizers $\rho^{(k)}
=(\rho_s^{(k)}, s=1,..,S)$, $k=1,..,S+1$. There are positive numbers
$a$, $b<c$ so that
    \begin{equation}
    \label{orderdis}
 \rho^{(S+1)}_s=a,\,\,\,    \forall
s,\qquad \text{ for }k\le S:\;\; \rho^{(k)}_s=b\,\,\,    \forall
s\ne k,\,\,\, \quad\rho^{(k)}_k=c
    \end{equation}
 Furthermore
        \begin{equation}
    \label{density}
Sa < b^*,\quad b^*:=(S-1)b+c
         \end{equation}
so that the total density of the state $\rho^{(S+1)}$ is smaller
than the total density in any of the ordered critical points
$\rho^{(k)}$, $k\le S$, which is in fact equal to $b^*$.

When $\la>\la_\beta$, only the ordered states survive and there are
$S$ minimizers, when $\la<\la_\beta$, only the disordered state
survives and there is a unique minimizer. Therefore when crossing
vertically the critical curve the total density jumps, a phenomenon
which can be related to magnetostriction as argued in \cite{DMPV2}.

The Kac proposal applied to \eqref{z1.1} leads to hamiltonians of
the form
    \begin{equation}
      \label{z1.2}
H_{\ga,\la} (q) = \frac 12 \sum_{i\ne j} V_\ga(r_i,r_j) \text{\bf
1}_{s_i\ne s_j} - \la n
     \end{equation}
where   $q=(...,r_i,s_i,...)$, $i=1,..,n$, $r_i\in \mathbb R^d$,
$s_i\in \{1,..,S\}$, is a finite configuration of particles with
spin; $\la$ the chemical potential and $V_\ga(r_i,r_j)= \ga^dV(\ga
r_i,\ga r_j)$, $V(r,r')$ a symmetric probability kernel, say with
range 1. An analysis a la Lebowitz and Penrose, \cite{LP}  (see also
Gates and Penrose, \cite{GP}) proves that the mesoscopic ($\ga \to
0$) behavior of the system with hamiltonian $H_{\ga,\la} (q)$ is
described by
  \begin{equation}
      \label{z1.3}
F_{\beta,\la}(\rho) = \frac 12 \sum_{s,s':s\ne s'}\int
\rho_s(r)V(r,r')\rho_{s'}(r') drdr'- \int  \{\frac {\mathcal
S(\rho(r))} \beta  +\la \sum_s \rho_s(r)\}dr
     \end{equation}
as a functional defined on functions $\rho \in L^\infty(\mathbb R^d,
\mathbb R_+^S)$ with compact support. Let $\La$ a torus in $\mathbb
R^d$, call $F_{\beta,\la,\La}(\rho)$ the functional  \eqref{z1.3} on
$L^\infty(\La, \mathbb R_+^S)$, then obviously
  \begin{equation}
      \label{z1.4}
\inf _{\rho \in L^\infty(\La, \mathbb R_+^S)}F_{\beta,\la,\La}(\rho)
 \le |\La| \inf_{\rho \in \mathbb R_+^S} F^{\rm mf}_{\beta,\la}(\rho)
     \end{equation}
(just restrict the inf on the l.h.s.\ to constant functions).  Thus
a preliminary condition for the particle model to have mean field
behavior is to require that \eqref{z1.4} holds with equality, which
(we suspect) requires extra conditions on $V$.

In \cite{LMP} the Kac proposal has been modified in such a way that
the above condition is automatically satisfied. Call  $e^{\rm
mf}_{\la}(\rho)$ the mean field
 energy, in our case
    \begin{equation}
      \label{z1.5}
e^{\rm mf}_{\la}(\rho) = \frac 12\sum_{s,s':s\ne s'} \rho_s\rho_{s'}
-\la \sum_s \rho_s
     \end{equation}
(i.e.\ the first two terms on the r.h.s.\ of \eqref{z1.1}, the third
one is the contribution of the entropy to the free energy) and set
   \begin{equation}
      \label{z1.6}
H_{\ga,\la} (q) = \int  e^{\rm mf}_{\la}(J_\ga * q(r))\,dr,
\;\;\;\text{$J_\ga * q(r)\in \mathbb R_+^S$, $\dis{ (J_\ga *
q)(r,s)= \sum_i J_\ga(r,r_i) \text{\bf 1}_{s_i=s}}$}
     \end{equation}
$J_\ga(r,r')= \ga^dJ(\ga r,\ga r')$ with $J(r,r')$  a smooth,
symmetric, translational invariant probability kernel with range 1,
$J(r,r')=0$ if $|r-r'|\ge 1$.

Namely the ``modified Kac proposal'' we are adopting is to suppose
that {\em the particle hamiltonian has an energy density at point
$r$ given by the mean field free energy computed on the empirical
density $J_\ga * q(r)$}. Analogous prescription can be applied
whenever the mean field order parameter is a density (or as in this
case a collection of densities).  The free energy functional
associated to \eqref{z1.6} is, supposing $\La$ a torus in $\mathbb
R^d$,
  \begin{equation}
      \label{z1.7}
F_{\beta,\la,\La}(\rho) =     \int_\La \{ e^{\rm mf}_{\la}(J  *
\rho(r)) - \frac{\mathcal  S(\rho(r)) }\beta   \}
     \end{equation}
 which can be rewritten as
   \begin{equation}
      \label{z1.8}
F_{\beta,\la,\La}(\rho) =     \int_\La \{ e^{\rm mf}_{\la}(J  *
\rho(r)) - \frac{\mathcal  S(J*\rho(r)) }\beta   \} + \frac 1\beta
\int_\La \{  \mathcal  S(J*\rho) -J*\mathcal S(\rho) \}
     \end{equation}
By convexity the second integral is non negative and 0 on the
constants; the first one is minimized by taking $\rho(r)$ constantly
equal to the  minimizer of $F^{\rm mf}_{\beta,\la}(\cdot)$.  Thus
\eqref{z1.4} holds in this case with equality.

Notice that the hamiltonian  $H_{\ga,\la} (q)$ of \eqref{z1.6} has
the form \eqref{z1.2} because it can be written as
    \begin{equation}
      \label{z1.9}
H_{\ga,\la} (q) = \frac 12 \sum_{i\ne j}(J_\ga*J_\ga)(r_i,r_j)
\text{\bf 1}_{s_i\ne s_j} - \la n
     \end{equation}
Thus the LMP prescription in this case is just a positivity
assumption on the kernel $V$ (more precisely $V=J_\ga*J_\ga$). In
the sequel we will restrict to the choice
\eqref{z1.6}--\eqref{z1.9}.
 The main result in this paper is

\vskip1cm

    \begin{thm}
    \label{thme1.1}
For any $d\ge 2$, $S\ge 3$, $\beta >0$ there is $\ga^*>0$ and for
any $\ga\le \ga^*$ there are  $\la_{\beta,\ga}$ and $S+1$ DLR
measures at $(\beta,\la_{\beta,\ga})$, denoted by $\mu^{(k)},
k\in\{1,\dots,S+1\}$, with the following properties.

$\bullet$\; Each $\mu^{(k)}$ is a translational invariant, extremal
DLR measure (with trivial $\si$-algebra at infinity);

$\bullet$\; any translational invariant DLR measure is a convex
combination of $\big\{\mu^{(k)}, k\in\{1,..,S+1\}\big\}$;

$\bullet$\; calling $\rho_{\ga,s'}^{(k)},s'\in \{1,..,S\}$ the
average   density of particles with spin $s'$ in $\mu^{(k)}$ and
$\rho^{(k)}_{s}$
 the mean field values, $\dis{\lim_{\ga\to 0}
\rho_{\ga,s'}^{(k)}=\rho^{(k)}_{s'}}$;

$\bullet$\; any measure $\mu^{(k)}$, $k\le S$, is invariant under
any exchange of spin labels which does not involve $k$ while
$\mu^{(S+1)}$ is invariant under any  exchange of spin labels.
    \end{thm}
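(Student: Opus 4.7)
The plan is to adapt the Pirogov-Sinai contour analysis developed for Kac potentials in [LMP] to the multi-phase (Potts) setting. The overall strategy is to rewrite the Gibbs measure at $(\b,\la)$ as a polymer gas of contours separating spatial regions in which the coarse-grained empirical density sits near one of the $S+1$ mean field minimizers $\rho^{(k)}$; by tuning $\la=\la_{\b,\ga}$ to equalize the effective pressure of all $S+1$ ``restricted'' phases, each phase can be shown to persist as an extremal DLR state for $\ga$ sufficiently small.

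Concretely, I would fix scales $1\ll \ell_-\ll\ga^{-1}\ll\ell_+$ and partition $\R^d$ into cubes of side $\ell_\pm$. For each mesoscopic cube $C$ and each $k\in\{1,\dots,S+1\}$, I would define a phase indicator $\eta^{(k)}(C;q)$ equal to $1$ exactly when the averaged density $J_\ga * q$ restricted to $C$ lies in a small neighborhood of $\rho^{(k)}$. A point is \emph{good of type $k$} if $\eta^{(k)}=1$ on a sufficiently large box around it, and a contour is a maximal connected component of non-good points together with the labels assigned on its boundary. Using the rewriting \eqref{z1.8}, the energy of a configuration splits into a reference contribution (on good regions) plus an excess supported on the contour; the crucial quantitative ingredient is a Peierls bound showing that this excess dominates $c|\Ga|$, with $c>0$ depending on $\b$ and on the Hessian of $F^{\rm mf}_{\b,\la}$ at its minimizers.

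The main obstacle is the asymmetry between ordered and disordered phases recorded in \eqref{density}: the total particle density $Sa$ of the disordered minimizer differs from the density $b^*$ of the ordered ones. This forces the balancing value $\la_{\b,\ga}$ to be determined by a genuine equation rather than by symmetry, and the Peierls constant must remain uniform in $\la$ ranging in a small interval around $\la_\b$. I would handle this by computing the restricted pressures for each phase via a cluster expansion around the corresponding mean field minimizer, showing that the derivatives in $\la$ of the ordered and disordered pressures are approximately $b^*$ and $Sa$ respectively, hence they cross transversally; the implicit function theorem then yields $\la_{\b,\ga}$ with $\la_{\b,\ga}\to\la_\b$ as $\ga\to 0$. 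Note that the $S$ ordered pressures coincide for every $\la$ by invariance of $H_{\ga,\la}$ under permutations of spin labels, so only one equation is to be solved.

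Once the Peierls bound and $\la_{\b,\ga}$ are in place, a Zahradnik-type construction produces the $S+1$ DLR measures $\mu^{(k)}$ as thermodynamic limits with boundary conditions forcing the phase label $k$. Translational invariance, extremality and triviality of the tail $\si$-algebra follow from the exponential decay of contour probabilities inherent in the expansion, and the convergence $\rho_{\ga,s'}^{(k)}\to\rho_{s'}^{(k)}$ is an immediate consequence of the fact that, under $\mu^{(k)}$, a given block fails to be of type $k$ with exponentially small probability. The symmetry assertions are inherited from the symmetry of the boundary condition: $\mu^{(S+1)}$ is invariant under every exchange of spin labels, whereas $\mu^{(k)}$ $(k\le S)$ is invariant under those exchanges that fix $k$. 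Finally, the classification of translational invariant DLR measures as convex combinations of the $\mu^{(k)}$ follows from the fact that in any such state almost every block must carry one of the $S+1$ phase labels, and conditioning on this label recovers one of the $\mu^{(k)}$.
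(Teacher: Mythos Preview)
Your outline matches the paper's approach closely: the same multi-scale phase indicators and contours, Zahradn\'{\i}k-style cutoff weights, tuning of $\la_{\b,\ga}$ via a pressure-equality argument exploiting \eqref{density}, and the Gallavotti--Miracle-Sole classification of translation-invariant states. The one substantive ingredient you do not name is the exponential decay of correlations in the restricted (single-phase) ensembles, proved in the companion paper \cite{DMPV2}, which the paper uses both for the surface corrections to the pressure (hence to close the Peierls bound, not merely a Hessian computation) and for the coupling that yields the relativized uniqueness of the $\mu^{(k)}$.
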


\vskip1cm

The proof of  Theorem \ref{thme1.1}  uses specific features of the
model besides  the property that \eqref{z1.4} is true with equality.
Which properties are of general nature and which ones are instead
truly specific of the model is difficult to say. To a great extent
the proof follows from the analysis (a la Pirogov-Sinai) of the  LMP
model in Chapter 11 and 12 of \cite{leipzig}, but there are several
points where we need to overcome important difficulties not present
in the  LMP model. Among them the main one is about the  exponential
decay of correlations in the restricted ensemble, Theorem 3.1 of the
companion paper \cite{DMPV2}. How to go from such a result to the
proof of Theorem  \ref{thme1.1} is the content of the present paper.

Theorem \ref{thme1.1} does not claim anything away from
$(\beta,\la_{\beta,\ga})$, this allows to simplify the traditional
Pirogov-Sinai approach.  The conjecture is that when $\la$ varies in
$(\la_{\beta,\ga}-\eps,\la_{\beta,\ga}+\eps)$, $\eps>0$ suitably
small, then we go from uniqueness $\la<\la_{\beta,\ga}$ to $S$
extremal states, $\la>\la_{\beta,\ga}$,
 (always referring to translational invariant DLR states).
 The Potts model does not exactly fall in the class
 considered in \cite{BMPZ} but presumably the analysis
 in \cite{BMPZ} can be extended to prove the above conjecture.
 It is also plausible that the estimates are
uniform in a small neighborhood of $\beta$, in such a case we would
have local  closeness of the mean field and the finite $\ga$  phase
diagrams, thus partially confirming the validity of the conjecture
 in the beginning of the introduction.

\vskip.5cm

In Part I we define the model and establish the main notation,
Section \ref{sec:2}, and then prove Theorem \ref{thme1.1}, Section
\ref{sec:z31}, supposing that the Peierls estimates on contours are
valid.  In Part II we prove the Peierls estimates, this being the
more technical part of the paper.

\vskip1.5cm

%%%%%%%%%%%%%%%%%%%%%%%%%%%%%%%%%%%%%%%%%%%%%%%%%%%%%%%%%%%%%%%%
\part{}
%%%%%%%%%%%%%%%%%%%%%%%%%%%%%%%%%%%%%%%%%%%%%%%%%%%%%%%%%%%%%%%%

%%%%%%%%%%%%%%%%%%%%%%%%%%%%%%%%%%%%%%%%%%%%%%%%%%%%%%%%%%%%%%%%
\section{{\bf Main notation and definitions}}
        \label{sec:2}
%%%%%%%%%%%%%%%%%%%%%%%%%%%%%%%%%%%%%%%%%%%%%%%%%%%%%%%%%%%%%%%%
We start with the basic definitions. They are quite standard and
consistent with those of the companion paper \cite{DMPV2}.

 \vskip.5cm

\subsection{Geometrical notions}
 \label{subsecZ2.0}

We give the following definitions.

$\bullet$\;  {\em The partitions  $\mathcal D^{(\ell)}$.}

We denote by $\mathcal D^{(\ell)}$, $\ell>0$, the partition
$\{C^{(\ell)}_x, \, x\in \ell \mathbb Z^d\}$ of
$\mathbb R^d$ into the cubes $C^{(\ell)}_x=\{r\in \mathbb R^d: x_i
\le r_i <x_i+\ell, i=1,..,d\}$ ($r_i$ and $x_i$ the cartesian
components of $r$ and $x$). We call $C^{(\ell)}_r$ the cube which
contains $r\in\mathbb R^d$.

\vskip.5cm

$\bullet$\;  {\em  $\mathcal D^{(\ell)}$-measurable sets and functions.}

A set is $\mathcal D^{(\ell)}$-measurable if it is union
of cubes in $\mathcal D^{(\ell)}$. A function $f:\mathbb R^d\times\{1,\dots,S\}\to \mathbb R$
is $\mathcal D^{(\ell)}$-measurable if its inverse images are
$\mathcal D^{(\ell)}$-measurable sets, or, equivalently, if it is constant on the cubes of  $\mathcal D^{(\ell)}$.

\vskip.5cm

$\bullet$\;  {\em  $\mathcal D^{(\ell)}$-boundaries of a set.}

Calling two sets connected if their
closures have non empty intersection,
given a $\mathcal
D^{(\ell)}$-measurable region $\La$ we call $\delta_{\rm
out}^{\ell}[ \La]$ the union of all cubes of $\mathcal D^{(\ell)}$
in $\La^c\equiv\mathbb R^d\setminus \La$ which are connected to
$\La$. Analogously we call $\delta_{\rm in}^{\ell}[ \La]$ the union
of all cubes of $\mathcal D^{(\ell)}$ in $\La$ which are connected
to $\La^c$.

\vskip1cm

\subsection{Phase space, topology and free measure}
 \label{subsecZ2.1}
We start with the definition of the phase space.

$\bullet$\;  {\em The phase space $\mathcal{Q}$.}

It is convenient to represent the phase space $\mathcal{Q}$ of the Potts model as
a spin system on the lattice, the spins taking values in a non compact
space.  With $\{C_i\equiv C_i^{(1)}, i\in \mathbb Z^d\}$  the cubes of the partition $\mathcal D^{(1)}$,  we then define
$\dis{\mathcal{Q}_{C_i}:=\bigcup_{n\ge
0}(C_i \times\{1,\dots,S\})^n}$ and $\mathcal{Q} = \prod _i \mathcal{Q}_{C_i}$.

Thus an element $q\in \mathcal Q$ is a sequence  $q^{(i)}\in  \mathcal{Q}_{C_i}$, if
$q^{(i)}=(r_1,s_1,\dots,r_n,s_n)$ we will then say that in $C_i$ there are $n$ particles at positions $r_j$
 with spins $s_j$, $j=1,..,n$.  As particles are undistinguishable,
 physical observables
are functions symmetric under exchange of particles labels and the
actual physical phase space is $\mathcal{Q}^{\rm sym} = \prod _i
\mathcal{Q}_{C_i}^{\rm sym}$ which is obtained by taking  the
quotient  under permutation of indices.  To simplify notation in the
sequel we will  just write  $\mathcal{Q}$ being clear from the
context if we are referring to $\mathcal{Q}^{\rm sym}$. Since labels
are unimportant we can write a configuration $q\in \mathcal Q$  as a
sequence $q=\{...,r_j,s_j,...\}$, $r_j\in \mathbb R^d$, $s_j\in
\{1,..,S\}$, indeed $q_{C_i}=q\cap C_i$, namely the set of all
$(r_j,s_j)\in q: r_j\in C_i$, identifies the component of $q$ in $
\mathcal{Q}_{C_i}$. Given $q=\{...,r_j,s_j,...\}\in \mathcal Q$ we
write $q(s)=\{(r_j,s_j)\in q: s_j=s\}$ and we call
$\mathcal{Q}_\La:=\{q_\La\in\mathcal{Q}: q_\La=\{(r_j,s_j)\in q:
r_j\in \La\} \}$.  We finally denote by $q\cup q'$ the configuration
which collects all the particles in $q$ and $q'$, evidently
referring here to indistinguishable particle configurations.

\vskip.5cm

$\bullet$\;  {\em Topological properties of $\mathcal{Q}$.}

We consider   $\dis{\mathcal{Q}_{C_i}=\bigcup_{n\ge 0}(C_i
\times\{1,\dots,S\})^n}$ equipped with its natural topology and
$\mathcal{Q}$ with the product topology  calling $\Si$ the
corresponding Borel $\si$-algebra.

While  the product topology in $\mathcal{Q}$ is not physically correct (the path of a particle moving
continuously from a cube $C_i$ to another one is not continuous in the product topology) yet the Borel
structure is not changed and since we are interested in measure theoretically properties the above definition
becomes acceptable.

\vskip.5cm

$\bullet$\;  {\em The free measure.}

We denote by $\nu(dq_{C_i})$ the measure on $\mathcal Q_{C_i}$ which
restricted to $(C_i \times\{1,.,S\})^n$ is equal to $(n!)^{-1}
dr_1.. dr_n$, such that if $f$ is a bounded measurable function on
 $\mathcal Q_{C_i}$
      \begin{equation*}
%    \label{integ}
\int_{\mathcal{Q}_{C_i}}f(q)\nu(dq_{C_i})=\sum_{n=0}^\infty \frac
1{n!}\sum_{s_1,\dots, s_n}\int_{C_i^n} f(r_1,s_1,\dots,
r_n,s_n)dr_1\dots dr_n
    \end{equation*}
If $\La$ is a bounded $\mathcal D^{(1)}$ measurable region we define {\em the free measure}
$\dis{\nu(dq_{\La}) =\prod_{i\in \La\cap \mathbb Z^d}\nu(dq_{C_i})}$ on $\mathcal Q_\La$ observing that
for any measurable set $\Delta \subseteq \La$
       \begin{equation}
    \label{integ}
\int_{\mathcal{Q}_\Delta}f(q)\nu(dq_\La)=\sum_{n=0}^\infty \frac
1{n!}\sum_{s_1,\dots, s_n}\int_{\Delta^n} f_n(r_1,s_1,\dots,
r_n,s_n)dr_1\dots dr_n
    \end{equation}

 \vskip1cm
\subsection{Energy and Gibbs measures}
 \label{subsec1.1}

We have already defined the  energy $H_{\ga,\la} (q)$ (of a finite
configuration), see \eqref{z1.6}.
The energy in a bounded set $\La$ with boundary condition $\bar
q_{\La^c}$  is defined as usual as
   \begin{equation}
      \label{1.3}
H_{\La,\la} (q_\La|\bar q_{\La^c}) = H_\la(q_\La\cup \bar q_{\La^c})
- H_\la(\bar q_{\La^c})= \int_{\mathbb R^d}\big[ e_\la^{\rm
mf}\big(J_\ga\star (q_\La\cup q_{\La^c}) \big)-e_\la^{\rm
mf}\big(J_\ga\star q_{\La^c}\big)\big]dr
     \end{equation}
The expression on the r.h.s.\ depends only on the particles of $\bar
q_{\La^c}$ at distance $\le2 \ga^{-1}$.

In the sequel we will sometimes replace $\bar q_{\La^c}$ by
$\si$-finite measures by
setting
   \begin{equation}
    \label{del}
 J_\ga\star d\mu(r,s)=\int_{\mathbb R^d}J_\ga(r,r')d\mu(r',s)
    \end{equation}
$d\mu(r,s)$ any non negative $\si$-finite measure on
$\mathbb R^d\times\{1,\dots,S\}$.
By identifying  $q\in \mathcal{Q}$
as a sum of Dirac deltas
 we may regard the convolution  $J_\ga\star q$ as a particular case of \eqref{del}.
 In particular we will often consider
      \begin{equation}
    \label{z2.7}
   H_{\La,\la} (q_\La|\chi^{(k)}_{\La^c})=  \int_{\mathbb R^d}\big[ e_\la^{\rm
mf}\big(J_\ga\star q_\La + J_\ga\star \chi^{(k)}_{\La^c})
\big)-e_\la^{\rm mf}\big(J_\ga\star \chi^{(k)}_{\La^c}\big)\big]dr
    \end{equation}
where $\chi^{(k)}_{\La^c}:=\rho^{(k)} \mathbf 1_{\La^c}$ were
$\rho^{(k)}=(\rho^{(k)}_1,\dots \rho^{(k)}_S)$,
$k\in\{1,\dots,S+1\}$ is one of the minimizers of the mean field
free energy $F^{\rm mf}_{\beta,\la_\beta}$.

The Gibbs measure in $\La$ ($\La$ a bounded, measurable set in
$\mathbb R^d$) with boundary conditions $\bar q$ is
      \begin{equation}
    \label{z2.8}
  G_{\la,\La,\bar q} (dq_\La)= Z^{-1}_{\La,\bar q}
  e^{-\beta H_{\La,\la} (q_\La|\bar q_{\La^c})}d\nu(q_\La)
    \end{equation}
where the partition function $Z_{\La,\bar q}$  is the normalization
factor in \eqref{z2.8}. We will also consider more general boundary
conditions with $\bar q$ replaced by $\si$-finite measure, the
formula is again
 \eqref{z2.8} with the energy defined using \eqref{del}.

\vskip1cm
%%%%%%%%%%%%%%%%%%%%%%%%%%%%%%%%%%%%%%%%%%%%%
\subsection{Phase indicators and restricted phase space}
    \label{sec:a2.2bis}
%%%%%%%%%%%%%%%%%%%%%%%%%%%%%%%%%%%%%%%%%%%%%

\vskip.5cm

As usual in statistical mechanics local equilibrium and deviations
from equilibrium  are defined in terms of ``averages'' and of
``coarse grained'' variables. We briefly recall the main notion
adapted to the present context. Given a configuration $q\in
\mathcal{Q}$ we denote by $n^{(\ell)}(r,s;q)$ the number of
particles in the configuration $q$ which are in the cube
$C_r^{(\ell)}$ and have spin $s$, namely
    \begin{equation}
      \label{a3.1b}
 n^{(\ell)}(r,s;q):= |q(s)\cap C_r^ {(\ell)}|,\qquad s\in \{1,\dots,S\}
     \end{equation}
 We also define the density of particles in $C_r^{\ell}$,
    \begin{equation}
      \label{a3.1}
 \rho^ {(\ell)}(r,s;q):= \frac{n^{(\ell)}(r,s;q)} {\ell^d},\qquad s\in \{1,\dots,S\}
     \end{equation}

The phase indicators are introduced using two scales $\ell_-$
and $\ell_+$ and an accuracy parameter $\zeta$. All these numbers
depend on $\ga$ and there is much flexibility about their choice,
for the sake of definitiveness we fix them as follows:

 \begin{defin} (Choice of parameters).
    \label{contorni3}
{\sl We choose $\ell_-$ and $\ell_+$ as  functions of
$\gamma$:
    \begin{equation}
    \label{elle}
\ell_{-}=\gamma^{-1+\alpha_-},\quad
\ell_{+}=\gamma^{-1-\alpha_+},\qquad \alpha_-< 1,\quad\alpha_+<1
    \end{equation}
supposing for simplicity that $\ga^{-1}$ and
$\ga^{-(1\pm\alpha_\pm)}$ are both in $\{2^n,n\in \mathbb N\}$.
We
also choose
$$\zeta=\gamma^a,\qquad a<1$$}

We require that $\alpha_+>\alpha_->a$ and
    \begin{equation}
 2ad +\alpha_- d^2 <\frac{\alpha_+}{2},\qquad \frac 12
 -2d\alpha_+>0,\qquad \frac 14
 -d(\alpha_+-\alpha_-)>0
 \label{aI.3.11}
     \end{equation}
    \begin{eqnarray}
\frac{\alpha_++\alpha_-}{1-\alpha_-}<\frac 1 d,\qquad
4(\alpha_++\alpha_-)+\frac {\alpha_- }{2}<\frac 1 4
 \label{aI.3.11b}
     \end{eqnarray}
            \end{defin}

    \vskip1cm
 Thus for $\ga$ small, $\ell_-$ is much larger than 1 and much
 smaller than $2\ga^{-1}$ equal to the range of the interaction; it defines  a
 scale large enough to make statistics reliable. Indeed, the scale
$\ell_-$ is used together with the accuracy parameter $\zeta$ to
determine if a configuration (or a density) is close to a mean field
equilibrium value in a cube $C^{(\ell_-)}$. This will be done via
the  {\it phase indicator} that we denote by $\eta$. Local
equilibrium is instead present when the above closeness extends to
regions in the scale $\ell_+$ thus regions with a diameter much
larger than the interaction range. To quantify the local equilibrium
we use the {\it phase indicator on the scale  $\ell_+$} that we
denote by $\Theta$.

\vskip.5cm

For any  $\rho\in L^1(\mathbb R^d\times\{1,\dots,S\})$ we then
define in analogy to \eqref{a3.1}
    $$
 \rho^{(\ell)}(r,s)=\mintone{C^{(\ell_-)}_r} \rho(x,s) dx,
 \qquad \mintone{A}:=\frac 1{|A|}\int_A
    $$
and
   \begin{equation}
      \label{I.1.1}
\eta^{(\zeta,\ell_-)}(r;\rho) = \begin{cases} k &\text{if $|
 \rho^{(\ell)}(r,s) -\rho^{(k)}_s |\le \zeta$},\forall s
\\0&\text{otherwise}\end{cases}
     \end{equation}
and
          \begin{equation}
          \label{Theta}
\Theta^{(\zeta,\ell_-,\ell_+)}(r;\rho) =
    \begin{cases} k & \text{if $\eta^{(\zeta,\ell_-)}(\cdot;\rho)=k$
    in  $ C^{(\ell_+)}_{r} \cup \delta_{\rm
    out}^{\ell_+}[C^{(\ell_+)}_{r}], $}
 \\0 & {\rm otherwise}.
    \end{cases}
     \end{equation}
Recalling \eqref{a3.1}, the previous definitions  extend to particle
configurations $q$ by setting
  \begin{equation}
      \label{aI.1.1}
\eta^{(\zeta,\ell_-)} (r;q) =
\eta^{(\zeta,\ell_-)}(r;\rho^{(\ell_-)}(q;\cdot)), \qquad
\Theta^{(\zeta,\ell_-,\ell_+)}(r;q)=\Theta^{(\zeta,\ell_-,\ell_+)}
(r;\rho^{(\ell_-)}(q;\cdot))
     \end{equation}

We will often drop the suffix ${(\zeta,\ell_-)}$ by writing $\eta$
instead of $\eta^{(\zeta,\ell_-)}$, analogously for $\Theta$.

\vskip1cm

 Given $\La\subseteqq\mathbb{R}^d$ and $k\in\{1,..,S+1\}$,
we define ``the $k$-restricted ensemble'' as
    \begin{equation}
      \label{a3.4bis}
\mathcal X_\La^{(k)}\equiv \mathcal
X_\La^{(k)}(\zeta,\ell_-):=\Big\{ q: \eta^{(\zeta,\ell_-)}(r;q)=k,
\,\,\forall r\in \La\Big\}
     \end{equation}
If $\La=\mathbb{R}^d$ we simply write $\mathcal X^{(k)}$. By an
abuse of notation we also denote by $\mathcal X_\La^{(k)}$ the space
of densities $\rho$ such that $\eta(\cdot;\rho)=k$ in $\La$.

\vskip.5cm
%%%%%%%%%%%%%%%%%%%%%%%%%%%%%%%%%%%%%%%%%%%%%
\subsection{Colored Contours}
    \label{sec:3.1}
%%%%%%%%%%%%%%%%%%%%%%%%%%%%%%%%%%%%%%%%%%%%%

 First observe that
    $$
 \{r:\Theta^{(\zeta,\ell_-,\ell_+)}(q;r)=k\} \cap
 \{r:\Theta^{(\zeta,\ell_-,\ell_+)}(q;r)=h\}=\emptyset,\qquad h\ne k
    $$
In fact the two regions are separated by the set
$\{r:\Theta^{(\zeta,\ell_-,\ell_+)}(q;r)=0\}$ which is what we call
spatial support of a contour. Given a configuration $q$ such that
$\{r\in \mathbb R^d: \Theta(r;q)=0\}$ is bounded, we call contour a
pair $\Ga=({\rm sp}(\Ga),\eta_\Ga)$ where ${\rm sp}(\Ga)$, the
spatial support of $\Ga$ is
    \begin{equation}
    \label{spga}
{\rm sp}(\Ga)= \text{ maximal connected component of the region
}\{r\in \mathbb R^d: \Theta(r;q)=0\}
    \end{equation}
  and $\eta_\Ga(r)=\eta(r;q),
r\in {\rm sp}(\Ga)$, its specification. Abstract contours  $\Ga$ are
the pairs which arise from some configuration as above.

 We decompose the complement of sp$(\Ga)$ as
$\text{sp}(\Ga)^c = \text{ext}(\Ga) \cup \text{int}(\Ga)$ where
$\text{ext}(\Ga)$ is the unbounded, maximal connected component of
$\text{sp}(\Ga)^c$. We denote by
    \begin{equation}
    \label{ngamma}
 N_\Ga=\frac{|{\rm sp}(\Ga)|}{\ell_+^d}, \;\;c(\Ga) = {\rm
 sp}(\Ga)\cup {\rm int}(\Ga)
    \end{equation}

We omit the proof of the following proposition (which is a
straightforward consequence of the definition of phase indicators
and contours).

\vskip.5cm

  \begin {prop}
  \label{thm2.6}
Suppose $q$ has a  contour $\Ga$, then there is $k\in \{1,..,S+1\}$
such that $\Theta(r;q)=k$ for all $r\in \delta_{\rm
out}^{\ell_+}[c(\Ga)]$, $c(\Ga)$ as in \eqref{ngamma}. Moreover if
$\La$ is any maximal connected component of ${\rm int}(\Ga)$ then
there is  $h\in \{1,..,S+1\}$ such that $\Theta(r;q)=h$ for all
$r\in \delta_{\rm in}^{\ell_+}[\La]$.

  \end{prop}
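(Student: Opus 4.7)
The plan is to reduce everything to two elementary facts about $\Theta$ and then use the maximality clause in the definition of sp$(\Ga)$ to rule out $\Theta=0$ on the boundary layers.

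\textbf{Basic properties of $\Theta$.} First I would record that $\Theta(\cdot;q)$ is $\mathcal D^{(\ell_+)}$-measurable: if $r,r'$ lie in the same cube $C^{(\ell_+)}_x$, then $C^{(\ell_+)}_r=C^{(\ell_+)}_{r'}$ and the defining region $C^{(\ell_+)}_r\cup\delta_{\rm out}^{\ell_+}[C^{(\ell_+)}_r]$ in \eqref{Theta} coincides as well. Second, if $C,C'$ are two adjacent cubes of $\mathcal D^{(\ell_+)}$ on which $\Theta$ takes respectively values $k\ne 0$ and $k'\ne 0$, then necessarily $k=k'$: indeed $C'\subset C\cup\delta_{\rm out}^{\ell_+}[C]$ and $C\subset C'\cup\delta_{\rm out}^{\ell_+}[C']$, so the requirements $\eta^{(\zeta,\ell_-)}=k$ on the first region and $\eta^{(\zeta,\ell_-)}=k'$ on the second force $k=k'$ at any common point.

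\textbf{Nonvanishing of $\Theta$ on $\delta_{\rm out}^{\ell_+}[c(\Ga)]$.} Let $C$ be an $\ell_+$-cube of $\delta_{\rm out}^{\ell_+}[c(\Ga)]$. By definition $C\subset c(\Ga)^c=\text{ext}(\Ga)$ and $C$ is connected to some cube $C^\star\subset c(\Ga)=\text{sp}(\Ga)\cup\text{int}(\Ga)$. If $C^\star\subset\text{int}(\Ga)$, then $C\cup C^\star$ would be a connected subset of $\text{sp}(\Ga)^c$ intersecting both $\text{ext}(\Ga)$ and $\text{int}(\Ga)$, contradicting the fact that ext and int are \emph{distinct} connected components of $\text{sp}(\Ga)^c$. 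Hence $C^\star\subset\text{sp}(\Ga)$. Suppose now, for contradiction, that $\Theta(r;q)=0$ for some $r\in C$. Let $\Si$ be the maximal connected component of $\{\Theta(\cdot;q)=0\}$ containing $r$. Since $C$ is connected to $\text{sp}(\Ga)$, $\Si$ is connected to $\text{sp}(\Ga)$, and both are subsets of $\{\Theta=0\}$; by maximality of $\text{sp}(\Ga)$ in \eqref{spga}, $\Si=\text{sp}(\Ga)$. But then $r\in \text{sp}(\Ga)\subset c(\Ga)$, contradicting $r\in C\subset c(\Ga)^c$. So $\Theta\ne 0$ on $\delta_{\rm out}^{\ell_+}[c(\Ga)]$.

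\textbf{Constancy of $\Theta$ on the boundary layer.} By the second basic property and $\Theta\ne 0$ on the layer, $\Theta$ is constant along any chain of pairwise adjacent cubes inside $\delta_{\rm out}^{\ell_+}[c(\Ga)]$. It therefore suffices to check that this layer is connected in the cube adjacency graph, which follows from the fact that $c(\Ga)$ is a bounded $\mathcal D^{(\ell_+)}$-measurable set whose complement $\text{ext}(\Ga)$ is connected (being by construction a single connected component), so its outer cube-shell is connected in $d\ge 2$.

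\textbf{Inner boundaries.} For $\La$ a maximal connected component of $\text{int}(\Ga)$, I would repeat the argument verbatim with $\delta_{\rm in}^{\ell_+}[\La]$ replacing $\delta_{\rm out}^{\ell_+}[c(\Ga)]$: cubes of $\delta_{\rm in}^{\ell_+}[\La]$ lie in $\text{int}(\Ga)$ and are adjacent to $\La^c$; by the same component-separation argument they cannot be adjacent to $\text{ext}(\Ga)$, so they are adjacent to $\text{sp}(\Ga)$, and the maximality argument again forces $\Theta\ne 0$ on them, while step 1 propagates the value across the shell. The only real obstacle in all of this is the topological connectedness of the two shells, which is standard in $d\ge 2$ but genuinely uses the separation of ext from int by sp.
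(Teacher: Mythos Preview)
The paper omits this proof entirely, calling it ``a straightforward consequence of the definition of phase indicators and contours.'' Your argument is correct and is precisely the kind of verification the paper has in mind: the two basic properties of $\Theta$ you record (constancy on $\ell_+$-cubes; equality on adjacent cubes where $\Theta\ne 0$---which is exactly the separation observation the paper makes just before the proposition) together with the maximality clause in \eqref{spga} force $\Theta\ne 0$ on both shells, and constancy then follows from their connectedness.

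One small point to make explicit in the inner-boundary step: a cube of $\delta_{\rm in}^{\ell_+}[\La]$ could a priori be adjacent to a cube in \emph{another} component $\La'$ of $\text{int}(\Ga)$, not only to $\text{ext}(\Ga)$ or $\text{sp}(\Ga)$; but the same component-separation argument rules this out (two adjacent cubes in $\text{sp}(\Ga)^c$ lie in the same maximal connected component). The only substantive ingredient, as you correctly flag, is the connectedness of the two boundary shells; this is standard in $d\ge 2$ once one checks that $c(\Ga)$, $c(\Ga)^c=\text{ext}(\Ga)$, $\La$, and $\La^c$ are each connected---which they are, since $\text{sp}(\Ga)$ is connected and every component of $\text{int}(\Ga)$ is adjacent to it.
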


\vskip.5cm

Proposition \ref{thm2.6} implies that given any $\Ga$,
$\Theta(r;\bar q)$, $r\in \delta_{\rm{out}}[{\rm sp}(\Ga)]$ is
determined by $\eta_\Ga$ and assumes the same value for all  $\bar
q\in\{q:\Ga$ is a contour for $q\}$. We will then say that $\Ga$ has
color $k$ if $\Theta(r;q)=k$ for all $r\in \delta_{\rm
out}^{\ell_+}[c(\Ga)]$ and denote by ${\rm int}_{(h)}(\Ga)$ the
union of the maximal connected components $\La_i$ of int$(\Ga)$
where $\Theta(r;q)=h$ for all $r\in \delta_{\rm
in}^{\ell_+}[\La_i]$.

Given a color $k$ and a bounded, simply connected
$D^{(\ell_+)}$-measurable region $\La$, we denote by $\mathcal
B^{k}_\La$  the collection of all sequences $\und
\Ga=(\Ga(1),..,\Ga(n))$ of contours of color $k$ with spatial
support in $\La\setminus \delta_{\rm in}^{\ell_+}[\La]$ and such
that the  spatial supports  are mutually  disconnected.

\vskip2cm

%%%%%%%%%%%%%%%%%%%%%%%%%%%%%%%%%%%%%%%%%%%%%
\section{{\bf Proof of Theorem \ref{thme1.1}}}
    \label{sec:z31}
%%%%%%%%%%%%%%%%%%%%%%%%%%%%%%%%%%%%%%%%%%%%%

%%%%%%%%%%%%%%%%%%%%%%%%%%%%%%%%%%%%%%%%%%%%%
\subsection{The main technical result}
    \label{sec:z3}
%%%%%%%%%%%%%%%%%%%%%%%%%%%%%%%%%%%%%%%%%%%%%

From a technical point the main results in this paper are Theorem
\ref{thmz3.1} and
 Theorem \ref{thmz3.2} below. Their statements involve
 the notion of $k$-boundary conditions, diluted Gibbs measure and diluted partition functions.

 \vskip.5cm

$\bullet$\; {\em $k$-boundary conditions.}

Let  $k\in\{1,\dots S+1\}$ and $\La$  a bounded  $\mathcal D^{(\ell_+)}$-measurable
region.  A configuration $\bar q$ is
a $k$-boundary condition relative to
$\La$ if there is a
configuration $q^{(k)}\in \mathcal X^{(k)}$  (see \eqref{a3.4bis})
which is equal to  $\bar q$  in the
region $\{r\in \La^c: {\rm dist}\,(r,\La)\le 2\ga^{-1}\}$.

 \vskip.5cm

$\bullet$\; {\em  Diluted Gibbs measure and partition function.}

Let $\La$, $\bar q$  and  $q^{(k)}$ as above, then
the $k$-diluted Gibbs measure in $\La$ with b.c.\
$\bar q$ is
   \begin{equation}
    \label{z3.1}
G_{\la,\La,\bar q}^{(k)}(dq_\La):=\frac {e^{-\beta H_{\La,\la}
(q_\La| \bar q_{\La^c}) }}{  Z_{\la,\La,k}(\bar q)}
 \text{\bf
1}_{\{\Theta(q_\La\cup q^{(k)}_{\La^c};r)=k,\;\text{$r\in \delta_{\text{\rm
in}}^{\ell_+} [\La]$}\}}\,\, \nu(dq_\La)
     \end{equation}
where
     \begin{equation}
    \label{z3.1a}
 Z_{\la,\La,k}(\bar q)= \int_{\{\Theta(q_\La\cup
q^{(k)}_{\La^c};r)=k,\;\text{$r\in \delta_{\text{\rm in}}^{\ell_+}
[\La]$}\}} e^{-\beta H_{\La,\la} (q_\La| \bar q_{\La^c}) }\,
\nu(dq_\La)
      \end{equation}
is the diluted partition function.

\vskip1cm

    \begin{thm}
        \label{thmz3.1}
For any $\beta $ there are $c^*$, $\ga_\beta>0$ and for any $\ga\le
\ga_\beta$ there is $\la_{\beta,\ga}$ such that  for any bounded,
simply connected, $\mathcal D^{(\ell_{+})}$ measurable region $\La$,
any $k$- boundary conditions $\bar q$ and any $r\in \La$,
    \begin{equation}
    \label{z3.2}
G_{\la_{\beta,\ga},\La,\bar q}^{(k)}(\{\Theta^{(\zeta,\ell_-,\ell_+)}(q;r)=k\}) \ge
\; 1\;-\; \exp\Big\{ -\beta \frac{c^*}{4} \, (\zeta^2
\ell_{-}^d)\Big\}
     \end{equation}
\end{thm}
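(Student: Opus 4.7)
My plan is to reduce the statement to a Peierls-type estimate on individual contours and then perform the standard entropy sum. By the definitions of $\eta$ and $\Theta$ in Section~\ref{sec:a2.2bis} and the constraint in \eqref{z3.1}, any configuration $q$ in the support of $G^{(k)}_{\lambda,\Lambda,\bar q}$ satisfies $\Theta(q;\cdot) = k$ on $\delta_{\text{in}}^{\ell_+}[\Lambda]$; if additionally $\Theta(q;r) \ne k$ for some $r\in\Lambda$, then there must exist a contour $\Gamma$ of color $k$ in $q$ with $r \in c(\Gamma)$. A union bound then gives
\begin{equation*}
G^{(k)}_{\lambda_{\beta,\gamma},\Lambda,\bar q}\bigl(\Theta(q;r)\ne k\bigr) \;\le\; \sum_{\Gamma:\,r\in c(\Gamma)} G^{(k)}_{\lambda_{\beta,\gamma},\Lambda,\bar q}\bigl(\Gamma\text{ is a contour of }q\bigr),
\end{equation*}
so it suffices to prove a Peierls bound of the form $G^{(k)}_{\lambda_{\beta,\gamma},\Lambda,\bar q}(\Gamma\subset q)\le \exp(-\beta c_1 \zeta^2 \ell_-^d N_\Gamma)$ with $c_1>0$ independent of $\gamma$, and to combine it with the geometric bound that the number of abstract contours with $r\in c(\Gamma)$ and $N_\Gamma=n$ is at most $K^n$ for some $K=K(d,S)$.

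I would produce the Peierls weight by writing the probability of a fixed $\Gamma$ as a ratio of partition functions and performing the Pirogov--Sinai surgery: on the numerator side I replace $q$ on $\mathrm{sp}(\Gamma)$ by a typical $k$-phase configuration, and on each maximal component $\Delta$ of $\mathrm{int}_{(h)}(\Gamma)$ I exchange the $h$-diluted partition function for the $k$-diluted one. Two ingredients are then needed. First, a \emph{bulk cost} on $\mathrm{sp}(\Gamma)$: since $\Theta=0$ there, the coarse-grained density deviates by at least $\zeta$ from every $\rho^{(k)}$, and strict convexity of $F^{\mathrm{mf}}_{\beta,\lambda}$ at its $S+1$ minimizers yields an excess free-energy density of order $\zeta^2$; integrated over $|\mathrm{sp}(\Gamma)|=N_\Gamma\ell_+^d$ this gives the factor $\exp(-\beta c_2 \zeta^2 \ell_-^d N_\Gamma)$. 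Second, the ratio $Z_{\lambda,\Delta,h}/Z_{\lambda,\Delta,k}$ on each internal component $\Delta$ must be $O(1)$ up to surface corrections; this is where the exponential decay of correlations in the restricted ensemble (Theorem~3.1 of \cite{DMPV2}) enters, producing clean expansions $\log Z_{\lambda,\Delta,j} = -\beta p_j(\lambda)|\Delta| + O\bigl(\ell_+^{d-1}|\partial\Delta|\bigr)$ for each phase $j$. One then selects $\lambda_{\beta,\gamma}$ by a Pirogov--Sinai fixed-point argument so that $p_1(\lambda_{\beta,\gamma})=\cdots=p_{S+1}(\lambda_{\beta,\gamma})$, rendering the internal ratios $O(1)$ with the surface residuals absorbed into $c_1$.

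The main obstacle is precisely the pressure analysis together with this self-consistent choice of $\lambda_{\beta,\gamma}$: this is the part where the Chapters~11--12 argument of \cite{leipzig} must be adapted to the Potts setting, using the restricted-ensemble decay of \cite{DMPV2} in place of the LMP one-phase decay. Once the Peierls weight is established, the conclusion is immediate from
\begin{equation*}
\sum_{\Gamma:\,r\in c(\Gamma)} e^{-\beta c_1 \zeta^2 \ell_-^d N_\Gamma} \;\le\; \sum_{n\ge 1} K^n\, e^{-\beta c_1 \zeta^2 \ell_-^d n} \;\le\; \exp\!\bigl(-\beta (c_1/2) \zeta^2 \ell_-^d\bigr),
\end{equation*}
which is valid for $\gamma$ small since $\zeta^2\ell_-^d = \gamma^{2a - d(1-\alpha_-)}\to\infty$ under the parameter constraints of Definition~\ref{contorni3}. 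Setting $c^*=2c_1$ yields the stated bound uniformly in $\Lambda$, $\bar q$, and $r$.
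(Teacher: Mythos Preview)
Your strategy matches the paper's: it isolates the Peierls bound on contour weights as a separate statement (Theorem~\ref{thmz3.2}, namely $W^{k,\text{true}}(\Ga|q^{(k)}) \le e^{-\beta (c^*/2)\zeta^2\ell_-^d N_\Ga}$) and then derives Theorem~\ref{thmz3.1} by exactly the union bound plus entropy sum you describe. Your outline of how the Peierls bound itself is obtained---bulk cost from mean-field convexity on $\mathrm{sp}(\Ga)$, equalization of restricted-ensemble pressures at $\la_{\beta,\ga}$, surface control of the interior partition-function ratios via the decay result of \cite{DMPV2}---is also the paper's route in Part~II.

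There is, however, one concrete step that fails as written: your final combinatorics. You assert that the number of abstract contours with $r\in c(\Ga)$ and $N_\Ga=n$ is at most $K^n$ for some $K=K(d,S)$. But a contour is a pair $(\mathrm{sp}(\Ga),\eta_\Ga)$, and for a fixed spatial support $D$ with $N_D=n$ there are up to $(S+2)^{(\ell_+/\ell_-)^d n}$ choices of $\eta_\Ga$ (one value from $\{0,1,\dots,S+1\}$ on each $\mathcal D^{(\ell_-)}$ cube in $D$). Since $(\ell_+/\ell_-)^d=\ga^{-d(\alpha_++\alpha_-)}\to\infty$ as $\ga\to 0$, no $\ga$-independent $K$ can absorb this entropy. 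The paper accordingly bounds
\[
G^{(k)}_{\la_{\beta,\ga},\La,\bar q}\big(\Theta(q;r)\ne k\big)\ \le\ \sum_{D:\,c(D)\ni r} (S+2)^{(\ell_+/\ell_-)^d N_D}\, e^{-\beta(c^*/2)\zeta^2\ell_-^d N_D},
\]
and then uses the parameter constraints of Definition~\ref{contorni3} to ensure that $\zeta^2\ell_-^d$ dominates $(\ell_+/\ell_-)^d\log(S+2)$, so that after extracting a factor $e^{-\beta(c^*/4)\zeta^2\ell_-^d}$ the remaining sum is $\le 1$ for $\ga$ small. With this correction your argument is complete; as stated, your entropy sum does not control the specification entropy of $\eta_\Ga$.
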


\vskip1cm

The proof of  Theorem \ref{thmz3.1} is a corollary of Theorem
\ref{thmz3.2} below,  which involves the fundamental notion of
contour weights:

 \vskip.5cm

$\bullet$\; {\em The true weight of a contour.} \nopagebreak

 Given a $k$-colored contour $\Ga$ and a $k$-boundary condition
 $q^{(k)}$ relative to $c(\Ga)$,
 we define the ''true'' weight $W^{k,{\rm true}}(\Ga|q^{(k)})$ as
         \begin{equation}
        \label{z3.3}
W^{k, {\rm true}}(\Ga|q^{(k)})= \frac{\dis{ \int_{\Upsilon_{{\rm
sp}(\Ga)}(\eta_\Ga)} e^{-\beta H_{{\rm sp}(\Ga),\la}(q_{{\rm
sp}(\Ga)}|q_{c(\Ga)^c})}
 \prod_{j=1}^pZ^{(k_j)}_{{\rm
int}_{j}(\Ga),\la}(q_{{\rm sp}(\Ga)}) \nu(dq_{{\rm
sp}(\Ga)})}}{\dis{ \int_{\mathcal X^{(k)}_{{\rm sp}(\Ga)}} e^{-\beta
H_{{\rm sp}(\Ga),\la}(q_{{\rm sp}(\Ga)}|q_{c(\Ga)^c})}
 \prod_{j=1}^pZ^{(k)}_{{\rm
int}_{j}(\Ga),\la}(q_{{\rm sp}(\Ga)})  \nu(dq_{{\rm sp}(\Ga)})}}
    \end{equation}
where
    $
 \Upsilon_{{\rm
sp}(\Ga)}(\eta_\Ga):=\{q_{{\rm sp}(\Ga)} : \eta(r;q_{{\rm
sp}(\Ga)})=\eta_\Ga(r),  r\in {\rm sp}(\Ga)\}
    $;
 ${\rm int}( \Ga)$ decomposes into $p$ maximal connected components
 ${\rm int }_{j}(\Ga), j=1,..,p$;
 $k_j$ denotes the
value of $\Theta$ on $\delta_{\rm in}^{\ell_+}[{\rm int
}_{j}(\Ga)]$.

 \vskip.5cm

The above are called true weights to distinguish them from fictitious weights
introduced in the proof of Theorem \ref{thmz3.2}.

 \vskip.5cm

    \begin{thm}
        \label{thmz3.2}
        In the same context of Theorem \ref{thmz3.1}, for all $\ga$ small
        enough and recalling definition \eqref{ngamma},
            \begin{equation}
    \label{z3.4}
W^{k, {\rm true}}(\Ga|q^{(k)}) \le
  \exp\Big\{ -\beta \frac{c^*}{2} \, \zeta^2
 \ell_{-} ^d \,N_\Ga\Big\}
     \end{equation}
        \end{thm}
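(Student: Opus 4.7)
I would use the Pirogov--Sinai strategy. The first move is to factor the true weight,
\begin{equation*}
W^{k,\text{true}}(\Ga|q^{(k)}) \;=\; \widetilde W^{k}(\Ga|q^{(k)}) \,\cdot\! \prod_{j=1}^p \frac{Z^{(k_j)}_{\text{int}_j(\Ga),\la}(q_{\text{sp}(\Ga)})}{Z^{(k)}_{\text{int}_j(\Ga),\la}(q_{\text{sp}(\Ga)})},
\end{equation*}
where the \emph{fictitious weight} $\widetilde W^{k}$ is obtained from \eqref{z3.3} by replacing every interior partition function $Z^{(k_j)}$ in the numerator by $Z^{(k)}$. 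This separates the two sources of smallness: a bulk free-energy gain coming from the deviation of $\eta_\Ga$ from $k$, and a colour-mismatch correction encoded in the product, which must vanish to leading order precisely for $\la = \la_{\b,\ga}$.

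The fictitious weight I would bound by a direct LMP-type free-energy comparison. In each of the $N_\Ga$ cubes of $\mathcal D^{(\ell_+)}$ covering $\text{sp}(\Ga)$, the condition $\Theta(q;\cdot)=0$ combined with the specification $\eta(\cdot;q)=\eta_\Ga$ forces at least one $\ell_-$-subcube where $\rho^{(\ell_-)}(\cdot;q)$ lies at distance more than $\zeta$ from $\rho^{(k)}$. Using the lower bound on the Hamiltonian \eqref{z1.9} produced by the convexity argument in \eqref{z1.8} together with the uniform strict convexity of $F^{\text{mf}}_{\b,\la_\b}$ at its $S+1$ minimisers, each such $\ell_+$-cube contributes an excess free energy of at least $c^*\zeta^2\ell_-^d$. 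After controlling the common factor $\prod_j Z^{(k)}_{\text{int}_j}(q_{\text{sp}(\Ga)})$ in numerator and denominator (it depends only weakly on $q_{\text{sp}(\Ga)}$ by the restricted-ensemble decay estimates) and the entropy ratio of the domains $\Upsilon_{\text{sp}(\Ga)}(\eta_\Ga)$ and $\mathcal X^{(k)}_{\text{sp}(\Ga)}$, this gives
\begin{equation*}
\widetilde W^{k}(\Ga|q^{(k)}) \;\le\; \exp\!\bigl\{-\b c^*\, \zeta^2\,\ell_-^d\, N_\Ga\bigr\}.
\end{equation*}

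For the ratio factors I would run a cluster expansion in each $k$-restricted ensemble, converting Theorem 3.1 of \cite{DMPV2} into a polymer representation
\begin{equation*}
-\tfrac 1\b \log Z^{(k)}_{\la,\La,k}(\bar q) \;=\; p^{(k)}_\ga(\la)\,|\La| \,+\, R^{(k)}_\La(\bar q),\qquad |R^{(k)}_\La(\bar q)| \le C\ga^a |\partial\La|,
\end{equation*}
uniformly in $k$-boundary conditions $\bar q$. The coexistence value $\la_{\b,\ga}$ is then defined by the $S$ equations $p^{(k)}_\ga(\la) = p^{(S+1)}_\ga(\la)$, whose solvability and uniqueness follow by a perturbative continuation from the mean-field phase diagram. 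At this value the bulk terms cancel in every ratio $Z^{(k_j)}/Z^{(k)}$, leaving a surface correction bounded by $\exp\{C\ga^a|\partial\,\text{sp}(\Ga)|\}$. The parameter inequalities \eqref{aI.3.11}--\eqref{aI.3.11b} are calibrated so that this surface loss is strictly smaller than half of the bulk gain $c^*\zeta^2\ell_-^d N_\Ga$, and combining this with the bound on $\widetilde W^{k}$ gives \eqref{z3.4}.

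\textbf{Main obstacle.} The delicate step is not the direct free-energy comparison but the polymer representation of $\log Z^{(k)}_{\la,\La,k}$ with the sharp surface bound above, together with the matching construction of $\la_{\b,\ga}$. The exponential decay of correlations of \cite{DMPV2} is the essential input, but turning it into a convergent cluster expansion with constants uniform as $\ga\to 0$ is precisely the passage singled out by the authors as the main novelty relative to the LMP scalar analysis of \cite{leipzig}: the multi-component nature of the Potts phases requires additional care in handling the coarse-grained bad events at scale $\ell_-$.
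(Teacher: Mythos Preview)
Your proposal has the right large-scale architecture (separate a bulk free-energy gain on ${\rm sp}(\Ga)$ from a colour-mismatch ratio of interior partition functions, tune $\la$ so that the bulk pressures coincide, and check that the residual surface error is beaten by the gain), and this is indeed the skeleton of the paper's proof. But there is a genuine gap at the structural level: the argument as written is circular.

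The partition functions $Z^{(k_j)}_{{\rm int}_j(\Ga),\la}$ and $Z^{(k)}_{{\rm int}_j(\Ga),\la}$ appearing in \eqref{z3.3} are the \emph{diluted} partition functions of \eqref{z3.1a}: they integrate over all configurations with $\Theta=k_j$ (resp.\ $k$) only near $\partial\,{\rm int}_j(\Ga)$, and therefore themselves contain contours. The polymer/cluster representation you invoke, with pressure $p^{(k)}_\ga(\la)$ and surface remainder $O(\ga^a|\partial\La|)$, cannot be established for these objects unless you already know Peierls bounds on the contours they contain---which is exactly Theorem~\ref{thmz3.2}. The decay result of \cite{DMPV2} concerns the \emph{restricted} ensemble ($\eta\equiv k$ on all of $\La$) and does not by itself control diluted partition functions. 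A related symptom is that your first displayed factorisation is not an identity: the ratio $Z^{(k_j)}/Z^{(k)}$ depends on the integration variable $q_{{\rm sp}(\Ga)}$ and cannot be pulled out of the $\nu$-integrals defining $W^{k,{\rm true}}$.

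The paper breaks this circularity by Zahradn\'\i k's device (Section~\ref{sec:3.3}): one introduces \emph{cut-off} weights $W^{(k)}_\la(\Ga|q)=\min\{\mathcal N^{(k)}/\mathcal D^{(k)},\,e^{-\beta c_w\zeta^2\ell_-^dN_\Ga}\}$, defined inductively in the contour size, and the corresponding cut-off partition functions \eqref{I.4.2}. These satisfy the Peierls bound \emph{by fiat}, so the cluster/surface analysis (Theorems~\ref{plabeta} and \ref{thm:2.14}) can be carried out for them without circularity. The energy estimate (Theorem~\ref{thm:2.9}) together with the surface bound then shows that the ratio $\mathcal N^{(k)}/\mathcal D^{(k)}$ is \emph{strictly} below the cut-off, so the min is never active and cut-off weights coincide with true weights (Proposition~\ref{thm3.2}). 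Your proposal is essentially what one would like to do \emph{after} this bootstrapping step, but it omits the step itself.
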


 \vskip.5cm

As already pointed out Theorem \ref{thmz3.2} is the main technical
result in this paper, its proof follows the Pirogov-Sinai strategy
and it is reported in Part II.  We will next show that  Theorem
\ref{thmz3.1} follows from  Theorem \ref{thmz3.2}.

 \vskip.5cm

 {\bf Proof of Theorem \ref{thmz3.1}} (using Theorem \ref{thmz3.2}).
 By  definition
the $k$-diluted Gibbs measures in $\La$
have support on configurations where $\Theta=k$ on
$\delta_{\rm in}^{\ell_+}[\La]$.
Therefore if $\Theta(r;q_\La)\ne k$ there must be a contour
$\Ga$ such that $r\in c(\Ga)$.
Thus $G_{\la_{\beta,\ga},\La,\bar q}^{(k)}
(\{\Theta^{(\zeta,\ell_-,\ell_+)}(q;r)\ne
k\}) $ is bounded by
            \begin{equation*}
\sum_{c(D)\ni r} (S+2)^{(\ell_+/\ell_-)^d N_D} e^{ -\beta (c^*/2) \, \zeta^2
\ell_{-}^d \,N_D}
     \end{equation*}
where $D$ ranges over all possible values of ${\rm sp}(\Ga)$ such that $c(\Ga)\ni r$; $N_D$
is the number of $\mathcal D^{(\ell_+)}$ cubes in $D$. $(S+2)$ is the number of possible values of $\eta(\cdot)$,
$(\ell_+/\ell_-)^d N_D$ the number of $\mathcal D^{(\ell_-)}$ cubes in $D$.  The above is bounded by
            \begin{equation*}
 e^{ -\beta (c^*/4) \, \zeta^2
\ell_{-}^d } \sum_{c(D)\ni r} (S+2)^{(\ell_+/\ell_-)^d N_D} e^{ -\beta (c^*/4) \, \zeta^2
\ell_{-}^d N_D}
     \end{equation*}
The sum vanishes as $\ga\to 0$, see for instance the proof of Theorem 9.2.8.1 in \cite{leipzig},
 such that for $\ga$ small enough the above is bounded by
$e^{ -\beta (c^*/4) \, \zeta^2
\ell_{-}^d }$.  \qed

\vskip.5cm

In the following sections we will see that the proof of Theorem
\ref{thme1.1} follows  from Theorem \ref{thmz3.2} and Theorem 3.1 of
\cite{DMPV2} using   the same  general arguments  as in
\cite{leipzig} for the analogous proof in the LMP model.  In Part II
we will prove Theorem \ref{thmz3.2}.

\vskip2cm

%%%%%%%%%%%%%%%%%%%%%%%%%%%%%%%%%%%%%%%%%%%%%%%%%%%%%%%%%%%%%%%%
\subsection{{Existence of DLR measures}}
        \label{sec:z4}
%%%%%%%%%%%%%%%%%%%%%%%%%%%%%%%%%%%%%%%%%%%%%%%%%%%%%%%%%%%%%%%%

A probability $\mu$ on $\mathcal Q$ is DLR at $(\beta,\la)$ if for
any bounded, measurable cylindrical function $f$ and any bounded
measurable set $\La\subset \mathbb R^d$,
        \begin{equation}
    \label{z4.0.1}
\mu(f)= \mu\big( G_{\la,\La,q}(f)\big):=\int_{\mathcal
Q}G_{\la,\La,q}(f) \mu(dq)
    \end{equation}
We fix $\beta$  and set by default $\ga<\ga_\beta$ and
$\la=\la_{\beta,\ga}$, see Theorem \ref{thmz3.1}.  $\beta$ and $\la$
  in the sequel will be often omitted from the notation.  We will start by proving:

\vskip.5cm

   \begin{thm}
   \label{thmz4.0.1}
The set of DLR measures at $(\beta, \la_{\beta,\ga})$ is a non empty, convex, weakly compact set.
  \end{thm}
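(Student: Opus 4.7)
Convexity is immediate from the linearity of \eqref{z4.0.1} in $\mu$. Non-emptiness and weak compactness rest on three ingredients, each of which I discuss below: (i) continuity of the specification $\bar q\mapsto G_{\la,\La,\bar q}(f)$ on $\mathcal Q$ in the product topology, (ii) a thermodynamic-limit construction producing an explicit DLR measure $\mu^{(k)}$, and (iii) uniform tightness on the whole DLR family.

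For (i), fix $\La$ bounded and $f$ bounded cylindrical. By \eqref{1.3} the energy $H_{\La,\la}(q_\La|\bar q_{\La^c})$ depends on $\bar q_{\La^c}$ only through $J_\ga\star\bar q_{\La^c}$ on the $2\ga^{-1}$-collar of $\La$, smoothly and continuously in the product topology on $\mathcal Q$; a dominated-convergence argument on $Z_{\La,\bar q}$ then shows that $\bar q\mapsto G_{\la,\La,\bar q}(f)$ is a bounded continuous local function, so the DLR set is closed under weak convergence. For (ii), fix a color $k\in\{1,\dots,S+1\}$ and any $k$-boundary condition $\bar q$, and set $\mu^{(k)}_n:=G^{(k)}_{\la_{\beta,\ga},\La_n,\bar q}$ along an increasing sequence $\La_n\uparrow\mathbb R^d$ of simply connected $\mathcal D^{(\ell_+)}$-measurable regions. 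Theorem \ref{thmz3.1} gives $\mu^{(k)}_n(\Theta(q;r)\neq k)\le\exp\{-\beta(c^*/4)\zeta^2\ell_-^d\}$ uniformly in $n$ and $r$; a union bound over the $\mathcal D^{(\ell_+)}$-cubes intersecting a fixed bounded box, combined with an a priori energy bound on the particle number inside contour regions, yields tail estimates $\mu^{(k)}_n(n^{(1)}(r;q)\ge M)\le e^{-c(M)}$ with $c(M)\to\infty$ uniformly in $n$. Since relative compactness in $\mathcal Q$ with the product topology is characterized by uniform upper bounds on the particle number in each cube $C_i$, this yields tightness of $\{\mu^{(k)}_n\}$ and a subsequential weak limit $\mu^{(k)}$ via Prokhorov. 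When $\La$ sits at distance $>2\ga^{-1}$ from $\delta_{\text{in}}^{\ell_+}[\La_n]$, the diluted constraint is independent of $q_\La$, so the usual consistency $\mu^{(k)}_n(f)=\mu^{(k)}_n(G_{\la_{\beta,\ga},\La,q}(f))$ holds; passing to the limit using (i) gives \eqref{z4.0.1} for $\mu^{(k)}$, first for such $\La$ and then for all bounded measurable $\La$ by monotonicity.

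For (iii), applying \eqref{z4.0.1} with $\La=C^{(1)}_r$ reduces uniform tightness of the full DLR family to a superstability-type a priori bound on $\mu(n^{(1)}(r;q))$, uniform in the DLR measure $\mu$. This is the step I expect to be most delicate: the pair kernel $V=J_\ga*J_\ga$ of \eqref{z1.9} is non-negative but vanishes on configurations carrying a single spin label, so the standard Ruelle superstability argument must be refined using both the spin entropy and the convolution structure of $V$, following the LMP treatment in \cite{leipzig}. Given (iii), Prokhorov together with the closedness from (i) yields weak compactness of the DLR set.
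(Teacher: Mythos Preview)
Your outline is workable but considerably more elaborate than what is needed, and the place you flag as ``most delicate'' (step (iii)) is in fact the easiest. The crucial observation you are missing is that the interaction in \eqref{z1.9} is \emph{non-negative}: $H_{\La,\la}(q_\La|\bar q_{\La^c})\ge -\la|q_\La|$ for every boundary configuration $\bar q$. Hence for a unit cube $C$ and any $\bar q$ whatsoever,
\[
G_{C,\bar q}\big(|q_C|>N\big)\ \le\ \frac{1}{Z_{C,\bar q}}\sum_{n>N}\frac{(Se^{\beta\la})^n}{n!}\ \le\ \sum_{n>N}\frac{(Se^{\beta\la})^n}{n!},
\]
since $Z_{C,\bar q}\ge 1$ from the empty configuration. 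This single uniform tail bound, applied via the DLR equations $\mu(\cdot)=\mu\big(G_{C,q}(\cdot)\big)$, gives tightness of the entire DLR family at once: no Ruelle superstability, no spin-entropy refinement, no use of the convolution structure of $V$. The worry that $V$ vanishes on single-color configurations is irrelevant here because one only needs an upper bound on the Gibbs weight, and positivity already provides it.

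This is exactly the route the paper takes. It packages the bound above as \eqref{z4.0.3}--\eqref{z4.0.4}, defines $M^0(\mathcal Q)$ by these tail inequalities, and shows that $\mathcal G^0_\La:=\{\mu:\mu(f)=\mu(G_{\La,q}(f))\}\cap M^0(\mathcal Q)$ is non-empty (containing $G_{\La,q^{(k)}}$ for suitable $q^{(k)}$), convex, weakly compact, and decreasing in $\La$; the DLR set is then the non-empty compact intersection $\bigcap_n\mathcal G^0_{\La_n}$. Your step (i) is implicitly used to show $\mathcal G_\La$ is weakly closed, and is correct as you state it. Your step (ii), invoking Theorem \ref{thmz3.1} and Peierls bounds for tightness, is unnecessary for \emph{this} theorem; the paper uses those bounds only later (Theorem \ref{thmz4.0.2}) to identify the limit and obtain \eqref{z4.0.8}, not for existence or compactness.
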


\vskip.5cm

The proof is made  simpler by the assumption that the interaction is
non negative.  We follow closely Section 12.1 of \cite{leipzig}
where the analogous statement is proved for the LMP model and where
the reader may find more details.   The basic estimate is
\eqref{z4.0.3} below.  Let $C\in \mathcal D^{(1)}$, $G_{ C,\bar q}$
the Gibbs measure on $\mathcal Q_C$ at  $(\beta, \la_{\beta,\ga})$
with boundary conditions $\bar q$,
   \begin{equation}
    \label{z4.0.2}
A_{\delta,N,C}:=\Big\{q\in \mathcal Q_C: |q|\le N, q\cap \{r\in
C:{\rm dist}(r,C^c)<\delta\}=\emptyset\Big\}
     \end{equation}
     Then, using the non negativity of the interaction,
   \begin{equation}
    \label{z4.0.3}
G_{C,\bar q}( A^c_{\delta,N,C})\le \eps_{\delta,N}:
= \sum_{n=1}^N \frac{S^n ( 2d \delta
n)}{n!} + \sum_{n>N} \frac{ S^n}{n!}
     \end{equation}
and therefore there are $n^*$ and $\delta_n>0$ (decreasing with $n$)
such that for any $\bar q\in \mathcal Q$,
   \begin{equation}
    \label{z4.0.4}
G_{ C,\bar q}( A^c_{\delta_n,n,C})\le e^{-n},\quad \text{for all $n\ge n^*$}
     \end{equation}
By supposing (without loss of generality) $n^*$ large enough, there
exist configurations $q^{(k)}\in \mathcal X^{(k)}$, $k=1,..,S+1$,
such that
        \begin{equation}
    \label{z4.0.4bis}
q^{(k)}\in \bigcap_{i\in \mathbb Z^d} A_{\delta_{n^*},n^*,C_i}
     \end{equation}

Call  $M(\mathcal Q)$ the set of all probabilities on $\mathcal Q$ and
   \begin{equation}
    \label{z4.0.5}
M^0(\mathcal Q):=\Big\{\mu\in M(\mathcal Q): \mu( A^c_{\delta_{n+|i|},n+|i|,C_i})\le e^{-(n+|i|)},\;\; \text{for all $n\ge n^*$ and all
$i\in \mathbb Z^d$}\Big\}
     \end{equation}
Define also for any bounded $\mathcal D^{(1)}$-measurable set $\La\subset \mathbb R^d$,
   \begin{equation}
    \label{z4.0.6}
\mathcal G_\La:=\Big\{\mu\in M(\mathcal Q): \mu( f)= \mu
\big(G_{\La,q}(f)\Big\},\quad \mathcal G^0_\La = \mathcal G_\La \cap
M^0(\mathcal Q)
     \end{equation}
If $\La$ is $\mathcal D^{(\ell_+)}$-measurable and $q^{(k)}$ as in \eqref{z4.0.4bis}, then
by \eqref{z4.0.4}
 $G_{\La,q^{(k)}} \in  \mathcal G^0_\La$ which is therefore non empty.  A stronger
 statement actually holds:

 \vskip.5cm

 \begin{lemma}
 \label{lemmaz4.0.1}
 $\mathcal G^0_\La$ is a non empty, convex, weakly compact set and if $\Delta\subset \La$ then
 $\mathcal G^0_\La \subset  \mathcal G^0_\Delta $.

 \end{lemma}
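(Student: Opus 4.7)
The plan is to verify the four claims---non-emptiness, convexity, weak compactness, and the inclusion $\mathcal G^0_\La \subset \mathcal G^0_\Delta$---separately, combining standard tightness arguments for the space $\mathcal Q$ with the usual consistency properties of DLR specifications; all technical steps mirror those of Section~12.1 in \cite{leipzig}.

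For non-emptiness and convexity (the easy cases), I would take the configurations $q^{(k)}$ from \eqref{z4.0.4bis} and observe that $G_{\La,q^{(k)}}$ automatically satisfies the DLR equation on $\La$, while the bound \eqref{z4.0.4} applied to every $C_i\subset \La$, combined with the explicit constraint on $q^{(k)}$ in $\La^c$, yields $G_{\La,q^{(k)}}\in M^0(\mathcal Q)$. Convexity of $\mathcal G^0_\La$ is immediate since both the DLR identity $\mu(f)=\mu(G_{\La,\cdot}(f))$ and the inequalities defining $M^0(\mathcal Q)$ are linear in $\mu$.

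For weak compactness, the key point is that each $A_{\delta_n,n,C_i}$ is compact (a bounded particle number constrained to a closed subset of the bounded cube $C_i$). Tightness of $M^0(\mathcal Q)$ then follows from Tychonoff: the product $K_N:=\prod_i A_{\delta_{N+|i|},N+|i|,C_i}$ is compact, and its complement has $\mu$-mass at most $\sum_i e^{-(N+|i|)}$, which is arbitrarily small for $N$ large. Weak closedness of $M^0(\mathcal Q)$ is immediate from the Portmanteau theorem applied to the closed sets $A_{\delta_{n+|i|},n+|i|,C_i}$. The delicate step is showing that the DLR condition survives weak limits within $M^0$: for this I would exploit that $G_{\La,q}(f)$ depends on $q$ only through the bounded window $\{r:{\rm dist}(r,\La)\le 2\ga^{-1}\}$, and that restricted to each event $K_N$ the map $q\mapsto G_{\La,q}(f)$ is continuous (since $J_\ga\star q$ depends continuously on particle positions as long as no particle crosses the cube boundaries, which has measure zero under $K_N$). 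Combined with the uniform tail bound defining $M^0(\mathcal Q)$, an $\varepsilon/3$-type approximation argument then lets one pass to the weak limit and conclude $\mu\in\mathcal G_\La$.

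For the monotonicity $\mathcal G^0_\La\subset \mathcal G^0_\Delta$ when $\Delta\subset\La$, I would invoke the standard specification compatibility $G_{\La,q}(G_{\Delta,\cdot}(f))=G_{\La,q}(f)$, which follows from the decomposition of the Hamiltonian $H_{\La,\la}(\,\cdot\,|q_{\La^c})$ into a $\Delta$-part and an $(\La\setminus\Delta)$-part interacting with $q_{\La^c}$; applying the $\La$-DLR equation of $\mu$ twice then gives $\mu(G_{\Delta,\cdot}(f))=\mu(G_{\La,\cdot}(G_{\Delta,\cdot}(f)))=\mu(G_{\La,\cdot}(f))=\mu(f)$, so $\mu\in\mathcal G_\Delta$; and since $M^0(\mathcal Q)$ is defined without reference to $\La$, this upgrades to $\mu\in\mathcal G^0_\Delta$. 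The main obstacle in the whole argument is the closedness step of weak compactness: the discontinuities of $q\mapsto G_{\La,q}(f)$ under the product topology on $\mathcal Q$ must be controlled by the uniform tightness provided by $M^0(\mathcal Q)$, and this is precisely where the auxiliary set $M^0(\mathcal Q)$ earns its keep.
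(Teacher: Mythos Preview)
Your proposal is correct and follows essentially the same route as the paper: tightness via the compact sets $K_N=\prod_i A_{\delta_{N+|i|},N+|i|,C_i}$ and Prohorov for relative compactness, Portmanteau for the closedness of $M^0(\mathcal Q)$, and specification consistency for the inclusion $\mathcal G^0_\La\subset\mathcal G^0_\Delta$. You are in fact more careful on one point: the paper's short proof only verifies that the $M^0(\mathcal Q)$ inequalities survive weak limits and tacitly defers to \cite{leipzig} for the preservation of the DLR identity $\mu(f)=\mu(G_{\La,\cdot}(f))$, whereas you correctly identify this as the delicate step and sketch the right mechanism (continuity of $q\mapsto G_{\La,q}(f)$ on each $K_N$, controlled remainder by uniform tightness). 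Incidentally, your claim that $A_{\delta,N,C}$ is closed is the version needed for Portmanteau; the paper's ``$A^c$ closed'' appears to be a slip.
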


 \vskip.5cm

 {\bf Proof.}  For any $n\ge n^*$ the set $\dis{\bigcap_{i\in \mathbb Z^d} A_{\delta_{n+|i|},n+|i|,C_i}}$ is compact
 and if $\mu \in M^0(\mathcal Q)$,
    \begin{equation}
    \label{z4.0.7}
\mu\Big( \bigcap_{i\in \mathbb Z^d}
A_{\delta_{n+|i|},n+|i|,C_i}\Big)\ge  1 - c e^{-n},\quad
c:= \sum_{i\in \mathbb Z^d}e^{-|i|}
     \end{equation}
Then, by the Prohorov theorem, the weak closure of  $\mathcal
G^0_\La$ is weakly compact.  Since $A^c_{\delta_n,n,C}$ is closed,
the inequalities $\mu( A^c_{\delta_n,n,C})\le e^{-n}$ are preserved
under weak limits such that   $\mathcal G^0_\La$ is weakly closed,
hence weakly compact.  Convexity and the inclusion
 $\mathcal G^0_\La \subset  \mathcal G^0_\Delta $
 are obvious and the lemma is proved.  \qed

 \vskip.5cm

 \begin{coro}
 \label{coroz4.0.1}
Let $\La_n$ be an increasing sequence of  $\mathcal
D^{(1)}$-measurable sets invading $\mathbb R^d$, then\\
$\dis{\mathcal G^0:= \bigcap_n \mathcal G^0_{\La_n}}$  is a non
empty, convex, weakly compact set independent of the sequence
$\La_n$.

 \end{coro}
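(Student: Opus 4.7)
The plan is to deduce everything from Lemma \ref{lemmaz4.0.1} together with a finite intersection property argument. Since $\La_n \subset \La_{n+1}$, the monotonicity clause of Lemma \ref{lemmaz4.0.1} gives $\mathcal G^0_{\La_{n+1}} \subset \mathcal G^0_{\La_n}$, so $(\mathcal G^0_{\La_n})_{n\ge 1}$ is a decreasing sequence of non-empty, convex, weakly compact subsets of $M(\mathcal Q)$.

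First I would establish non-emptiness and compactness of $\mathcal G^0$. The intersection of any finite subfamily coincides with the smallest member and is therefore non-empty, so by the finite intersection characterization of compactness applied inside the weakly compact set $\mathcal G^0_{\La_1}$, the full intersection $\mathcal G^0$ is non-empty. Convexity is preserved under arbitrary intersection, and a weakly closed subset of a weakly compact set is weakly compact, yielding the remaining conclusions.

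To show independence of the invading sequence, I would compare two such sequences $(\La_n)$ and $(\La'_n)$. Since both invade $\mathbb R^d$ and each $\La_n$ is bounded, for every $n$ there exists $m(n)$ with $\La_n \subset \La'_{m(n)}$, whence $\mathcal G^0_{\La'_{m(n)}} \subset \mathcal G^0_{\La_n}$ by Lemma \ref{lemmaz4.0.1}. Because $(\mathcal G^0_{\La'_k})_k$ is decreasing, $\bigcap_k \mathcal G^0_{\La'_k} = \bigcap_n \mathcal G^0_{\La'_{m(n)}} \subset \bigcap_n \mathcal G^0_{\La_n}$; the reverse inclusion is obtained symmetrically by interlacing the other way. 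There is no serious obstacle: all the substantive work — non-emptiness via the explicit Gibbs measures $G_{\La,q^{(k)}}$ built from the configurations in \eqref{z4.0.4bis}, the Prohorov-type tightness from \eqref{z4.0.7}, and monotonicity in $\La$ — is already carried out in Lemma \ref{lemmaz4.0.1}, so the corollary reduces to a routine compactness argument.
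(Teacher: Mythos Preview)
Your proof is correct and follows exactly the approach the paper intends: the corollary is stated without proof in the paper, being regarded as an immediate consequence of Lemma \ref{lemmaz4.0.1}, and your finite-intersection-property argument together with the interlacing for independence of the sequence is precisely the standard way to spell it out.
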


 \vskip.5cm

 \begin{lemma}
 \label{lemmaz4.0.2}
Any measure   in $\mathcal G^0$ is  DLR and any DLR measure is in $\mathcal G^0$.

 \end{lemma}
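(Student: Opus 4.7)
The proof will use the standard Gibbsian consistency of the specification together with the a priori bound \eqref{z4.0.4}. The key algebraic fact I need is that for $\Delta\subseteq\La$ with $\La$ bounded $\mathcal D^{(1)}$-measurable and any bounded measurable cylindrical $f$, one has
\begin{equation*}
\int G_{\la,\Delta,q'}(f)\,G_{\la,\La,\bar q}(dq'_\La)\;=\;G_{\la,\La,\bar q}(f).
\end{equation*}
This is obtained from the additive splitting
$H_{\La,\la}(q_\La|\bar q_{\La^c})=H_{\Delta,\la}(q_\Delta|q_{\La\setminus\Delta}\cup \bar q_{\La^c})+H_{\La\setminus\Delta,\la}(q_{\La\setminus\Delta}|\bar q_{\La^c})$, which is a direct consequence of the definition \eqref{1.3}, combined with the product structure $\nu(dq_\La)=\nu(dq_\Delta)\,\nu(dq_{\La\setminus\Delta})$. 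Computing $G_{\la,\La,\bar q}$ by integrating out $q_\Delta$ first yields the consistency relation after cancellation of the $q_{\La\setminus\Delta}$-dependent Boltzmann factor. This is entirely standard but I would spell it out because it underlies both directions of the lemma.

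\textbf{Direction 1 (DLR $\Rightarrow\mathcal G^0$).} Let $\mu$ be DLR at $(\beta,\la_{\beta,\ga})$. Since each $\La_n$ is bounded and measurable, the DLR equation \eqref{z4.0.1} immediately gives $\mu\in\mathcal G_{\La_n}$ for every $n$. It remains to verify $\mu\in M^0(\mathcal Q)$. For each $i\in\mathbb Z^d$ and $m\ge n^*$, apply the DLR equation with $\La=C_i$ and $f=\mathbf 1_{A^c_{\delta_m,m,C_i}}$ to obtain
\begin{equation*}
\mu(A^c_{\delta_m,m,C_i})=\int G_{\la,C_i,q}(A^c_{\delta_m,m,C_i})\,\mu(dq)\le e^{-m}
\end{equation*}
by the \emph{uniform} bound \eqref{z4.0.4} (this is where non-negativity of the interaction is essential). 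Setting $m=n+|i|$ for $n\ge n^*$ gives the inequality defining $M^0(\mathcal Q)$ in \eqref{z4.0.5}.

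\textbf{Direction 2 ($\mathcal G^0\Rightarrow$ DLR).} Let $\mu\in\mathcal G^0$, so $\mu(G_{\la,\La_n,\,\cdot\,}(f))=\mu(f)$ for every $n$ and every bounded measurable cylindrical $f$. Given an arbitrary bounded measurable set $\Delta$ and cylindrical $f$, choose $n$ so large that $\Delta\subset\La_n$. The consistency relation above, applied with $\La=\La_n$, yields
\begin{equation*}
\mu(G_{\la,\Delta,\,\cdot\,}(f))=\int\!\!\int G_{\la,\Delta,q'}(f)\,G_{\la,\La_n,q}(dq'_{\La_n})\,\mu(dq)=\int G_{\la,\La_n,q}(f)\,\mu(dq)=\mu(f),
\end{equation*}
which is precisely the DLR equation for $\Delta$. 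Since $\Delta$ was arbitrary, $\mu$ is DLR.

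\textbf{Main obstacle.} The only nontrivial point is the consistency relation: everything else is bookkeeping. In the continuum one must take care that $f$ is a genuine function on $\mathcal Q$ (so that $G_{\la,\La,q}(f)$ is well-defined as a function of the full boundary configuration $q$), and that the Hamiltonian splits correctly in spite of the fact that $J_\ga*q$ is a nonlocal functional --- but thanks to the representation \eqref{1.3} as a difference, the nonlocal terms telescope and the splitting goes through. Since this is already carried out in Section~12.1 of \cite{leipzig} for the LMP model in essentially the same form, I would refer to that computation rather than repeat it.
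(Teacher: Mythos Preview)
Your proof is correct and follows essentially the same approach as the paper: both directions hinge on the Gibbsian consistency relation $G_{\La,\bar q}(f)=G_{\La,\bar q}\big(G_{\Delta,q}(f)\big)$ for $\Delta\subset\La$, together with the uniform bound \eqref{z4.0.4} to secure membership in $M^0(\mathcal Q)$. The paper's version is more terse (it picks an arbitrary bounded $\mathcal D^{(1)}$-measurable $\La\supset\Delta$ rather than an element of the sequence $\La_n$, implicitly using Corollary~\ref{coroz4.0.1}), but the argument is the same.
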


 \vskip.5cm

 {\bf Proof.} Let $\Delta$ be a bounded, measurable (but not necessarily $\mathcal D^{(1)}$-measurable) set, and
 $\La\supset \Delta$ a bounded  $\mathcal D^{(1)}$-measurable set.  Then if $\mu\in  \mathcal G^0$,
 $\mu \in  \mathcal G^0_\La$ and since $G_{\La,\bar q}(f)=G_{\La,\bar q}\big(G_{\Delta, q}(f)\big)$,
 it then follows that $\mu(f)=\mu\big(G_{\Delta, q}(f)\big)$, hence that $\mu$ is DLR.  Viceversa if
  $\mu$ is DLR then  by \eqref{z4.0.4} and the DLR property, $\mu \in M^0(\mathcal X)$. By \eqref{z4.0.1}
$\mu \in  \mathcal G_\La$, hence $\mu \in \mathcal G^0_\La$ and by
the arbitrariness of $\La$
  in $\mathcal G^0$.   \qed

 \vskip.5cm

  Corollary \ref{coroz4.0.1} and Lemma \ref{lemmaz4.0.2} prove Theorem \ref{thmz4.0.1}.  Moreover

 \vskip.5cm

 \begin{thm}
 \label{thmz4.0.2}
Let $\La_n$ be an increasing sequence of $\mathcal
D^{(\ell_+)}$-measurable regions invading $\mathbb R^d$ and
$q^{(k)}$, $k=1,..,S+1$, configurations satisfying
\eqref{z4.0.4bis}.  Then $G^{(k)}_{\La_n,q^{(k)}}$ converges weakly
to a measure $\mu^{(k)}\in \mathcal G^0$ and (with $c^*$ as in
Theorem \ref{thmz3.1})

         \begin{equation}
        \label{z4.0.8}
\mu^{(k)} \Big(\{ \Theta(\cdot;r) = k\}\Big) \ge 1-e^{-\beta (c^*/4)\zeta^2\ell_-^d},\quad
\text{for any $r\in \mathbb R^d$}
    \end{equation}
 \end{thm}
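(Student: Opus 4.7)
My plan is to prove the theorem in three stages: (i) establish tightness of $\{G^{(k)}_{\La_n,q^{(k)}}\}$ and extract a subsequential weak limit $\mu^{(k)}$, (ii) verify that any such limit lies in $\mathcal{G}^0$ and satisfies \eqref{z4.0.8}, and (iii) upgrade subsequential convergence to convergence of the full sequence via a Pirogov--Sinai cluster expansion based on Theorem \ref{thmz3.2}.

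For (i) I would transfer the tail bound \eqref{z4.0.4} from the ordinary Gibbs measure $G_{\La_n,q^{(k)}}$ to the diluted one. Since $G^{(k)}_{\La_n,q^{(k)}}$ is $G_{\La_n,q^{(k)}}$ conditioned on the boundary constraint $\{\Theta=k\text{ on }\delta_{\rm in}^{\ell_+}[\La_n]\}$, and Theorem \ref{thmz3.1} applied cube by cube in the boundary layer bounds the probability of this constraint from below by a positive constant uniform in $n$, the conditioning multiplies tail probabilities by at most a bounded factor. Hence \eqref{z4.0.4} holds uniformly for the diluted measures (absorbing constants into a larger choice of $n^*$), Prohorov's theorem yields a subsequential weak limit, and since the sets $A^c_{\delta,N,C}$ are closed, the defining inequalities of $M^0(\mathcal Q)$ survive in any weak limit, so $\mu^{(k)}\in M^0(\mathcal Q)$.

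For the DLR equation, fix a bounded measurable cylindrical $f$ on $\mathcal Q_\Delta$ for some bounded $\mathcal D^{(1)}$-measurable $\Delta$. For $n$ so large that $\Delta$ lies at distance more than $2\ga^{-1}$ from $\delta_{\rm in}^{\ell_+}[\La_n]$, the boundary-constraint indicator in \eqref{z3.1} is a function of $q_{\La_n\setminus\Delta}$ alone and is energetically decoupled from particles in $\Delta$, so the ordinary tower property gives $G^{(k)}_{\La_n,q^{(k)}}(f)=G^{(k)}_{\La_n,q^{(k)}}(G_{\Delta,q}(f))$. To pass to the weak limit I would use that $G_{\Delta,q}(f)$ is continuous in $q$ on each clopen layer of fixed particle counts in the $2\ga^{-1}$-neighborhood of $\Delta$ (the energy depends smoothly on $q_{\Delta^c}$ through the smooth kernel $J_\ga$), then truncate by a cutoff $M$ on those counts: on the clopen set where all relevant cubes contain at most $M$ particles the integrand is bounded and continuous, and the tail has probability $O(e^{-M})$ uniformly in $n$ by $M^0$-tightness. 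Letting $M\to\infty$ yields $\mu^{(k)}(f)=\mu^{(k)}(G_{\Delta,q}(f))$, hence $\mu^{(k)}\in\mathcal G^0$ by Lemma \ref{lemmaz4.0.2}. The bound \eqref{z4.0.8} follows immediately because $\{\Theta(\cdot;r)=k\}$ depends only on integer particle counts in finitely many $\mathcal D^{(\ell_-)}$-cubes, is therefore clopen in the product topology on $\mathcal Q$, and weak convergence preserves probabilities of clopen sets; one then invokes Theorem \ref{thmz3.1} along the subsequence.

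The main obstacle is step (iii), showing that the entire sequence converges, not merely a subsequence. Here Theorem \ref{thmz3.2} enters essentially: its exponential contour bound is the small-activity input for a convergent Pirogov--Sinai cluster expansion of the diluted partition functions $Z_{\la,\La,k}(\bar q)$ and their finite-dimensional marginals, as carried out in Chapter~11 of \cite{leipzig}. The expansion yields a sequence-independent formula for the limiting expectations of cylinder functions, which fixes the subsequential limit uniquely; combined with the tightness above, this promotes cylinder-convergence to weak convergence of the full sequence and identifies $\mu^{(k)}$ unambiguously.
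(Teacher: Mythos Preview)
Your step (i) contains a real problem. You write that Theorem~\ref{thmz3.1} ``applied cube by cube in the boundary layer bounds the probability of this constraint from below by a positive constant uniform in $n$'', and then use this to transfer the tail bound \eqref{z4.0.4} from $G_{\La_n,q^{(k)}}$ to $G^{(k)}_{\La_n,q^{(k)}}$ by a bounded multiplicative factor. But Theorem~\ref{thmz3.1} is a statement about the \emph{diluted} measure $G^{(k)}$, not about the unconstrained $G$; it gives you no handle on $G_{\La_n,q^{(k)}}(\{\Theta=k\text{ on }\delta_{\rm in}^{\ell_+}[\La_n]\})$. Moreover, even if you had single-cube bounds under $G$, the boundary layer has volume growing like $|\partial\La_n|\ell_+$, so an intersection over all its $\ell_+$-cubes would not stay bounded away from zero. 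The global conditioning argument does not work as stated.

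The paper sidesteps this entirely, and its route is both shorter and structurally different. Set $\La'=\La_n\setminus\delta_{\rm in}^{\ell_+}[\La_n]$ and $\Delta_n=\La'\setminus\delta_{\rm in}^{\ell_+}[\La']$. Because the $\Theta$-constraint in \eqref{z3.1} is measurable with respect to $q_{\La_n\setminus\Delta_n}$, the diluted measure satisfies the \emph{ordinary} DLR equation on $\Delta_n$; together with \eqref{z4.0.4} and \eqref{z4.0.4bis} this gives $G^{(k)}_{\La_n,q^{(k)}}\in\mathcal G^0_{\Delta_n}$ directly, with no appeal to Theorem~\ref{thmz3.1} and no conditioning-ratio estimate. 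Lemma~\ref{lemmaz4.0.1} (weak compactness and monotonicity of $\mathcal G^0_{\Delta_m}$) then yields subsequential limits in $\bigcap_m\mathcal G^0_{\Delta_m}=\mathcal G^0$ without your hands-on continuity/truncation argument for the DLR property. Finally, the paper's proof stops at \emph{subsequential} convergence: your step~(iii) is not part of the proof of this theorem at all. Full convergence (and independence from $\{\La_n\}$ and $q^{(k)}$) is established only afterwards, as a corollary of the coupling estimate Theorem~\ref{thmz5.0.1} (see Theorem~\ref{thmz5.0.3}), rather than via a direct cluster expansion.
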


 \vskip.5cm

 {\bf Proof.}  Call $\La'= \La_n \setminus \delta_{\rm in}^{\ell_+}[\La_n]$,
$\Delta_n= \La' \setminus \delta_{\rm in}^{\ell_+}[\La']$.  Then by  \eqref{z4.0.4} and \eqref{z4.0.4bis}
$G^{(k)}_{\La_n,q^{(k)}}\in \mathcal G^0_{\Delta_n}$.  Since $\Delta_n$ is increasing,
 by Lemma \ref{lemmaz4.0.1} for $n\ge m$,
$G^{(k)}_{\La_n,q^{(k)}}\in \mathcal G^0_{\Delta_m}$  which is weakly compact. Then
$G^{(k)}_{\La_n,q^{(k)}}$ converges weakly
by subsequences to an element $\mu^{(k)}$ of $ \mathcal G^0_{\Delta_m}$.  Thus $\dis{\mu^{(k)} \in \bigcap_m \mathcal G^0_{\Delta_m}}$
and by Corollary \ref{coroz4.0.1} $\mu^{(k)} \in \mathcal G^0$. \eqref{z4.0.8} follows because
it is satisfied
by $G^{(k)}_{\La_n,q^{(k)}}$, $G^{(k)}_{\La_n,q^{(k)}}$ converges weakly to $\mu^{(k)}$ by subsequences  and
 $\{ \Theta(\cdot;r) = k\}$ is closed.  \qed

\vskip2cm

%%%%%%%%%%%%%%%%%%%%%%%%%%%%%%%%%%%%%%%%%%%%%%%%%%%%%%%%%%%%%%%%
\subsection{{Relativized uniqueness of  DLR measures}}
        \label{sec:z5}
%%%%%%%%%%%%%%%%%%%%%%%%%%%%%%%%%%%%%%%%%%%%%%%%%%%%%%%%%%%%%%%%
The title means  that   $k$-boundary conditions, $k\in\{1,..,S+1\}$,
select a unique measure in the thermodynamic limit.
 The precise results are stated in Theorem \ref{thmz5.0.1}
and its corollary Theorem \ref{thmz5.0.3}.

 \vskip.5cm

   \begin{thm}
   \label{thmz5.0.1}
 There are $\om$ and  $c$  positive
such that for all $\ga$ small enough, for all $k\in \{1,..,S+1\}$,
for all bounded, $\mathcal D^{(\ell_+)}$-mea\-su\-ra\-ble, simply
connected regions $\La_1$, $\La_2$, for all $k$-boundary conditions
$q_1,q_2$, for all
 $\mathcal D^{(\ell_+)}$-measurable sets $\Delta$ in
$\La_1\cap \La_2$ and for all
bounded, measurable cylindrical functions $f$ in
$\Delta$,
    \begin{equation}
       \label{z5.0.1}
\big| G^{(k)}_{\La_1,q_1}(f)- G^{(k)}_{\La_2,q_2}(f)\big| \le c
\|f\|_{\infty} |\Delta|\, e^{-\om \ell_+^{-1} {\rm dist}(\Delta,
(\La_1 \cap \La_2)^c)}
    \end{equation}

    \end{thm}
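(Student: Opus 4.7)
The plan is to combine the Peierls estimate \eqref{z3.4} with the exponential decay of correlations in the $k$-restricted ensemble (Theorem 3.1 of \cite{DMPV2}) along the Pirogov--Sinai lines of Chapter 12 of \cite{leipzig}. Both diluted Gibbs measures $G^{(k)}_{\La_i, q_i}$ are supported on configurations with $\Theta=k$ on $\delta_{\rm in}^{\ell_+}[\La_i]$, so each admits a contour expansion in which the statistical weight of a configuration decomposes into (i) a product of true weights $W^{k,\mathrm{true}}(\Ga|q^{(k)})$ over external $k$-colored contours, each bounded by \eqref{z3.4}, and (ii) a $k$-restricted-ensemble partition function on the complement with boundary data inherited from $q_i$.

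First I fix a $\mathcal{D}^{(\ell_+)}$-measurable region $D$ with $\Delta\subset D\subset \La_1\cap\La_2$ and $\mathrm{dist}(\Delta,D^c)\ge \tfrac12\,\mathrm{dist}(\Delta,(\La_1\cap\La_2)^c)$, and let $B$ be the event that some contour $\Ga$ satisfies $c(\Ga)\cap\Delta\ne\emptyset$ and $c(\Ga)\not\subset D$. Any such $\Ga$ has connected spatial support crossing $D\setminus\Delta$, hence $N_\Ga\ge c_0\,\mathrm{dist}(\Delta,D^c)/\ell_+$. Summing the Peierls weight \eqref{z3.4} against the standard entropy factors $(S+2)^{(\ell_+/\ell_-)^d N_\Ga}$ for specifications and $e^{cN_\Ga}$ for connected-polymer shapes --- exactly as in the summation used in the proof of Theorem \ref{thmz3.1} --- gives
\[
G^{(k)}_{\La_i,q_i}(B) \le c'\,|\Delta|\,e^{-\om\,\ell_+^{-1}\,\mathrm{dist}(\Delta,(\La_1\cap\La_2)^c)},\qquad i=1,2.
\]

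On the complementary event $B^c$ every contour touching $\Delta$ is contained in $D$. Conditioning on the outermost contour collection $\underline{\Ga}_D$ with $c(\Ga)\subset D$, the conditional expectation of $f$ is given by integration against a $k$-restricted-ensemble measure on $D\setminus\bigcup_{\Ga\in\underline{\Ga}_D}c(\Ga)$ with ``effective $k$-boundary conditions'' read off the configuration on $\delta_{\rm out}^{\ell_+}[D]$. The joint law of $\underline{\Ga}_D$ together with this effective boundary under $G^{(k)}_{\La_i,q_i}$ is --- modulo the Peierls-small event $B$ --- a functional of the $k$-restricted Gibbs measure on $\La_i$ with boundary condition $q_i$. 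Applying the exponential decay of correlations of the $k$-restricted ensemble from Theorem 3.1 of \cite{DMPV2}, this joint law is insensitive, up to an error $e^{-\om'\ell_+^{-1}\mathrm{dist}(D,\partial\La_i)}$, to whether the outer region is $\La_1$ or $\La_2$ and to whether the boundary condition is $q_1$ or $q_2$. Combining the two estimates and possibly shrinking $\om$, \eqref{z5.0.1} follows.

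The main difficulty lies in rigorously implementing the conditioning step: one needs a cluster expansion for the polymer system of contours (as developed in Chapter 11 of \cite{leipzig}) that cleanly separates the contour degrees of freedom from the underlying restricted-ensemble reference measure, so that the final comparison genuinely reduces to a restricted-ensemble correlation estimate where Theorem 3.1 of \cite{DMPV2} applies directly. The energy-versus-entropy balance in the Peierls sum for $B$ must also be checked against the choice of scales \eqref{elle} subject to \eqref{aI.3.11}, but this goes through just as for the analogous summation in the proof of Theorem \ref{thmz3.1}.
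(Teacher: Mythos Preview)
Your overall strategy---Peierls estimates plus the decay result from \cite{DMPV2}---is the right one, but the paper's execution is cleaner and avoids the conditioning difficulty you yourself flag as the main obstacle.

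The paper does not separate the argument into ``long-contour event $B$'' followed by ``conditional restricted-ensemble comparison.'' Instead it first rewrites $G^{(k)}_{\La_i,q_i}(f)$ via a representation formula (Theorem \ref{thm14n.2.1-1}) as an integral of a functional $F(q_\La,\phi_{\rm ext}(\und\Ga))$ against a measure $dp^{(k)}_{\ga,\La_i,q_i}(q_\La,\und\Ga)$ on $\mathcal X^{(k)}_{\La_i}\times\mathcal B^{k}_{\La_i}$: the $q$-marginal lives in the restricted ensemble, weighted by the product of true contour weights $W^{k,{\rm true}}(\und\Ga;q)$. This is precisely the class of measures to which Theorem 3.1 of \cite{DMPV2} applies directly: that theorem furnishes a coupling $d\mathcal P$ of $dp^{(k)}_{\La_1,q_1}$ and $dp^{(k)}_{\La_2,q_2}$ under which, with probability $\ge 1-|\Delta|e^{-\om\ell_+^{-1}{\rm dist}(\Delta,(\La_1\cap\La_2)^c)}$, the two samples $(q',\und\Ga')$ and $(q'',\und\Ga'')$ agree on a $\mathcal D^{(\ell_+)}$-neighborhood of $\Delta$ in a sense strong enough that $F$ takes the same value. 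The bound \eqref{z5.0.1} then drops out from the coupling identity \eqref{z5.0.10} in one line.

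Two points of comparison. First, your reading of Theorem 3.1 of \cite{DMPV2} as ``correlation decay in the restricted ensemble'' is slightly too narrow: it is a coupling theorem for restricted-ensemble measures \emph{decorated with contour weights satisfying a Peierls bound}---the Peierls estimate \eqref{z3.4} enters as a hypothesis there, not as a separate preliminary step. Second, the representation formula is exactly what resolves your ``main difficulty'': it packages the contour degrees of freedom and the restricted-ensemble reference measure into a single product structure, so no cluster expansion or delicate conditioning is needed at this stage. Your two-step route could be made to work, but it would essentially reconstruct Theorem \ref{thm14n.2.1-1} along the way.
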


\vskip.5cm

The proof will be obtained after rewriting the expectations
$G^{(k)}_{\La,q^{(k)}}(f)$ in
a way which allows to
exploit the couplings introduced in \cite{DMPV2}.

\vskip.5cm

 \centerline  {\it Notation.}
 \nopagebreak
\noindent We fix  $\Delta$ and $f$ as in Theorem \ref{thmz5.0.1}.
Let $ \La \supset \Delta$ be a  bounded, $\mathcal
D^{(\ell_+)}$-mea\-su\-ra\-ble set, $\und \Ga\in \mathcal B^{k}_\La$
and (recall \eqref{z3.3})
     \begin{equation}
       \label{z5.0.2}
c(\und \Ga)= \bigcup_{\Ga\in  \und \Ga} c(\Ga),\;\; {\rm ext}(\und
\Ga)=  \La \setminus c(\und \Ga),\;\;W^{k, {\rm true}}(\und
\Ga;q)=\prod_{\Ga\in \und \Ga} W^{k, {\rm true}}( \Ga;q)
    \end{equation}
Denote by $\mathcal B^{k,{\rm ext}}_\La$   the subset of $\mathcal
B^{k}_\La$ of collections $\und \Ga=(\Ga_1,..,\Ga_n)$ made
exclusively of external contours, namely such that all $c(\Ga_i)$
are mutually disconnected. Let  $\und \Ga\in \mathcal B^{k,{\rm
ext}}_\La$ and call (dependence on  $f$, $\La$ and $\Delta$ is not
made explicit):
     \begin{eqnarray}
       \label{z5.0.3}
&&\hskip-2cm \hat K(\und \Ga)=  \Big \{\Ga\in  \und
\Ga: c(\Ga)\cap \Delta\ne \emptyset\Big\}
\\&&\hskip-2cm
       \label{z5.0.4}
\hat D(\und \Ga)= \{\Delta \cap {\rm ext}(\und \Ga)\}
\cup\{
 \bigcup_{\Ga\in \hat K(\und \Ga)}
 \delta_{\rm out}^{\ell_{+}}[c(\Ga)]\}\\&&
\hskip-2cm F(q,\und \Ga)= \frac{N^{(k)}(\und \Ga;q;f)}{N^{(k)}( \und
\Ga;q;1)}, \;\;\; q:q_{{\rm ext}(\und\Ga)}\in \mathcal X^{(k)}_{{\rm
ext}(\und \Ga)}
       \label{z5.0.5}
    \end{eqnarray}
where, calling
$\mathcal X^0( \Ga)=\{q_{c( \Ga)}: \eta(q_{c(\Ga)};r)= \eta_{\Ga}(r), r\in {\rm
sp}(\Ga),\Theta(q_{c( \Ga)};r)=h,
r\in\delta_{\rm in}^{\ell_{+,\ga}}[{\rm int}^{h}(\Ga)]\}$ and
 $\dis{\mathcal X^0(\und \Ga)=\bigcap_{\Ga\in \und \Ga}\mathcal X^0(\Ga)}$,
     \begin{equation}
       \label{z5.0.6}
N^{(k)}( \und \Ga;q;f)= \int_{ \mathcal X^0(\und \Ga)} e^{-\beta
H_{c(\und \Ga)}(q'_{c(\und \Ga)} |q_{{\rm ext}(\und\Ga)})}
f(q'_{c(\und \Ga)},q_{{\rm ext}(\und\Ga)})d\nu_{c(\und
\Ga)}(q'_{c(\und \Ga)})
    \end{equation}

\vskip1cm

   \begin{thm}
   \label{thm14n.2.1-1}
With the above notation
    \begin{equation}
       \label{z5.0.7}
\|F\|_\infty \le \|f\|_\infty,\;\;\;F(q,\und \Ga)=
F(q_{\hat D(\und \Ga)},\hat K(\und \Ga))
    \end{equation}
    \begin{equation}
       \label{z5.0.8}%{14n.2.1.5}
 G^{(k)}_{\La,q^{(k)}}(f)= \sum_{\und \Ga \in \mathcal B^{k}_\La}
 \int_{\mathcal X^{(k)}_\La} F(q_\La;\phi_{\rm ext}(\und \Ga))
dp^{(k)}_{\ga,\La,q^{(k)}}(q_\La,\und \Ga)
    \end{equation}
where $\phi_{\rm ext}(\und \Ga)$ is the subset of  external contours
in $\und \Ga$ (obtained by deleting from  $\und \Ga$ all $\Ga'$ with
$c(\Ga')\subset c(\Ga)$ for some other $\Ga\in \und \Ga$); and,
recalling the definition \eqref{z3.1a},
    \begin{equation}
       \label{z5.0.9}
dp^{(k)}_{\ga,\La,q^{(k)}}(q_\La,\und
\Ga)=(Z^{(k)}_{\La,q^{(k)}})^{-1}e^{-\beta
H_{\La}(q_\La|q^{(k)}_{\La^c})} W^{k, {\rm true}}(\und
\Ga;q_\La)\text{\bf 1}_{q_\La\in \mathcal X^{(k)}_\La}
d\nu_{\La}(q_\La)
    \end{equation}

    \end{thm}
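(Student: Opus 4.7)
The statement splits into the structural claims about $F$ in \eqref{z5.0.7} and the contour-representation identity \eqref{z5.0.8}. The bound $\|F\|_\infty \le \|f\|_\infty$ is immediate: $F = N^{(k)}(\und\Ga;q;f)/N^{(k)}(\und\Ga;q;1)$ is a ratio of positive integrals with identical Boltzmann weights and measure, hence $|N^{(k)}(\cdot;f)| \le \|f\|_\infty\, N^{(k)}(\cdot;1)$. The factorization $F(q,\und\Ga) = F(q_{\hat D(\und\Ga)}, \hat K(\und\Ga))$ combines two facts: (i) since the $c(\Ga)$, $\Ga \in \und\Ga$, are mutually disconnected and the interaction kernel $J_\ga\star J_\ga$ has range $2\ga^{-1} < \ell_+$, the Hamiltonian in \eqref{z5.0.6} splits as $H_{c(\und\Ga)}(\cdot|q_{{\rm ext}(\und\Ga)}) = \sum_{\Ga\in\und\Ga}H_{c(\Ga)}(\cdot|q_{{\rm ext}(\und\Ga)})$, each term depending on $q_{{\rm ext}}$ only through $\delta_{\rm out}^{\ell_+}[c(\Ga)]$; (ii) $f$ is cylindrical in $\Delta$. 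Thus for every $\Ga\notin\hat K(\und\Ga)$ (i.e.\ $c(\Ga)\cap\Delta=\emptyset$) the corresponding $c(\Ga)$-integral factors identically in numerator and denominator and cancels, leaving only the $q|_{\hat D(\und\Ga)}$- and $\hat K(\und\Ga)$-dependence.

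For \eqref{z5.0.8} we run an iterated Pirogov--Sinai external-contour expansion. At the outermost level, every configuration in the $k$-diluted ensemble has a well-defined family $\und\Ga^{\rm ext}\in\mathcal B^{k,\rm ext}_\La$ of external contours, whose color is forced to $k$ by the condition $\Theta=k$ on $\delta_{\rm in}^{\ell_+}[\La]$. Partitioning the numerator integral $\int e^{-\beta H_\La}f\,d\nu$ by $\und\Ga^{\rm ext}$ and factoring the integration into $q_{{\rm ext}}\in\mathcal X^{(k)}_{{\rm ext}(\und\Ga^{\rm ext})}$ and $q'_{c(\und\Ga^{\rm ext})}\in\mathcal X^0(\und\Ga^{\rm ext})$ yields
\[
Z^{(k)}_{\La,q^{(k)}}G^{(k)}_\La(f) = \sum_{\und\Ga^{\rm ext}\in\mathcal B^{k,\rm ext}_\La}\int_{\mathcal X^{(k)}_{{\rm ext}(\und\Ga^{\rm ext})}}e^{-\beta H_{{\rm ext}}(q_{{\rm ext}}|q^{(k)})}\,F(q_{{\rm ext}},\und\Ga^{\rm ext})\,N^{(k)}(\und\Ga^{\rm ext};q_{{\rm ext}};1)\,d\nu.
\]
For each external $\Ga$ then substitute $N^{(k)}(\Ga;q_{{\rm ext}};1) = W^{k,{\rm true}}(\Ga;q_{{\rm ext}})\cdot R^{(k)}(\Ga;q_{{\rm ext}})$, where $R^{(k)}$ is the denominator of \eqref{z3.3}: an integral over $\mathcal X^{(k)}_{{\rm sp}(\Ga)}$ against $\prod_j Z^{(k)}_{{\rm int}_j(\Ga)}$. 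Combining $\mathcal X^{(k)}_{{\rm sp}(\Ga)}$ with $\mathcal X^{(k)}_{{\rm ext}}$ enlarges the ``nice'' integration domain to $\La\setminus{\rm int}(\und\Ga^{\rm ext})$, and the remaining $Z^{(k)}_{{\rm int}_j(\Ga)}$ are themselves $k$-diluted partition functions on strictly smaller regions.

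One iterates the same expansion on each $Z^{(k)}_{{\rm int}_j(\Ga)}$: its external contours are again $k$-colored (since $\Theta=k$ on $\delta_{\rm in}^{\ell_+}[{\rm int}_j(\Ga)]$ is inherited) and are nested inside ${\rm int}_j(\Ga)\setminus\delta_{\rm in}^{\ell_+}[{\rm int}_j(\Ga)]$. Each recursion strips off an $\ell_+$-layer, so the process terminates, and assembling all contributions produces
\[
Z^{(k)}_{\La,q^{(k)}}G^{(k)}_\La(f) = \sum_{\und\Ga\in\mathcal B^k_\La}\int_{\mathcal X^{(k)}_\La}e^{-\beta H_\La(q_\La|q^{(k)})}\,F(q_\La,\phi_{\rm ext}(\und\Ga))\,W^{k,{\rm true}}(\und\Ga;q_\La)\,d\nu(q_\La),
\]
where the identification of the first-level $F(q_{{\rm ext}},\und\Ga^{\rm ext})$ with $F(q_\La,\phi_{\rm ext}(\und\Ga))$ uses the reduction \eqref{z5.0.7}, confining the $q$-dependence of $F$ to $\hat D(\phi_{\rm ext}(\und\Ga))\subset{\rm ext}(\und\Ga^{\rm ext})$. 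Dividing by $Z^{(k)}_{\La,q^{(k)}}$ gives \eqref{z5.0.8}. The main bookkeeping obstacle is verifying that the iterated expansion enumerates $\mathcal B^k_\La$ exactly once---a bijection between ``trees of nested external expansions'' and mutually disconnected $k$-colored contour families in $\La\setminus\delta_{\rm in}^{\ell_+}[\La]$---and that every nested true weight is taken at color $k$, which holds because every nested diluted partition function inherits that outer color.
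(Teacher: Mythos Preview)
Your proposal is correct and follows the same iterated external-contour expansion that the paper invokes by reference to Theorem 12.5.1.1 of \cite{leipzig}; the paper itself omits the proof entirely. The structure you outline---partition by external contours, factor $N^{(k)}(\Ga;\cdot;1)$ as $W^{k,{\rm true}}(\Ga;\cdot)$ times the denominator of \eqref{z3.3}, absorb the $\mathcal X^{(k)}_{{\rm sp}(\Ga)}$ integration into the ``nice'' region, and recurse on the resulting $k$-diluted partition functions $Z^{(k)}_{{\rm int}_j(\Ga)}$---is precisely the standard Pirogov--Sinai telescoping, and the bookkeeping bijection you flag between nested external-contour trees and $\mathcal B^k_\La$ is exactly what makes the identity work.
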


 \vskip.5cm

The proof is completely analogous to that
of Theorem 12.5.1.1 in \cite{leipzig} and omitted.

\vskip1cm

{\bf Proof of Theorem \ref{thmz5.0.1}.} By Theorem \ref{thmz3.2} the
contour weights satisfy the assumptions in Theorem 3.1 of
\cite{DMPV2} which can then be applied. It then follows that there
is a coupling $d\mathcal P(q',\und \Ga',q'',\und \Ga'')$ of
$dp^{(k)}_{\La_1,  q_1}$ and $dp^{(k)}_{\La_2,  q_2}$ with the
following property.
 $\mathcal P(\mathcal A)\ge 1- |\Delta|\, e^{-\om \ell_+^{-1} {\rm
dist}(\Delta, (\La_1 \cap \La_2)^c)}$ where $\mathcal A$ is the set of all
$(q',\und \Ga',q'',\und \Ga'')$ for which there
exists a $\mathcal D^{(\ell_+)}$-measurable
region $\Delta'$ such that: if $\Ga \in \und \Ga'\cup \und \Ga''$ then $c(\Ga)\cap
\delta_{\rm in}^{\ell_+}[\Delta']=\emptyset$; the contours of $ \und \Ga'$ and
$ \und \Ga''$ with spatial support in $\Delta'$ are identical as well as
the restrictions to $\Delta'$ of
$q'$ and $q''$; finally
$\Delta \subset \Delta'\setminus \delta_{\rm in}^{\ell_+}[\Delta']$.
By \eqref{z5.0.8},
    \begin{equation}
       \label{z5.0.10}
  G^{(k)}_{\La_1,q_1}(f)- G^{(k)}_{\La_2,q_2}(f)=\int \Big(
  F(q_{\La_1};\phi_{\rm ext}(\und \Ga'))- F(q_{\La_2};\phi_{\rm ext}(\und \Ga''))
  \Big)d\mathcal P(q',\und \Ga',q'',\und \Ga'')
    \end{equation}
and by the definition of $F$, $F(q_{\La_1};\phi_{\rm ext}(\und \Ga'))= F(q_{\La_2};\phi_{\rm ext}(\und \Ga''))$ on $\mathcal A$, hence Theorem \ref{thmz5.0.1}.  \qed

\vskip1cm

As an immediate corollary of  Theorem \ref{thmz5.0.1} we have:

\vskip.5cm

   \begin{thm}
   \label{thmz5.0.3}
In the same context of  Theorem \ref{thmz5.0.1},
    \begin{equation}
       \label{z5.0.11}
\big| G^{(k)}_{\La_1,q_1}(f)- \mu^{(k)}(f)\big| \le c \|f\|_{\infty}
|\Delta|\, e^{-\om \ell_+^{-1} {\rm dist}(\Delta, \La_1 ^c)}
    \end{equation}
where $\mu^{(k)}$ is the DLR measure defined in Theorem \ref{thmz4.0.2}.
    \end{thm}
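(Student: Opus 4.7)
The plan is to deduce Theorem \ref{thmz5.0.3} directly from Theorem \ref{thmz5.0.1} by using the construction of $\mu^{(k)}$ in Theorem \ref{thmz4.0.2}. Recall that there exist configurations $q^{(k)}\in \mathcal X^{(k)}$ satisfying \eqref{z4.0.4bis} and an increasing sequence of $\mathcal D^{(\ell_+)}$-measurable regions $\La_n$ invading $\mathbb R^d$, along a subsequence of which $G^{(k)}_{\La_n, q^{(k)}}$ converges weakly to $\mu^{(k)}$. Since $q^{(k)}\in \mathcal X^{(k)}$, it trivially qualifies as a $k$-boundary condition relative to any bounded region, so Theorem \ref{thmz5.0.1} applies to the pair $(\La_1, q_1)$ and $(\La_n, q^{(k)})$.

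First I would fix $n$ large enough that $\La_n \supset \La_1$, which is possible because $\La_1$ is bounded. For such $n$ we have $\La_1 \cap \La_n = \La_1$, so Theorem \ref{thmz5.0.1} yields
\begin{equation*}
\bigl|G^{(k)}_{\La_1, q_1}(f) - G^{(k)}_{\La_n, q^{(k)}}(f)\bigr| \le c \|f\|_\infty |\Delta|\, e^{-\om \ell_+^{-1} \mathrm{dist}(\Delta, \La_1^c)},
\end{equation*}
with the right-hand side independent of $n$.

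Then I would let $n\to\infty$ along the subsequence on which weak convergence holds, so that $G^{(k)}_{\La_n, q^{(k)}}(f)\to \mu^{(k)}(f)$ and hence \eqref{z5.0.11} follows by passing to the limit in the previous display. Because $f$ is bounded measurable cylindrical rather than bounded continuous, this last step requires a small justification: I would pick a bounded $\mathcal D^{(1)}$-measurable set $\Delta'$ with $\Delta \subset \Delta' \subset \La_1$ and rewrite both $G^{(k)}_{\La_n, q^{(k)}}(f)$ and $\mu^{(k)}(f)$ as the expectations of $q \mapsto G_{\Delta', q}(f)$ (using the DLR-type consistency of the diluted measures on cylinder events strictly inside $\La_n$ for the former, and Lemma \ref{lemmaz4.0.2} for the latter). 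The map $q\mapsto G_{\Delta', q}(f)$ is bounded and continuous in the product topology of $\mathcal Q$, so weak convergence provides the needed limit. I do not expect any substantive obstacle here — all analytical content sits in Theorem \ref{thmz5.0.1}, which itself rests on the Peierls bound of Theorem \ref{thmz3.2} and the coupling from Theorem 3.1 of \cite{DMPV2}. The only minor technicality is the passage to the limit for cylindrical measurable $f$, handled as just described.
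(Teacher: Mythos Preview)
Your proposal is correct and is exactly the argument the paper has in mind: the statement is presented there as ``an immediate corollary of Theorem \ref{thmz5.0.1}'' with no further proof, and your derivation---applying Theorem \ref{thmz5.0.1} to $(\La_1,q_1)$ and $(\La_n,q^{(k)})$ with $\La_n\supset\La_1$ and letting $n\to\infty$---is the intended one. Your extra care in justifying the limit for merely measurable cylindrical $f$ via the DLR consistency (noting that $G^{(k)}_{\La_n,q^{(k)}}\in\mathcal G^0_{\Delta_n}$ from the proof of Theorem \ref{thmz4.0.2}, so conditioning on a fixed $\Delta'\subset\Delta_n$ is legitimate for large $n$) is a valid and clean way to handle that technicality.
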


\vskip2cm

%%%%%%%%%%%%%%%%%%%%%%%%%%%%%%%%%%%%%%%%%%%%%%%%%%%%%%%%%%%%%%%%
\subsection{{Tail field and extremality}}
        \label{sec:z6}
%%%%%%%%%%%%%%%%%%%%%%%%%%%%%%%%%%%%%%%%%%%%%%%%%%%%%%%%%%%%%%%%

In this section we will prove that the DLR measures $\mu^{(k)}$ have
all trivial $\si$-algebra at infinity (also called the tail field)
and they are therefore extremal DLR measures.  The property follows
from the Peierls bounds, Theorem \ref{thmz3.2}, and the exponential
decay of correlations, Theorem \ref{thmz5.0.3}. The particular
structure of the model is at this point rather unimportant and
indeed we will be able to avoid many proofs by referring to their
analogues  \cite{leipzig}.

\vskip.5cm

   \begin{defin}
   \label{definz6.0.1}
Let $\{\Delta_k\}$ be
an arbitrary but fixed  increasing sequence of $\mathcal
D^{(\ell_{+})}$-measurable cubes of sides
$2^k\ell_{+}$ which invades the whole space and $\mathcal S$ the collection of sequences $\{\La_k\}$ of the
form $\{\tau_i\Delta_k\}$,   where $\tau_i,i\in a
\mathbb Z^d, a\in \{2^{-n}, n\in \mathbb N\}$, is the translation by $i$.

   \end{defin}

  \vskip.5cm

When proving in the  next subsections that the measures $\mu^{(k)}$
are translational invariant, we will need translates of the sequence
$\{\Delta_k\}$, hence the definition of $\mathcal S$. Observe that
sequences in $\mathcal S$ are not necessarily $\mathcal D^{(\ell_
+)}$-measurable.

\vskip.5cm

   \begin{defin}
   \label{definz6.0.2}
The $k$-tail field, $k\in \{1,..,S+1\}$, is defined as
     \begin{eqnarray}
           \label{z6.0.1}
&&\mathcal Q_{k,{\rm tail}}=\Big\{ q\in
\mathcal Q: \lim_{n\to \infty} G_{\La_n,q}( f ) =
\mu^{(k)}(f),\;\text{ for any $\{\La_k\}\in \mathcal S$}\nn\\&&
\hskip2cm \text{and
for any bounded, measurable cylindrical function $f$}\;\Big\}
    \end{eqnarray}

   \end{defin}

  \vskip.5cm

   \begin{thm}
   \label{thmz6.0.1}

For all $\ga$ small enough  $\mu^{(k)}\big(\mathcal Q_{k,{\rm tail}}\big)=1$,
 $k\in \{1,..,S+1\}$.

  \end{thm}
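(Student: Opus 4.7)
The plan is to show that for every bounded measurable cylindrical $f$ supported on a $\mathcal D^{(\ell_+)}$-measurable region $\Delta$ and every fixed $\{\La_n\}\in\mathcal S$, one has $G_{\La_n,q}(f)\to\mu^{(k)}(f)$ for $\mu^{(k)}$-a.e.\ $q$. Since cylindrical functions are effectively separable (it suffices to handle a countable dense family) and $\mathcal S$ is countable, intersecting these full-measure sets yields $\mu^{(k)}(\mathcal Q_{k,\text{tail}})=1$. The proof rests on three ingredients: the Peierls bound of Theorem \ref{thmz3.2}, the exponential decay of Theorem \ref{thmz5.0.3}, and the DLR property.

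First I would transfer the Peierls bound to $\mu^{(k)}$: since $\mu^{(k)}$ is the weak limit of diluted Gibbs measures with $k$-boundary conditions (Theorem \ref{thmz4.0.2}) and events of the form ``some contour $\Ga$ with ${\rm sp}(\Ga)\subset A$ exists in $q$'' are closed, the uniform bound $G^{(k)}(\{\Theta(\cdot,r)\ne k\})\le e^{-\beta c^*\zeta^2\ell_-^d/4}$, together with its standard Peierls counting refinement bounding the probability that a contour of size $\ge N$ occurs at a given location, passes to $\mu^{(k)}$. Applying Borel--Cantelli to $\mathcal D^{(\ell_+)}$-measurable annuli around $\Delta$ arranged at geometrically growing distances, and likewise to the $2\ga^{-1}$-collar just outside each $\La_n$, one obtains $\mu^{(k)}$-a.s.\ a random sequence of nested $\mathcal D^{(\ell_+)}$-measurable regions $\Delta\subset\Delta'_m\nearrow\mathbb R^d$ such that $q$ has phase indicator $k$ throughout a collar of thickness $\gg\ga^{-1}$ around $\partial\Delta'_m$, and moreover $q$ is a $k$-boundary condition for $\La_n$ itself for every $n\ge N(q)$.

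For $n\ge N(q)$ large enough that $\Delta'_m\subset\La_n\setminus\delta_{\rm in}^{\ell_+}[\La_n]$, I apply DLR to write
\[
G_{\La_n,q}(f)\;=\;\int G_{\Delta'_m,q'}(f)\,G_{\La_n,q}(dq'),
\]
where $q'$ on $(\Delta'_m)^c$ is the Gibbs-resampled configuration on $\La_n\setminus\Delta'_m$ together with $q_{\La_n^c}$. Since $q$ is a $k$-b.c.\ for $\La_n$, a further Peierls estimate applied to $G_{\La_n,q}$ shows that, outside an event of $G_{\La_n,q}$-probability exponentially small in $\zeta^2\ell_-^d$ times the volume of the $\ga^{-1}$-collar of $\partial\Delta'_m$, the resampled $q'$ is itself a $k$-boundary condition for $\Delta'_m$. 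On this event, a third Peierls estimate combined with the contour representation \eqref{z5.0.8} controls $|G_{\Delta'_m,q'}(f)-G^{(k)}_{\Delta'_m,q'}(f)|$ by the probability that some contour reaches $\Delta$ from $\partial\Delta'_m$, which decays exponentially in $\ell_+^{-1}\mathrm{dist}(\Delta,\partial\Delta'_m)$. Finally Theorem \ref{thmz5.0.3} yields
\[
\bigl|G^{(k)}_{\Delta'_m,q'}(f)-\mu^{(k)}(f)\bigr|\;\le\;c\|f\|_\infty|\Delta|\,e^{-\om\ell_+^{-1}\mathrm{dist}(\Delta,\partial\Delta'_m)}.
\]
Sending $n\to\infty$ with $m$ fixed, then $m\to\infty$, gives the required convergence.

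The main obstacle is balancing the two Peierls error terms — each roughly of the form (boundary size of $\Delta'_m$)$\times e^{-c\zeta^2\ell_-^d}$ — against the exponential decay provided by Theorem \ref{thmz5.0.3}. One must choose the scales $|\Delta'_m|$ to grow fast enough that $\mathrm{dist}(\Delta,\partial\Delta'_m)\to\infty$, yet slowly enough (any growth polynomial in $e^{c\zeta^2\ell_-^d}$ is admissible) that the Peierls corrections remain small. Because $\ga$ is taken sufficiently small and hence $\zeta^2\ell_-^d=\ga^{2a-(1-\alpha_-)d}$ is enormous, this window is very wide and the balancing is routine, paralleling the analogous argument for the LMP model in Chapter~12 of \cite{leipzig}.
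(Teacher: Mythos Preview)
Your Borel--Cantelli step fails, and with it the whole architecture. Under $\mu^{(k)}$ the event $\{\eta(r;q)\ne k\}$ has strictly positive probability at every $\ell_-$-cube (the Gibbs measure gives positive mass to the empty configuration in any cube), and by $\ell_+\mathbb Z^d$-invariance the set $\{r:\eta(r;q)\ne k\}$ has positive spatial density $\mu^{(k)}$-a.s. Consequently any collar of diverging volume---in particular the $2\ga^{-1}$-collar of $\La_n$ as $n\to\infty$, or the collars of your $\Delta'_m\nearrow\mathbb R^d$---almost surely contains points where $\eta\ne k$. Thus $q$ is $\mu^{(k)}$-a.s.\ \emph{not} a $k$-boundary condition for $\La_n$ for all large $n$, and no sequence $\Delta'_m\nearrow\mathbb R^d$ with entirely clean collars exists. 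The probabilities you would need to sum in Borel--Cantelli are of order (collar volume)$\times\eps$ with $\eps>0$ fixed, and hence diverge.

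The paper's proof avoids this by never asking for clean collars. It defines the event $\mathcal B_n$ requiring only that the connected components of $\{\Theta\ne k\}$ touching $\partial\La_{n;0}$ have diameter $<2^{n-1}\ell_+$; the Peierls bound on \emph{large} contours makes $\mu^{(k)}(\mathcal B_n^c)\le\eps_n$ summable. Then DLR plus Markov's inequality (not a direct Peierls estimate on the undiluted $G_{\La_n,q}$, which is the second unjustified step in your sketch) gives $\mu^{(k)}(F_n)\ge 1-\sqrt{\eps_n}$ for $F_n=\{q:G_{\La_n,q}(\mathcal B_n)\ge 1-\sqrt{\eps_n}\}$. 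On $\mathcal B_n$ one conditions on the \emph{random} set $\hat N_{n,\mathrm{out}}=B$ (the union of $\La_{n;0}^c$ with the boundary-touching bad components); by construction $\Theta=k$ on $\delta_{\rm in}^{\ell_+}[B^c]$, so the conditional measure on $B^c\supset\La_{n-2}$ is automatically the \emph{diluted} $G^{(k)}_{B^c,\cdot}$, to which Theorem~\ref{thmz5.0.3} applies directly. The random-region device is precisely what replaces your missing clean collars.
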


  \vskip.5cm

By taking countably many intersection we can reduce the proof
of Theorem \ref{thmz6.0.1} to
the proof that for any  sequence  $\{\La_n\}\in \mathcal S$ and
any $\mathcal D^{(\ell_{+})}$-measurable cube $\Delta$
     \begin{eqnarray}
           \label{z6.0.2}
&&
\mu^{(k)}\Big(\big\{q\in \mathcal X: \lim_{n\to \infty}
G_{\La_n,q} ( f ) = \mu^{(k)}(f),\;\; \text{for all bounded, measurable functions $f$}\nn\\&&
\hskip2cm \text{
cylindrical in $\Delta$}\big\}\Big)=1
       \end{eqnarray}
This would be direct consequence of Theorem
\ref{thmz5.0.3} if we had
$G^{(k)}_{\La_n,q}$ instead of $G_{\La_n,q}$
in \eqref{z6.0.2} and the whole point
will be to reduce to such a case.

\vskip.5cm

$\bullet$\; {\em Random sets.}

We call $\La_{n;0}$ the union of all $\mathcal
D^{(\ell_{+})}$  cubes
contained in $\La_n$ (recall $\La_n$ may not
be $\mathcal D^{(\ell_{+})}$-measurable)
and define the
random set $\hat N_{n,{\rm out}}$ as follows. $\hat N_{n,{\rm out}}(q)$ is
 the
union of $\La_{n;0}^c$ with all the maximal connected components $A$
of the set $\{r\in \La_{n;0}: \Theta(q;r)\ne k\}$ such that
$\delta_{\rm out}^{\ell_+}[A]\cap \La_{n;0}^c\ne \emptyset$. We call
$\hat N_{n,{\rm in}}$ the complement of $\hat N_{n,{\rm out}}$ and
observe that by construction, $ \Theta(r;q)=k$ for all $r\in
\delta_{\rm in}^{\ell_+}[ \hat N_{n,{\rm in}}]$.

\vskip.5cm

$\bullet$\; {\em The favorable case.}

Given $r$ call $A(q;r)$ the maximal connected component of
$\{\Theta(\cdot;q)\ne k\}$ which contains $r$
($A(q;r)$ may be empty).  Calling ${\rm diam }(A)$ the
diameter of the set $A$, we define
     \begin{equation}
           \label{z6.0.3}%{8.1d.2.4.00}
\mathcal B_n:=\Big\{q:{\rm diam}( A(r;q))
<2^{k-1}\ell_{+},\;\,\text{for all $r \in \delta_{\rm
in}^{\ell_+}[\La_{n;0}]$} \Big\}
       \end{equation}
Notice that
     \begin{equation}
           \label{z6.0.4}%{8.1d.2.4.001}
\hat N_{n,{\rm out}}(q) \cap \La_{n-2}
=\emptyset,\;\;\text{for all $q\in\mathcal B_n$}
       \end{equation}

\vskip1cm

   \begin{thm}
   \label{thmz6.0.2}
Let $\eps_n:= 2^{n(d-1)}  e^{ -\beta (c^*/4) \, (\zeta^2
\ell_-^d)2^{n-1}}$ and
    \begin{equation}
           \label{z6.0.5}%{8.1d.2.5}
F_n= \Big\{q: G_{\La_n,q} \big(\mathcal B_{n}\big) \ge
1 -\sqrt{\eps_n}\Big\}
       \end{equation}
Then
    \begin{equation}
           \label{z6.0.6}%{8.1d.2.5.1}
\mu^{(k)}\big(\,F_n\,\big)\ge 1 -\sqrt{\eps_n}
       \end{equation}
and for all $q\in
F_n$ and all measurable, bounded functions $f$ cylindrical
in $\Delta$,
    \begin{equation}
     \label{z6.0.7}%{8.1d.2.5.1.00a}
 \big|G_{\La_n,q} (f) - \mu^{(k)}(f)\big| \leq 2
\|f\|_\infty \sqrt{\eps_n}  +  c \|f\|_{\infty} |\Delta|\,
e^{-\om \ell_+^{-1}  {\rm dist}(\Delta, \La_{n-2}^c)}
       \end{equation}
with $c$ and $\om$ as in  \eqref{z5.0.1}.

   \end{thm}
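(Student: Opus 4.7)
The plan is to deduce \eqref{z6.0.6} from a Peierls-type bound on $\mathcal B_n^c$ combined with the DLR property and Markov's inequality, and to deduce \eqref{z6.0.7} via a spatial Markov decomposition that, on $\mathcal B_n$, reduces the conditional law of the sample to a $k$-diluted Gibbs measure on a random subregion containing $\La_{n-2}$, so that Theorem \ref{thmz5.0.3} applies.

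For \eqref{z6.0.6} I would first show $\mu^{(k)}(\mathcal B_n^c) \le \eps_n$. If $q \in \mathcal B_n^c$, the definition of $\Theta$ and Proposition \ref{thm2.6} yield a contour $\Ga$ with $c(\Ga)$ meeting some cube of $\delta_{\rm in}^{\ell_+}[\La_{n;0}]$ and $N_\Ga \ge 2^{n-1}$. Writing $\mu^{(k)}(\mathcal B_n^c)$ through the contour expansion in the spirit of Theorem \ref{thm14n.2.1-1}, bounding each contour weight by Theorem \ref{thmz3.2}, and summing against the entropy factor $(S+2)^{(\ell_+/\ell_-)^d N_\Ga}$ exactly as in the proof of Theorem \ref{thmz3.1} (half of the Peierls exponent absorbs the entropy, half provides the decay), multiplied by the $O(2^{n(d-1)})$ choices of base cube in $\delta_{\rm in}^{\ell_+}[\La_{n;0}]$, gives $\mu^{(k)}(\mathcal B_n^c) \le \eps_n$. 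Since $\mu^{(k)}$ is DLR,
\begin{equation*}
\mu^{(k)}\bigl(G_{\La_n,q}(\mathcal B_n^c)\bigr) = \mu^{(k)}(\mathcal B_n^c) \le \eps_n,
\end{equation*}
and Markov's inequality applied to the nonnegative function $q \mapsto G_{\La_n,q}(\mathcal B_n^c)$ gives $\mu^{(k)}(F_n^c) \le \sqrt{\eps_n}$.

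For \eqref{z6.0.7}, fix $q \in F_n$ and decompose $G_{\La_n,q}(f) = G_{\La_n,q}(f\,\mathbf 1_{\mathcal B_n}) + G_{\La_n,q}(f\,\mathbf 1_{\mathcal B_n^c})$; the second term is at most $\|f\|_\infty \sqrt{\eps_n}$ by the definition of $F_n$. On $\mathcal B_n$ the random region $\hat N_{n,{\rm in}}$ is measurable with respect to the sampled configuration on $\hat N_{n,{\rm out}}$, and \eqref{z6.0.4} gives $\La_{n-2} \subset \hat N_{n,{\rm in}}$. The DLR property of $G_{\La_n,q}$, applied at this stopping region, identifies the conditional law of the sample on $\hat N_{n,{\rm in}}$ (given the outer sample and the frozen b.c.\ $q_{\La_n^c}$) as a Gibbs measure on $\hat N_{n,{\rm in}}$ restricted to $\{\Theta = k \text{ on } \delta_{\rm in}^{\ell_+}[\hat N_{n,{\rm in}}]\}$; this restriction is automatic on $\mathcal B_n$ and the induced b.c.\ is $k$-admissible, so that conditional law is precisely a $k$-diluted Gibbs measure $G^{(k)}_{\hat N_{n,{\rm in}},\tilde q}$ in the sense of \eqref{z3.1}. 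Theorem \ref{thmz5.0.3} then bounds
\begin{equation*}
\bigl|G^{(k)}_{\hat N_{n,{\rm in}},\tilde q}(f) - \mu^{(k)}(f)\bigr| \le c\|f\|_\infty |\Delta|\,e^{-\om \ell_+^{-1} {\rm dist}(\Delta, \La_{n-2}^c)}
\end{equation*}
uniformly in the outer sample (thanks to $\hat N_{n,{\rm in}} \supset \La_{n-2}$), and averaging against the marginal on $\hat N_{n,{\rm out}}$, together with the trivial estimate $|\mu^{(k)}(f)|\bigl(1 - G_{\La_n,q}(\mathbf 1_{\mathcal B_n})\bigr) \le \|f\|_\infty \sqrt{\eps_n}$, produces \eqref{z6.0.7}.

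The main obstacle is the spatial Markov identification in the third paragraph. One must check rigorously that $\hat N_{n,{\rm in}}$ is a bona fide stopping region (its shape is a function of the outer configuration alone) and, crucially, that the local condition $\Theta = k$ on $\delta_{\rm in}^{\ell_+}[\hat N_{n,{\rm in}}]$ guaranteed on $\mathcal B_n$ is strong enough to certify the induced boundary data as a $k$-boundary condition in the sense defined before \eqref{z3.1}, even though the full outer configuration need not lie in $\mathcal X^{(k)}$. This is what unlocks Theorem \ref{thmz5.0.3} and, combined with the deterministic containment $\hat N_{n,{\rm in}} \supset \La_{n-2}$, converts the diameter control encoded in $\mathcal B_n$ into the uniform decay in ${\rm dist}(\Delta,\La_{n-2}^c)$ demanded by \eqref{z6.0.7}.
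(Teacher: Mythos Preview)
Your proposal is correct and follows essentially the same route as the paper: the DLR identity plus Markov's inequality for \eqref{z6.0.6} (the paper writes the same computation as the equivalent inequality \eqref{z6.0.8}), and the decomposition via the random stopping region $\hat N_{n,{\rm out}}$ together with Theorem \ref{thmz5.0.3} for \eqref{z6.0.7}. The technical concern you flag in the last paragraph is exactly the one the paper glosses over by referring to the analogous LMP argument; your identification of it as the crux is accurate.
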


 \vskip.5cm

The proof of Theorem \ref{thmz6.0.2} is completely analogous
to the
proof of Theorem 12.2.2.5 in \cite{leipzig} and it is omitted,
we just outline its main steps. To prove
\eqref{z6.0.6} we write
    \begin{equation}
         \label{z6.0.8}
   \mu^{(k)} \big(\mathcal B_{n}\big)
= \mu^{(k)} \Big( G_{\La_n,q} \big(\mathcal
B_{n}\big)\Big)  \leq \mu^{(k)}( F_n)+( 1
-\sqrt{\eps_n}) \big(1-\mu^{(k)} (F_n )\big)
       \end{equation}
and \eqref{z6.0.6} follows from \eqref{z6.0.8} and the inequality
$\mu^{(k)} \big(\mathcal B_{n}\big) \ge 1 - \eps_n$.
To prove the latter we observe that $\mathcal B_{n}$ is a cylindrical set
and therefore $\dis{\mu^{(k)}\big(\,\mathcal B_{n}\,\big)= \lim_{m\to \infty}
G^{(k)}_{\La_m,q^{(k)}}(\mathcal B_{n})}$.
We can then use  the Peierls bounds in
\eqref{z3.4} and after some standard
combinatorial arguments prove the desired inequality and hence  \eqref{z6.0.6}.

Let $q\in F_n$  then
     \begin{equation*}
\Big|G_{\La_n,q}(f) -\sum_{B\subset \La_{n-2}^c
}G_{ \La_n,q}\big( \text{\bf 1}_{ \hat N_{n,{\rm out}}=B
}
 G^{(k)}_{B^c,q}(f)\big)\Big|
 \le  \|f\|_\infty G_{\La_n,q}\big({ \mathcal B_{n}^c}\big)
\le
 \|f\|_\infty \sqrt{\eps_n}
       \end{equation*}
Hence
    \begin{equation*}
\big|G_{\La_n,q} (f) - \mu^{(k)} (f)\big| \leq 2 \|f\|_\infty
\sqrt{\eps_n} + \sup_{B \subset \La_{n-2}^c }\sup_{q^{(k)} \in
\mathcal Q^{(k)}}\big| G^{(k)}_{ B^c,q^{(k)}}(f)-\mu^{(k)} (f)\big|
       \end{equation*}
such that   \eqref{z6.0.7} follows from \eqref{z5.0.11}, the bound
being uniform in all $f$ cylindrical in $\Delta$. \qed

\vskip1cm

{\em Proof of Theorem \ref{thmz6.0.1}.}

We use   a Borel-Cantelli argument. Let
$F_n$ as above, then
    \begin{equation}
           \label{z6.0.9}%{8.1d.2.6}
\mu^{(k)} (F)=1,\qquad F:= \bigcup_m \bigcap_{n\ge m}F_n
       \end{equation}
By \eqref{z6.0.7}
    \begin{equation}
           \label{z6.0.10}%{8.1d.2.7}
 \lim_{n\to \infty} G_{\La_n,q} ( f
) = \mu^{(k)}(f), \text{ for all $q\in F$}
       \end{equation}
and Theorem \ref{thmz6.0.1} follows from \eqref{z6.0.9} and
\eqref{z6.0.10}.

\vskip2cm

%%%%%%%%%%%%%%%%%%%%%%%%%%%%%%%%%%%%%%%%%%%%%%%%%%%%%%%%%%%%%%%%
\subsection{{Decomposition of translational invariant DLR measures}}
        \label{sec:z7}
%%%%%%%%%%%%%%%%%%%%%%%%%%%%%%%%%%%%%%%%%%%%%%%%%%%%%%%%%%%%%%%%
In this section we will prove that any translational invariant DLR
measure can be written as a convex combination of the measures
$\mu^{(k)}$, this is not yet the   decomposition into ergodic DLR
measures because we do not know  that the $\mu^{(k)}$ are
translational invariant (a statement proved in the next section).
However it follows directly from Theorem \ref{thmz5.0.3} that any
$\mu^{(k)}$ is translational invariant under $\{\tau_i, i\in
\ell_+\mathbb Z^d\}$. Indeed  by  Theorem
 \ref{thmz5.0.3} for any $q\in \mathcal Q^{(k)}$,
 $\mu^{(k)}=\lim_n G^+_{\La_n,q}$ weakly,  then
$\tau_i\big(\mu^{(k)})=\lim_n \tau_i\big(G^+_{\La_n,q}\big)$ and the
latter is equal to $\lim_n \big(G^+_{\tau_i(\La_n),\tau_i(q)}\big)$
which by Theorem \ref{thmz5.0.3} is equal to $\mu^{(k)}$.  We have:

\vskip.5cm

   \begin{thm}
   \label{thmz7.0.1}

For all $\ga$ small enough the following holds. Let  $m$ be any
 DLR measure  invariant
under $\{\tau_i,
i\in \ell_{+}\mathbb Z^d\}$,
 then there is a
unique sequence $(u_0,..,u_S) $ of numbers in $[0,1]$ such that
    \begin{equation}
    \label{z7.0.1}
m  = u_0 \mu^{(0)} + \cdots + u_S \mu^{(S)}
       \end{equation}

       \end{thm}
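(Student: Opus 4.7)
\emph{Proof plan.} I propose $u_k:=m(\mathcal Q_{k,{\rm tail}})$, with $\mathcal Q_{k,{\rm tail}}$ the tail-measurable sets of Definition \ref{definz6.0.2}. By Theorem \ref{thmz4.0.2}, $\mu^{(k)}(\{\Theta(\cdot;r)=k\})\ge 1-e^{-\b(c^*/4)\zeta^2\ell_-^d}$ for every $r$, so the $\mathcal Q_{k,{\rm tail}}$ are pairwise disjoint (otherwise distinct $\mu^{(k)}$ would coincide on local observables) and $\mu^{(j)}(\mathcal Q_{k,{\rm tail}})=\d_{jk}$. The plan is first to prove $m\big(\bigsqcup_k \mathcal Q_{k,{\rm tail}}\big)=1$, and then to read off \eqref{z7.0.1} from the DLR equations by bounded convergence; uniqueness is then immediate from $\mu^{(j)}(\mathcal Q_{k,{\rm tail}})=\d_{jk}$.

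The heart of the argument is to extend Theorem \ref{thmz6.0.2} from $\mu^{(k)}$ to an arbitrary translational-invariant DLR measure $m$. Keep the cylindrical random sets $\mathcal B_n^{(k)}$ of \eqref{z6.0.3}. The event $\big(\bigcup_k \mathcal B_n^{(k)}\big)^c$ forces, for \emph{every} $k\in\{1,\dots,S+1\}$, a maximal connected component of $\{\Theta\ne k\}$ of diameter $\ge 2^{n-1}\ell_+$ touching $\d_{\rm in}^{\ell_+}[\La_{n;0}]$, and in particular a contour of size $N_\Ga\gtrsim 2^{n-1}$ crossing that annulus. Applied to $G_{\La,q}$ with an arbitrary $\La\supset\La_n$, the standard expansion over the outermost contours of each colour (the external boundary colour is dictated by $q$), followed by Theorem \ref{thmz3.2} applied in the simply-connected interior of each such contour, and the Peierls entropy sum exactly as in the proof of Theorem \ref{thmz3.1} (cf.\ Theorem 9.2.8.1 of \cite{leipzig}), give
\begin{equation*}
m\Big(\big(\bigcup_{k=1}^{S+1}\mathcal B_n^{(k)}\big)^c\Big)\le c\,2^{n(d-1)}\,e^{-\b(c^*/4)\zeta^2\ell_-^d\,2^{n-2}},
\end{equation*}
uniformly in $m$ and in the boundary data. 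Summability in $n$ and Borel--Cantelli then give $q\in\bigcup_k \mathcal B_n^{(k)}$ eventually, $m$-a.s., while the remainder of the argument of Theorem \ref{thmz6.0.2} (which uses only the DLR property, Theorem \ref{thmz5.0.3}, and the above Peierls bound) pins down a unique label $k(q)$ such that $G_{\La_n,q}(f)\to\mu^{(k(q))}(f)$ for every $\{\La_n\}\in\mathcal S$ and every bounded local $f$; thus $q\in\mathcal Q_{k(q),{\rm tail}}$.

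Once $m\big(\bigsqcup_k \mathcal Q_{k,{\rm tail}}\big)=1$, the DLR property yields, for any bounded cylindrical $f$ with support inside some cube $\La_n$,
\begin{equation*}
m(f)=\int G_{\La_n,q}(f)\,dm(q)=\sum_{k=1}^{S+1}\int_{\mathcal Q_{k,{\rm tail}}}G_{\La_n,q}(f)\,dm(q)\xrightarrow[n\to\infty]{}\sum_{k=1}^{S+1}u_k\,\mu^{(k)}(f),
\end{equation*}
by bounded convergence ($|G_{\La_n,q}(f)|\le\|f\|_\infty$ and pointwise convergence on each $\mathcal Q_{k,{\rm tail}}$), which is \eqref{z7.0.1}. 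The main obstacle is the uniform-in-boundary-conditions Peierls bound of the second paragraph: Theorem \ref{thmz3.2} is stated only for the $k$-diluted ensemble, and to apply it to $G_{\La,q}$ for a generic $q$ one must first condition on the outermost contour of the (unknown) colour set by $q$, and only then invoke Theorem \ref{thmz3.2} in its simply-connected interior. Translation invariance of $m$ under $\ell_+\mathbb Z^d$ enters only to ensure that the label $k(q)$ and the coefficients $u_k$ do not depend on the particular sequence $\{\La_n\}\in\mathcal S$ along which the thermodynamic limit is taken.
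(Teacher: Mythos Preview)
Your strategy—showing $m\big(\bigsqcup_k \mathcal Q_{k,{\rm tail}}\big)=1$ and then invoking DLR plus bounded convergence—is a clean route, and different from the paper's, which follows the Gallavotti--Miracle-Sol\'e averaging argument (producing approximate coefficients $u_{k,L}$ via translation averages over a cube of side $L\ell_+$, then passing to a limit by compactness). However, your sketch has a genuine gap at its core step.

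The claimed bound
\[
G_{\La,q}\Big(\big(\textstyle\bigcup_k \mathcal B_n^{(k)}\big)^c\Big)\le c\,2^{n(d-1)}e^{-\beta(c^*/4)\zeta^2\ell_-^d\,2^{n-2}}
\quad\text{``uniformly in the boundary data''}
\]
is false as stated. Take $d\ge 3$ and a Dobrushin-type boundary condition $q$ outside $\La$ (say $\eta=1$ on the upper half and $\eta=2$ on the lower half of $\partial\La$, with $\La_n$ centred on the interface). Then $G_{\La,q}$ typically produces a nearly flat interface through $\La_n$, so for \emph{every} $k$ there is a large component of $\{\Theta\ne k\}$ touching $\delta_{\rm in}^{\ell_+}[\La_{n;0}]$, and the bad event has probability close to $1$, not exponentially small. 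Conditioning on ``the outermost contour of the colour set by $q$'' does not rescue this: the outermost contour can itself be macroscopic with high $G_{\La,q}$-probability, and Theorem~\ref{thmz3.2} gives you nothing until you are already inside a region with $k$-boundary conditions. There is also a secondary gap: the implication ``for every $k$ a large component of $\{\Theta\ne k\}$ touches the boundary $\Rightarrow$ a contour of size $N_\Ga\gtrsim 2^{n-1}$'' is not immediate and needs an argument.

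The point you relegate to a technicality—translation invariance of $m$ under $\ell_+\mathbb Z^d$—is in fact what makes the estimate true. In the Gallavotti--Miracle-Sol\'e argument that the paper invokes, translation invariance is used essentially: one writes $m(f)=|\La\cap\ell_+\mathbb Z^d|^{-1}\sum_i m(\tau_i f)$, conditions on the exterior of $\La$ via DLR, and then shows that the \emph{average} fraction of sites $i$ at which $\tau_i f$ lands inside the (uncontrolled) outermost-contour region is $o(1)$ as $\La\nearrow\mathbb R^d$; Peierls enters only for translates that are already in a diluted region. Your Borel--Cantelli route can be made to work, but the input inequality must be proved for the specific translation-invariant $m$, not uniformly in $q$, and establishing it is essentially the GM argument in disguise.
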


\vskip.5cm

{\bf Proof.}  The proof is an adaptation of the classical argument
by Gallavotti and Miracle-Sole for the analogous property in the
Ising model at low temperatures.  Its extension to  Ising models
with Kac potentials has been carried out in \cite{BMP} and adapted
in \cite{leipzig} to the LMP model.  All these proofs are basically
the same as the original one and we think it useless to repeat it
once more here.  The argument shows (see for instance Section 12.3
in \cite{leipzig}) that there are  numbers $u_{k,L}\in [0,1]$,
$k=1,..,S+1$ with the following property; for any bounded
cylindrical function $f$ there is a function $\eps(L)$ vanishing as
$L\to \infty$ and satisfying, for any $\mathcal
D^{(\ell_{+})}$-measurable cube $\La$ of side $L\ell_{+}$:
   \begin{eqnarray}
 &&
\hskip-1cm \big|m (f)  -\sum_{k=0}^{S} u_{k,L} \mu^{(k)} ( f) \big| \le \eps(L)
     \label{z7.0.2}
      \end{eqnarray}
By compactness there is a sequence $L_n$ (independent of $f$) such
that $\dis{\lim_{n\to \infty}u_{k,L_n}=:u_k}$, for all $k$.  Then
   \begin{eqnarray}
 m (f)  =\sum_{k=0}^{S} u_{k} \mu^{(k)} ( f)
     \label{z7.0.3}
      \end{eqnarray}
and \eqref{z7.0.1} is proved since $m$ is determined by expectations
of bounded cylindrical functions $f$.  \qed

\vskip2cm

%%%%%%%%%%%%%%%%%%%%%%%%%%%%%%%%%%%%%%%%%%%%%%%%%%%%%%%%%%%%%%%%
\subsection{{Ergodicity}}
        \label{sec:z8}
%%%%%%%%%%%%%%%%%%%%%%%%%%%%%%%%%%%%%%%%%%%%%%%%%%%%%%%%%%%%%%%%

In this section we will complete the proof of Theorem \ref{thme1.1} by proving that the measures
$\mu^{(k)}$ are translational invariant, since their tail field is trivial they are then ergodic and
the decomposition \eqref{z7.0.1} becomes the decomposition into ergodic DLR measures.

 Let $\mathcal S$ be as  in  Definition \ref{definz6.0.1}
and   $i\in a
\mathbb Z^d$,    $a\in \{2^{-n}, n
\in \mathbb N\}$,   define
   \begin{eqnarray}
 && \hskip-1cm
\mathcal Q_{k,{\rm tail};\;i}=\big\{ q\in \mathcal Q:
\lim_{n\to \infty} G_{\La_n,q}( f ) =
[\tau_i(\mu^{(k)})](f),\; \text{for any $\{\La_n\}\in
\mathcal S$}\nn\\&&\text{ and any bounded, measurable
cylindrical function $f$} \big\}
     \label{z8.0.1}%{8.3.4.2.000}
      \end{eqnarray}
such that $\mathcal Q_{k,{\rm tail};\;0}=\mathcal Q_{k,{\rm tail}}$
the tail set of  Definition \ref{definz6.0.2}.  It then follows (see
the proof of the analogous Lemma 12.4.1.1 in \cite{leipzig} for
details of the proof) that:

\vskip.5cm

   \begin{lemma}
   \label{lemmaz8.0.1}
For any $i\in a\mathbb Z^d$,  $a\in \{2^{-n}, n
\in \mathbb N\}$,
   \begin{equation}
   \label{z8.0.2}%{12z.4.1.3}
\mathcal Q_{k,{\rm tail};\;i}=\tau_{-i}(\mathcal Q_{k,{\rm tail}}),\quad [\tau_i(\mu^{(k)})]\big (\mathcal Q_{k,{\rm tail};\;i}\big)=1
      \end{equation}
Moreover
$\tau_i(\mu^{(k)})\ne \mu^{(h)}_\ga$ if and only if
$\mathcal Q_{k,{\rm tail};\;i}\cap \mathcal Q_{h,{\rm tail}}= \emptyset$.

  \end{lemma}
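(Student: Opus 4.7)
The plan is to reduce every claim to translation invariance of the Hamiltonian (and hence of every finite-volume specification) combined with Theorem \ref{thmz6.0.1}. Since $H_{\gamma,\lambda}$ is built out of the translationally invariant kernel $J_\gamma$, one has the identity
\begin{equation*}
G_{\tau_i\Lambda,\tau_i q}(f) \;=\; G_{\Lambda,q}(f\circ\tau_i)
\end{equation*}
for every bounded measurable cylindrical $f$, every bounded $\Lambda$, every configuration $q$, and every $i\in a\mathbb Z^d$. Moreover, by Definition \ref{definz6.0.1} the family $\mathcal S$ is closed under any such $\tau_i$, so $\{\Lambda_n\}\in\mathcal S$ iff $\{\tau_{-i}\Lambda_n\}\in\mathcal S$.

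For part (1), I would substitute $\tau_{-i}\Lambda_n$ and $\tau_{-i}q$ into the identity above: the limit defining $\mathcal Q_{k,\text{tail};i}$ at $q$ against $f$ becomes, after a change of variable, the limit defining $\mathcal Q_{k,\text{tail}}$ at $\tau_{-i}q$ against $f\circ\tau_i$, using $[\tau_i\mu^{(k)}](f)=\mu^{(k)}(f\circ\tau_i)$. As $f$ ranges over all bounded measurable cylindrical functions, so does $f\circ\tau_i$, hence $q\in\mathcal Q_{k,\text{tail};i}$ if and only if $\tau_{-i}q\in\mathcal Q_{k,\text{tail}}$, which is the stated identity. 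For part (2), I would then push forward: $[\tau_i\mu^{(k)}](\mathcal Q_{k,\text{tail};i})=\mu^{(k)}(\tau_i^{-1}\mathcal Q_{k,\text{tail};i})=\mu^{(k)}(\mathcal Q_{k,\text{tail}})=1$, the last equality being Theorem \ref{thmz6.0.1}.

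For part (3), suppose first that $q\in\mathcal Q_{k,\text{tail};i}\cap\mathcal Q_{h,\text{tail}}$. Then for every bounded measurable cylindrical $f$ the single sequence $G_{\Lambda_n,q}(f)$ converges both to $[\tau_i\mu^{(k)}](f)$ and to $\mu^{(h)}(f)$; since cylindrical functions separate probability measures on $\mathcal Q$, this forces $\tau_i\mu^{(k)}=\mu^{(h)}$. Conversely, if $\mathcal Q_{k,\text{tail};i}\cap\mathcal Q_{h,\text{tail}}=\emptyset$, then by (2) and Theorem \ref{thmz6.0.1} the measure $\tau_i\mu^{(k)}$ is concentrated on $\mathcal Q_{k,\text{tail};i}$ while $\mu^{(h)}$ is concentrated on its complement, so $\mu^{(h)}(\mathcal Q_{k,\text{tail};i})=0\neq 1=[\tau_i\mu^{(k)}](\mathcal Q_{k,\text{tail};i})$ and the two measures are distinct.

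No new probabilistic estimate is required, since the tail-triviality input is already supplied by Theorem \ref{thmz6.0.1} and translation invariance is built into the model. The only delicate point, and the main source of possible error, is purely notational bookkeeping: one must consistently track the direction in which $\tau_i$ acts on configurations, on sets, and on measures (via push-forward), so that the two conventions mesh correctly when composing identities such as $\mathcal Q_{k,\text{tail};i}=\tau_{-i}\mathcal Q_{k,\text{tail}}$ with the push-forward relation $[\tau_i\mu^{(k)}](A)=\mu^{(k)}(\tau_i^{-1}A)$.
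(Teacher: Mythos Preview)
Your proof is correct and follows the natural argument: translation covariance of the Gibbs specification together with the $\tau_i$-invariance of the family $\mathcal S$ reduces everything to Theorem \ref{thmz6.0.1}, and the ``if and only if'' follows from uniqueness of limits plus the fact that cylindrical functions separate measures. The paper itself omits the proof entirely, referring to the analogous Lemma 12.4.1.1 in \cite{leipzig}, so there is no alternative argument to compare with; your approach is exactly the expected one, and your own caveat about sign conventions in $\tau_i$ versus $\tau_{-i}$ is the only place where care is needed (indeed, depending on whether $\tau_{-i}(A)$ denotes image or preimage, the stated identity $\mathcal Q_{k,\mathrm{tail};i}=\tau_{-i}(\mathcal Q_{k,\mathrm{tail}})$ may read with the opposite sign, but this does not affect the substance).
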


\vskip1cm

   \begin{lemma}
   \label{lemmaz8.0.2}

For all $\ga$ small enough the following holds: for any  $i\in a\mathbb Z^d$,  $a\in \{2^{-n}, n
\in \mathbb N\}$,
$\mu^{(h)}$ and $\tau_i(\mu^{(k)})$, $h\ne k$, are mutually singular
and $\mathcal Q_{k,{\rm tail};\;i} \cap\mathcal Q_{h,{\rm tail}} =\emptyset$.
  \end{lemma}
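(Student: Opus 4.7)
The strategy is to reduce the statement, via Lemma~\ref{lemmaz8.0.1}, to showing that $\tau_i(\mu^{(k)})\neq\mu^{(h)}$ whenever $h\neq k$. Once this is proved, the last assertion of Lemma~\ref{lemmaz8.0.1} delivers $\mathcal Q_{k,{\rm tail};\,i}\cap \mathcal Q_{h,{\rm tail}}=\emptyset$ automatically. Mutual singularity of $\mu^{(h)}$ and $\tau_i(\mu^{(k)})$ then follows, because Theorem~\ref{thmz6.0.1} together with Lemma~\ref{lemmaz8.0.1} give $\mu^{(h)}(\mathcal Q_{h,{\rm tail}})=1$ and $[\tau_i(\mu^{(k)})](\mathcal Q_{k,{\rm tail};\,i})=1$, so the two measures are concentrated on disjoint sets.

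To distinguish the two measures, I fix an arbitrary $r\in\mathbb R^d$ and take the bounded cylindrical test function $f(q):=\text{\bf 1}_{\{\Theta(q;r)=h\}}$; this is well defined because $\Theta(\cdot;r)$ depends only on $q$ restricted to a bounded neighborhood of $r$. Applying Theorem~\ref{thmz4.0.2} with phase $h$ gives $\mu^{(h)}(f)\ge 1-\epsilon_\gamma$, where $\epsilon_\gamma:=e^{-\beta(c^*/4)\zeta^2\ell_-^d}$. Since $\Theta$ is built pointwise from spatial averages of the empirical density, the covariance $\Theta(\tau_i q;r)=\Theta(q;r-i)$ holds, so the pushforward formula yields
$$
[\tau_i(\mu^{(k)})](f)=\mu^{(k)}\bigl(\{\Theta(\cdot;r-i)=h\}\bigr)\;\le\;\mu^{(k)}\bigl(\{\Theta(\cdot;r-i)\neq k\}\bigr)\;\le\;\epsilon_\gamma,
$$
the last inequality being Theorem~\ref{thmz4.0.2} applied to $\mu^{(k)}$ at the point $r-i$. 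The parameter constraints \eqref{aI.3.11}--\eqref{aI.3.11b} force $\zeta^2\ell_-^d=\gamma^{2a-d(1-\alpha_-)}\to\infty$ as $\gamma\to 0$, so $\epsilon_\gamma<1/2$ for $\gamma$ small enough. Hence $\mu^{(h)}(f)>1/2>[\tau_i(\mu^{(k)})](f)$, which proves $\mu^{(h)}\neq\tau_i(\mu^{(k)})$.

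The step that really does the work is the sharp one-point estimate of Theorem~\ref{thmz4.0.2}: because $\{\Theta(\cdot;r)=h\}$ isolates the phase $h$ with exponentially high probability under $\mu^{(h)}$, uniformly in $r$, a single local event suffices to separate the two measures. Everything else is formal, the only routine verifications being the translation covariance of $\Theta$ and the asymptotics of $\zeta^2\ell_-^d$.
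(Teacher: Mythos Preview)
Your reduction via Lemma~\ref{lemmaz8.0.1} and the plan to exhibit a single cylindrical observable on which the two measures disagree are both fine. The gap is in the line ``the covariance $\Theta(\tau_i q;r)=\Theta(q;r-i)$ holds''. This identity is false for generic $i\in a\mathbb Z^d$: the phase indicators $\eta$ and $\Theta$ are defined through the empirical densities $\rho^{(\ell_-)}(r,s;q)=|q(s)\cap C_r^{(\ell_-)}|/\ell_-^d$, where $C_r^{(\ell_-)}$ is the cube of the \emph{fixed} partition $\mathcal D^{(\ell_-)}$ containing $r$. Unless $i\in \ell_-\mathbb Z^d$ one has $C_r^{(\ell_-)}-i\neq C_{r-i}^{(\ell_-)}$, so $n^{(\ell_-)}(r,s;\tau_i q)$ counts particles of $q$ in the shifted box $C_r^{(\ell_-)}-i$, which is not a box of $\mathcal D^{(\ell_-)}$ at all. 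Since $a=2^{-n}<1<\ell_-$, the lemma must cover shifts $i$ that are not on the $\ell_-$-lattice, and for these your pushforward computation $[\tau_i(\mu^{(k)})](f)=\mu^{(k)}(\{\Theta(\cdot;r-i)=h\})$ does not follow. Consequently the bound $[\tau_i(\mu^{(k)})](f)\le \eps_\ga$ is unjustified: Theorem~\ref{thmz4.0.2} controls $\mu^{(k)}$ only through the fixed-partition indicator $\Theta$, and says nothing about densities in shifted boxes.

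The paper handles exactly this difficulty by replacing $\Theta$ with an observable that behaves well under arbitrary translations: the spin-$k$ particle density $|q_\La(k)|/|\La|$ in a large $\mathcal D^{(\ell_+)}$-cube $\La$. Under $\mu^{(h)}$ this is bounded above near $\rho^{(h)}_k$ (using \eqref{z4.0.8} on the good event and a Cauchy--Schwarz moment bound on its complement), while under $\tau_i(\mu^{(k)})$ one uses $|q_\La(k)|\ge |q_{\La_0}(k)|$ with $\La_0=\La\setminus\delta_{\rm in}^{\ell_+}[\La]\subset \tau_i(\La)$ to pass the translation onto the region rather than onto the indicator, obtaining a lower bound near $\rho^{(k)}_k$. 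Since $\rho^{(k)}_k\ne\rho^{(h)}_k$ this separates the two measures. If you want to salvage your approach, you would need an observable that is genuinely translation-covariant (such as the total spin-$k$ count in a fixed region), not the lattice-tied phase indicator.
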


\vskip.5cm

\noindent
{\bf Proof.} By Lemma \ref{lemmaz8.0.1} it suffices to show
that $\mu^{(k)}\ne \tau_i(\mu^{(h)})$ which is proved by the same
argument used to
prove Lemma 12.4.1.3 of \cite{leipzig}, we just report the main steps.
Suppose (without loss of generality) that
$k\in \{1,..,S\}$, i.e.\ $\mu^{(k)}$ an ordered state.

Since $\tau_i(\mu^{(k)})$ is invariant by  translations of
$\ell_{+}\mathbb Z^d$, for any $i\in a\mathbb Z^d$,  $a\in \{2^{-n}, n
\in \mathbb N\}$, we may also restrict to  $\tau_i$ with $i\in
C^{(\ell_{+})}_0\cap a\mathbb Z^d$. Let $\La$ be  a $\mathcal
D^{(\ell_{+})}$-measurable cube,
$|q_\La(k)|$ the number of particles in   $q_\La$ with spin $s=k$, then
 $\dis{
\mu^{(h)}(\frac {|q_\La(k)|}{|\La|})= \mu^{(h)}(\frac
{|q_{C_0^{(\ell_{+})}}(k)|}{|C_0^{(\ell_{+})}|})}$ because
$\mu^{(h)}$ is   invariant
under translations in $\ell_{+}\mathbb Z^d$.
 $\dis{\mu^{(h)}(\frac {|q_\La(k)|}{|\La|})}$ is then bounded from
 above by
    \begin{eqnarray*}
 &&  (\rho^{(h)}_{k}+\zeta)
 +  \mu^{(h)}(\mathbf 1_
 {\Theta(\cdot;0)\ne h}
 \frac
 {|q_{C_0^{(\ell_{+})}(k)}|}{\ell_{+}^d}) \le
  \rho^{(h)}_{k}+\zeta+c [e^{-\beta (c^*/4) \zeta^{2}
 \ell_{-}^d}]^{1/2}
      \end{eqnarray*}
    \begin{eqnarray*}
 && c =  \ell_{+}^{-d}
 \mu^{(h)}(|q_{C_0^{(\ell_{+})}(k)}|^2)^{1/2} \le
 \ell_{+}^{-d/2} \Big(\sum_{C^{(1)}_i\subset
 C_0^{(\ell_{+})}}
 \mu^{(h)}(|q_{C^{(1)}_i}(k)|^2)\Big)^{1/2}
      \end{eqnarray*}
having used Cauchy-Schwartz. Since the energy is non negative,
   \begin{eqnarray*}
 &&
 \mu^{(h)}(|q_{C^{(1)}_i(k)}|^2) \le \sum_{n\ge 1}
 \frac{ n^2}{n!} <\infty
%     \label{8.3.4.4}
      \end{eqnarray*}
Thus in conclusion
    \begin{eqnarray}
 && \mu^{(h)}(\frac {|q_\La(k)|}{|\La|})\le \rho^{(h)}_{k}+(\zeta
 +   c' e^{-\beta (c^*/8) \zeta^{2}
 \ell_{-}^d})
     \label{z8.0.3}%{14n.5.1.3}
      \end{eqnarray}
For any $j\in C^{(\ell_{+})}_0\cap a\mathbb Z^d$,
$\tau_j(\La) \supset \La_0$, $\La_0=\La\setminus
\delta_{\rm in}^{\ell_{+,\ga}}[\La]$. Let $N_\La$
and $N_{\La_0}$ the number of $\mathcal
D^{(\ell_{+})}$-cubes in $\La$ and $\La_0$,  then for any $i\in C^{(\ell_{+})}_0\cap a\mathbb Z^d$
   \begin{eqnarray*}
 &&
[\tau_i(\mu^{(k)})](\frac {|q_\La(k)|}{|\La|})\ge  \mu^{(k)}(\frac
 {|q_{\La_0}(k)|}{|\La|})=\frac{N_{\La_0}}{N_\La}\mu^{(k)}(\frac
 {|q_{C_0^{(\ell_{+})}}(k)|}{\ell_{+}^d})
      \end{eqnarray*}
We then write  $\dis{\mu^{(k)}(\frac
 {|q_{C_0^{(\ell_{+})}}|}{\ell_{+}^d})\ge \mu^{(k)}(\frac
 {|q_{C_0^{(\ell_{+})}}|}{\ell_{+}^d} \text{\bf
 1}_{\Theta(\cdot;0)=k})}$ and get
   \begin{eqnarray*}
 &&  \hskip-3cm[\tau_i(\mu^{(k)})](\frac {|q_\La(k)|}{|\La|})\ge
 \frac{N_{\La_0}}{N_\La}(\rho^{(k)}_{k}-\zeta)(1-e^{-\beta (c^*/4) \zeta^{2}
 \ell_{-}^d}) \\&& \hskip.2cm\ge \rho^{(k)}_{k}-\Big(\zeta +
 \rho^{(k)}_{k}\{e^{-\beta (c^*/4) \zeta^{2}
 \ell_{-}^d} + \frac{N_\La-N_{\La_0}}{N_\La}\}\Big)
      \end{eqnarray*}
which is strictly larger than the r.h.s.\ of
\eqref{z8.0.3} for $\La$ large  and $\ga$ small.  Thus
$\tau_i(\mu^{(k)}) \ne \mu^{(h)}$.
\qed

\vskip1cm

We will  prove translational invariance for special
values of the  mesh, $a\in \{2^{-n}, n\in \mathbb N\}$, the general case
follows by a
density argument completely analogous to the one used for the LMP
model, see Subsection 12.4.2 of \cite{leipzig}, which is therefore omitted.

\vskip.5cm

   \begin{thm}
   \label{thmz8.0.1}
For all $k\in \{0,..,S\}$ and all $\ga$ small enough the measures $\mu^{(k)}$ are
invariant under   translations by
$\tau_i$, for any $i\in a\mathbb Z^d, a\in \{2^{-n}, n \in \mathbb N\}$.

  \end{thm}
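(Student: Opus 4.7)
The plan is to show that $\tau_i(\mu^{(k)}) = \mu^{(k)}$ by invoking the decomposition theorem (Theorem \ref{thmz7.0.1}) together with the mutual singularity statement (Lemma \ref{lemmaz8.0.2}). Three things have to be checked: that $\tau_i(\mu^{(k)})$ is again a DLR measure, that it is invariant under $\ell_{+}\mathbb Z^d$-translations so that Theorem \ref{thmz7.0.1} applies, and that the only component surviving in its decomposition is $\mu^{(k)}$ itself.

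First I would verify that $\tau_i(\mu^{(k)})$ is a DLR measure at $(\beta,\la_{\beta,\ga})$. This is immediate from the translational invariance of the Hamiltonian \eqref{z1.6}: for any bounded measurable set $\La$, one has $G_{\la,\tau_i(\La),\tau_i(q)}(f\circ\tau_{-i}) = G_{\la,\La,q}(f)$, so the DLR equation for $\mu^{(k)}$ with specification in $\La$ translates verbatim to the DLR equation for $\tau_i(\mu^{(k)})$ with specification in $\tau_i(\La)$. Next, since $\mu^{(k)}$ is already known to be invariant under $\{\tau_j, j\in \ell_+\mathbb Z^d\}$ (as recalled at the beginning of Section \ref{sec:z7}) and translations commute with each other, $\tau_i(\mu^{(k)})$ is likewise invariant under $\ell_+\mathbb Z^d$.

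Theorem \ref{thmz7.0.1} then yields a unique decomposition
    \begin{equation*}
\tau_i(\mu^{(k)}) = \sum_{h=0}^{S} u_h \,\mu^{(h)},\qquad u_h\in[0,1],\;\sum_h u_h =1.
    \end{equation*}
Now apply Lemma \ref{lemmaz8.0.2}: for every $h\ne k$ the measure $\mu^{(h)}$ is mutually singular with $\tau_i(\mu^{(k)})$. Singularity forces $u_h=0$ for all $h\ne k$ (if $u_h>0$, a set on which $\mu^{(h)}$ is concentrated would carry positive $\tau_i(\mu^{(k)})$-mass, contradicting singularity). Consequently $u_k=1$ and $\tau_i(\mu^{(k)}) = \mu^{(k)}$, which is the required invariance under $\tau_i$ for all $i\in a\mathbb Z^d$, $a\in\{2^{-n},n\in\mathbb N\}$.

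The argument is essentially structural and the only real content is the singularity input from Lemma \ref{lemmaz8.0.2}, which in turn rests on Peierls (Theorem \ref{thmz3.2}) and the exponential decay (Theorem \ref{thmz5.0.3}). Thus there is no genuinely new obstacle here; the slightly delicate point, as in the LMP analogue, is just to ensure that the commutation of $\tau_i$ with the $\ell_+\mathbb Z^d$-translations is enough to place $\tau_i(\mu^{(k)})$ in the scope of Theorem \ref{thmz7.0.1}, which it is.
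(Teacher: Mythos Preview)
Your proof is correct and in fact more direct than the paper's. The paper does not apply Theorem \ref{thmz7.0.1} to $\tau_i(\mu^{(k)})$ itself; instead it forms the Ces\`aro-type average
\[
\nu := \frac{a^d}{|C^{(\ell_+)}_0|}\sum_{i\in C^{(\ell_+)}_0\cap a\mathbb Z^d}\tau_i(\mu^{(k)}),
\]
shows $\nu=\mu^{(k)}$ via the decomposition plus the tail-set argument, and then argues by contradiction: if some $\tau_i(\mu^{(k)})\ne\mu^{(k)}$, Lemma \ref{lemmaz8.0.2} gives a set carrying full $\tau_i(\mu^{(k)})$-mass but zero $\mu^{(k)}$-mass, which is incompatible with $\nu=\mu^{(k)}$. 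Your observation that each individual translate $\tau_i(\mu^{(k)})$ is already DLR and $\ell_+\mathbb Z^d$-invariant lets you bypass the averaging and the contradiction step entirely; the paper's route is the classical Gallavotti--Miracle-Sole template carried over from the LMP analogue, where the averaging is a safe default even when (as here) it is not strictly needed.
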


\vskip.5cm

\noindent
{\bf Proof.} Fix $a\in \{2^{-n}, n\in \mathbb N\}$.
Since $\tau_i(\mu^{(k)})$ is invariant under translations in $\ell_+\mathbb Z^d$,  the measure
   \begin{eqnarray}
 && \nu :=  \frac{a^d}{|C^{(\ell_{+})}_0|}
 \sum_{i\in C^{(\ell_{+})}_0\cap a \mathbb Z^d} \tau_i(\mu^{(k)})
      \label{z8.0.4}%{8.3.4.4}
      \end{eqnarray}
is invariant under the group of translations $\{\tau_i, \; i\in a \mathbb Z^d\}$. Then by
\eqref{z7.0.1}
    \begin{equation}
    \label{z8.0.5}%{8.3.4.5}
\nu =  u_0 \mu^{(0)} + \cdots + u_S \mu^{(S)}
       \end{equation}
By  Lemma \ref{lemmaz8.0.2} $\tau_i(\mu^{(k)})\big(\mathcal
Q_{h,{\rm tail}}\big)=0$ for any $h\ne k$, such that
 $\nu\big(\mathcal Q_{h,{\rm tail}}\big)=0$.  On the other hand $\mu^{(h)}\big(\mathcal Q_{h,{\rm tail}}\big)=1$,
 therefore in \eqref{z8.0.5} $u_h=0$ for all $h\ne k$ and hence $\nu=\mu^{(k)}$.
If there is $i\in C^{(\ell_{+})}_0\cap a \mathbb Z^d$ such that
$\tau_i(\mu^{(k)} )\ne \mu^{(k)} $, then again by Lemma
\ref{lemmaz8.0.2}, $\mu^{(k)}(\mathcal Q_{k,{\rm tail};i})=0$ and
$[\tau_i(\mu^{(k)} )](\mathcal Q_{k,{\rm tail};i})=1$ which
contradicts \eqref{z8.0.4} (as we have proved that $\nu=\mu^{(k)}$).
\qed

\vskip2cm

\part{ }
\label{sec:z9}

In this part we prove Theorem \ref{thmz3.2}, thus we show that there
is a constant $c^*$ such that the Peierls bounds are satisfied with
constant $c=c^*/2$ where we say that the Peierls bound holds with
constant $c$ if
        \begin{equation}
        \label{pb}
W^{k,\text{true}}(\Ga|q^{(k)})\le \exp\{-\beta c\zeta^2\ell_-^d
N_\Ga\},\qquad  N_\Ga=\frac{|{\rm sp}(\Ga)|}{\ell_+^d}
    \end{equation}
for all $k$, for all bounded contours of color $k$ and for all
$k$-boundary conditions $q^{(k)}$.

\vskip1cm

%%%%%%%%%%%%%%%%%%%%%%%%%%%%%%%%%%%%%%%%%%%%%
\section{{\bf Cut-off weights of contours}}
    \label{sec:3.3}
%%%%%%%%%%%%%%%%%%%%%%%%%%%%%%%%%%%%%%%%%%%%%

 As explained in Subsections 11.4 and 11.5 of
\cite{leipzig}, following the approach of
Zahradnik,\cite{cluster-expansion} we introduce the cutoff contours
weights.

Given any $k$ and any $k$-colored contour $\Ga$ we are going to
 define the weight $W^{(k)}(\Ga|q)$ for any
configuration $q$ which is a $k$-boundary condition for $c(\Ga)$ in
\eqref{I.4.3} below, the definition will imply that $W^{(k)}(\Ga|q)$
depends only on the restriction of $q$ to $\{r\in c(\Ga)^c :
\text{dist}(r; c(\Ga))\le 2\ga^{-1}\}$. For $\und
\Ga=(\Ga(1),..,\Ga(n))\in \mathcal B^{k}_\La$ we  call
          \begin{equation}
         \label{I.4.3a}
W_\la^{(k)}(\und\Ga|q_\La)= \prod_{i=1}^n
W_\la^{(k)}(\Ga(i)|q_{\La}),\quad
\mathbf{X}^{(k)}_{\La,\la}(q_\La)=\sum_{\und \Ga\in \mathcal
B^{k}_\La}W_\la^{(k)}(\und\Ga|q_\La)
    \end{equation}
and for any bounded, simply connected $\mathcal
D^{(\ell_+)}$-measurable region $\La$ and any $k$-boundary condition
$\bar q_{\La^c}\in \mathcal X_{\La^c}^{(k)}$ we introduce the
$k$-cutoff partition function in a region $\La$ with b.c. $\bar
q_{\La^c}$
 as
    \begin{equation}
      \label{I.4.2}
Z^{(k)}_{\La,\la} (\bar q_{\La^c})= %\sum_{\und \Ga\in \mathcal B^{k}_\La}
\int_{\mathcal X_\La^{(k)}} e^{-\beta H_{\La,\la}(q_{\La}|\bar
q_{\La^c})} \mathbf{X}^{(k)}_{\La,\la}(q_\La) \nu(dq_\La)
    \end{equation}
With same notation as in \eqref{z3.3} we then define
   \begin{equation}
\mathcal N_\la^{(k)}(\Ga,q_{{c(\Ga)}^c})=  \int_{\Upsilon_{{\rm
sp}(\Ga)}(\eta_\Ga)} e^{-\beta H_{{\rm sp}(\Ga),\la}(q_{{\rm
sp}(\Ga)}|q_{c(\Ga)^c})}
 \prod_{j=1}^pZ^{(k_j)}_{{\rm
int}_{j}(\Ga),\la}(q_{D_j})\nu(dq_D)  \nu(dq_{{\rm sp}(\Ga)})
    \label{I.4.1}
     \end{equation}
   \begin{equation}
\mathcal D_\la^{(k)}(\Ga,q_{{c(\Ga)}^c})=  \int_{\mathcal
X^{(k)}_{{\rm sp}(\Ga)}} e^{-\beta H_{{\rm sp}(\Ga),\la}(q_{{\rm
sp}(\Ga)}|q_{c(\Ga)^c})}
 \prod_{j=1}^pZ^{(k)}_{{\rm
int}_{j}(\Ga),\la}(q_{D_j})\nu(dq_D)  \nu(dq_{{\rm sp}(\Ga)})
    \label{I.4.222}
     \end{equation}

All the above quantities depend on the weights
$\{W_\la^{(k)}(\Ga|q_{c(\Ga)^c})\}$ which we define (implicitly) by
introducing first  a constant $c_w>0$ and then  setting
   \begin{equation}
      \label{I.4.3}
W_\la^{(k)}(\Ga|q_{c(\Ga)^c})= \min\Big\{\frac{\mathcal
N_\la^{(k)}(\Ga,q_{c(\Ga)^c})}{\mathcal
D_\la^{(k)}(\Ga,q_{c(\Ga)^c})}, e^{-\beta c_w \zeta^2 \ell_{-}^d
N_\Ga}\Big\}
     \end{equation}
\noindent \eqref{I.4.3} is not a closed formula because the r.h.s.\
still depends on the weights, however the contours on the r.h.s.\
are ``smaller'' and, by means of  an inductive procedure, it is
possible to prove there is a unique choice of
$W_\la^{(k)}(\Ga|q_{c(\Ga)^c})$ such that \eqref{I.4.3} holds for
all $k$, all $\Ga$ and all $q_{c(\Ga)^c}$,
  see Theorem 10.5.1.2 in \cite{leipzig}.

\bigskip
The important point of these definitions is that if the estimate
\eqref{3.9} below holds, then the cut-off weights are equal to the
true ones. This is the content of the next Theorem whose proof is
omitted being completely analogous to Theorem 10.5.2.1 in
\cite{leipzig}.

\bigskip

    \begin{prop}
    \label{thm3.2}
Suppose that for any $k$, any contour $\Ga$ of color $k$ and any
$k$-boundary conditions $q_{c(\Ga)^c}$ for $c(\Ga)$,
    \begin{equation}
    \label{3.9}
W_\la^{(k)}(\Ga|q_{c(\Ga)^c})<e^{-\beta c_w \zeta^2 \ell_{-}^d
N_\Ga}
    \end{equation}
then
        \begin{equation}
    \label{3.10}
W_\la^{(k)}(\Ga|q_{c(\Ga)^c})=W^{k,\text{true}}(\Ga|q^{(k)})
    \end{equation}
    \end{prop}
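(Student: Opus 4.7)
The plan is to prove Proposition \ref{thm3.2} by induction on $|c(\Ga)|$ (or, equivalently, on $N_\Ga$), exploiting the recursive nature of the definition \eqref{I.4.3}. The central observation is that both $\mathcal N_\la^{(k)}(\Ga,\cdot)$ and $\mathcal D_\la^{(k)}(\Ga,\cdot)$ in \eqref{I.4.1}--\eqref{I.4.222} involve only the \emph{cut-off} partition functions $Z^{(k_j)}_{{\rm int}_j(\Ga),\la}$ defined in \eqref{I.4.2}, whereas the true weight $W^{k,\text{true}}(\Ga|q^{(k)})$ in \eqref{z3.3} involves the \emph{true} diluted partition functions $Z_{\la,\La,k}$ of \eqref{z3.1a}. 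The whole point is therefore to show that, under the hypothesis \eqref{3.9}, the cut-off partition functions coincide with the true ones on every region strictly contained in $c(\Ga)$.

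First I would carry out the base step: a minimal contour has empty interior (or interior consisting of regions in which no further contour can fit), so that the products over $j$ reduce to the pure restricted-ensemble integrals, and the cut-off $Z^{(k)}$ trivially equals the true $Z$. In that situation \eqref{3.9} forces the $\min$ in \eqref{I.4.3} to be realized by the ratio $\mathcal N/\mathcal D$, which by the previous remark equals the true ratio in \eqref{z3.3}, establishing \eqref{3.10}.

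For the inductive step, fix a contour $\Ga$ of color $k$ and assume the conclusion \eqref{3.10} for every contour $\Ga'$ with $c(\Ga')\subsetneq c(\Ga)$. Every contour $\Ga'$ that enters $\mathbf X^{(k_j)}_{{\rm int}_j(\Ga),\la}$ through \eqref{I.4.3a} satisfies ${\rm sp}(\Ga')\subset {\rm int}_j(\Ga)\setminus\delta_{\rm in}^{\ell_+}[{\rm int}_j(\Ga)]$, so in particular $c(\Ga')\subsetneq c(\Ga)$ and the inductive hypothesis applies. Then I would use the standard Pirogov--Sinai/Zahradnik identity: a configuration $q_\La$ in the restricted ensemble $\mathcal X^{(k)}_\La$ with $\Theta=k$ on $\delta_{\rm in}^{\ell_+}[\La]$ decomposes uniquely as an external family $\und\Ga\in\mathcal B^{k,{\rm ext}}_\La$ together with configurations in $\text{ext}(\und\Ga)$ (belonging to $\mathcal X^{(k)}$) and in each $c(\Ga)$ (belonging to $\mathcal X^0(\Ga)$ and further decomposable inside ${\rm int}(\Ga)$). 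Summing over external families and using that inside each $c(\Ga')$ the recursive expansion produces precisely the weight $W^{k',\text{true}}(\Ga'|\cdot)=W_\la^{(k')}(\Ga'|\cdot)$ (by the inductive hypothesis), the true diluted partition function $Z_{\la,{\rm int}_j(\Ga),k_j}$ collapses to the cut-off one $Z^{(k_j)}_{{\rm int}_j(\Ga),\la}$, and the same identity holds with $k$ in place of $k_j$. Hence $\mathcal N/\mathcal D$ in \eqref{I.4.3} coincides with the analogous ratio defining $W^{k,\text{true}}$ in \eqref{z3.3}. Finally \eqref{3.9} guarantees that the $\min$ is attained by this ratio, completing the induction.

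The only non-routine point is the recursive bijection between $k$-restricted configurations and external contour families, which justifies the replacement of the true diluted partition function by the cut-off one; this is purely combinatorial and follows exactly as in Theorem 10.5.2.1 of \cite{leipzig}, once one checks that the definitions of $\mathcal X^0(\Ga)$ and of the inner colors $k_j$ (fixed by Proposition \ref{thm2.6}) are consistent with the decomposition. I would not redo that bookkeeping in detail but would simply invoke the analogous result from \cite{leipzig}, indicating where the restricted ensemble $\mathcal X^{(k)}$, the colors $k,k_j$, and the boundary condition prescriptions enter.
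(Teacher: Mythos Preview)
Your proposal is correct and follows exactly the approach the paper indicates: the paper omits the proof of Proposition~\ref{thm3.2} and refers to Theorem~10.5.2.1 of \cite{leipzig}, which is precisely the inductive argument on $|c(\Ga)|$ you outline, including the combinatorial identification of the true diluted partition functions with the cut-off ones via the external-contour decomposition. Your identification of the only non-routine step (the recursive bijection) and your decision to invoke \cite{leipzig} for it match the paper's treatment verbatim.
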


\bigskip

We will prove that if $c_w>0$ is small enough then \eqref{3.9} holds
for all $\ga$ correspondingly small. The main ingredient in the
proof of  \eqref{3.9} is the exponential decay in restricted
ensembles proved in \cite{DMPV2}.

\vskip2cm

%%%%%%%%%%%%%%%%%%%%%%%%%%%%%%%%%%%%%%%%%%%%%%%%%%%%%%%%%%%%%%%%
\section{{\bf Proof of the Peierls bound}}
        \label{sec:3}
%%%%%%%%%%%%%%%%%%%%%%%%%%%%%%%%%%%%%%%%%%%%%%%%%%%%%%%%%%%%%%%%

The proof of \eqref{3.9} is based on an extension of the classical
Pirogov-Sinai strategy, we refer to Chapter 10 of \cite{leipzig} for
general comments and proceed with the main steps of the proof.  Most
of it follows from Chapter 11 of  \cite{leipzig} and Theorem 3.1 of
\cite{DMPV2}.  Precise quotations will be given in complementary
sections where we will also add proofs to fill in parts not covered
by the above references.

\vskip1cm

%%%%%%%%%%%%%%%%%%%%%%%%%%%%%%%%%%%%%%%%%%%%%
\subsection{Energy estimate}
    \label{sec:3.4}
%%%%%%%%%%%%%%%%%%%%%%%%%%%%%%%%%%%%%%%%%%%%%

The first step is the following Theorem.

\vskip.5cm

    \begin{thm} [Energy estimate]
        \label{thm:2.9}
There is  $c_1>0$ such that the following holds. Given any $c'>0$
there is $c$ such that for all $\la$: $|\la-\la_\beta|\le
c'\ga^{1/2}$, for all $k$, for all $k$-contour $\Ga$, for all $k$
boundary conditions $q_{c(\Ga)^c}$ and for any $c_w>0$, the
 following estimate holds for all $\ga$ small enough:
        \begin{equation}
      \label{I.5.1}
\frac{\mathcal N_\la^{(k)}(\Ga,q_{c(\Ga)^c})}{\mathcal
D_\la^{(k)}(\Ga,q_{c(\Ga)^c})} \le e^{-\beta [c_1 \zeta^2
-c\ga^{1/2}-(\alpha_+-\alpha_-)d]\ell_-^d N_\Ga} \prod_{j=1}^p
\frac{e^{\beta \mathbf{I}_{k_j}({\rm int}_{j}(\Ga))} Z^{(k_j)}_{{\rm
int}_{j}(\Ga)),\la}(\chi^{(k_j)}_{{\rm sp}(\Ga)})} {e^{\beta
\mathbf{I}_{k}({\rm int}_{j}(\Ga)))} Z^{(k)}_{{\rm int}_{j}(\Ga)),
\la}(\chi^{(k)}_{{\rm sp}(\Ga)})}
    \end{equation}
    where for any bounded $\mathcal D^{(\ell_+)}$-measurable set $\Om$,
         \begin{equation}
         \label{I.1.3}
\mathbf{I}_k(\Omega) = \int_{\Om^c} \big[e_\la^{\rm mf}(\rho^{(k)})-
e_\la^{\rm mf}(
J_\ga\star\chi^{(k)}_{\Omega^c})\big]-\int_{\Om}e_\la^{\rm mf}(
J_\ga\star\chi^{(k)}_{\Omega^c})
     \end{equation}
    \begin{equation}
      \label{2.38}
\chi^{(k)}_{\Omega^c}(x,s)=\rho^{(k)}_s\text{\bf
1}_{x\in\Omega\cap\ga^{-1/2}\mathbb{Z}^d}
     \end{equation}
and  $\rho^{(k)}$ a minimizer  of $F^{\rm mf}_{\beta,\la_\beta}$,
see \eqref{z1.1}).
    \end{thm}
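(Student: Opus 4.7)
\textbf{Proof plan for Theorem \ref{thm:2.9}.}

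The strategy is the standard Pirogov--Sinai energy estimate for LMP-type functionals, following Chapter 11 of \cite{leipzig}. I will bound numerator and denominator separately by partition functions with the idealized boundary profile $\chi^{(k)}_{{\rm sp}(\Ga)^c}$ (respectively $\chi^{(k_j)}_{{\rm sp}(\Ga)}$ for the interiors), paying an error of order $\ga^{1/2}$ per contour cube for the replacement, and exploiting that on ${\rm sp}(\Ga)$ the indicator $\eta_\Ga$ differs from $k$ on at least one $\mathcal D^{(\ell_-)}$-cube per $\ell_+$-block to extract the gain $c_1\zeta^2\ell_-^d N_\Ga$. The error terms $c\ga^{1/2}$ and $(\alpha_+-\alpha_-)d$ absorb respectively the smoothness/Lebowitz--Penrose corrections (including $|\la-\la_\beta|\le c'\ga^{1/2}$) and the surface-to-volume ratio between scales $\ell_-$ and $\ell_+$.

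First I would rewrite the mean-field energy in each configuration using the decomposition
$e_\la^{\rm mf}(J_\ga\star(q_{\rm sp}+q_{c(\Ga)^c})) = e_\la^{\rm mf}(J_\ga\star q_{\rm sp}+ J_\ga\star \chi^{(k)}_{c(\Ga)^c}) + R$
where the remainder $R$ is controlled by $\|J_\ga\star(q_{c(\Ga)^c}-\chi^{(k)}_{c(\Ga)^c})\|_\infty$. Since $q_{c(\Ga)^c}\in\mathcal X^{(k)}$, standard coarse-graining (Lemma 10.\ldots of \cite{leipzig}) yields $|R|\le c\ga^{1/2}$ per unit volume in a strip of width $2\ga^{-1}$ around ${\rm sp}(\Ga)$, contributing at most $c\ga^{1/2}\ell_-^d N_\Ga$ in the exponent. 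The same computation, applied in each internal component ${\rm int}_j(\Ga)$, replaces $q_{{\rm sp}(\Ga)}$ in the boundary condition of $Z^{(k_j)}_{{\rm int}_j(\Ga),\la}$ by $\chi^{(k_j)}_{{\rm sp}(\Ga)}$, which up to the same error produces the ratio of ideal partition functions appearing on the right of \eqref{I.5.1}. The bookkeeping for the self-energy of $\chi^{(k)}_{{\rm sp}(\Ga)^c}$ on the two sides then introduces precisely the compensating factors $e^{\beta\mathbf I_{k_j}(\cdot)}/e^{\beta\mathbf I_k(\cdot)}$ via definition \eqref{I.1.3}.

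With the common $\chi$-background, both $\mathcal N$ and $\mathcal D$ reduce to free-measure integrals over ${\rm sp}(\Ga)$ with an effective local energy $e_\la^{\rm mf}(J_\ga\star q_{\rm sp} + J_\ga\star\chi^{(k)})$, the only difference being the constraint set: $\Upsilon_{{\rm sp}(\Ga)}(\eta_\Ga)$ versus $\mathcal X^{(k)}_{{\rm sp}(\Ga)}$. Here I would invoke the mean-field gap: $F^{\rm mf}_{\beta,\la_\beta}$ is quadratically non-degenerate at each minimizer $\rho^{(k)}$ (this is the content of the coercivity lemmas of \cite{DMPV2}), so any $\rho$ with $|\rho_s-\rho^{(k)}_s|\ge\zeta$ for some $s$ satisfies $e_\la^{\rm mf}(\rho)-e_\la^{\rm mf}(\rho^{(k)})\ge 2c_1\zeta^2$, up to the $O(\ga^{1/2})$ slippage coming from $|\la-\la_\beta|\le c'\ga^{1/2}$. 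By the definition of a contour, each $\mathcal D^{(\ell_+)}$-cube of ${\rm sp}(\Ga)$ carries at least one $\mathcal D^{(\ell_-)}$-cube with $\eta\ne k$, giving a pointwise gain of $c_1\zeta^2$ on a set of volume at least $\ell_-^d N_\Ga$; integrating over $\mathbb R^d$ and using $J_\ga\star$-smoothness transfers this to the configurational energy with the usual $\ga^{1/2}$-loss. Comparison of the two integrals is then a quotient estimate between partition functions with identical measures but restricted by constraints that differ exactly by this pointwise gap.

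The main obstacle will be the third step: extracting the quadratic gain $c_1\zeta^2\ell_-^d N_\Ga$ while correctly handling the smoothing by $J_\ga$, which couples the cube where $\eta\ne k$ to its $\ga^{-1}$-neighborhood and therefore to cubes where the density might be optimal. Here one must argue, as in Section 11.\ldots of \cite{leipzig}, by integrating $e_\la^{\rm mf}(J_\ga\star\rho)$ over a slightly enlarged region (a $2\ga^{-1}$-neighborhood of the bad $\ell_-$-cube) and using convexity to reduce to a pointwise comparison on scale $\ell_-$, losing a factor $(\ga\ell_-)^{-d}=\ga^{-\alpha_-d}$ which is harmless relative to the correction $(\alpha_+-\alpha_-)d$ only because the surface of each $\ell_+$-block contributes negligibly compared to the gain. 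Once this localization is in place, assembling the estimates over all $N_\Ga$ blocks and collecting the exponents produces \eqref{I.5.1} with the stated constants.
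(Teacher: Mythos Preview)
Your proposal has a genuine gap in the first step, the reduction to perfect boundary conditions. You claim that because $q_{c(\Ga)^c}\in\mathcal X^{(k)}$, replacing it by $\chi^{(k)}_{c(\Ga)^c}$ costs only $c\ga^{1/2}$ per unit volume. But the constraint $\eta=k$ only pins the $\ell_-$-averaged density to within $\zeta$ of $\rho^{(k)}$, so $\|J_\ga\star(q_{c(\Ga)^c}-\chi^{(k)}_{c(\Ga)^c})\|_\infty$ is of order $\zeta$, not $\ga^{1/2}$; the $\ga^{1/2}$ error from Lebowitz--Penrose coarse-graining refers to replacing $H(q)$ by $F^*(\rho^{(\ga^{-1/2})}(q))$, not to replacing $q$ by the pure phase. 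An error of order $\zeta$ per unit volume, even on a boundary strip, produces a term in the exponent that dominates the $c_1\zeta^2\ell_-^dN_\Ga$ gain you are trying to extract, so the estimate collapses. The same objection applies to your replacement of the integrated configuration $q_{{\rm sp}(\Ga)}$ by $\chi^{(k_j)}_{{\rm sp}(\Ga)}$ in the boundary conditions of the interior partition functions.

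The paper's route around this is substantially different from what you sketch. Rather than a direct replacement, one carves corridors $B^{(h)}\subset{\rm sp}(\Ga)$ of width $\sim\ell_+/4$ on each side of the inner and outer boundaries of ${\rm sp}(\Ga)$, does the LP coarse-graining only there, and then studies the \emph{variational problem} $\min F^*_{B^{(h)}}(\rho\,|\,\bar q)$ over $\rho\in\mathcal X^{(h)}_{B^{(h)}}$. The crucial analytic input (Theorem~\ref{thm:4.3}, proved in Section~\ref{sec:I.2}) is that the minimizer of this problem converges \emph{exponentially} in $\ga\,{\rm dist}(r,\partial B^{(h)})$ to the pure phase $\rho^{(h)}$, so in a central strip $B_0^{(h)}$ it equals $\rho^{(h)}$ up to $e^{-\om\ga\ell_+}$. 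This is what allows one to cut through $B_0$ and decouple ${\rm int}_j(\Ga)$ from the rest with an exponentially small (hence negligible) error, not a $\zeta$-sized one. Your write-up does not mention this variational/exponential-decay mechanism, and without it the reduction to $\chi^{(k)}$-boundary conditions cannot be carried out at the precision the theorem requires. The extraction of the $c_1\zeta^2$ gain (your third paragraph) also needs more care: one must first reduce to uniformly bounded densities (Theorem~\ref{thm:4.5}) and then count, per $\ell_+$-cube, either an $\ell_-$-cube with $\eta^{(\zeta/2)}=0$ or a pair of adjacent $\ell_-$-cubes with different $\eta^{(\zeta/2)}$ values; the existence of a single bad $\ell_-$-cube in each $\ell_+$-cube is not immediate from $\Theta=0$.
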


\bigskip

In classical Pirogov-Sinai models with nearest neighbor interactions
 the analogue of  Theorem \ref{thm:2.9} follows directly
from the extra energy due to presence of the contour, here contours
have a non trivial spatial structure which leads, after a coarse
graining a la Lebowitz-Penrose, to a delicate variational problem.

Theorem \ref{thm:2.9} will be proved in Section \ref{sec:4}.

 \vskip1cm

%%%%%%%%%%%%%%%%%%%%%%%%%%%%%%%%%%%%%%%%%%%%%
\subsection{Surface corrections to the pressure}
    \label{sec:3.6}
%%%%%%%%%%%%%%%%%%%%%%%%%%%%%%%%%%%%%%%%%%%%%

We  exploit the arbitrariness of $c_w$ in Theorem
\ref{thm:2.9} and fix
          \begin{equation}
      \label{I.5.1bis}
c_w = \frac{c_1}{100}
    \end{equation}
such that the first factor on the r.h.s.\ of \eqref{I.5.1} is
consistent with \eqref{3.9} but we need a good control of the ratio
of partition functions in  \eqref{I.5.1}.  As typical in the
Pirogov-Sinai theory a necessary requirement comes from demanding
that the pressures in the restricted ensembles are equal to each
other, a request which will fix the choice of the chemical
potential.

\vskip1cm

 \begin{thm} [Equality of pressures]
    \label{plabeta}
For any chemical potential $\la\in [\la_\beta-1,\la_\beta+1]$ and
any van Hove sequence $\La_n\to \mathbb R^d$ of $\mathcal
D^{(\ell_+)}$-measurable regions the following limits exist
      \begin{eqnarray}
        \nn
&&\lim_{n\to \infty} \frac{1}{\beta|\La_n|}
\;\log{Z^{(k)}_{\La_n,\la} (\bar q_{\La^c})} =:P_\la^{(\rm
ord)},\quad  \text{for all $k\in \{1,\dots ,S\}$}
\\&& \lim_{n\to \infty} \frac{1}{\beta|\La_n|}
\;\log{Z^{(S+1)}_{\La_n,\la} (\bar q_{\La^c})} =:P_\la^{(\rm
disord)}
     \label{p9.6e.0.3}
    \end{eqnarray}
and they are independent of the sequence $\La_n$ and of the $k$
boundary condition $\bar q_{\La^c}$. Moreover there is $c_0>0$ and,
for all $\ga>0$ small enough, there is $\la_{\beta,\ga}$,
$|\la_{\beta,\ga}-\la_\beta|\le c_0\ga^{1/2}$  such that
        \begin{equation}
    \label{p2.30}
P_{\la_{\beta,\ga}}^{(\rm ord)}=P_{\la_{\beta,\ga}}^{(\rm disord)}=:
P_{\la_{\beta,\ga}}
     \end{equation}

     \end{thm}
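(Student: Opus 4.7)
\textbf{Proof plan for Theorem \ref{plabeta}.}

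First, I would prove existence of the limits and their independence from the sequence $\La_n$ and from the $k$-boundary condition $\bar q_{\La^c}$ by a standard van Hove / quasi-subadditivity argument, exploiting the fact that $J_\ga$ has range $\ga^{-1}$ and that in the restricted ensemble $\mathcal X^{(k)}$ the density is forced to stay near $\rho^{(k)}$. Concretely, replacing any $k$-boundary condition by the canonical $\chi^{(k)}_{\La^c}$ changes $\log Z^{(k)}_{\La,\la}$ by at most $c\,|\delta_{\rm out}^{2\ga^{-1}}\La|$, which is surface order. Slicing $\La$ into $\mathcal D^{(\ell_+)}$-measurable pieces separated by a thin buffer on which the cutoff forces density $\chi^{(k)}$ yields the approximate subadditivity needed to extract the limit $P^{(k)}_\la$. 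The identification $P^{(1)}_\la = \cdots = P^{(S)}_\la =: P^{(\rm ord)}_\la$ is immediate from the symmetry of the hamiltonian under permutation of the ordered spin labels.

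Second, I would compare $P^{(k)}_\la$ with its mean-field counterpart, showing
\[
P^{(k)}_\la \;=\; -F^{\rm mf}_{\beta,\la}(\rho^{(k)}) \;+\; R^{(k)}_\ga(\la), \qquad |R^{(k)}_\ga(\la)| \le c\,\ga^{1/2},
\]
uniformly on $[\la_\beta-1,\la_\beta+1]$. The upper bound reduces to the Lebowitz-Penrose inequality \eqref{z1.4}, which for the LMP-type hamiltonian \eqref{z1.6}--\eqref{z1.9} holds with equality in the bulk. The matching lower bound comes from a coarse-graining on scale $\ell_-$ together with the exponential decay of correlations in the restricted ensemble (Theorem~3.1 of \cite{DMPV2}): this lets one replace the empirical density by $\rho^{(k)}$ up to Gaussian fluctuations of order $\ga^{1/2}$ and integrate the remainder, along the lines of Chapter~11 of \cite{leipzig}.

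Third, I would produce $\la_{\beta,\ga}$ by an intermediate-value argument on $g(\la) := P^{(\rm ord)}_\la - P^{(\rm disord)}_\la$. Convexity of $\la \mapsto \log Z^{(k)}_{\La,\la}$ is preserved in the thermodynamic limit, so each $P^{(k)}_\la$ is convex in $\la$; its derivative (where defined) is the average total density in the restricted Gibbs measure, which by the estimate of step two equals $\sum_s \rho^{(k)}_s + O(\ga^{1/2})$. Hence $g'(\la) = b^{*} - Sa + O(\ga^{1/2})$, which by \eqref{density} is bounded below by a positive constant for all $\ga$ small. On the other hand, mean-field coexistence at $\la_\beta$ plus step two give $|g(\la_\beta)| \le 2c\,\ga^{1/2}$. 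The intermediate-value theorem then furnishes a (necessarily unique) $\la_{\beta,\ga}$ with $g(\la_{\beta,\ga}) = 0$, and the strict positivity of $g'$ forces $|\la_{\beta,\ga} - \la_\beta| \le c_0\,\ga^{1/2}$.

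The main obstacle is the quantitative $O(\ga^{1/2})$ estimate on $R^{(k)}_\ga(\la)$ in step two: the surface and fluctuation corrections to the mean-field pressure must be shown to vanish at exactly the rate $\ga^{1/2}$, and this is precisely the place where the exponential decay of correlations from \cite{DMPV2} is indispensable. Once step two is available, steps one and three reduce to standard bookkeeping of the Pirogov--Sinai/LMP machinery.
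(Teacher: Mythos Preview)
Your overall architecture matches the paper's (existence of limits $\to$ mean-field comparison with $O(\ga^{1/2})$ error $\to$ intermediate-value argument), but two of your three steps have genuine problems.

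\textbf{Step two.} You say the $O(\ga^{1/2})$ comparison with $-F^{\rm mf}_{\beta,\la}(\rho^{(k)})$ requires the exponential decay of correlations from \cite{DMPV2} and call it ``indispensable''. It is not, and the paper does not use it here. The paper's argument (Appendix~\ref{sec:a2.10}) is a bare Lebowitz--Penrose coarse graining: bound $\mathbf X^{(k)}_\La$ above and below using the built-in Peierls bound \eqref{I.4.3} on the \emph{cutoff} weights (Lemmas~\ref{Lemma1}--\ref{Lemma2}), apply Theorem~\ref{propA.2} to reduce to $\inf_{\rho_\La\in\mathcal X^{(k)}_\La} F^*_{\La,\la}(\rho_\La|\chi^{(k)}_{\La^c})$, and then show by convexity of $\mathcal S$ that this infimum is attained at the constant $\chi^{(k)}_\La$. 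No decay of correlations enters; the $\ga^{1/2}$ comes straight from the coarse-graining scale in Theorem~\ref{propA.2}.

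\textbf{Step three.} Your argument computes $g'(\la)$ from convexity of $\la\mapsto\log Z^{(k)}_{\La,\la}$. But $Z^{(k)}_{\La,\la}$ depends on $\la$ not only through the explicit $-\la n$ in the hamiltonian but also through the cutoff weights $W^{(k)}_\la(\Ga|q)$, which are defined by a $\min$ in \eqref{I.4.3}. The paper explicitly notes that the dependence of the cutoff weights on $\la$ is only shown to be \emph{continuous}, so neither convexity nor differentiability of $P^{(k)}_\la$ is available, and your identification of $g'(\la)$ with the density difference is unjustified. The paper instead uses continuity of $P^{(\rm ord)}_\la-P^{(\rm disord)}_\la$ together with the derivative of the \emph{mean-field} pressure difference (equation~\eqref{2.433}), which is smooth and strictly negative at $\la_\beta$; combined with the $O(\ga^{1/2})$ bound from step two this forces a sign change of $g$ on an interval of length $c_0\ga^{1/2}$, and the intermediate-value theorem gives $\la_{\beta,\ga}$. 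Your stronger uniqueness claim is not proved in the paper.
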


\vskip1cm

Theorem \ref{plabeta} is proved in Appendix  \ref{sec:a2.10}. The
existence of the thermodynamic limits, \eqref{p9.6e.0.3}, is not
completely standard because there is the additional term given by
the weights of the contours. However the Peierls bounds
(automatically satisfied by the cut-off weights) imply that contours
are rare and small and can then be controlled.  The equality
\eqref{p2.30}  is more subtle, it is proved by showing that (i)
$P_\la^{(\rm ord)}$ and $P_\la^{(\rm disord)}$ depend continuously
on $\la$; (ii) by a Lebowitz-Penrose argument they are close as
$\ga\to 0$ to the mean field values; (iii) the difference of the
mean field pressures changes sign as $\la$ crosses $\la_\beta$.

By Theorem \ref{plabeta} at $\la=\la_{\beta,\ga}$   the volume
dependence in the ratio \eqref{I.5.1} disappears and to conclude the
estimate we need to prove that the next surface term ``is small''.
This is the hardest part of the whole analysis where Theorem 3.1 of
\cite{DMPV2} enters crucially.

\vskip.5cm
            \begin{thm} [Surface corrections to the
            pressure]
        \label{thm:2.14}
With $c_w>0$ as in \eqref{I.5.1bis} there are $\ga_0$ and $c$ such
$c$ such that for any $\ga\le \ga_0$, all bounded $\mathcal
D^{(\ell_+)}$-measurable $\La\subset\mathbb{R}^d$ and all $k\in
\{1,\dots,S+1\}$
            \begin{equation}
     \label{aa2.57}
\Big|\log\{e^{\beta\mathbf{ I} _{k}(\La)}
Z^{(k)}_{\La,\la_{\beta,\ga}}(\chi^{(k)}_{\La^c})\}-\beta
|\La|P_{\la_{\beta,\ga}}\Big|\le c\ga^{1/4} |\delta_{\rm
in}^{\ell_+}(\La)|
    \end{equation}
with $\la_{\beta,\ga}$ and $P_{\la_{\beta,\ga}}$
 as in Theorem \ref{plabeta}.

        \end{thm}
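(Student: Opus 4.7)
The plan is to prove \eqref{aa2.57} by a comparison argument that matches $\log\{e^{\beta\mathbf{I}_k(\La)}Z^{(k)}_{\La,\la_{\beta,\ga}}(\chi^{(k)}_{\La^c})\}$ with the extensive pressure $\beta|\La|P_{\la_{\beta,\ga}}$ up to surface corrections. Conceptually, the prefactor $e^{\beta\mathbf{I}_k(\La)}$ is a boundary counter-term that cancels the mean-field energy cost of the frozen outside density $\chi^{(k)}_{\La^c}$, so that the combined quantity behaves like a genuine bulk partition function. The existence of $P_{\la_{\beta,\ga}}$ is guaranteed by Theorem \ref{plabeta}; what remains is to control the rate of convergence of $\frac{1}{\beta|\La|}\log\{e^{\beta\mathbf{I}_k(\La)}Z^{(k)}_{\La}(\chi^{(k)}_{\La^c})\}$ to this limit by a surface quantity with the explicit rate $\ga^{1/4}$.

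First I would replace the frozen boundary density $\chi^{(k)}_{\La^c}$ by a true particle boundary condition $q^{(k)}\in\mathcal X^{(k)}$ satisfying \eqref{z4.0.4bis}. Since $J_\ga\star\chi^{(k)}_{\La^c}$ and $J_\ga\star q^{(k)}_{\La^c}$ both coincide with the mean-field density $\rho^{(k)}$ up to error $\zeta+O(\ga^{1/2})$ on the scale $\ell_-$, a Taylor expansion of $e^{\rm mf}_\la$ combined with the smoothness of $J_\ga$ shows that the change in $\log Z^{(k)}_\La(\cdot)+\beta\mathbf{I}_k(\La)$ is localized in a shell of width $2\ga^{-1}$ about the outer boundary of $\La$ and is therefore bounded by a quantity of order $|\delta_{\rm in}^{\ell_+}(\La)|$ with a small prefactor in $\ga$.

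Next I would decouple the resulting particle-boundary partition function by tiling $\La$ with $\mathcal D^{(\ell)}$-cubes of side $\ell$ intermediate between $\ga^{-1}$ and $|\La|^{1/d}$. Inside each bulk cube, the exponential decay of correlations in the restricted ensemble (Theorem 3.1 of \cite{DMPV2}) allows a factorization of the partition function into a product of local contributions, each of value $e^{\beta\ell^d P_{\la_{\beta,\ga}}}$ up to an error exponentially small in $\om\ga\ell$. Summing this error over $O(|\La|\ell^{-d})$ cubes and adding the contribution of the boundary layer (at most $O(|\delta_{\rm in}^{\ell_+}(\La)|)$, since the cutoff weights satisfy the Peierls bound with the constant $c_w$ chosen as in \eqref{I.5.1bis}) yields the desired surface estimate, the rate emerging from optimizing $\ell$ against the exponential decay rate.

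The main obstacle is tracking the precise power $\ga^{1/4}$: three sources of surface corrections must be combined (the replacement of $\chi^{(k)}$ by a particle configuration, the factorization over bulk blocks, and the thermodynamic-limit matching with $P_{\la_{\beta,\ga}}$), each controlled by different parameters from Definition \ref{contorni3}. The constraints \eqref{aI.3.11}--\eqref{aI.3.11b} were designed to make these estimates compatible, so the final bound comes by a careful bookkeeping of exponents rather than by a single deep estimate; nevertheless, verifying that the exponential-decay rate $\om\ga\ell$ genuinely dominates all polynomial prefactors coming from the cluster expansion of \cite{DMPV2} is where the heaviest technical work lies, and it is the step most easily lost in the algebra.
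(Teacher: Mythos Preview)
Your proposal takes a genuinely different route from the paper, and the difference matters for whether the argument closes.

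The paper does \emph{not} replace $\chi^{(k)}_{\La^c}$ by a particle configuration and then tile $\La$ into blocks. Instead (Section~\ref{sec:5}) it interpolates the Hamiltonian: with the reference energy $h(\rho_\La)=\sum_s(r^{(k)}_s-\la_\beta)\sum_x\rho_\La(x,s)$ chosen so that the associated free measure has mean density exactly $\rho^{(k)}$, it sets $H_{\La,t}=tH_\La+(1-t)h$ and writes
\[
\tfrac{1}{\beta}\log Z^{(k)}_\La(\chi^{(k)}_{\La^c})
=\tfrac{1}{\beta}\log Z^{(k)}_{\La,0}
-\int_0^1\mathbb E_{\La,t}\big(H_\La-h\big)\,dt.
\]
The $t=0$ partition function is an exact product over $\mathcal D^{(\ell_-)}$-cubes times the contour weights, so its surface correction is controlled by cluster expansion (Theorem~\ref{thm:press}). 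The integrand $H_\La-h$ is a \emph{local observable}, and the decay result of \cite{DMPV2} compares its finite-volume expectation to the infinite-volume one pointwise (Corollary~\ref{thm:la00}). Subtracting the infinite-volume contribution, which is identified with $P_{\la_{\beta,\ga}}-p_0^{(k)}$ via Theorem~\ref{thm5.5}, leaves only boundary terms $\mathcal R_1,\dots,\mathcal R_5$ (Corollary~\ref{cor5.7}).

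The exponent $\ga^{1/4}$ does \emph{not} come from optimizing a block size against an exponential decay rate, as you suggest. It arises in Theorem~\ref{thm52.1}: in the boundary layer $\La\setminus\La_{00}$ one shows that the event $\{\sum_s\int|\hat J_\ga\star(\rho-\rho^{(k)})|>\bar C\ga^{1/4}|\La\setminus\La_{00}|\}$ has probability $\le e^{-\bar C\ga^{1/2}|\La\setminus\La_{00}|}$, by a Lebowitz--Penrose coarse-graining followed by the quadratic lower bound on $F^{\rm mf}_{\la_\beta}$ near $\rho^{(k)}$ (the Cauchy--Schwarz step that turns the $L^2$-gain $\ga^{1/2}$ into the $L^1$-threshold $\ga^{1/4}$). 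This variational argument, which relies on Theorem~\ref{thmI.3.1} to reduce to perfect boundary conditions inside the layer, is the heart of the proof and has no analogue in your block scheme.

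Your block-factorization idea faces a concrete obstruction: the partition function $Z^{(k)}_\La$ carries the contour weight $\mathbf X^{(k)}_\La(q_\La)$, which couples all of $\La$ and does not factorize over a tiling; the bounds in Lemma~\ref{Lemma2} only give crude estimates like $(1+e^{-c\zeta^2\ell_-^d})^{|\Delta|/\ell_+^d}$ for the cross-block piece, which is a surface term with coefficient $O(1)$, not $\ga^{1/4}$. The coupling result of \cite{DMPV2} controls \emph{correlations} of the measure $dp^{(k)}$, not ratios of partition functions; turning it into a factorization of $\log Z^{(k)}_\La$ with a small prefactor would require exactly the interpolation machinery the paper builds. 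So the approach as written would stall at a surface bound of order $|\delta_{\rm in}^{\ell_+}(\La)|$ without the $\ga^{1/4}$.
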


\bigskip

 Theorem \ref{thm:2.14} is proved in Section \ref{sec:5}.

\vskip.5cm

%%%%%%%%%%%%%%%%%%%%%%%%%%%%%%%%%%%%%%%%%%%%%
\subsection{Conclusions}
    \label{sec:3.7}
%%%%%%%%%%%%%%%%%%%%%%%%%%%%%%%%%%%%%%%%%%%%%

By Theorem \ref{thm:2.14} with $\La={\rm
int}_{j}(\Ga)$,  % for all $k$, all $k$-contours $\Ga$, and  for all $h$
    \begin{equation}
\Big|\log\{e^{\beta \mathbf{ I}_{k_j}({\rm int}_{j}(\Ga))}
Z^{(k_j)}_{{\rm int}_{j}(\Ga),\la_{\beta,\ga}}(\chi^{(k_j)}_{{\rm
sp}(\Ga)})\}-\beta |{\rm int}_{j}(\Ga)|P_{\la_{\beta,\ga}}\Big|\le
c'\ga^{1/4}|\delta_{\rm out}^{\ell_+}[{\rm int}_{j}(\Ga)]|
   \label{2.29}
    \end{equation}
        \begin{equation}
\Big|\log\{e^{\beta \mathbf{ I}_{k}({\rm int}_{j}(\Ga))}
Z^{(k)}_{{\rm int}_{j}(\Ga),\la_{\beta,\ga}}(\chi^{(k)}_{{\rm
sp}(\Ga)})\}-\beta |{\rm int}_{j}(\Ga)|P_{\la_{\beta,\ga}}\Big|\le
c'\ga^{1/4}|\delta_{\rm out}^{\ell_+}[{\rm int}_{j}(\Ga)]|
   \label{2.29a}
    \end{equation}

Hence by \eqref{I.5.1}
        \begin{equation}
\frac{\mathcal N_\la^{(k)}(\Ga,q_{c(\Ga)^c})}{\mathcal
D_\la^{(k)}(\Ga,q_{c(\Ga)^c})} \le e^{-\beta c_1 \zeta^2 \ell_-^d
N_\Ga} e^{2c' \ga^{1/4} \ell_+^d N_\Ga}\le e^{-\beta (c_1/2) \zeta^2
\ell_-^d N_\Ga}
    \end{equation}
the last holding for all $\ga$ small enough. By \eqref{I.5.1bis} we
have then proved \eqref{3.9}  and \eqref{3.10} yields
        \begin{equation}
    \label{3.10bis}
W^{k,\text{true}}(\Ga|q^{(k)}) \le e^{-\beta (c_1/2) \zeta^2
\ell_-^d N_\Ga} ,\quad \text{for $\ga$ small enough}
    \end{equation}

\vskip1cm

\section{{\bf  Energy estimate }}
        \label{sec:4}

The proof of the energy estimate \eqref{I.5.1}, is divided into two
steps. The first step (Theorem \ref{thm2.2-11} below) is the proof
that it is possible to reduce to ``perfect boundary conditions'',
namely to reduce the analysis to partition functions with ``perfect
boundary conditions'', i.e. with boundary conditions $\rho^{(k)}$,
one of the minimizers of the mean field free energy functional. This
implies that we can factorize with a negligible error the estimate
in int$(\Ga)$ from the one in sp$(\Ga)$. The second step in the
proof of \eqref{I.5.1} involves a bound on the constrained partition
function in ${\rm sp}(\Ga)$ which yields the gain factor $e^{-\beta
c_1 \zeta^2 \ell_-^d N_\Ga}$.

\vskip1cm

%%%%%%%%%%%%%%%%%%%%%%%%%%%%%%%%%%%%%%%%%%%%%
\subsection{Reduction to perfect boundary conditions}
    \label{subsec:4.1}
%%%%%%%%%%%%%%%%%%%%%%%%%%%%%%%%%%%%%%%%%%%%%

\vskip.5cm
 Without loss of generality  we fix
$k$ and restrict to  contours $\Ga=($sp$(\Ga),\eta_\Ga)$ with color
$k$ and define regions in $c(\Ga)$  as follows. The construction is
the same as that used in Chapter 11, Subsection 11.2.1 of
\cite{leipzig} to which we refer.

\bigskip

We denote by  $\rm{int}_{h} (\Ga)$  the maximal connected components
$\La\subset$ int$(\Ga)$ such that $\Theta(r;q)=h$ for all $r\in
\delta_{\rm{in}}^{\ell_+}[\La]$. We call
 $$
\Delta_1:=\delta_{\rm out}^\ell[c(\Ga)^c],\;\;
 \Delta_2:=\delta_{\rm out}^\ell[c(\Ga)^c\cup \Delta_1],\;\;
 \Delta_3:=\delta_{\rm out}^\ell[c(\Ga)^c\cup
  \Delta_1\cup \Delta_2],\qquad \ell:=\ell_+/8
   $$
These are successive corridors that we meet when we move from
$c(\Ga)^c$ into ${\rm sp}(\Ga)$.  In all of them $\eta=k$ and the
region where $\eta\ne k$ is far away, by $\dis{\ell_+-\frac
38\ell_+}$. When approaching sp$(\Ga)$ from int$_h(\Ga)$ we see:
  $$
\Delta^{(h)}_4:=\delta_{\rm out}^\ell[{\rm int}_{h}(\Ga)]\cup
\delta_{\rm in}^\ell[{\rm int}_{h}(\Ga)],\;\;
\Delta^{(h)}_5:=\delta_{\rm out}^\ell[\Delta^{(h)}_4],\qquad h
\text { may also be }=k
  $$
By the definition of contours the above $h$ corridors are in a
region where $\eta_\Ga=h$,
and the distance from where $\eta_\Ga\ne h$,  is
$\dis{\ell_+-\frac 28\ell_+}$.
 We then call
        \begin{equation}
        \label{4.0}
 B= B^{(k)}\bigcup_{h\ne k} B^{(h)},\qquad
B_0= B_0^{(k)}\bigcup_{h\ne k} B^{(h)}
    \end{equation}
where
     \begin{equation}
         \label{4.00}
 B^{(h)} = \Delta_4^{(h)}\cup \Delta_5^{(h)},\;\;  B^{(k)} = \Delta_4^{(k)}\cup \Delta_5^{(k)}\cup \Delta_1\cup \Delta_2\cup \Delta_3
     \end{equation}
           \begin{eqnarray}
&&  B_0^{(h)} = \Delta_4^{(h)}\cap {\rm sp}(\Ga), \;\;
 B^{(k)} _0=\Delta_2   \cup  (\Delta_4^{(k)}\cap {\rm sp}(\Ga))
    \label{4.01}
    \end{eqnarray}
and finally, letting $\Delta_4=\bigcup_h\Delta_4^{(h)}$,
  $\Delta_5=\bigcup_h\Delta_5^{(h)}$, we define
      \begin{equation}
          \label{a4.0.0}
 \La = {\rm sp}(\Ga)\setminus \big(\Delta_1\cup
 \Delta_2\cup(\Delta_4\cap {\rm sp}(\Ga))\big),\qquad
 \La' =\La \setminus \big(\Delta_3\cup ( \Delta_5\cap {\rm sp}(\Ga))\big)\
     \end{equation}
Observe that the points in $\La$ interact only with those in
$B_0\subset\La^c$.

\medskip

After a Lebowitz-Penrose coarse graining in $B$ we will reduce to a
variational problem with the free energy functional $ F^*$ defined
in \eqref{a3.11} below. We will prove existence and uniqueness of
minimizers and their stability properties concluding that with
``negligible error''  we can  ``eliminate" the corridor $B_0$ in
$\delta_{\rm in}^{\ell_+}[$ sp$(\Ga)]$ which separates int$(\Ga)$
from the rest of sp$(\Ga)$, their interaction with $B_0$ being
replaced by an interaction with perfect boundary conditions.

  \vskip.5cm

Given any set $\Omega\subset \mathbb{R}^d$,  any b.c. $\bar
q_{\Omega^c}$ and any measurable set $A\subset  \mathcal{Q}_\Omega$,
we call
    \begin{equation}
      \label{2.41nn}
\hat Z_{\Omega,\la} (A\big|\bar q_{\Omega^c})= \int_{A} e^{-\beta
H_{\Omega,\la}(q_{\Omega}|\bar q_{\Omega^c})}\nu(dq_\Omega)
    \end{equation}

 \vskip1cm

   \begin{thm}  [Reduction to perfect boundary conditions]
   \label{thm2.2-11}
 There are $c,c'>0$ such that for all  $\la$:
$|\la-\la_\beta|\le c'\ga^{1/2}$, for all $\ga$ small enough for any
$k\in\{1,\dots S+1\}$, any contour $\Ga=({\rm sp}(\Ga),\eta_\Ga)$ of
color $k$ and any boundary condition $q_{c(\Ga)^c}\in
\mathcal{X}^{(k)}_{c(\Ga)^c}$ the following holds.

Recalling  \eqref{2.38}, we call
    \begin{equation}
      \label{2.38nn}
\chi_{B_0}=\sum_{h\ne
k}\chi^{(h)}_{B^{(h)}_0}+\chi^{(k)}_{B^{(k)}_0}
     \end{equation}
Then, with $ F^*$ defined in \eqref{a3.11} below, and $\La$ as in
\eqref{a4.0.0}
      \begin{eqnarray}
         \nn
&& \hskip-1.5cm  {\mathcal N}^{(k)}_{\la}(\Ga,q_{c(\Ga)^c})\le
\;e^{-\beta F^*_{B_0,\la_\beta}(\chi_{B_0})+c\ga^{1/2}|B|} \prod_{h}
{ Z}^{(h)}_{{\rm int}_{(h)}(\Ga),\la} (\chi^{(h)}_{B^{(h)}_0})
\\&&\nn\hskip1.5cm \times \hat Z_{\Delta_1,\la}
\Big(\mathcal{X}^{(k)}_{\Delta_1}\big|\chi^{(k)}_{B^{(k)}_0}\cup
q_{c(\Ga)^c}\Big)\\&&\hskip1.5cm\times   \hat Z_{\La, \la}
\Big(\{\eta(q_\La ;\cdot)=\eta_\Ga(\cdot)\}\big|\chi_{B_0}\Big)
     \label{2.45n}
     \end{eqnarray}
   \end{thm}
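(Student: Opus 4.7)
The plan is to use the corridor $B_0$ as a buffer zone in which the density is forced, up to $O(\ga^{1/2})$ errors, to equal the perfect mean-field profile $\chi_{B_0}$; this decouples $\mathcal{N}_\la^{(k)}(\Ga,q_{c(\Ga)^c})$ into the four factors appearing on the right-hand side of \eqref{2.45n}. The approach closely follows Sections 11.3--11.4 of \cite{leipzig}, modified to accommodate the presence of $S+1$ distinct mean field minimizers $\rho^{(h)}$ that may meet along the different connected pieces of $B_0$.

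First we exploit the geometry of the construction. Since the strips $\Delta_3$ and $\Delta_5\cap \mathrm{sp}(\Ga)$ have width $\ell_+/8\gg 2\ga^{-1}$, no configuration in $\La'$ interacts directly with $c(\Ga)^c$ or with $\mathrm{int}(\Ga)$; similarly each $B_0^{(h)}$ shields $\mathrm{int}_h(\Ga)$ from the remainder of $\mathrm{sp}(\Ga)$. Hence $H_{\mathrm{sp}(\Ga),\la}(q_{\mathrm{sp}(\Ga)}|q_{c(\Ga)^c})$ decomposes as a sum of terms localized in $\La'$, in $B$, and in cross-interactions that only involve $q_{B_0}$ as boundary data. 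Next we perform a Lebowitz--Penrose coarse graining on $B$: we parametrize $q_B$ by its coarse-grained density $\rho=\rho^{(\ell_-)}(q_B;\cdot)$ on the scale $\ell_-$. The restricted-ensemble constraint $q_B\in \Upsilon_{\mathrm{sp}(\Ga)}(\eta_\Ga)$ forces $\rho$ to be $\zeta$-close to $\chi_{B_0}$ on $B_0$, and the smoothness of $J_\ga$ on scale $\ell_-$ lets us replace $J_\ga\star q_B$ by $J_\ga\star\rho$ with error $O(\ga^{1/2})$ per unit volume; a Stirling-type bound supplies the associated entropy contribution.

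The key variational input is then the coercivity of the functional $F^*_{B_0,\la_\beta}$ of \eqref{a3.11}: each $\rho^{(h)}$ is a strict, isolated minimizer of the mean field free energy, so $F^*_{B_0,\la_\beta}(\rho)-F^*_{B_0,\la_\beta}(\chi_{B_0})$ has a positive quadratic gap that dominates both the $\ga^{1/2}$ coarse-graining errors and the $O(\ga^{1/2})$ shift $|\la-\la_\beta|$. Summing over admissible profiles $\rho$, whose total count contributes at most $e^{c\ga^{1/2}|B|}$, and collapsing the convolution $J_\ga\star q_{B_0}$ to $J_\ga\star\chi_{B_0}$, the integrals over the interiors decouple into $\prod_h Z^{(h)}_{\mathrm{int}_h(\Ga),\la}(\chi^{(h)}_{B_0^{(h)}})$, the innermost corridor $\Delta_1$ factors off as $\hat Z_{\Delta_1,\la}(\mathcal{X}^{(k)}_{\Delta_1}|\chi^{(k)}_{B_0^{(k)}}\cup q_{c(\Ga)^c})$, and the remaining bulk integral yields $\hat Z_{\La,\la}(\{\eta=\eta_\Ga\}|\chi_{B_0})$. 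The prefactor $e^{-\beta F^*_{B_0,\la_\beta}(\chi_{B_0})+c\ga^{1/2}|B|}$ emerges precisely as the variational cost of pinning the density on $B_0$.

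The delicate step will be the coercivity/stability estimate. We must control the strict positivity of $F^*_{B_0,\la_\beta}(\rho)-F^*_{B_0,\la_\beta}(\chi_{B_0})$ uniformly over all admissible profiles in the restricted ensemble, and in particular across the different connected components of $B_0$ where distinct minimizers $\rho^{(h)}$ appear on neighbouring pieces; this is where the multi-minimizer structure of the Potts model (absent in LMP) introduces the main extra difficulty over \cite{leipzig}, and where the assumption $|\la-\la_\beta|\le c'\ga^{1/2}$ must be used quantitatively so as not to destroy the strict minimization property of the mean field free energy.
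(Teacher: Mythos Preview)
Your overall architecture --- rewrite $\mathcal N^{(k)}_\la$ via Lemma~\ref{lemma13n.2.1-1}, perform a Lebowitz--Penrose coarse graining on $B$, run a variational argument, and reconstruct the partition functions on $B\setminus B_0$ --- matches the paper's. But the variational step you propose is not the one that actually carries the argument, and as stated it would not yield an $O(\ga^{1/2}|B|)$ error.

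You invoke ``coercivity of $F^*_{B_0,\la_\beta}$'' to collapse $\rho|_{B_0}$ to $\chi_{B_0}$. The trouble is that the relevant functional after coarse graining is $F^*_{B^{(h)},\la_\beta}(\rho_{B^{(h)}}\,|\,\bar q)$ on the \emph{full} corridor $B^{(h)}$, with boundary data $\bar q$ coming from the actual particle configuration, which is only $\zeta$-close to $\rho^{(h)}$. If you split $F^*_{B^{(h)}}=F^*_{B_0^{(h)}}(\rho_{B_0^{(h)}})+F^*_{B^{(h)}\setminus B_0^{(h)}}(\rho_{B^{(h)}\setminus B_0^{(h)}}\,|\,\rho_{B_0^{(h)}})$ and replace $\rho_{B_0^{(h)}}$ by $\chi_{B_0^{(h)}}$ in the second term, the cross-interaction changes by an amount of order $\zeta\,|B_0^{(h)}|$ (a density perturbation of size $\zeta$ across a full-dimensional strip), which with $\zeta=\ga^a$ and $a<\alpha_-\ll 1/2$ is far larger than the $\ga^{1/2}|B|$ you need. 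A quadratic gap in $F^*_{B_0}$ cannot absorb this, because the cross term is linear in $\rho_{B_0^{(h)}}-\chi_{B_0^{(h)}}$.

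What the paper actually uses is Theorem~\ref{thm:4.3} (proved in Section~\ref{sec:I.2} as Theorem~\ref{thmI.3.1}): the minimizer $\rho^*_{B^{(h)}}$ of $F^*_{B^{(h)},\la_\beta}(\,\cdot\,|\,\bar q)$ over $\mathcal X^{(h)}_{B^{(h)}}$ satisfies $\rho^*_{B^{(h)}}(r)=\rho^{(h)}$ for \emph{all} $r\in B_0^{(h)}$, up to an error $ce^{-\om\ga\ell_+}$. This is an exponential-decay-of-minimizers result (the influence of the imperfect boundary data decays like $e^{-\om\ga\,{\rm dist}(\cdot,\partial B^{(h)})}$, and $B_0^{(h)}$ sits at distance $\ell_+/8$ from $\partial B^{(h)}$). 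With this in hand the decomposition \eqref{4.14n} is exact for $\rho^*$, and the pieces on $B_1^{(h)},B_2^{(h)}$ are turned back into partition functions via Theorem~\ref{propA.2}. Convexity/coercivity does enter the proof of Theorem~\ref{thmI.3.1} (Subsection~\ref{subsec:appK.2}), but only as one of several ingredients; the heart is the analysis of the linearized operator in Subsection~\ref{subsec:appK.3}. Two minor points: the coarse-graining scale in the paper's $F^*$ is $\ga^{-1/2}$, not $\ell_-$ (cf.~\eqref{a3.11} and Theorem~\ref{propA.2}); and the extra difficulty over LMP highlighted in Section~\ref{sec:I.2} is that the constraint $\mathcal X^{(k)}$ acts on $\ell_-$-averages rather than pointwise, not the multiplicity of mean-field minimizers.
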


\vskip.5cm

We first rewrite ${\mathcal N}^{(k)}_{\la}(\Ga,q_{c(\Ga)^c})$ as follows.

   \begin{lemma}
   \label{lemma13n.2.1-1}

There is a non negative, bounded function $\phi(q_{c(\Ga)\setminus
B})$ whose explicit expression is given in \eqref{4.5n} below, which
vanishes unless the phase indicator $\eta(r;q_{c(\Ga)\setminus B})$
verifies
    \begin{equation}
      \label{4.2n}
\eta(r;q_{c(\Ga)\setminus B}) = \begin{cases} k &\text{if
dist$(r,B^{(k)})\le 2\ga^{-1}$},
\\h&\text{if
dist $(r,B^{(h)})\le 2\ga^{-1}$}\end{cases}
     \end{equation}
This function $\phi$ is such that
       \begin{eqnarray}
         \nn
&& \hskip-2.3cm {\mathcal{ N}}^{(k)}_{\ga,\la}(\Ga,\bar
q_{c(\Ga)^c})= \int \phi(q_{c(\Ga)\setminus B}) \hat Z_{B^{(k)},\la}
\Big(\mathcal{X}^{(k)}_{ B^{(k)}}\big|q_{c(\Ga)\setminus B}\cup \bar
q_{c(\Ga)^c}\Big)
\\&& \hskip1cm\prod_{h\ne k}\hat
Z_{B^{(h)},\la} \Big(\mathcal{X}^{(h)}_{B^{(h)}}\big|
q_{c(\Ga)\setminus B}\Big)d\nu(q_{c(\Ga)\setminus B})
         \label{4.3n}
     \end{eqnarray}

   \end{lemma}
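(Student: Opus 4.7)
The plan is to unpack the inner diluted partition functions in the definition \eqref{I.4.1} of $\mathcal N^{(k)}_\la(\Ga,q_{c(\Ga)^c})$, recasting it as a single integral over $q_{c(\Ga)}$, and then to factorize that integral across the corridor decomposition $c(\Ga)=(c(\Ga)\setminus B)\cup B^{(k)}\cup\bigcup_{h\ne k}B^{(h)}$. This is essentially an exercise in locality and bookkeeping: the essential geometric input, namely that each corridor width exceeds the interaction range, has already been absorbed into the construction of the nested sets $\Delta_1,\dots,\Delta_5^{(h)}$.

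First I would substitute \eqref{I.4.2} for each $Z^{(k_j)}_{{\rm int}_j(\Ga),\la}$ inside \eqref{I.4.1} and invoke the additivity identity
\begin{equation*}
\begin{split}
&H_{{\rm sp}(\Ga),\la}(q_{{\rm sp}(\Ga)}|q_{c(\Ga)^c})+\sum_j H_{{\rm int}_j(\Ga),\la}(q_{{\rm int}_j(\Ga)}|q_{{\rm sp}(\Ga)})\\
&\hspace{4cm}=H_{c(\Ga),\la}(q_{c(\Ga)}|q_{c(\Ga)^c})
\end{split}
\end{equation*}
to rewrite $\mathcal N^{(k)}_\la$ as the integral over $q_{c(\Ga)}$ of $e^{-\beta H_{c(\Ga),\la}(q_{c(\Ga)}|q_{c(\Ga)^c})}\prod_j \mathbf{X}^{(k_j)}(q_{{\rm int}_j(\Ga)})$ subject to the joint phase conditions $\eta(\cdot;q)=\eta_\Ga$ on ${\rm sp}(\Ga)$ and $\eta(\cdot;q)=k_j$ on ${\rm int}_j(\Ga)$ for each $j$.

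Next comes the geometric factorization. Since $J_\ga\star J_\ga$ has range $2\ga^{-1}$, while $\ell=\ell_+/8=\ga^{-1-\alpha_+}/8\gg 2\ga^{-1}$, the region $B$ separates $c(\Ga)\setminus B$ from $c(\Ga)^c$ by at least $3\ell$ and the distinct components of $B$ are mutually separated through the bulk of $c(\Ga)\setminus B$ by a similar amount. This yields the splitting of $H_{c(\Ga),\la}$ into the self-interaction $H_{c(\Ga)\setminus B,\la}(q_{c(\Ga)\setminus B}|q_B)$ plus $H_{B^{(k)},\la}(q_{B^{(k)}}|q_{c(\Ga)\setminus B}\cup q_{c(\Ga)^c})$ and, for each $h\ne k$, $H_{B^{(h)},\la}(q_{B^{(h)}}|q_{c(\Ga)\setminus B})$. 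A parallel splitting occurs for the phase constraint: on $B^{(k)}$ one has $\eta_\Ga=k$ on the corridors lying in ${\rm sp}(\Ga)$ while $k_j=k$ on any part inside ${\rm int}_j(\Ga)$, giving the indicator $\mathcal X^{(k)}_{B^{(k)}}$; on $B^{(h)}$ with $h\ne k$ one analogously obtains $\mathcal X^{(h)}_{B^{(h)}}$. Finally, each $\mathbf{X}^{(k_j)}(q_{{\rm int}_j(\Ga)})$ depends only on $q$ in a $2\ga^{-1}$-neighbourhood of ${\rm int}_j(\Ga)\setminus\delta_{\rm in}^{\ell_+}[{\rm int}_j(\Ga)]\subset{\rm int}_j(\Ga)\setminus B^{(k_j)}$, so the contour weights attach to $q_{c(\Ga)\setminus B}$ alone.

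Inserting these splittings into the integral from the first step and applying Fubini produces \eqref{4.3n} with
\begin{equation*}
\begin{split}
\phi(q_{c(\Ga)\setminus B}):=&\,e^{-\beta H_{c(\Ga)\setminus B,\la}(q_{c(\Ga)\setminus B}|\emptyset)}\,\mathbf{1}_{\{\eta=\eta_\Ga\text{ on }{\rm sp}(\Ga)\setminus B,\ \eta=k_j\text{ on }{\rm int}_j(\Ga)\setminus B^{(k_j)}\}}\\
&\times\prod_j \mathbf{X}^{(k_j)}(q_{{\rm int}_j(\Ga)\setminus B^{(k_j)}}),
\end{split}
\end{equation*}
the cross interactions between $q_{c(\Ga)\setminus B}$ and $q_B$ being absorbed into the boundary conditions of the $\hat Z_{B^{(\cdot)},\la}$ factors. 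The support condition \eqref{4.2n} follows at once: any cube of $c(\Ga)\setminus B$ within $2\ga^{-1}$ of $B^{(h)}$ lies in the region where the indicator in $\phi$ already forces $\eta=h$. The main technical nuisance, rather than a conceptual obstacle, is verifying at each step that the relevant geometric separations do in fact exceed $2\ga^{-1}$---precisely what motivated the careful nesting of the corridors---and tracking which $\mathcal X^{(\cdot)}$ constraint gets attached to which $\hat Z_{B^{(\cdot)}}$.
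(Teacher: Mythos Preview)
Your proposal is correct and follows essentially the same route as the paper: unfold $\mathcal N^{(k)}_\la$ into a single integral over $q_{c(\Ga)}$, then use that the corridor widths $\ell=\ell_+/8\gg 2\ga^{-1}$ exceed the interaction range to factor the Hamiltonian, the phase constraints, and the contour weights $\mathbf X^{(k_j)}$ across $B$ and $c(\Ga)\setminus B$, arriving at the product form \eqref{4.5n}. One small slip: in your splitting you write $H_{c(\Ga)\setminus B,\la}(q_{c(\Ga)\setminus B}|q_B)$, but the correct conditional is on $q_{c(\Ga)^c}$ (which then vanishes by separation), as you indeed use in your final formula for $\phi$.
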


  \vskip.5cm

  {\bf Proof.} The argument is the same as Lemma 11.2.2.3 of \cite{leipzig},
  for the reader convenience we report it.
  We drop the dependence on
  $\la$   from the notation.

  We call
    $$\La_h:={\rm int}_{(h)}(\Ga)\cap B^c$$
and we
  define
         \begin{equation*}
 \phi_h(q_{\La_h})= \text {\bf 1}_{\{\eta(r;q_{\La_h})=h, \forall
r\in \La_h\}} e^{-\beta H_{\La_h}(q_{\La_h})} \mathbf{X}^{(h)}_{{\rm
int}_{(h)}(\Ga)} ( q_{\La_h})
     \end{equation*}
Recall  that $\mathbf{X}^{(h)}_{A,\la} ( q_{A})$ depends only on the
restriction of $q_A$ to $A'$ where $A'$ is the set of all $r$ at
distance $\le \ga^{-1}$ from $A^0=A\setminus \delta_{\rm
in}^{\ell_{+}}[A]$. In fact  all contours $\Ga$ which contribute to
$\mathbf{X}^{(h)}_{A,\la}(q_A) $ have spatial support in $A^0$. For
this reason we can change arbitrarily $q_A$ in the complement of
$\{r : {\rm dist}(r,A^0)> 2\ga^{-1}\}$ leaving unchanged
$\mathbf{X}^{(h)}_{A,\la}(q_A) $. Thus
    \begin{eqnarray}
     \label{4.4n}
\mathbf{X}^{(h)}_{{\rm int}_{(h)}(\Ga)} ( q_{{\rm int}_{(h)}(\Ga)})=
\mathbf{X}^{(h)}_{{\rm int}_{(h)}(\Ga)} ( q_{\La_h})
     \end{eqnarray}
because $\La_h$ contains all $r:{\rm dist}(r,{\rm
int}_{(h)}(\Ga)\setminus \delta_{\rm in}^{\ell_+}[{\rm
int}_{(h)}(\Ga)])\le \ga^{-1}$.

Analogously, calling
    $$
\La_0={\rm sp}(\Ga)\setminus B
        $$
we define
    $$
\phi_0(q_{\La_0})= \text {\bf 1}_{\{\eta(r;q_{\La_0})=\eta_\Ga(r),
r\in \La_0 \}} e^{-\beta H_{\La_0}(q_{\La_0})}
    $$
Since
    $$q_{c(\Ga)\setminus B}=q_{\La_0}\cup q_{\La_k}\bigcup_{h\ne
k}q_{\La_h}
    $$
by setting
         \begin{eqnarray}
         \label{4.5n}
&& \phi(q_{c(\Ga)\setminus B})=\phi_0(q_{\La_0})
\phi_k(q_{\La_k})\,\prod_{h\ne k} \phi_h(q_{\La_h})
     \end{eqnarray}
and recalling \eqref{4.4n}, \eqref{4.3n} becomes an identity.
\qed

\vskip1cm
We postpone the proof of Theorem \ref{thm2.2-11}
since it uses a coarse graining in $B$ that reduce
to a minimization problem for the free energy functional
that we define in the next subsection.

\vskip1cm

%%%%%%%%%%%%%%%%%%%%%%%%%%%%%%%%%%%%%%%%%%%%%
\subsection{Free Energy Functional}
    \label{sec:4.2}
%%%%%%%%%%%%%%%%%%%%%%%%%%%%%%%%%%%%%%%%%%%%%

\vskip.5cm

The LP (Lebowitz Penrose) free energy functional $F^*$, defined in
\eqref{a3.11} below,   is a $\ga^{-1/2}$-discretization of
\eqref{z1.8}.

We start by recalling properties of the mean field free energy
proved in \cite{DMPV2}.

Recalling \eqref{z1.1}, we call $F^{\rm mf}_{\la_\beta}=F^{\rm
mf}_{\beta,\la_\beta}$ and we observe that the  minimizers
$\rho^{(k)}$, $k=1,\dots, S+1$ are critical points  of $F^{\rm
mf}_{\la_\beta }$, namely they satisfy
        \begin{equation}
      \label{2.3}
\rho^{(k)}_s=  \exp\Big\{-\beta\{\sum_{s'\ne s} \rho^{(k)}_{s'} -
\la_\beta\} \Big\},\qquad \text {for all } s=\{1,\dots,S\}
     \end{equation}
We will denote the common minimum value of $F^{\rm mf}_{\la_\beta }$
by
    \begin{equation}
    \label{phi}
\phi=F^{\rm mf}_{\la_\beta }(\rho^{(k)})
    \end{equation}
Furthermore $F^{\rm mf}_{\la_\beta}$ is strictly convex,
 namely there is a constant $\kappa^*>0$ such that
   \begin{equation}
      \label{2.4}
\langle v,L^{(k)} v \rangle=   \sum_{s,s'} L^{(k)}(s,s') v(s) v(s')
\ge \kappa^*\langle v,v\rangle,\qquad L^{(k)}(s,s') =
\frac{1}{\beta\rho^{(k)}_s}
 \text{\bf 1}_{s=s'} + \text{\bf 1}_{s\ne s'}
     \end{equation}

\bigskip

 Let $\La$ be a $D^{(\ga^{-1/2})}$- measurable bounded region of
$\mathbb R^d$. Given two non negative functions, $\rho$ and $\bar
\rho$ defined in $\mathbf{R}^d$ and $D^{(\ga^{-1/2})}$- measurable,
we call $\rho_\La$ the function equal to $\rho$ in $\La$ and equal
to 0 in $\La^c$. Analogous definition for $\bar \rho_{\La^c}$.

We define
   \begin{equation}
      \label{a3.11}
F^*_{\La,\la}(\rho_\La|\bar \rho_{\La^c}) =F^*_{\La,\la}(\rho_\La)+
\ga^{-d/2}(\rho_\La, V^*  \bar \rho_{\La^c}),\qquad
F^*_{\La,\la}(\rho_\La)=\ga^{-d/2}\Big\{\frac 12 (\rho_\La, V^*
\rho_\La)- \frac 1\beta (1_\La,\mathcal{I}_\la(\rho_\La))\Big\}
     \end{equation}
where, setting  $\mathfrak{L}_\ga=\ga^{-1/2}\mathbb{Z}^d$,
   \begin{equation}
      \label{a3.12}
(f,g)= \sum_{x\in\mathfrak{L}_\ga} \sum_{s=1}^{S}f(x,s)
g(x,s)
     \end{equation}
and the matrix $V^*=\big(V^*(x,s,x',s'), x,x'\in
\mathfrak{L}_\ga\cap\La, s,s'\in\{1,\dots,S\}\big)$ is given by
   \begin{equation}
      \label{a3.13}
 V^*(x,s,x',s')=\text{\bf 1}_{s'\ne s} \hat V_\ga(x,x')
     \end{equation}
     \begin{equation}
    \label{2.9}
\hat V_{\ga}(x,x')=\ga^{-d/2} \sum_{y\in \mathfrak{L}_\ga} \hat
J_\ga(x,y)\ga^{-d/2}\hat J_\ga(y,x')
        \end{equation}
         \begin{equation}
    \label{2.8}
\hat J_\ga(x,y)=\mintone{C^{(\ga^{-1/2})}_x}
\mintone{C^{(\ga^{-1/2})}_y} J_{\ga}(r,r')drdr'
        \end{equation}
 Finally $\mathcal{I}_\la$ in \eqref{a3.11} is
        \begin{equation}
      \label{a3.15}
\mathcal{I}_\la(\rho_\La) (x,s)=\mathcal{S}(\rho(x,s))-
\beta\la,\quad \mathcal{S}(\rho(x,s)=- \rho_\La(x,s)[\log
\rho_\La(x,s)-1]
     \end{equation}

     \vskip2cm

The relation of this functional with the model is given in the
Theorem \ref{propA.2} below.  Let $\La$ be a $\mathcal
D^{(\ell_-)}$-measurable bounded region of $\mathbb R^d$. Recalling
the definition \eqref{a3.1b} and \eqref{a3.1}, we shorthand
$n(x,s;q)=n^{(\ga^{-1/2)}}(x,s;q)$, and
$\rho(x,s;q)=\rho^{(\ga^{-1/2)}}(x,s;q)$ $s\in\{1,\dots,S\}$, $x\in
\mathfrak{L}_\ga$.

\begin{defin} (Space of densities).
    \label{dens}

We call $\tilde {\mathcal M}_{\La,\ell}$ the space of non negative,
$\mathcal D^{(\ell)}$-measurable functions defined in $\La\subset
\mathbb{R}^d$. Thus the elements $\rho_\La\in \tilde {\mathcal
M}_{\La,\ell}$ are actually functions of finitely many variables,
i.e.\ $\{\rho_\La(x), x\in \ell \mathbb Z^d \cap \La\}$.

Given any  $u>0$  we denote by
    \begin{equation}
    \label{a4.16n}
\mathcal{B}^*(u)=\{\rho_\La\in\tilde {\mathcal M}_{\La,\ga^{-1/2}}
:\mintone{C_r^{(\ell_-)}}\rho_\La\le u\}
    \end{equation}
and
 we call
    \begin{equation}
    \label{a4.16nn}
\mathcal{B}(u):=\big\{q_\La:
\rho(\cdot;q_\La)\in\mathcal{B}^*(u)\big\}
    \end{equation}
Analogously, given any $A^*\subset \mathcal{B}^*(u)$, we call
$A=\big\{q_\La: \rho(\cdot;q_\La)\in A^*\big\}$.
 \end{defin}
 Given  a
configuration $\bar q_{\La^c}$  and recalling the definition
of the constrained partition
function given in \eqref{2.41nn}, we have:

            \begin{thm}
 \label{propA.2}
There is  $\bar c>0$ such that the following holds. For any
$\rho^*>$$\dis{\max_h\max_s\rho^{(h)}_s}$ and for any $A^*\subset
\mathcal{B}^*(\rho^*)$
            \begin{equation}
      \label{4.12}
\Big| \log
 \hat  Z_{\La,\la}\big(A|\bar q_{\La^c}\big)
 +\beta\inf_{\rho_\La\in A^*} F^*_{\La,\la}(\rho_{\La}|\bar
q_{\La^c})\Big|\le \bar c \ga^{1/2} |\La|
     \end{equation}
    \end{thm}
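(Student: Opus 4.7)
This is a Lebowitz--Penrose type coarse-graining estimate at the mesoscopic scale $\ga^{-1/2}$. The strategy is to decompose the integral over $A$ according to the mesoscopic occupation profile $n(x,s;q)=\ga^{-d/2}\rho(x,s;q)$, identify each resulting term with an exponential of the discretized free energy functional, and then relate the sum to the infimum of $F^*$.

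First I would partition: for every admissible density profile $\rho_\La\in A^*$ set $A^*(\rho_\La):=\{q_\La\in A:\rho(\cdot;q_\La)=\rho_\La\}$, so that
\[
\hat Z_{\La,\la}(A|\bar q_{\La^c}) \;=\; \sum_{\rho_\La\in A^*}\int_{A^*(\rho_\La)}e^{-\beta H_{\La,\la}(q_\La|\bar q_{\La^c})}\,\nu(dq_\La).
\]
The $\nu$-volume of $A^*(\rho_\La)$ is the purely combinatorial product $\prod_{x,s}(\ga^{-d/2})^{n(x,s)}/n(x,s)!$. Stirling's formula $\log n!=n\log n-n+O(\log(n+1))$ turns the logarithm into $\ga^{-d/2}(1_\La,\mathcal S(\rho_\La))$ up to an error of $O(\log(\rho^*\ga^{-d/2}+1))$ per $\ga^{-1/2}$-cube; multiplied by the $\ga^{d/2}|\La|$ cubes this remainder is $O(\ga^{d/2}|\La|\log\ga^{-1})$, which is absorbed into $\bar c\,\ga^{1/2}|\La|$ since $d\ge 2$.

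The second and more delicate step is to replace the exact interaction energy by its discretized counterpart. Since $\|\nabla J_\ga\|_\infty=O(\ga^{d+1})$, the function $J_\ga\star q_\La(\cdot,s)$ varies by at most $O(\ga^{1/2})$ times its local size on each $\ga^{-1/2}$-cube; carrying this Taylor remainder through the quadratic form $e^{\rm mf}_\la$ in \eqref{z1.5} and using the uniform density bound $\rho_\La\in\mathcal B^*(\rho^*)$, together with the analogous coarse-graining $\bar\rho_{\La^c}$ of $\bar q_{\La^c}$, yields
\[
\Big|H_{\La,\la}(q_\La|\bar q_{\La^c})-\tfrac{\ga^{-d/2}}{2}(\rho_\La,V^*\rho_\La)-\ga^{-d/2}(\rho_\La,V^*\bar\rho_{\La^c})+\la\,\ga^{-d/2}(1_\La,\rho_\La)\Big|\;\le\; c\,\ga^{1/2}|\La|
\]
for every $q_\La\in A^*(\rho_\La)$, where I used that $V^*=\hat J_\ga*\hat J_\ga$ is precisely the discrete symbol associated with $J_\ga*J_\ga$ in \eqref{z1.9}. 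Combined with the Stirling identification this gives that each summand equals $\exp\{-\beta F^*_{\La,\la}(\rho_\La|\bar\rho_{\La^c})+O(\ga^{1/2}|\La|)\}$.

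Passing from the sum to the infimum is the final step. The lower bound in \eqref{4.12} follows by keeping only a single near-minimizing profile. For the upper bound I would estimate the number of admissible profiles by $(\rho^*\ga^{-d/2}+1)^{S\ga^{d/2}|\La|}$, whose logarithm is again $O(\ga^{d/2}|\La|\log\ga^{-1})=o(\ga^{1/2}|\La|)$, and then bound the sum by (number of profiles)$\times$(maximum summand). I expect the main obstacle to be the uniform energy estimate in the second step: one must extract the factor $\ga^{1/2}$ from the Taylor remainder of $J_\ga$ at scale $\ga^{-1/2}$ and control both the self-interaction term and the boundary cross-interaction with $\bar\rho_{\La^c}$ uniformly on $\mathcal B^*(\rho^*)$, for which the crucial inputs are the density cut-off and the factorization $V^*=\hat J_\ga*\hat J_\ga$. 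The Stirling identification and the counting of profiles are essentially routine.
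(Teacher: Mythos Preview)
Your proposal is correct and follows precisely the standard Lebowitz--Penrose coarse-graining argument; the paper itself omits the proof entirely, referring to Theorem~11.1.3.3 of \cite{leipzig}, which is exactly this decomposition-by-profile, Stirling, $C^1$-smoothness-of-$J_\ga$, and profile-counting scheme.

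One imprecision worth flagging: the constraint $\mathcal B^*(\rho^*)$ bounds only the $\ell_-$-average of $\rho_\La$, not its pointwise values on $\ga^{-1/2}$-cells. Hence $n(x,s)$ is bounded by $\rho^*\ell_-^d$ rather than $\rho^*\ga^{-d/2}$, and all particles in an $\ell_-$-cube could sit in a single $\ga^{-1/2}$-cell. This does not harm either the Stirling remainder or the profile count, since $\log(\rho^*\ell_-^d)=O(\log\ga^{-1})$ just as $\log(\rho^*\ga^{-d/2})$ is, and the total error stays $O(\ga^{d/2}|\La|\log\ga^{-1})=o(\ga^{1/2}|\La|)$. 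It also does not harm the energy replacement, because $J_\ga\star q$ and $\hat J_\ga\star\rho$ average over a range $\ga^{-1}\gg\ell_-$ and therefore see only the bounded $\ell_-$-density; the estimate $|J_\ga\star q-\hat J_\ga\star\rho|\le c\ga^{1/2}$ (cf.\ \eqref{5.12b} in the paper) goes through unchanged. The one place where care is genuinely needed is the lower bound: the infimum of $F^*$ over $A^*$ may be attained at a non-lattice profile, so you must round to a profile with integer occupation numbers and check that (i) the rounded profile is still in $A^*$ and (ii) $F^*$ changes by at most $O(\ga^{d/2}|\La|\log\ga^{-1})$. Point (ii) is a direct computation; point (i) uses that in every application $A^*$ is cut out by $\ell_-$-scale constraints (e.g.\ $\mathcal X^{(k)}_\La$, $\eta=\eta_\Ga$) which are insensitive to $O(\ga^{d/2})$ perturbations of $\rho_\La$.
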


The proof of Theorem \ref{propA.2} is
the same as the proof of Theorem 11.1.3.3 of \cite{leipzig} and it is omitted.

\bigskip

In Theorem below we state what we need in order to prove
\eqref{2.45n}, its proof is postponed to Section \ref{sec:I.2}.
        \begin{thm}
        \label{thm:4.3}
Let  $\Ga$ be a contour of color $k$ and $\bar q_{c(\Ga)^c}\in
\mathcal{X}^{(k)}_{c(\Ga)^c}$. Let $B$ be the set defined in
\eqref{4.0}. Given any $q_{c(\Ga)\setminus B}$ such that $\eta=h$ in
$\delta^{\ga^{-1}}_{\rm out}[B^{(h)}]$, for all $h$, let $\bar
q:=q_{c(\Ga)\setminus B}\cup \bar q_{c(\Ga)^c}$.

There are positive constants $c$ and $\om$ and for all $h$ there are
positive functions $\rho^*_{B^{(h)}}\in \tilde{\mathcal
M}_{B^{(h)},\ga^{-1/2}}\cap \mathcal{X}^{(h)}_{B^{(h)}}$, such that
 \begin{equation}
          \label{4.9n}
\rho_{B^{(h)}}^*(r)=\rho^{(h)},\quad \forall r\in B_0^{(h)}
    \end{equation}
and furthermore for all $h$ and all $\rho_{B^{(h)}}\in
\tilde{\mathcal M}_{B^{(h)},\ga^{-1/2}}\cap
\mathcal{X}^{(h)}_{B^{(h)}}$,
    \begin{equation}
          \label{4.11n}
F_{B^{(h)},
\la_{\beta}}^*(\rho_{B^{(h)}}|\bar q_{c(\Ga)\setminus B})\ge
F_{B^{(h)}, \la_{\beta}}^*(\rho^*_{B^{(h)}}|\bar q_{c(\Ga)\setminus
B})- c e^{-\om \ga\ell_+}
    \end{equation}

        \end{thm}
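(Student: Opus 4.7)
I would build $\rho^{*}_{B^{(h)}}$ in two stages. First produce the (unique) minimizer $\tilde\rho$ of $F^{*}_{B^{(h)},\la_\beta}(\,\cdot\,|\bar q_{c(\Ga)\setminus B})$ on the restricted set $\tilde{\mathcal M}_{B^{(h)},\ga^{-1/2}}\cap \mathcal X^{(h)}_{B^{(h)}}$, then modify it so that it agrees exactly with $\rho^{(h)}$ on $B_0^{(h)}$, and show the functional cost of the modification is $O(e^{-\om\ga\ell_+})$. The key mechanism is that the only obstruction preventing $\tilde\rho$ from being identically $\rho^{(h)}$ comes from the boundary datum $\bar q_{c(\Ga)\setminus B}$ on the outer boundary of $B^{(h)}$, which lies at distance at least $\ell/8 = \ell_+/8$ from $B_0^{(h)}$; the relevant interaction length is $\ga^{-1}$, so one expects an exponential damping of order $e^{-\om\ga\ell_+}$.

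\textbf{Existence and interior property of $\tilde\rho$.} The functional $F^{*}_{B^{(h)},\la_\beta}$ is defined on the finite-dimensional set of $\mathcal D^{(\ga^{-1/2})}$-measurable densities on $B^{(h)}$, and $\mathcal X^{(h)}_{B^{(h)}}$ is a closed convex neighbourhood of $\rho^{(h)}$ of radius $\zeta=\ga^{a}$ in sup norm on $\ell_{-}$-averages. By \eqref{2.4} the Hessian of the local part of $F^{*}$ at $\rho^{(h)}$ is bounded below by $\ka^{*}>0$, and the non-local quadratic part is positive by construction; consequently $F^{*}$ is uniformly strictly convex on this neighbourhood for $\ga$ small, so a unique minimizer $\tilde\rho$ exists. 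Since $\rho^{(h)}$ is an interior point and the boundary perturbation is $O(\zeta)$, the minimizer $\tilde\rho$ lies in the interior of $\mathcal X^{(h)}_{B^{(h)}}$ and satisfies the free Euler-Lagrange equations throughout $B^{(h)}$.

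\textbf{Exponential decay of $v:=\tilde\rho-\rho^{(h)}$.} Expanding the Euler-Lagrange equations around $\rho^{(h)}$ and using $\|v\|_\infty\le 2\zeta=2\ga^{a}$ to control the cubic remainder, $v$ satisfies on $B^{(h)}$ an equation of the form $L^{(h)} v = g + R(v)$, where $L^{(h)}$ is the positive operator of \eqref{2.4} dressed by the convolution kernel $\hat V_\ga$, $R(v)$ is a nonlinear term quadratic in $v$, and $g$ is a source built from $\bar q$ outside $B^{(h)}$ through $J_\ga\star J_\ga$. The support of $g$ is therefore contained in the $2\ga^{-1}$-neighbourhood of $\partial B^{(h)}\setminus \partial B_0^{(h)}$, at distance at least $\ell_+/8-2\ga^{-1}$ from $B_0^{(h)}$. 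The spectral gap $\ka^{*}$ makes $L^{(h)}$ invertible with a resolvent kernel decaying exponentially on the scale $\ga^{-1}$; this is precisely the deterministic counterpart of the exponential decay of correlations in the restricted ensemble proved in Theorem 3.1 of \cite{DMPV2}, and can be obtained by the same Dobrushin-type iteration transplanted to the linear problem. Iterating, one concludes that $|v(r)|\le c\,e^{-\om'\ga\ell_+}$ for every $r\in B_0^{(h)}$.

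\textbf{Modification and comparison of free energies.} Define $\rho^{*}_{B^{(h)}}(r):=\rho^{(h)}$ for $r\in B_0^{(h)}$ and $\rho^{*}_{B^{(h)}}(r):=\tilde\rho(r)$ otherwise; the bound on $v$ ensures $\rho^{*}_{B^{(h)}}\in\mathcal X^{(h)}_{B^{(h)}}$ and \eqref{4.9n} holds by construction. Because $\tilde\rho$ is an interior minimizer, $\nabla F^{*}(\tilde\rho)=0$, so a second order Taylor expansion gives
\begin{equation*}
F^{*}(\rho^{*})-F^{*}(\tilde\rho)\;\le\; \tfrac{C}{2}\,\|\rho^{*}-\tilde\rho\|_{\ga^{-1/2}}^{2}\;\le\; C'\,\ga^{-d/2}\,|B_0^{(h)}|\,e^{-2\om'\ga\ell_+},
\end{equation*}
with $C$ an upper bound for the operator norm of $\nabla^{2}F^{*}$ on the neighbourhood. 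Since $|B_0^{(h)}|$ is polynomially bounded in $\ga^{-1}$, the right-hand side is dominated by $c\,e^{-\om\ga\ell_+}$ for any $\om<2\om'$ and $\ga$ small, which together with the minimizing property of $\tilde\rho$ yields \eqref{4.11n}. The main obstacle is the third step: rigorously controlling the exponentially decaying Green's function of $L^{(h)}$ in the geometry of $B^{(h)}$, which is exactly where the restricted-ensemble analysis of the companion paper is indispensable.
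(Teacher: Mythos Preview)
Your overall strategy---take the unique minimizer $\tilde\rho$, prove it converges exponentially to $\rho^{(h)}$ away from $\partial B^{(h)}$, then reset it to $\rho^{(h)}$ on $B_0^{(h)}$ and control the cost by a second-order Taylor bound---is exactly the route the paper takes. The paper states that Theorem~\ref{thm:4.3} is a corollary (at $t=1$) of Theorem~\ref{thmI.3.1}, which establishes precisely the exponential decay $|\hat\rho_\La(r,s)-\rho^{(h)}_s|\le c\,e^{-\om\ga\,{\rm dist}(r,\La^c)}$ for the minimizer, and your modification-plus-Taylor step is the natural way to pass from that to \eqref{4.9n}--\eqref{4.11n}.

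Two technical points deserve comment. First, your claim that $\tilde\rho$ lies in the \emph{interior} of $\mathcal X^{(h)}_{B^{(h)}}$ and therefore satisfies the free Euler--Lagrange equations is not justified as stated: the constraint $\mathcal X^{(h)}$ is on the $\ell_-$-block averages ${\rm Av}(\rho;\cdot)$, not on $\rho$ pointwise, and the paper explicitly allows the minimizer to touch the boundary of this constraint (Lemma~\ref{lemmaappK.2.1}). The paper handles this by introducing an $\eps$-relaxed functional $f_\eps$ on a larger set $\mathbf W^{(k)}_\La$, proving a priori pointwise bounds (Propositions~\ref{prop6.1} and~\ref{propappK.1.3}) so that the $\eps$-minimizer is a genuine critical point, and then letting $\eps\to 0$. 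Your Taylor argument in the last step survives because on $B_0^{(h)}$ the exponential closeness to $\rho^{(h)}$ forces the averages to be strictly interior, hence the constrained first-order conditions do vanish there; but the route to the exponential decay itself needs the relaxation.

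Second, the input you cite for the decay of the Green's function is slightly off. Theorem~3.1 of \cite{DMPV2} is the \emph{probabilistic} exponential decay in the restricted Gibbs ensemble; what is actually used here is the \emph{deterministic} analysis of the Hessian (the paper inverts $B=D^2_\La f_\eps$ by writing $B=A+R$ with $A$ constant on $\ell_-$-blocks and $\|R\|$ small, then invokes the matrix estimates of Theorems~A.1--A.2 and Theorem~5.9 of \cite{DMPV2}). The paper also replaces your direct linearization of the Euler--Lagrange equations by an interpolation in the boundary datum, $\bar\rho_{\La^c}^{(\theta)}=\theta\bar\rho_{\La^c}+(1-\theta)\rho^{(h)}\mathbf 1_{\La^c}$, and bounds $d\hat\rho_{\La,\eps}/d\theta$ via the inverse Hessian; this sidesteps the nonlinear remainder $R(v)$ in your scheme.
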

 \vskip1cm

%%%%%%%%%%%%%%%%%%%%%%%%%%%%%%%%%%%%%%%%%%%%%
\subsection{Reduction to perfect boundary conditions, conclusion}
    \label{sec:4.3}
%%%%%%%%%%%%%%%%%%%%%%%%%%%%%%%%%%%%%%%%%%%%%

 Theorem \ref{thm:4.3} is the main model-dependent estimate needed for the proof of
 Theorem \ref{thm2.2-11} the others arguments are
 the same as in Subsection 11.2.2 of \cite{leipzig}.
We thus only sketch them.

\medskip
Going back to \eqref{4.3n}, we first  observe that if
$\rho_{B^{(h)}}\in \mathcal{X}^{(h)}_{B^{(h)}}$ then
$\rho_{B^{(h)}}\in \mathcal{B}^*(u)$ (see \eqref{a4.16n}) with
$u=\rho^{(h)}+\zeta$. Then by Theorem \ref{propA.2}
    \begin{equation}
          \label{4.7n}
\log \hat Z_{B^{(h)},\la} \Big(\mathcal{X}^{(h)}_{B^{(h)}}\big|\bar
q_h\Big)\le \bar c\ga^{1/2}|B^{(h)}|-\beta
\inf_{M_h}F_{B^{(h)}, \la}^*(\rho_{B^{(h)}}|\bar q_h)%{c(\Ga)\setminus
%B})
    \end{equation}
where we have set $M_h=\tilde{\mathcal M}_{B^{(h)},\ga^{-1/2}}\cap
\mathcal{X}^{(h)}_{B^{(h)}}$,  $\bar q_k=q_{c(\Ga)\setminus B}\cup
\bar q_{c(\Ga)^c}$ and $\bar q_h=q_{c(\Ga)\setminus B}$ if $h\ne k$.
 Since  the dependence on $\la$ in
$F^*_{B^{(h)},\la}$ is given by the term
$\dis{-\la\int_{B^{(h)}}\rho_{B^{(h)}}(r)dr}$, for all $\la$:
$|\la-\la_\beta|\le c'\ga^{1/2}$,
 we have
      \begin{equation}
          \label{4.8n}
 \hskip-2cm | F^*_{B^{(h)},\la}(\rho_{B^{(h)}}| q_h)-
F^*_{B^{(h)},\la_\beta}(\rho_{B^{(h)}}| q_h)| \le c \ga^{1/2}|B^{(h
)}|
     \end{equation}
Thus from \eqref{4.7n}, \eqref{4.8n} and Theorem \ref{thm:4.3} we
get
    \begin{equation}
          \label{4.21}
\log \hat Z_{B^{(h)},\la} \Big(\mathcal{X}^{(h)}_{B^{(h)}}\big|\bar
q_h\Big)\le c\ga^{1/2}|B^{(h)}|-\beta F_{B^{(h)},
\la_\beta}^*(\rho^*_{B^{(h)}}|\bar q_h)
    \end{equation}

\vskip.5cm

Using the formula
                 \begin{eqnarray}
&&\hskip-1cm    F^*_{{A\cup B},\la} (\rho_{A\cup B}|
\rho_{(A\cup B)^c})=   F^*_{ A,\la } (\rho_{A}|
\rho_{(A\cup B)^c})+ F^*_{ B,\la} (\rho_{ B}| \rho_{(A\cup
B)^c}+\rho_A)
    \label{4.12n}
     \end{eqnarray}
 setting
                 \begin{eqnarray}
   \label{4.13n}
&& \hskip-1cm B^{(h)}_1:=(B^{(h)}\setminus B_0^{(h)})\cap {\rm sp
}(\Ga),\;\; B^{(h)}_2:=B^{(h)}\cap {\rm int }_{(h)}(\Ga),\quad h\ne
k
    \end{eqnarray}
and using \eqref{4.9n}, we have
       \begin{eqnarray}
          \label{4.14n}
&& \hskip-2cm
  F_{B^{(h)}, \la_{\beta}}^*(\rho^*_{B^{(h)}}| q_h)=   F_{B_0^{(h)},
  \la_{\beta}}^*(\chi^{(h)}_{B_0^{(h)}}) +
F_{B_1^{(h)}, \la_{\beta}}^*(\rho^*_{B^{(h)}_1}|
\chi^{(h)}_{B^{(h)}_0}+ q_{{\rm sp}(\Ga)\cap B^c})\nn\\&&\hskip1cm +
F_{B_2^{(h)}, \la_{\beta}}^*(\rho^*_{B^{(h)}_2}|
\chi^{(h)}_{B^{(h)}_0}+ q_{{\rm int}_h(\Ga)\cap B^c})
     \end{eqnarray}
By using that $|\la-\la_\beta|\le c'\ga^{1/2}$ we replace
$\la_\beta$ by $\la$ in the last two terms with an error bounded by
$c\ga^{1/2}|B|$ and then use Theorem \ref{propA.2} ``backwards'' to
reconstruct partition functions. We then have
       \begin{eqnarray}
          \nn
&& \hskip-1.4cm
 -\beta  F_{B^{(h)}, \la_{\beta}}^*(\rho^*_{B^{(h)}}|\bar q_h)
 \le \bar c\ga^{1/2}|B^{(h)}|-\beta F_{B_0^{(h)},
\la_{\beta}}^*(\chi^{(h)}_{B_0^{(h)}})
            \nn\\&&
\hskip1cm+ \log\hat Z_{B_1^{(h)},\la}
\Big(\mathcal{X}^{(h)}_{B_1^{(h)}}\big |\chi^{(h)}_{B^{(h)}_0}\cup
q_{{\rm sp}(\Ga)\setminus B}\Big) \log\hat Z_{B_2^{(h)},\la} \Big
(\mathcal{X}^{(h)}_{B_2^{(h)}}\big|\chi^{(h)}_{B^{(h)}_0}\cup
q_{{\rm int}_{(h)}(\Ga)\cap B^c}\Big)
    \nn\\&&
=\bar c\ga^{1/2}|B^{(h)}|-\beta F_{B_0^{(h)},
\la_{\beta}}^*(\chi^{(h)}_{B_0^{(h)}})+ \log\hat Z_{\rm{int}_h,\la}
\Big(\chi^{(h)}_{B^{(h)}_0}\Big)
    \label{4.15n}
     \end{eqnarray}
Since an analogous bound holds for $-\beta F_{B^{(k)},
\la_{\beta}}^*(\rho^*_{B^{(k)}}| q_k)$, we then get \eqref{2.45n}
from \eqref{4.3n} and from \eqref{4.5n} thus proving  Theorem
\ref{thm2.2-11}.

\vskip2cm

\subsection{Reduction to uniformly bounded densities}
        \label{sec:I}
The main step to conclude the proof of the energy estimate,  is to
bound the last term in \eqref{2.45n}, namely $\hat
Z_{\La,\la}\big(\{\eta(q_\La ;\cdot)=\eta_\Ga(\cdot)\}|\chi_{B_0}
\big)$ with $\La$ as in \eqref{a4.0.0}. Also here we use coarse
graining, but since the number of particles is not bounded we cannot
apply directly  Theorem \ref{propA.2}.  The same difficulty is
present in the LMP model and the arguments used there can be
straightforwardly adapted to the present contest. The outcome is
Theorem \ref{thm:4.5} below which is the same as Proposition
11.3.0.1 of \cite{leipzig} to which we refer for proofs.

\medskip

We need some definitions.  Let $\rho_{max}$ be a positive number
such that
    \begin{equation}
    \label{2.6.0}
\log\rho_{max} >\beta[1+\la_\beta],%\qquad \text{ and }
\qquad \rho_{\max}>\max_h\max_s\rho^{(h)}_s+\zeta,\qquad \sum_{n\ge
\rho_{max}\ell_-^d}\frac {(S\ell_-^de^{\beta\la})^n}{n!}\le
e^{-4S\rho_{max}\ell_-^d}
     \end{equation}
     \begin{equation}
     \label{D.10}
 \frac{1 }{32\beta \rho_{\max}}  \le \frac{\kappa^*}{16},\qquad
 \text{$\kappa^*$ as in \eqref{2.4}}
     \end{equation}
\medskip

For any $\rho_\La\in \tilde{\mathcal{M}}_{\La,\ga^{-1/2}}$,
recalling \eqref{a4.0.0} we call
        \begin{eqnarray*}
       \nn
\mathcal{C}_0(\rho_\La,\zeta/2, \La')=\Big\{ C^{(\ell_-)}_x \subset
\La': \eta^{(\zeta/2,\ell_-)}(\rho_{\La};x)=0\Big\},\qquad
N_0(\rho_\La,\zeta/2,\La')=|\mathcal{C}_0(\rho_\La,\zeta/2, \La')|
     \end{eqnarray*}
We also call $\mathcal{C}_{\ne}(\rho_\La,\zeta/2, \La')$ the
collection of all pairs of cubes $(C_{x_1},C_{x_2})$ both in
$\mathcal{D}^{\ell_-}$ and such that $C_{x_i}\subset \La'$, $i=1,2$,
$C_{x_1}\cap C_{x_2}\ne \emptyset$,
$\eta^{(\zeta/2,\ell_-)}(\rho_{\La};x_i)=k_i$, $i=1,2$, $k_1\ne
k_2$. We require that $\mathcal{C}_{\ne}(\rho_\La,\zeta/2, \La')$
must be maximal, namely any pair $(C_{x'_1},C_{x'_2})$ that verify
the same property must have at least one among $C_{x'_1}$ and
$C_{x'_2}$ appearing in $\mathcal{C}_{\ne}(\rho_\La,\zeta/2, \La')$.
We denote by $N_{\ne}(\rho_\La,\zeta/2,\La')$ the number of cubes
(cubes not pairs of cubes!) appearing in
$\mathcal{C}_{\ne}(\rho_\La,\zeta/2, \La')$.

\vskip.5cm

   \begin{thm}
   \label{thm:4.5}
 Let $\rho_{\max}$ be as in
\eqref{2.6.0} and \eqref{D.10}.   There is $c>0$ such that for all
$\ga$ small enough and for all $\la$ such that $|\la-\la_\beta|\le
c\ga^{1/2}$.
                 \begin{equation}
\hskip-1cm \log \hat Z_{\La,\la}\big(\{\eta(q_\La
;\cdot)=\eta_\Ga(\cdot)\}|\chi_{B_0} \big) \le
 - \min_{m\ge 0} \Big\{ m
\ell_-^d \frac{\zeta}{2}+ \beta \inf_{\rho_\La\in \mathcal
G_{2m}(\rho_{\max})}  F^*_{\La,
\la_\beta}(\rho_\La|\chi_{B_0})\Big\}+c\ga^{1/2}|\La|
     \label{4.27}
     \end{equation}
where, recalling \eqref{ngamma},
   \begin{eqnarray}
   \nn
&&\hskip-2cm \mathcal G_m(\rho_{\max}):= \Big\{\rho_\La\in \tilde{
\mathcal M}_{\La,\ga^{-1/2}}: \rho_\La(x,s)\le \rho_{\max},\,\forall
x\in\La\cap \ga^{-1/2}\mathbb{Z}^d, \forall s\text{ and }\\&&
N'_0(\rho_\La,\zeta/2,\La')+N'_{\ne}(\rho_\La,\zeta/2,\La')
 \ge \frac{3^{-d}N_\Ga-m}2
 \Big\}
    \label{4.24n}
     \end{eqnarray}
     \end{thm}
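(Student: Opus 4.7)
The plan is a Lebowitz--Penrose coarse-graining combined with a geometric counting estimate that converts the contour size $N_\Ga$ into a lower bound on the number of ``bad'' $\ell_-$-cubes. The main obstacle is that the number of particles per $\ell_-$-cube is unbounded, so Theorem \ref{propA.2} cannot be applied directly; the parameter $m$ in \eqref{4.27} will serve as a bookkeeping device, absorbing very-dense cubes into a penalty term while correspondingly relaxing the geometric constraint on the coarse-grained density. First I would partition the event $\{\eta(q_\La;\cdot)=\eta_\Ga(\cdot)\}$ according to the set $\mathcal H$ of $\ell_-$-cubes in $\La$ where $\rho^{(\ell_-)}(x,s;q_\La)>\rho_{\max}$ for some $s$, so that
\[
\hat Z_{\La,\la}(\{\eta=\eta_\Ga\}|\chi_{B_0})=\sum_{m\ge 0}\sum_{|\mathcal H|=m}\hat Z_{\La,\la}(\{\eta=\eta_\Ga\}\cap A_{\mathcal H}|\chi_{B_0}),
\]
where $A_{\mathcal H}$ is the event that the high-density cubes are exactly those in $\mathcal H$.

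For each cube in $\mathcal H$ the assumption \eqref{2.6.0} together with non-negativity of the energy yields a factor $e^{-4S\rho_{\max}\ell_-^d}$, which in view of \eqref{D.10} exceeds the required $e^{-\beta\zeta\ell_-^d/2}$ by a large margin; this margin absorbs the entropy of placing $\mathcal H$ inside $\La$ and the subsequent coarse-graining error, and produces the penalty $m\ell_-^d\zeta/2$ in \eqref{4.27}. On the complement the density is bounded by $\rho_{\max}$, so Theorem \ref{propA.2} converts the remaining integral into $\exp\{-\beta\inf F^*_{\La,\la}(\rho_\La|\chi_{B_0}) + c\ga^{1/2}|\La|\}$, with the infimum restricted to densities bounded by $\rho_{\max}$ off of $\mathcal H$; the replacement of $\la$ by $\la_\beta$ costs another error of the same order via \eqref{4.8n}.

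The heart of the proof is the geometric claim that if $\eta(q_\La;\cdot)=\eta_\Ga(\cdot)$ on $\La'$ and $|\mathcal H|=m$, then $\rho_\La$ belongs to $\mathcal G_{2m}(\rho_{\max})$. I would argue that since $\Theta(q;r)=0$ throughout $\mathrm{sp}(\Ga)$, each $\ell_+$-cube $C\subset\mathrm{sp}(\Ga)$ forces, in a suitable $\ell_+$-neighborhood of $C$, the existence of at least one $\ell_-$-sub-cube at which either $\eta=0$ (contributing to $N_0'$), or two adjacent $\ell_-$-sub-cubes carry different nonzero values of $\eta$ (contributing to $N_\ne'$), or $\rho>\rho_{\max}$ (so the sub-cube lies in $\mathcal H$). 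A packing argument selecting $\ell_+$-cubes at mutual spacing $3\ell_+$ yields the factor $3^{-d}$, and the division by $2$ reflects that pairs in $\mathcal C_\ne$ count two cubes each, giving $N_0'+N_\ne'\ge (3^{-d}N_\Ga - m)/2$. Assembling the three bounds, using $\log\binom{|\La|/\ell_-^d}{m}\le m\log(e|\La|/(m\ell_-^d))$ (absorbed into the margin of the penalty) and taking the minimum over $m$ yields \eqref{4.27}. The delicate point, which I expect to be the main technical difficulty, is obtaining the counting bound uniformly in the geometry of $\mathrm{sp}(\Ga)$ and carefully handling the corridor $\La\setminus\La'$ so that the packing argument remains sharp despite the interaction with $\chi_{B_0}$ on the boundary.
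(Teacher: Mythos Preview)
Your proposal is correct and follows essentially the same approach as the paper, which in fact does not spell out a proof but refers to Proposition 11.3.0.1 of \cite{leipzig}; the strategy there is precisely the one you describe---partition according to the set $\mathcal H$ of $\ell_-$-cubes where the empirical density exceeds $\rho_{\max}$, extract the large-deviation penalty from \eqref{2.6.0} on $\mathcal H$, apply the coarse-graining Theorem \ref{propA.2} on the bounded-density part, and use the packing argument on $\ell_+$-cubes to obtain the lower bound on $N'_0+N'_\ne$. One small slip: the counting you sketch yields $N'_0+N'_\ne\ge (3^{-d}N_\Ga-2m)/2$ rather than $(3^{-d}N_\Ga-m)/2$, which is exactly the $\mathcal G_{2m}$ condition appearing in \eqref{4.27}; the factor $2$ on $m$ arises because each high-density cube removed from the count can spoil the $\ell_+$-cube it sits in, and the division by $2$ in the definition of $\mathcal G_m$ then doubles this loss.
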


 \vskip1cm

 %%%%%%%%%%%%%%%%%%%%%%%%%%%%%%%%%%%%%%%%%%%%%%%%%%%%%%%%%%%%%%
\subsection{Free Energy cost of contours with perfect boundary
conditions}
    \label{sec:5n}
%%%%%%%%%%%%%%%%%%%%%%%%%%%%%%%%%%%%%%%%%%%%%%%%%%%%%%%%%%%%%%%

\bigskip

The main result in this Section is Theorem \ref{thm3.1} below which
gives a bound of the $\inf$ on the right hand side of \eqref{4.27}.
This estimate is the same as the one given in Theorem 11.3.2.1 of
\cite{leipzig} to which we refer for proofs.

\vskip.5cm

In the sequel $\Omega$ denotes a bounded, connected $\mathcal
D^{(\ga^{-1})}$-measurable region such that its complement is the
union of $p\ge 1$ (maximally connected) components
$\Omega_1$,...,$\Omega_p$ at mutual distance $>\ga^{-1}$ (such that
they do not interact): $\dis{\Omega^c=\bigcup_{i=1}^p \Omega_i}$.
The boundary conditions are chosen by fixing arbitrarily $k_i\in
\{1,..,S+1\}$, $i=1,..,p$, and setting $\rho^{(k_i)}$ on $\Omega_i$,
Analogously to \eqref{2.38} we will denote for any
$s\in\{1,\dots,S\}$,
     \begin{equation}
     \label{3.1}
\chi^{(\und k)}_{\Omega^c}(r,s)=\sum_{j=1}^p\chi^{(\und
k_j)}_{\Omega_j}(r,s),\qquad \chi^{(\und
k_j)}_{\Omega_j}(r,s)=\rho^{(k_j)}_s\text{\bf 1}_{r\in
\Omega_j},\qquad\und k=(k_1,\dots,k_p)
     \end{equation}

\medskip

The reason why the region $\Om$ is $D^{(\ga^{-1})}$-measurable is
that we are going to use Theorem \ref{thm3.1} with $\Omega=\La$,
$\La$ as in \eqref{a4.0.0}.

\bigskip

 Recalling \eqref{z1.5} we set for any given $\psi$,
    \begin{equation}
            \label{aa3.4b}
f^{\rm mf}(\psi)(x,s)=\dis{ e_{\la_\beta}^{\rm mf}(\psi)(x,s)- \frac
1\beta\psi(x,s)[\log\psi(x,s)-1]}
        \end{equation}
so that recalling \eqref{z1.1},
    \begin{equation}
        \label{6.48y}
\big (f^{\rm mf}( \psi),\text{\bf 1}_\Omega\big)=
\sum_{x\in\Omega\cap \mathfrak L_\ga}F_{\beta,\la_\beta }^{\rm mf}(
\mathcal R(x,1),..,\mathcal R(x,S))
    \end{equation}

\medskip
Given a kernel $K(x,y)$, $x,y\in\ga^{-1/2}\mathbb{Z}^d$, we denote
by
     \begin{equation}
        \label{6.40y}
K\star\psi(x,s):=\ga^{-d/2}\sum_{y\in\ga^{-1/2}\mathbb{Z}^d}
K(x,y)\psi(y,s)
      \end{equation}
as a consequence
     \begin{equation}
        \label{6.41y}
\hat J_\ga\star \mathbf{1}=1
      \end{equation}

\medskip

With this notation, we  the following holds.

    \begin{lemma}
    \label{lemma4.16}
Let $\Omega\subset \mathbf{R}^d$ and $\chi^{(\und k)}_{\Omega^c}$ as
above. Let $\rho_\Om$  be any $\mathcal D^{(\ga^{-1/2})}$-measurable
function. Letting
     \begin{equation}
     \label{I.1.9.2}
 \mathcal R=\hat J_\ga*(\rho_\Om+\chi^{(\und k)}_{\Omega^c})
     \end{equation}
we get
\begin{equation}
      \label{I.1.9.4}
F^*_{\Omega,\la_\beta}(\rho_\Om|\chi^{(\und k)}_{\Omega^c}) =
\ga^{-d/2}\{F_1+F_2+F_3\}
     \end{equation}
where (recall \eqref{a3.12}),
     \begin{eqnarray}
     \nn
&&  F_1=\big (f^{\rm mf}( \mathcal R),\text{\bf
1}_\Omega\big),\qquad
 F_2 = \frac {1} \beta\big(\mathcal S(
\mathcal R) - \hat J_\ga *\mathcal S(\rho_\Om+\chi_{\Omega^c}),
\text{\bf 1}_\Omega\big)
\\ &&  \nn F_3=  \big( f^{\rm mf}(\mathcal R)-
f^{\rm mf}(\hat J_\ga* \chi_{\Omega^c} ), \text{\bf 1}_{\Omega^c}
\big)+ \frac {1} \beta \big(\mathcal{S}( \mathcal R )- \hat J_\ga*
\mathcal S(\rho_\Om+\chi_{\Omega^c}),\text{\bf 1}_{\Omega^c}\big)
\\&&\hskip1cm - \big(f^{\rm mf}(\hat J_\ga*
\chi_{\Omega^c}),\text{\bf 1}_{\Omega}\big) -
 \frac {1} \beta\big(  \mathcal S(\hat J_\ga*\chi_{\Omega^c}) - \hat
 J_\ga*
 \mathcal S(\chi_{\Omega^c}),1\big)
    \label{D.8b}
     \end{eqnarray}
Finally
    \begin{equation}
     \label{D.5c}
F_{3}  \geq  \sum_{i=1}^p \mathbf{I}_{k_i}(\Omega,\Omega_i)
      \end{equation}
        \end{lemma}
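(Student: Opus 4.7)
The identity \eqref{I.1.9.4} is essentially algebraic bookkeeping, while the inequality \eqref{D.5c} rests on two separate convexity ingredients.

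My plan for \eqref{I.1.9.4} is to exploit the factorization $V^*=\hat J_\ga\star\hat J_\ga$ in \eqref{a3.13}--\eqref{2.9}. Writing $\psi:=\rho_\Om+\chi^{(\und k)}_{\Omega^c}$ and using Fubini one has
\[
\tfrac12(\psi,V^*\psi)=\ga^{d/2}\sum_{x\in\mathfrak L_\ga}\tfrac12\sum_{s\ne s'}\mathcal R(x,s)\mathcal R(x,s'),
\]
so after distributing the $-\la_\beta\sum_s$ term the interaction part of $F^*_{\Omega,\la_\beta}(\rho_\Om|\chi^{(\und k)}_{\Omega^c})$ is expressible (via the decomposition $(\psi,V^*\psi)=(\rho_\Om,V^*\rho_\Om)+2(\rho_\Om,V^*\chi_{\Omega^c})+(\chi_{\Omega^c},V^*\chi_{\Omega^c})$) as
\[
\ga^{-d/2}\Big[\sum_{x\in\mathfrak L_\ga}e^{\rm mf}_{\la_\beta}(\mathcal R(x,\cdot))-\sum_{x\in\mathfrak L_\ga}e^{\rm mf}_{\la_\beta}(\hat J_\ga*\chi^{(\und k)}_{\Omega^c}(x,\cdot))\Big],
\]
the subtracted piece being the self-energy of $\chi^{(\und k)}_{\Omega^c}$. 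The entropy term $-\beta^{-1}(\mathbf 1_\Omega,\mathcal S(\rho_\Om))$ is then handled by adding and subtracting $\beta^{-1}(\mathbf 1_\Omega,\mathcal S(\mathcal R))$ together with the auxiliary pieces $\beta^{-1}(\mathbf 1,\hat J_\ga*\mathcal S(\psi))$ and $\beta^{-1}(\mathbf 1,\hat J_\ga*\mathcal S(\chi_{\Omega^c}))$, which are grafted so as to produce the Jensen discrepancies that appear in $F_2$ and $F_3$. On $\Omega$ the combination $e^{\rm mf}_{\la_\beta}(\mathcal R)-\beta^{-1}\mathcal S(\mathcal R)=f^{\rm mf}(\mathcal R)$ yields $F_1$; the leftover on $\Omega$ is $F_2$; and everything supported on $\Omega^c$ (together with the $\chi$-only baseline terms inherited from the self-energy subtraction) is collected into $F_3$. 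This gives \eqref{I.1.9.4} by pure rearrangement.

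The hard part is \eqref{D.5c}. The entropy contribution in $F_3$ on $\Omega^c$, namely $\beta^{-1}(\mathcal S(\mathcal R)-\hat J_\ga*\mathcal S(\psi),\mathbf 1_{\Omega^c})$, is non-negative by concavity of $\mathcal S$ and the probability-kernel property of $\hat J_\ga$ (pointwise Jensen), and can therefore be dropped. The remaining energy piece of $F_3$ is
\[
\big(f^{\rm mf}(\mathcal R)-f^{\rm mf}(\hat J_\ga*\chi^{(\und k)}_{\Omega^c}),\mathbf 1_{\Omega^c}\big)-\big(f^{\rm mf}(\hat J_\ga*\chi^{(\und k)}_{\Omega^c}),\mathbf 1_\Omega\big)
\]
plus the symmetric Jensen term supported on the full lattice. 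Because the components $\Omega_i$ are at mutual distance greater than $\ga^{-1}$, the convolution $\hat J_\ga*\chi^{(\und k)}_{\Omega^c}$ localized near $\Omega_i$ depends only on $\chi^{(k_i)}_{\Omega_i}$, so the $\Omega^c$ sum splits as $\sum_i\int_{\Omega_i}(\cdots)$; an analogous splitting holds for the $\Omega$-piece once one restricts the interaction range. I would then invoke convexity of $f^{\rm mf}$ at its minimizer $\rho^{(k_i)}$ together with the critical-point relation \eqref{2.3} to pass from $f^{\rm mf}$ to the purely energetic $e^{\rm mf}_{\la_\beta}$ (the entropy contributions are absorbed through the stationarity of $\rho^{(k_i)}$), which identifies the leftover exactly with $\mathbf I_{k_i}(\Omega,\Omega_i)$ and leaves a non-negative convex remainder that is discarded in the inequality \eqref{D.5c}.
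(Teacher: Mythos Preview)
Your approach is correct and is essentially the paper's: rewrite the interaction through $\mathcal R$ via $V^*=\hat J_\ga\star\hat J_\ga$, add and subtract $\beta^{-1}\mathcal S(\mathcal R)$ and the auxiliary entropy pieces to produce $F_1,F_2,F_3$, and for \eqref{D.5c} drop the non-negative Jensen term on $\Omega^c$, localize to each $\Omega_i$, and replace $\mathcal R$ by $\rho^{(k_i)}$ in $f^{\rm mf}$ so that all entropy contributions cancel and only $\mathbf I_{k_i}(\Omega,\Omega_i)$ survives.

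Two minor corrections to your justification of the last step. First, the inequality $f^{\rm mf}(\mathcal R)\ge f^{\rm mf}(\rho^{(k_i)})$ is not a consequence of convexity of $f^{\rm mf}$---the mean-field free energy is \emph{not} globally convex, having $S{+}1$ wells---but of the fact that at $\la=\la_\beta$ each $\rho^{(k_i)}$ is a \emph{global} minimizer (all with common value $\phi$). Second, the passage from $f^{\rm mf}$ to $e^{\rm mf}_{\la_\beta}$ after this substitution does not use the critical-point relation \eqref{2.3}; once you expand $f^{\rm mf}=e^{\rm mf}_{\la_\beta}-\beta^{-1}\mathcal S$, the entropy pieces coming from $f^{\rm mf}(\rho^{(k_i)})$, from $f^{\rm mf}(\hat J_\ga*\chi^{(k_i)}_{\Omega_i})$ on $\Omega_i$ and on $\Omega$, and from the full-lattice Jensen term $-\beta^{-1}(\mathcal S(\hat J_\ga*\chi_{\Omega_i})-\hat J_\ga*\mathcal S(\chi_{\Omega_i}),\mathbf 1)$ cancel identically, by pure algebra using $(\hat J_\ga*\phi,\mathbf 1)=(\phi,\mathbf 1)$ and $\mathcal S(\chi^{(k_i)}_{\Omega_i})=\mathcal S(\rho^{(k_i)})\mathbf 1_{\Omega_i}$.
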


\vskip.5cm

{\bf Proof.} Recalling \eqref{a3.15} and \eqref{z1.5} we rewrite
\eqref{a3.11}  as follows:
   \begin{equation}
      \label{I.1.9.1}
F^*_{\Omega,\la_\beta}(\rho_\Om|\chi^{(\und k)}_{\Omega^c}) =
\ga^{-d/2}\Big\{ \big( e_{\la_\beta}^{\rm mf}( \mathcal
R)-e_{\la_\beta}^{\rm mf}(\hat J_\ga*\chi^{(\und
k)}_{\Omega^c}),\mathbf{1} \big) - \frac 1\beta
(\mathcal{S}(\rho_\Om), \mathbf{1})\Big\}
     \end{equation}
By adding and subtracting $-\beta^{-1}(\mathcal S( \mathcal
R),\mathbf{1})$, we then have
    \begin{equation}
F^*_{\Omega,\la_\beta}(\rho_\Om|\chi^{(\und
k)}_{\Omega^c})=\ga^{-d/2}\Big\{ \big(f^{\rm mf}( \mathcal
R),\text{\bf 1}\big)-\big(e_{\la_\beta}^{\rm mf}(\hat
J_\ga*\chi^{(\und k)}_{\Omega^c}),\mathbf{1} \big)+\frac\1\beta\big(
\mathcal S( \mathcal R)-\mathcal S( \rho_\Om),\text{\bf
1}\big)\Big\}
    \label{a4.33n}
    \end{equation}
We add and subtract $-\beta^{-1}\mathcal S(\hat J_\ga *\chi^{(\und
k)}_{\Omega^c})\}$ to the second term in \eqref{a4.33n} and we add
and subtract $\big(\mathcal S( \chi^{(\und k)}_{\Om^c}),\text{\bf
1}\big)$ to the last term. We also use that by \eqref{6.41y}, for
any $\psi$, $(\hat J_\ga* \mathcal S(\psi),\text{\bf 1}\big)=\big(
\mathcal S(\psi),\text{\bf 1}\big)$. We thus get
\eqref{I.1.9.4}.

$\dis{F_3 =\sum_{i=1}^p F_3^{(i)}}$, where $F_3^{(i)}$ is
    \begin{eqnarray*}
&&
 F_3^{(i)}=\big( f^{\rm mf}(\mathcal R)-
f^{\rm mf}(\hat J_\ga* \chi_{\Omega_i} ), \text{\bf 1}_{\Omega_i}
\big)+ \frac {1} \beta \big(\mathcal S( \mathcal R )- \hat J_\ga*
\mathcal S(\rho_\Om+\chi_{\Omega_i}),\text{\bf 1}_{\Omega_i}\big)
 \\&&\hskip2cm- \big(f^{\rm mf}(\hat J_\ga*
\chi_{\Omega_i}),\text{\bf 1}_{\Omega}\big) -
 \frac {1} \beta\big(  \mathcal S(\hat J_\ga*\chi_{\Omega_i}) - \hat
 J_\ga*
 \mathcal S(\chi_{\Omega_i}),\mathbf{1}\big)
    \end{eqnarray*}
where $\chi_{\Omega_i}=\chi^{(k_i)}_{\Omega^c_i}$, see \eqref{3.1}.
By convexity the second sum in the definition of $F_3^{(i)}$ is non
negative.  Also the first term is bounded from below by replacing
replacing $\mathcal R$ by $\rho^{(k_i)}$, such that
    \begin{eqnarray*}
&&  F_3^{(i)}\ge \big( f^{\rm mf}(\rho^{(k_i)})- f^{\rm mf}(\hat
J_\ga* \chi^{(k_i)}_{\Omega_i} ), \text{\bf 1}_{\Omega^c} \big)-
\big(f^{\rm mf}(\hat J_\ga*
\chi^{(k_i)}_{\Omega_i}),\text{\bf 1}_{\Omega}\big)%+
            \\&&\hskip1cm  -
 \frac {1} \beta\big(  \mathcal S(\hat J_\ga*\chi^{(k_i)}_{\Omega_i}) - \hat
 J_\ga*
 \mathcal S(\chi_{\Omega_i}),1\big)
 %   \label{D.8}
     \end{eqnarray*}
All the entropy terms cancel with each other, so we get
\eqref{D.5c}.

\qed

\vskip.5cm

Given a function $\rho$ defined in $\Omega\times \{1,\dots,S\}$ and
$\mathcal D^{(\ga^{-1})}$-measurable, let $N_0(\rho,\zeta/2,\Omega)$
and $N_{\ne}(\rho,\zeta/2,\Omega)$ be the numbers defined in
Subsection \ref{sec:I}.

 \vskip1cm

        \begin{thm}
    \label{thm3.1}
Let $c_1$ be such that $4c_13^d= \{32\beta\rho_{\max}\}^{-1}$ with
$\rho_{\max}$ as in Theorem \ref{thm:4.5}. For any $\Omega\subset
\mathbf{R}^d$, $\chi^{(\und k)}_{\Omega^c}$ and $\rho_\Om$  as in
Lemma \ref{lemma4.16}, if $\dis{\sup_{x,s}\rho_\Om(x,s) \le
\rho_{\max}}$,  then
   \begin{equation}
      \label{I.1.9}
F^*_{\Omega,\la_\beta}\big(\rho_\Om|\chi^{(\und k)}_{\Omega^c}\big)
\ge \phi|\Omega| + \ga^{-d/2}\sum_{i=1}^p I_{k_i}(\Omega,\Omega_i) +
2\cdot 3^dc_1 [N_0(\rho_\Om,\zeta/2,\Omega)
+N_{\ne}(\rho_\Om,\zeta/2,\Omega)] \zeta^2\ell_-^d
     \end{equation}
where  $\phi$  is defined in \eqref{phi} and
\begin{equation}
    \label{4.30n}
\mathbf{I}_{k_i}(\Omega,\Omega_i):=\Big( [e_{\la_\beta}^{\rm
mf}(\rho^{(k_i)})- e_{\la_\beta}^{\rm mf}(\hat J_\ga*
\chi^{(k_i)}_{\Omega_i} )], \text{\bf 1}_{\Omega_i} \Big) -
\big(e_{\la_\beta}^{\rm mf}(\hat J_\ga*
\chi^{(k_i)}_{\Omega_i}),\text{\bf 1}_{\Omega}\big)
    \end{equation}

\end{thm}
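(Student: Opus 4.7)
The plan is to start from the decomposition furnished by Lemma~\ref{lemma4.16}: $F^*_{\Om,\la_\beta}(\rho_\Om|\chi^{(\und k)}_{\Om^c}) = \ga^{-d/2}(F_1+F_2+F_3)$ together with $F_3 \ge \sum_i \mathbf{I}_{k_i}(\Om,\Om_i)$. The concavity of $\mathcal S$ and the stochasticity $\hat J_\ga\star\mathbf{1}=\mathbf{1}$ give $F_2\ge 0$ by Jensen's inequality, so it is enough to establish
\begin{equation*}
F_1 = (f^{\rm mf}(\mathcal R), \mathbf{1}_\Om) \ge \phi\,\ga^{d/2}|\Om| + 2\cdot 3^d c_1\, \ga^{d/2}[N_0+N_{\ne}]\zeta^2\ell_-^d,
\end{equation*}
in which the trivial pointwise bound $F^{\rm mf}_{\beta,\la_\beta}(\mathcal R(x,\cdot))\ge \phi$ at each $x\in\Om\cap\mathfrak L_\ga$ accounts for the leading term $\phi\ga^{d/2}|\Om|$. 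The entire content of the proof is thus the extraction of a positive quadratic surplus from the lattice points associated with the bad cubes counted by $N_0$ and $N_{\ne}$.

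The surplus rests on a pointwise quadratic bound coming from strict convexity at each minimizer (cf.\ \eqref{2.4}): a second order Taylor expansion of $F^{\rm mf}_{\beta,\la_\beta}$ yields a radius $\delta_0>0$ such that
\begin{equation*}
F^{\rm mf}_{\beta,\la_\beta}(\mathcal R(x,\cdot)) - \phi \ge \tfrac{\kappa^*}{4}\min_k |\mathcal R(x,\cdot) - \rho^{(k)}|^2
\end{equation*}
whenever $\mathcal R(x,\cdot)$ lies in a $\delta_0$-neighbourhood of some $\rho^{(k)}$. The hypothesis $\rho_\Om\le \rho_{\max}$ forces $\mathcal R$ into a fixed compact set, and outside the union of the $\delta_0$-neighbourhoods the excess $F^{\rm mf}_{\beta,\la_\beta}(\mathcal R(x,\cdot))-\phi$ is uniformly bounded below by a positive constant. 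In either regime the pointwise gain is of order $\zeta^2$ as soon as $\min_k |\mathcal R(x,\cdot)-\rho^{(k)}|\ge \zeta/c$.

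The geometric heart of the argument is then to translate the $\ell_-$-scale badness of $\rho_\Om$ (encoded by $\mathcal C_0$ and $\mathcal C_{\ne}$) into pointwise badness of $\mathcal R$. For a cube $C\in\mathcal C_0$ the spatial average $\mintone{C}\rho_\Om$ lies at distance $\ge\zeta/2$ from every $\rho^{(k)}$; because $\ell_-\ll\ga^{-1}$, the kernel $\hat J_\ga$ varies slowly across $C$, which allows one to pick a centred sub-cube $C^\flat\subset C$ of side comparable to $\ell_-$ on which $\mathcal R(x,\cdot)$ reproduces the cube-average of $\rho_\Om$ up to an error that is small compared to $\zeta$, the error being controlled by the scale hierarchy of Definition~\ref{contorni3} together with $\rho_\Om\le \rho_{\max}$. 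Hence $\min_k |\mathcal R(x,\cdot)-\rho^{(k)}|\ge \zeta/c$ holds on $\gtrsim \ga^{d/2}\ell_-^d$ lattice points in $C^\flat$. For a pair counted by $N_{\ne}$ one applies the same idea at the interface between two adjacent $\ell_-$-cubes with distinct labels $k_1\ne k_2$: any convex combination of $\rho^{(k_1)}$ and $\rho^{(k_2)}$ is at distance at least a fixed $\delta>0$ from every $\rho^{(k)}$ (a model-specific input following from the explicit structure \eqref{orderdis}), so $\mathcal R$ inherits a uniform separation on an $\ell_-^d$-sized region near the interface.

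Summing these quadratic gains over distinct bad cubes and absorbing the at most $3^d$-fold overcounting (a single gain region may be claimed by up to $3^d$ adjacent bad $\ell_-$-cubes in a $3\times\cdots\times 3$ block) produces the claimed prefactor $2\cdot 3^d c_1$; the precise value $4c_1 3^d = (32\beta\rho_{\max})^{-1}$ is forced by \eqref{D.10} so that the quadratic gain strictly dominates the linear error terms coming from boundary effects, from the mismatch between $\ell_-$ and $\ga^{-1}$, and from the replacement $\la\leftrightarrow\la_\beta$. The main obstacle is the geometric step of the preceding paragraph: since $\hat J_\ga$ has range $\ga^{-1}$ substantially larger than $\ell_-$, an $\ell_-$-scale defect of $\rho_\Om$ is smeared by the convolution and does not transfer \emph{directly} into a pointwise quadratic gain for $\mathcal R$; making the transfer rigorous requires a careful choice of $C^\flat$ and delicate use of both the coarse-grained representation \eqref{2.8}--\eqref{2.9} of $\hat J_\ga$ and the relative smallness of $\zeta$ compared to the separations among the $\rho^{(k)}$.
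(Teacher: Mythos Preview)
Your decomposition via Lemma~\ref{lemma4.16} and the handling of $F_3$ are the right starting point, but the argument breaks at the geometric step you yourself flag as the obstacle. The claim that on some sub-cube $C^\flat\subset C$ the convolution $\mathcal R(x,\cdot)=\hat J_\ga\star\rho$ ``reproduces the cube-average of $\rho_\Om$'' over $C$ is simply false. The kernel $\hat J_\ga$ has range $\sim\ga^{-1}$, so $\mathcal R(x)$ is an average of $\rho$ over a ball of that radius; a single $\ell_-$-cube contributes only a weight $\sim(\ga\ell_-)^d=\ga^{\alpha_- d}\ll\zeta$ to that average. Concretely, take $\rho_\Om=\rho^{(1)}$ on all $\ell_-$-cubes except one cube $C\in\mathcal C_0$ on which $\rho_\Om$ is constant and at distance $\zeta$ from every $\rho^{(k)}$. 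Then $|\mathcal R(x,\cdot)-\rho^{(1)}|\lesssim\ga^{\alpha_- d}$ uniformly in $x$, so $F_1$ exceeds $\phi\,\ga^{d/2}|\Om|$ only by $O(\ga^{2\alpha_- d})\,\ga^{d/2}\ell_-^d$, which is negligible compared to the claimed gain $c_1\zeta^2\ga^{d/2}\ell_-^d$. No choice of $C^\flat$ or appeal to the coarse representation \eqref{2.8}--\eqref{2.9} rescues this: the scale mismatch is structural.

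The missing gain sits in $F_2$, which you discard. Strict concavity of $\mathcal S$ with $-\mathcal S''(u)=1/u\ge 1/\rho_{\max}$ gives the quantitative Jensen bound
\[
\mathcal S(\mathcal R)(x,s)-\big(\hat J_\ga\star\mathcal S(\rho)\big)(x,s)\ \ge\ \frac{1}{2\rho_{\max}}\,\ga^{-d/2}\sum_{y}\hat J_\ga(x,y)\big(\rho(y,s)-\mathcal R(x,s)\big)^2,
\]
so $F_2$ controls the $\hat J_\ga$-variance of $\rho$ around $\mathcal R$. The actual argument (as in Theorem~11.3.2.1 of \cite{leipzig}) runs the dichotomy: for each bad $\ell_-$-cube either $\mathcal R$ itself is $\ge c\zeta$ away from every $\rho^{(k)}$ on a region of volume $\sim\ell_-^d$ (gain from $F_1$ via the quadratic lower bound you describe), or $\mathcal R$ stays near some $\rho^{(k)}$ while the $\ell_-$-average $\bar\rho_C$ is $\ge\zeta/2$ away, which forces the $\hat J_\ga$-variance of $\rho$ to be $\gtrsim\zeta^2$ on $\sim\ell_-^d$ lattice points (gain from $F_2$). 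The factor $(32\beta\rho_{\max})^{-1}$ in the definition of $c_1$ is exactly the entropy-curvature constant, confirming that the surplus is not extracted from $F_1$ alone.
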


\vskip.5cm

 The proof is the same as the one of  Theorem
11.3.2.1 of \cite{leipzig} and thus is omitted.

\vskip1cm

%%%%%%%%%%%%%%%%%%%%%%%%%%%%%%%%%%%%%%%%%%%%%
\subsection{Energy estimate, conclusion}
    \label{sec:4.4n}
%%%%%%%%%%%%%%%%%%%%%%%%%%%%%%%%%%%%%%%%%%%%%
In this Subsection we conclude the proof of Theorem \ref{thm:2.9},
the arguments we use are taken from Subsection 11.3.3 of
\cite{leipzig}.

 By \eqref{4.27}   and \eqref{I.1.9}
 there is a  constant $c$ such that
        \begin{eqnarray*}
&& \hskip-3cm \log \hat Z_{\La,\la_{\beta,\ga}}\Big(\{\eta(q_\La
;\cdot)=\eta_\Ga(\cdot)\}\big|\chi_{B_0}\Big)\le -\beta\Big(
\phi|\La| +\sum_{h} I_{h}(\La; B_0^{(h)}) \nn\\&&+  \min_{ m\ge
0}\{m \ell_-^d \frac{\zeta}{2}+2\cdot 3^d c_1 \frac{3^{-d}N_\Ga
-2m}{2} \zeta^2\ell_-^d\}\Big) + c\ga^{1/2}|{\rm sp}(\Ga)|\nn
     \end{eqnarray*}
For all $\ga$ small enough (recall that $\zeta=\ga^a$), the min is
achieved at $m=0$ such that ($3^d>1$),
        \begin{eqnarray}
        && \hskip-3.7cm
\log \hat Z_{\La,\la_{\beta,\ga}}\Big(\{\eta(q_\La
;\cdot)=\eta_\Ga(\cdot)\}\big|\chi_{B_0}\Big)\le -\beta\Big(
\phi|\La| +\sum_{h} \mathbf{I}_{h}(\La; B_0^{(h)})
        \nn\\&& \hskip2cm
+ c_1 N_\Ga \zeta^2\ell_-^d \Big) + c\ga^{1/2}|{\rm sp}(\Ga)|
   \label{4.49}
     \end{eqnarray}
which inserted in \eqref{2.45n} yields with a new
constant $c$:
      \begin{eqnarray}
          \label{4.50}
&& \hskip-1.5cm  {\mathcal N}^{(k)}_{\la}(\Ga,q_{c(\Ga)^c})\le
\;e^{-\beta   c_1 N_\Ga \zeta^2\ell_-^d + c \ga^{1/2}|{\rm
sp}(\Ga)|} \prod_{h} Z^{(h)}_{{\rm int}_{(h)}(\Ga)),\la}
(\chi^{(h)}_{B_0^{(h)}})
        \nn\\&&\hskip1cm
\times \hat Z_{\Delta_1,\la
}\Big(\mathcal{X}^{(k)}_{\Delta_1}\big|\chi^k_{B^{(k)}_0}+q_{c(\Ga)^c}\Big)
\nn\\&&\hskip1cm\times
e^{-\beta\big(F^*_{B_0,\la_\beta}(\chi_{B_0})+\phi|\La| + \sum_{h}
\mathbf{I}_{h}(\La; B_0^{(h)})\big)}
     \end{eqnarray}
Since $F^*_{B_0,\la_\beta}(\chi_{B_0})=\sum_h
F^*_{B^{(h)}_0,\la_\beta}(\chi^{(h)}_{B^{(h)}_0})$, we have
    \begin{equation*}
F^*_{B_0,\la_\beta}(\chi_{B_0})=\phi|B_0|-\mathbf{I}_k(\Delta_1;B_0^{(k)})
 -\sum_{h}\big[\mathbf{I}_h(\La; B_0^{(h)})
-\mathbf{I}_h({\rm int}_h(\Ga);B_0^{(h)})\big]
    \end{equation*}
therefore the exponent in the last term of
\eqref{4.50} becomes
      \begin{equation*}
 -\beta\Big( \phi|\La\cup B_0|-I_k(\Delta_1;B_0^{(k)})
-\sum_{h}\mathbf{I}_h({\rm int}_h(\Ga);B_0^{(h)})\Big)
     \end{equation*}
Going backwards we write
      \begin{eqnarray*}
        %&&
\phi|\La\cup B_0| = F^*_{\La\cup B_0,\la_\beta}(\chi^{(k)}_{\La\cup
B_0})+ \mathbf{I}_k(\Delta_1; B_0^{(k)})+\mathbf{I}_k( {\rm
int}(\Ga);B_0)
     \end{eqnarray*}
thus after some cancelations
      \begin{eqnarray}
          \label{4.53n}
&& \hskip-1.5cm  {\mathcal N}^{(k)}_{\la}(\Ga,q_{c(\Ga)^c})\le K
 \;e^{-\beta   c_1
N_\Ga \zeta^2\ell_-^d + c \ga^{1/2}|{\rm sp}(\Ga)|} \prod_{h\ne
k}\;\frac {Z^{(h)}_{{\rm int}_{(h)}(\Ga),\la} (\chi^{h}_{B^{(h)}_0})
 e^{\beta \mathbf{I}_h({\rm
int}_h(\Ga);B_0^{(h)})}}{Z^{(k)}_{{\rm int}_{h}(\Ga),\la}
(\chi^{(k)}_{B^{(h)}_0})
 e^{\beta \mathbf{I}_k({\rm
int}_h(\Ga);B_0^{(h)})}}
       \\&&\hskip-1.5cm \nn
 K:=Z^{(k)}_{{\rm
int}_{k}(\Ga),\la}(\chi^{(k)}_{B_0^{(k)}})\prod_{h:h\ne k}
Z^{(k)}_{{\rm int}_{h}(\Ga),\la} (\chi^{(k)}_{B^{(h)}_0})
            %\nn\\&&\hskip1cm\times
  \;\hat Z_{\Delta_1,\la}\Big(\mathcal{X}^{(k)}_{\Delta_1}\big|
\chi^k_{B^{(k)}_0}+q_{c(\Ga)^c}\Big) e^{-\beta F^*_{\La\cup
B_0,\la_\beta}(\chi^{(k)}_{\La\cup B_0}\big)}
     \end{eqnarray}

Let $\rho = [\rho^{(k)}]$,
 $ [\rho^{(k)}]$ be the  element of
 $\{ \ga^{d/2}n, n\in \mathbb N\}$ closest to $\rho^{(k)}$ and let
 $\mathcal B=\big\{q_{\La\cup B_0}:
 \rho^{(\ga^{-1/2})}(\cdot;q_{\La\cup B_0})=[\rho^{(k)}]\big\}$.
Then for all
 $\la$ such that $|\la-\la_\beta|\le c\ga^{1/2}$,
   $$
e^{-\beta{\mathcal F}^*_{\La\cup B_0,\la_\beta}(\chi^{(k)}_{\La \cup
B_0}) }\le \hat Z_{\La \cup B_0,\la}(\mathcal B) e^{c\ga^{1/2}|\La
\cup B_0|}
  $$
Let $\bar q$ be any configuration in $\La\cup B^{(k)}_0$ such that
$\rho^{(\ga^{-1/2})}(\cdot;\bar q) =[\rho^{(k)}]$, then
   $$
 Z^{(k)}_{{\rm
int}_k(\Ga),\la}  (\chi^{(k)}_{B^{(k)}_0})\le    Z^{(k)}_{{\rm
int}_k(\Ga),\la} (\bar q) e^{c \ga^{d/2} |B^{(k)}_0|}
  $$
Analogous bounds hold for the other partition functions such that
$K$ is bounded by
    \begin{equation}
    \label{6.60y}
K\le e^{c\ga^{1/2}|{\rm sp}(\Ga)|}
\mathcal{D}^{(k)}_\la(\Ga,q_{c(\Ga)^c})
    \end{equation}
From \eqref{4.53n} and \eqref{6.60y}, \eqref{I.5.1} follows with
$c_1$ as in Theorem \ref{thm3.1}. \qed

\vskip1cm

\section{{\bf Critical points and minimizers
 in a pure phase }}
        \label{sec:I.2}
In this Section we prove Theorem \ref{thm:4.3}, by
analyzing a variational problem for the free energy
functional. We need a similar result in Section \ref{sec:5}
in the proof of Theorem \ref{thm:2.14} for an interpolated
functional. We thus state the result in a way that includes
both cases.

We study the variational problem
 \begin{equation}
      \label{I.2.1.1}
\min_{\rho_\La\in \mathcal X_\La^{(k)}}f_{\La, t} (\rho_\La;\bar
\rho_{\La^c}),\qquad \bar \rho_{\La^c} \in \mathcal
X_{\La^c}^{(k)},\quad t\in[0,1]
     \end{equation}
where, recalling \eqref{a3.15},
     \begin{equation}
     \label{appK.1}
f_{\La, t} (\rho_\La;\bar \rho_{\La^c})= tF^*_{\La,\la_\beta}
(\rho_\La|\bar\rho_{\La^c})+(1-t)\Big[h(\rho_\La)-\frac
1\beta\sum_s\sum_{x\in
 \La\cap \ga^{-d/2}\mathbf{Z}^d}S(\rho_\La(x,s))\Big]
     \end{equation}
where $F^*$ is defined in \eqref{a3.11} and
    \begin{equation}
h(\rho_\La) =\sum_s(r^{(k)}_s-\la_\beta)\ga^{-d/2}\sum_{x\in
 \La\cap \ga^{-d/2}\mathbf{Z}^d}\rho_\La(x,s),\qquad
 %\\&&
r^{(k)}_s=\sum_{s'\ne s} \rho^{(k)}_{s'}
    \label{aa2.47}
    \end{equation}
  We  prove that away from $\La^c$ the minimizer is
exponentially close to $\rho^{(k)}$.
 The constraint $\rho_\La\in\mathcal X_\La^{(k)}$, namely
$\eta(\cdot;\rho_\La)=k$, is essential as it localizes the
problem in a neighborhood of the (stable) minimum where it
is possible to prove that the critical points, i.e.\ the
solutions of the Euler-Lagrange equations, converge
exponentially to $\rho^{(k)}$.

\bigskip

Thus in this Section we prove the following result.

    \begin{thm}
    \label{thmI.3.1}
 Let $\La$ be a
$\mathcal D^{(\ell_+)}$-measurable set  and let
$\bar\rho_{\La^c}\in \mathcal X^{(k)}_{\La^c}$. Then, for
any $t\in[0,1]$, there is a unique minimizer $\hat
\rho_\La\in \mathcal X^{(k)}_{\La}$
 of the variational problem \eqref{I.2.1.1}.
Furthermore there are $c$ and $\hat \om$ both positive such that
   \begin{equation}
|\hat\rho_\La(r,s)-\rho^{(k)}_s|\le c e^{-\ga \om {\rm
dist}(r,\La^c)},\qquad \text{  and all $s\in \{1,..,S\}$ and all
$r\in\La$}
 \label{I.3.222}
     \end{equation}
 \end{thm}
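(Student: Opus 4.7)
My approach rests on the observation that, on the constraint set $\mathcal X^{(k)}_\La$, the functional $f_{\La,t}$ is strictly convex uniformly in $t\in[0,1]$: the entropy term $-\frac{1}{\beta}\sum_{x,s}\mathcal S(\rho(x,s))$ has pointwise Hessian $(\beta\rho(x,s))^{-1}\delta_{ss'}$, and combined with the quadratic interaction part (carrying the off-diagonal entries $t\,\mathbf 1_{s\ne s'}\hat V_\ga(x,x')$), the pointwise quadratic form at densities close to $\rho^{(k)}$ has exactly the structure of the mean-field matrix $L^{(k)}$ of \eqref{2.4}, bounded below by $\kappa^*>0$. Since $\La$ is bounded and all admissible $\rho_\La$ are $\mathcal D^{(\ga^{-1/2})}$-measurable, the variational problem is finite-dimensional, and $\mathcal X^{(k)}_\La$ --- defined by the linear inequalities $|\rho^{(\ell_-)}(r,s)-\rho^{(k)}_s|\le\zeta$ --- is convex and compact. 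Continuity together with compactness then yields existence, while strict convexity gives uniqueness of the minimizer $\hat\rho_\La$.

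The next step is to show that $\hat\rho_\La$ lies in the relative interior of $\mathcal X^{(k)}_\La$, so that no constraint is active and $\hat\rho_\La$ solves the unconstrained Euler--Lagrange system, which takes the form of a nonlinear fixed-point equation $\hat\rho_\La = \Phi_t(\hat\rho_\La;\bar\rho_{\La^c})$, $\Phi_t$ an exponential map reducing at $t=0$ to the mean-field equation \eqref{2.3} (whose solution is $\rho^{(k)}$). Interior-ness follows from strict convexity: any minimizer on the boundary of $\mathcal X^{(k)}_\La$ could be strictly improved by a small move toward $\rho^{(k)}$, compatible with the boundary condition via the perfect profile $\chi^{(k)}$ of \eqref{4.9n}, which contradicts minimality once $\ga$ is small enough relative to $\zeta$.

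To get the exponential decay, I set $u_\La:=\hat\rho_\La-\rho^{(k)}$ and linearize the fixed-point equation $\hat\rho_\La=\Phi_t(\hat\rho_\La;\bar\rho_{\La^c})$ around $\rho^{(k)}$. Cancelling the zeroth order via \eqref{2.3} produces a linear integral equation of the form $u_\La = \mathcal K_t u_\La + g_\La$, in which $\mathcal K_t$ is a matrix-valued convolution operator constructed from $\hat V_\ga$ with kernel of range $O(\ga^{-1})$, and $g_\La$ is a source term supported within distance $2\ga^{-1}$ of $\La^c$ (carrying the information of $\bar\rho_{\La^c}$). The crucial analytic input is that $\mathcal K_t$ is a strict contraction on $\ell^\infty(\La\cap\mathfrak L_\ga;\mathbb R^S)$, uniformly in $t\in[0,1]$: this is the linear shadow of the strict-convexity bound, and follows from the lower bound $L^{(k)}\ge\kappa^* I$ of \eqref{2.4} combined with the probabilistic normalization $\hat J_\ga\star\mathbf 1=1$, which together yield $\|\mathcal K_t\|\le 1-c\kappa^*$ for a constant $c>0$ independent of $t$ and $\ga$. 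A Neumann series $u_\La=\sum_{n\ge 0}\mathcal K_t^n g_\La$ then converges geometrically; since each application of $\mathcal K_t$ propagates support further into $\La$ by at most $O(\ga^{-1})$ while shrinking the sup-norm by a fixed factor $<1$, balancing propagation against contraction gives \eqref{I.3.222} with rate $\hat\om$ proportional to $|\log(1-c\kappa^*)|$.

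The main obstacle is the verification that the constraint $\mathcal X^{(k)}_\La$ is inactive at $\hat\rho_\La$; without this one would be forced to retain position-dependent Karush--Kuhn--Tucker multipliers in the Euler--Lagrange system, which would obstruct the clean linearization step. A secondary technical point is the uniform contractivity of $\mathcal K_t$ on the whole interval $t\in[0,1]$: at $t=0$ there is no non-local coupling and $\mathcal K_0$ is purely a diagonal multiplier of norm $<1$, at $t=1$ the bound follows from \eqref{2.4}, and convexity of $t\mapsto\mathcal K_t$ in the relevant sense interpolates. Once these two points are settled, the linearization and Neumann-series iteration are parallel to the argument used in Chapter 11 of \cite{leipzig} for the LMP model, the Potts-specific content entering only through the concrete form of $L^{(k)}$ in \eqref{2.4}.
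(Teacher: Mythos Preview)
Your existence/uniqueness argument via strict convexity on the compact convex set $\mathcal X^{(k)}_\La$ is sound and matches the paper's Theorem~\ref{thmappK.2.1}--Corollary~\ref{coroappK.2.1}. The decay argument, however, has two genuine gaps.

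\textbf{Gap 1: the constraint need not be inactive.} Your claim that strict convexity forces $\hat\rho_\La$ into the interior of $\mathcal X^{(k)}_\La$ is incorrect. Strict convexity gives uniqueness of the minimizer, not interiorness; if the unconstrained critical point lies outside $\mathcal X^{(k)}_\La$ the constrained minimizer sits on the boundary, and ``moving toward $\rho^{(k)}$'' then \emph{increases} the functional. Near $\partial\La$ the boundary data $\bar\rho_{\La^c}$ may well push $\text{Av}(\hat\rho_\La;x,s)$ to $\rho^{(k)}_s\pm\zeta$; the paper's a~priori bound (Proposition~\ref{prop6.1}) only gives $|\hat\rho_\La(x,s)-\rho^{(k)}_s|\le c^*\zeta$ with $c^*>1$, which does not settle this. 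The paper circumvents the issue by an $\eps$-relaxation: it replaces the hard constraint by a smooth quartic penalty, works with the relaxed minimizer $\hat\rho_{\La,\eps}$ (always a critical point), proves the decay uniformly in $\eps$, and passes to $\eps\to 0$. The penalty contributes a nonnegative diagonal term $\phi_{\eps,\theta}$ to the Hessian, which only helps positivity.

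\textbf{Gap 2: the $\ell^\infty$ contraction is not justified.} The bound \eqref{2.4} is an $\ell^2$ statement about $L^{(k)}$; it does not yield $\|\mathcal K_t\|_{\ell^\infty\to\ell^\infty}<1$. In fact the $\ell^\infty$ norm of the linearized fixed-point operator is $t\,\beta\max_s\rho^{(k)}_s(S-1)$, and there is no reason this is below $1$ (in the ordered phase $\rho^{(k)}_k=c$ can be large). The paper instead differentiates the $\eps$-minimizer with respect to an interpolation parameter for the boundary condition, obtaining a linear equation $Bu=v$ with $B=D^2_\La f_\eps$ positive in $\ell^2$. Exponential decay of $B^{-1}$ is then proved by splitting $B=A+R$, where $A$ is built from the $\ell_-$-averaged kernel $V^{(\ell_-)}_{\ga,t}$ and $\|R\|\le c\zeta$; $A^{-1}$ is analyzed via a further decomposition into a fluctuation part (trivially inverted) and a coarse-scale operator $A^{(\ell_-)}$ on $\mathcal H_{\ell_-}$, whose weighted-norm decay is the content of Theorem~5.9 in \cite{DMPV2}. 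The Neumann series \eqref{appK.3.9} then transfers the decay to $B^{-1}$. Your ``propagation versus contraction'' picture is morally right, but it has to be run in an $\ell^2$/weighted-norm framework, not in $\ell^\infty$.
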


\vskip.5cm

Observe that Theorem \ref{thm:4.3} is a corollary of the
above Theorem for $t=1$.

\vskip1cm

In Section 5 of \cite{DMPV2} we have studied a similar variational
problem but there the constraint was on the single variable
$\rho_\La\in \mathcal{X}^{(k)}_\La$ because the functional
considered there had as main
 term   the Lebowitz-Penrose free energy on the scale $\ell_-$. Here
 we have a ''simpler'' functional but we have to face
  the new
problem of controlling the fluctuations of $\rho_\La$ from its
average on cubes of site $\ell_-$.

%%%%%%%%%%%%%%%%%%%%%%%%%%%%%%%%%%%%%%%%%%%%%%%%%%%%%%%%%%%%%%%%

\subsection{Extra notation and definitions}

       \label{subsec:6.1}

%%%%%%%%%%%%%%%%%%%%%%%%%%%%%%%%%%%%%%%%%%%%%%%%%%%%%%%%%%%%%%%

In this Section we will use both the lattices
$\ga^{-1/2}\mathbb{Z}^d$ and $\ell_-\mathbb{Z}^d$. We thus define
the following.

\vskip.5cm

 $\bullet$\; $\mathcal H_\ell$ denotes the
Euclidean space of vectors $u=\big(u(x,s), x \in \La\cap \ell\mathbb
Z^d, s \in\{1,.,S\}\big)$ with the usual scalar product
$\dis{(u,v)_\ell= \sum_{x\in \La\cap \ell\mathbb Z^d}\sum_{s=1}^S
u(x,s) v(y,s)}$. For $\ell=\ga^{-1/2}$ we simply write
$(\cdot,\cdot)$.

\smallskip

 $\bullet$\;  For any  $x\in \ga^{-1/2}\mathbb Z^d$
 we denote by $z_x\in \ell_-\mathbb Z^d$ the
point such that $x\in C^{(\ell_-)}_{z_x}$. For
 $u=(u(y,s), y\in \ga^{-1/2}\mathbb Z^d, s\in \{1,..,S\})$ we let
  \begin{eqnarray}
      \label{a6.5}
&&\hskip-1cm  {\rm Av}(u;x,s) = \frac {\ga^{-d/2}}{\ell_-^d}
\sum_{y\in C^{(\ell_-)}_{z_x}\cap \ga^{-1/2}\mathbb Z^d }u(y,s)
     \end{eqnarray}
and we observe that
    \begin{eqnarray}
      \label{appK.4}
&&\hskip-1cm \mathcal{X}^{(k)}_\La=\{\rho_\La: \big| {\rm
Av}(\rho_\La;x,s) -\rho^{k)}_{s}\big|\le \zeta, {\text {
for all $x\in \ga^{-1/2}\mathbb Z^d$, $s=1,\dots,S$}}\})
     \end{eqnarray}

\medskip
 $\bullet$\;  We fix a $\mathcal D^{(\ell_+)}$-measurable
 region $\La$ and $t\in[0,1]$
 and omitting the dependence on
 $\La$ and $t$, we rewrite the functional \eqref{I.2.1.1} as follows. Notice
 that there are two equal entropy terms:  one  in $F^*$ multiplied by $t$  and the
 other,
 explicitly written on the r.h.s. of \eqref{appK.1}, multiplied by
 $1-t$. The same holds for the terms multiplied by
 $\la_\beta$. Calling  $r^{(k)}$ the vector with components
 $r^{(k)}_s$ as in \eqref{aa2.47}, we have
                \begin{equation}
     \label{6.7}
f (\rho_\La;\bar \rho_{\La^c})=  \frac 12 (\rho_\La, V^*_{\ga,t}
\,\rho_\La) + (\rho_\La, \hat V_{\ga,t}\,\bar
\rho_{\La^c})+(1-t)(1_\La,r^{(k)} \rho_\La) -\frac{1}{\beta} (1_\La,
\mathcal I(\rho_\La))
     \end{equation}
where $(u,v)=(u,v)_{\ga^{-1/2}}$,
   \begin{eqnarray}
      \label{6.8}
&&V_{\ga,t}^*(x,s,x',s')=\text{\bf 1}_{s\ne s'} \hat
V_{\ga,t}(x,x'),\qquad x,x'\in \ga^{-1/2}\mathbb Z^d
\\&& \hat
V_{\ga,t}(x,x')=t\,\ga^{-d/2} \sum_{y\in \ga^{-1/2}\mathbb Z^d} \hat
J_\ga(x,y) \ga^{-d/2} \hat J_\ga(y,x')
    \nn
     \end{eqnarray}
$\hat J_\ga$ as in \eqref{2.8} and
   \begin{equation}
      \label{6.9}
   \mathcal I(\rho_\La)(x,s)=
   -\rho_\La(x,s)\Big(\log \rho_\La(x,s)-1-\beta \la_\beta\Big)
     \end{equation}

\medskip

 $\bullet$\; We call $ \bar J_\ga$  the kernel obtained
by averaging over the cubes of $\mathcal D^{(\ell_-)}$ while $\hat
J_\ga$ is over those of $\mathcal D^{(\ga^{-1/2})}$, thus
            \begin{equation}
      \label{A.5.00}
 \bar J_\ga(z_x,z_y)= \mintone{C^{(\ell_-)}_{z_x}}\mintone {C^{(\ell_-)}_{z_y}}
J_\ga(r,r')\; dr\,dr',\qquad x,y\in \ga^{-1/2}\mathbb Z^d
     \end{equation}
We also define
   \begin{equation}
     \label{appK.1.0.6}
V^{(\ell_-)}_{\ga,t}(z_x,z_y) =  t\int \bar J(z_x,r)\bar J(r ,
z_y)dr
     \end{equation}
Observe that  there is $c_1>0$ such that
        \begin{equation}
     \label{appK.1.0.7}
|V^{(\ell_-)}_{\ga,t}(z_x,z_y) -\hat V_{\ga,t}(x,y)| \le
c_1(\ga^{-d/2}\ga^d)(\ga\ell_-)\text{\bf 1}_{|x-y|\le 2 \ga^{-1}}
     \end{equation}

\medskip
 $\bullet$\;  We
introduce a $\eps$-relaxed constraint
   \begin{eqnarray*}
      \label{appK.3}
&&\hskip-1cm f_\eps =f +\frac {\eps^{-1}}4 \sum_s\sum_{x\in
 \La\cap\ga^{-1/2}\mathbb Z^d}\Big(
\{({\rm Av}(\rho_{\La};x,s)-[\rho^{(k)}_{s}+\zeta])_+\}^4%\nn\\&&\hskip3cm
+\{({\rm Av}(\rho_{\La};x,s)- [\rho^{(k)}_{s}-\zeta])_-\}^4
\Big)
     \end{eqnarray*}
 where $(a)_+=a \,\text{\bf 1}_{a>0}$,
$(a)_-=a\,\text{\bf 1}_{a<0}$  and $\rho_\La\in
\mathbf{W}^{(k)}_\La$,
    \begin{equation*}
      %\label{appK.3.1}
 \mathbf W^{(k)}_\La:=\Big\{\rho_\La: |{\rm
  Av}(\rho_{\La};x,s)
  -\rho^{(k)}_{s}|\le b, x\in \La\cap\ga^{-1/2}\mathbb Z^d,s\in\{1,.,S\} \Big\},
  \qquad b=\frac 12\min_{k_1\ne
  k_2}\|\rho^{(k_1)}-\rho^{(k_2)}\|_\infty
      \end{equation*}
$b$ is such that $ \mathbf
W^{(k)}_\La\cap\{\rho^{(1)},\dots,\rho^{(S+1)}\}=\rho^{(k)}$ and
$\mathbf W^{(k_1)}_\La\cap \mathbf W^{(k_2)}_\La=\emptyset$ if
$k_1\ne k_2$.

\smallskip

  $\bullet$\;  We
denote by  $\hat \rho_{\La,\eps}$ and  $\hat \rho_{\La}$ the
minimizers of $f_\eps$ in $ \mathbf{W}^{(k)}_\La$ and of $f$ in
$\mathbf{W}^{(k)}_\La$.

\smallskip
$\bullet$\; For any  $\mathcal D^{(\ga^{-1/2})}$-measurable function
$\psi(\rho)$, we denote by
     \begin{equation}
      \label{ee5.1.3}
D_\La\psi\;=\;\Big\{\;\frac{\partial\psi}{\partial
\rho(x,s)},\;
   x\in \ga^{-1/2}\mathbb Z^d \cap \La,s\in\{1,.,S\}\Big\}
     \end{equation}
We say that a $\mathcal D^{(\ga^{-1/2})}$-measurable function
$\rho_\La$ is a critical point of $f_\eps$, respectively $f$, if
$D_\La f_\eps(\rho_\La)=0$, respectively $D_\La f(\rho_\La)=0$.

\vskip2cm

%%%%%%%%%%%%%%%%%%%%%%%%%%%%%%%%%%%%%%%%%%%%%%%%%%%%%%%%%%%%%%%%

\subsection{Point-wise estimates}

       \label{subsec:appK.1}

%%%%%%%%%%%%%%%%%%%%%%%%%%%%%%%%%%%%%%%%%%%%%%%%%%%%%%%%%%%%%%%

In this subsection we prove some a-priori bounds on the fluctuations
of the minimizers $\hat \rho_{\La,\eps}$ and $\hat \rho_{\La}$ from
their averages.

We start from the latter proving the following result.

\bigskip

 \begin{prop}
 \label{prop6.1}  Let $\La$ be a
$\mathcal D^{(\ell_+)}$-measurable set  and let $\bar\rho_{\La^c}\in
\mathcal X^{(k)}_{\La^c}$.  There is $c^*$ such that for all $\ga$
small enough the following holds.

\noindent {\bf (i)}. Let $C\subset \La $ any $\mathcal D^{(\ell_-)}$
cube. Let $\rho_{\La\setminus C}\in \mathcal{X}^{(k)}_{\La\setminus
C}$ and denote by $\bar \rho_{C^c}=(\rho_{\La\setminus
C},\bar\rho_{\La^c})$. If $\hat \rho_C$ is a minimizer of $f(\cdot;
 \bar\rho_{C^c})$ in $\mathcal{X}^{(k)}_C$
then $|\hat \rho_C(x,s)-\rho^{(k)}_{s}|\le c^*\zeta$ for all $x\in
C\cap \ga^{-1/2}\mathbb{Z}^d$.

\noindent {\bf (ii)}. If $\hat \rho_{\La}\in \mathcal{X}^{(k)}_\La$
 minimizes $f(\cdot:\bar \rho_{\La^c}) $
then $|\hat \rho_\La(x,s)-\rho^{(k)}_{s}|\le c^*\zeta$ for all $x\in
\La\cap \ga^{-1/2}\mathbb{Z}^d$.

  \end{prop}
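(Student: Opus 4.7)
\textbf{Proof plan for Proposition \ref{prop6.1}.} I will give the argument for part (ii); part (i) is the same argument applied to the single cube $C$. The plan is to combine the Euler--Lagrange equations satisfied by the minimizer with the smoothing property of $\hat V_{\ga,t}$, whose range $\sim\ga^{-1}$ is much larger than $\ell_-$, so that convolution with a function having small $\ell_-$-averages produces a pointwise small output.

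First I would write the KKT conditions. The constraint $\hat\rho_\La\in\mathcal X^{(k)}_\La$ is the family of inequalities $|\mathrm{Av}(\hat\rho_\La;x,s)-\rho^{(k)}_s|\le\zeta$ for $x\in\La\cap\ga^{-1/2}\mathbb Z^d$, carrying Lagrange multipliers $\mu(z,s)$ which are constant on each $\ell_-$-cube $C^{(\ell_-)}_z\subset\La$. Setting $u_\La:=\hat\rho_\La-\rho^{(k)}\mathbf 1_\La$ and $u_{\La^c}:=\bar\rho_{\La^c}-\rho^{(k)}\mathbf 1_{\La^c}$, using the mean-field identity $\log\rho^{(k)}_s=\beta(\la_\beta-r^{(k)}_s)$ from \eqref{2.3} together with $\hat V_{\ga,t}\star\mathbf 1=t$ (a consequence of \eqref{6.41y} and \eqref{6.8}), the KKT equations rearrange to
\begin{equation*}
\hat\rho_\La(x,s)=\rho^{(k)}_s\,\exp\bigl\{-\beta W_s(x)+\beta\mu(z_x,s)\bigr\},\qquad
W_s(x)=\sum_{s'\ne s}\bigl[\hat V_{\ga,t}\star(u_\La+u_{\La^c})(\cdot,s')\bigr](x).
\end{equation*}

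The core analytic step is the smoothing estimate: for any $\mathcal D^{(\ga^{-1/2})}$-measurable $g$ with $|\mathrm{Av}(g;z)|\le A$ on every $\ell_-$-cube and $\|g\|_\infty\le M$,
\begin{equation*}
|\hat V_{\ga,t}(x,\cdot)\star g(x)|\le c\,A+c\,\ga\ell_-\,M.
\end{equation*}
This follows by splitting $g=\bar g+(g-\bar g)$ with $\bar g$ the $\ell_-$-cube-average: the first piece contributes at most $tA$ because $\hat V_{\ga,t}\star\mathbf 1=t$; the second has zero $\ell_-$-average and is killed up to $O(\ga\ell_-)M$ because $\hat V_{\ga,t}(x,\cdot)$ varies by $O(\ga\ell_-)$ across an $\ell_-$-cube, cf.\ \eqref{appK.1.0.7}. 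Applying this with $A\le\zeta$ (from $u_\La,u_{\La^c}\in\mathcal X^{(k)}$) and $M$ a priori bounded, and using $\ga\ell_-=\ga^{\alpha_-}\ll\zeta=\ga^a$ (since $\alpha_->a$), yields $|W_s(x)|\le c_1\zeta$ uniformly. Averaging the EL formula over an arbitrary $C_z^{(\ell_-)}\subset\La$ and Taylor-expanding then gives $\mathrm{Av}(\hat\rho_\La;z,s)=\rho^{(k)}_s e^{\beta\mu(z,s)}(1+O(\zeta))$; combined with the constraint $|\mathrm{Av}(\hat\rho_\La;z,s)-\rho^{(k)}_s|\le\zeta$, this forces $|\mu(z,s)|\le c_3\zeta$, and substituting back into the EL formula produces the claimed pointwise bound $|\hat\rho_\La(x,s)-\rho^{(k)}_s|\le c^\ast\zeta$.

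The main obstacle is the a priori $L^\infty$ bound on $\hat\rho_\La$ (and on the datum $\bar\rho_{\La^c}$): without it the error $c\ga\ell_- M$ in the smoothing estimate could dominate $c\zeta$. I would address this by first working with the penalized functional $f_\eps$ on the bounded set $\mathbf W^{(k)}_\La$, where the bound $\|\hat\rho_{\La,\eps}\|_\infty\le\rho^{(k)}_{\max}+b$ is automatic, and only afterwards sending $\eps\to 0$; alternatively one bootstraps, noting that even a crude bound $|W_s|\le C_0$ (obtained by estimating $\hat V_{\ga,t}\star u$ by $t\|u\|_\infty$) combined with the EL formula forces $\hat\rho_\La(x,s)\le\rho^{(k)}_s e^{\beta C_0+\beta\mu}$ pointwise, after which the refined smoothing estimate above closes the loop.
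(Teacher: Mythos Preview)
Your overall strategy (global KKT with cube-wise multipliers plus a smoothing estimate for $\hat V_{\ga,t}$) is different from the paper's, and the core idea is sound; but the step you yourself flag as the main obstacle is not resolved by either of your proposed fixes.

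First, the set $\mathbf W^{(k)}_\La$ is defined by $|{\rm Av}(\rho_\La;x,s)-\rho^{(k)}_s|\le b$, i.e.\ it constrains only the $\ell_-$-averages, not the pointwise values; so the bound $\|\hat\rho_{\La,\eps}\|_\infty\le\rho^{(k)}_{\max}+b$ is \emph{not} automatic from working on $\mathbf W^{(k)}_\La$. Second, your bootstrap is circular: the ``crude bound'' $|W_s|\le t\|u\|_\infty$ presupposes exactly the pointwise control on $u=\hat\rho_\La-\rho^{(k)}$ that you are trying to establish, and the subsequent inequality $\hat\rho_\La\le\rho^{(k)}e^{\beta C_0+\beta\mu}$ is useless without a bound on the multipliers $\mu(z,s)$, which in turn you obtain only after bounding $W_s$. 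There is a genuine fix along your lines: in the fluctuation term $\sum_y[\hat V_{\ga,t}(x,y)-V^{(\ell_-)}_{\ga,t}(z_x,z_y)]\,u(y,s')$ use $|u|\le\rho+\rho^{(k)}$ together with \eqref{appK.1.0.7} and the \emph{average} bound on $\rho$ (which you do have from $\mathcal X^{(k)}$) to get $\sum_{|y-x|\le 2\ga^{-1}}|u(y,s')|\le c\,\ga^{-d/2}$, hence the error is $O(\ga\ell_-)$ without any pointwise input. With this modification your smoothing estimate needs only $A$, not $M$, and your argument closes.

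The paper takes a different route altogether. It proves (i) first, on a \emph{single} $\ell_-$-cube $C$, by introducing the unconstrained functional $g_\mu$ of \eqref{appK.1.0.0} with a single multiplier $\mu\in[-1,1]^S$ and an auxiliary parameter $\tau$. Because $C$ is much smaller than the interaction range, the self-interaction is tiny and the Euler--Lagrange map $T_\mu$ is a contraction on a bounded set (Lemma~\ref{lemma6.3}), yielding both uniqueness and the a~priori bound $\hat\rho_C\le e^{\beta(\la_\beta+\mu_s)}$ directly from positivity in \eqref{appK.1.0.2}. Lemma~\ref{lemmaappK.1.3} then gives $\hat\rho_C(\cdot;\mu,\tau)\approx\rho^{(k)}e^{\beta\mu}$ up to $O(\zeta)$, and a continuation argument in $\tau$ (Lemma~\ref{lemmaappK.1.4.1}, via the implicit function theorem) produces a $\mu$ with $|\mu|\le c\zeta$ whose unconstrained minimizer matches the prescribed average $b_s$; this forces the constrained minimizer to coincide with it. Part (ii) is then reduced to (i) by freezing $\hat\rho_{\La\setminus C}$. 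The paper's method buys a clean a~priori bound at the price of the extra machinery of the $\tau$-continuation; your direct KKT route is shorter once the smoothing estimate is stated with the positivity trick above.
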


\medskip We postpone the proof of Proposition \ref{prop6.1} giving
first some preliminary lemmas.

\bigskip

For any $\mu=(\mu_1,..,\mu_S) \in [-1,1]^S$ and any $\tau\in [0,1]$
for $C$ and $\bar \rho_{C^c}$ as in (i) of Proposition
\ref{prop6.1}, we let
  \begin{equation}
     \label{appK.1.0.0}
g_\mu(\rho_C;\rho_{ C^c},\tau):=  \frac \tau 2 (\rho_C,
 V^*_{\ga,t} \,\rho_C) + (\rho_C,  V^*_{\ga,t}\,
\rho_{C^c})+(1-t)(1_C,r^{(k)} \rho_C) -\frac{1}{\beta}
(1_C, \mathcal I(\rho_C)) - (\rho_C,\mu)
  \end{equation}
where
    $$
(\rho_C,\mu):= \sum_s\sum_{x\in
C\cap\ga^{-1/2}\mathbb{Z}^d}
 \mu_{s} \rho_C(x,s)
    $$
We regard $g_\mu$ as a function of $\rho_C=\{\rho_C(x,s)\ge 0, x\in
C\cap\ga^{-1/2}\mathbb{Z}^d,s\in\{1,\dots,S\}\}$. $\mu_s$ has the
interpretation of a chemical potential for the species $s$, $\tau$
is an auxiliary parameter, we will eventually set $\tau=1$, in which
case $g_\mu(\rho_C;\rho_{C^c},1):= f(\rho_C,\rho_{C^c})
-(\rho_C,\mu)$.

\vskip1cm

We let
    \begin{equation}
     \label{6.11}
\psi(x,s):=\sum_{s'\ne s}\sum_{x'\in
C^c\cap\ga^{-1/2}\mathbb{Z}^d}\hat V_{\ga,t}(x,x')\bar\rho_{
C^c}(x',s'), \quad x\in C\cap\ga^{-1/2}\mathbb{Z}^d
    \end{equation}
and define the map $T_\mu$ on $\{\rho_C:\rho_C(x,s)\ge 0\}$
by setting for $x\in C\cap\ga^{-1/2}\mathbb{Z}^d$
    \begin{equation}
     \label{6.12}
  (T_\mu\rho_C)(x,s) = \exp \Big\{-\beta \Big(\tau\sum_{s'\ne s}\sum_{x'\in
C^c\cap\ga^{-1/2}\mathbb{Z}^d}  \hat V_{\ga,t}(x,x') \rho_C(x',s')
  + \psi(x,s) -\la_\beta
  -\mu_{s}+(1-t)r^{(k)}_{s}\Big)\Big\}
     \end{equation}

\vskip1cm

            \begin{lemma}
 \label{lemma6.3}
There is $\ga_0$ such that for all $\ga\le \ga_0$ and all $\mu\in
[-1,1]^S$, $g_\mu$ has a unique minimizer $\hat
\rho_C(\cdot;\mu,\tau)$. Furthermore for all $\mu$ and $\tau$,
    \begin{equation}
     \label{appK.1.0.1}
\hat \rho_C(x,s;\mu,\tau) \le e^{\beta (\la_\beta+\mu_{s})}, \qquad
x\in C\cap\ga^{-1/2}\mathbb{Z}^d, s\in\{1,\dots,S\}
  \end{equation}

 \end{lemma}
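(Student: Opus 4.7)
The plan is to establish existence of a minimizer by coercivity, to identify it as a fixed point of $T_\mu$ via the Euler-Lagrange equation (which makes \eqref{appK.1.0.1} immediate), and to deduce uniqueness from strict convexity of the Hessian of $g_\mu$ on a neighborhood of $\rho^{(k)}$ combined with a localization showing that every minimizer falls into that neighborhood.

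First I would argue existence and identify the stationarity equation. On the closed set $\{\rho_C\ge 0\}$, $g_\mu$ is continuous (with $\rho\log\rho$ extended by $0$ at the origin); the quadratic piece $\frac{\tau}{2}(\rho_C,V^*_{\ga,t}\rho_C)$ and the cross term $(\rho_C,V^*_{\ga,t}\rho_{C^c})$ are both nonnegative because $\hat V_{\ga,t}$ has nonnegative entries, and the entropy $\beta^{-1}\sum_{x,s}\rho_C(x,s)[\log\rho_C(x,s)-1]$ grows superlinearly and dominates the linear contributions as any component of $\rho_C$ diverges. Hence $g_\mu$ is coercive and its infimum is attained on a bounded set. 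Since $\partial g_\mu/\partial\rho_C(x,s)$ contains $\beta^{-1}\log\rho_C(x,s)\to -\infty$ as $\rho_C(x,s)\downarrow 0$, every minimizer has $\rho_C(x,s)>0$, and the stationarity condition reduces by direct computation (using \eqref{6.11}) to the fixed-point equation $\rho_C=T_\mu\rho_C$. The bound \eqref{appK.1.0.1} is then read off from \eqref{6.12}: all terms in the exponent other than $-\la_\beta-\mu_s$ are nonnegative, so $(T_\mu\rho_C)(x,s)\le e^{\beta(\la_\beta+\mu_s)}$ for every $\rho_C\ge 0$, and any fixed point inherits this estimate.

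For uniqueness I would analyze the Hessian $\tau V^*_{\ga,t}+\text{diag}\bigl((\beta\rho_C(x,s))^{-1}\bigr)$. By \eqref{2.9} the kernel $\hat V_{\ga,t}$ is a discrete self-convolution of the symmetric kernel $\hat J_\ga$, hence positive semidefinite with eigenvalues $\lambda\in[0,t]\subseteq[0,1]$. Diagonalizing $\hat V_{\ga,t}$ turns the Hessian into a direct sum, indexed by such $\lambda$, of $S\times S$ matrices whose off-diagonal entries equal $\lambda$ and whose $(s,s)$-entries equal $(\beta\rho_C(\cdot,s))^{-1}$. Evaluated at $\rho_C=\rho^{(k)}$, each such block equals $\lambda L^{(k)}+(1-\lambda)\text{diag}((\beta\rho^{(k)}_s)^{-1})$, a convex combination of two positive definite matrices with spectral gap bounded below, uniformly in $\lambda\in[0,1]$, by $\min\{\kappa^*,\min_s(\beta\rho^{(k)}_s)^{-1}\}$ thanks to \eqref{2.4}. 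By continuity of the Hessian in $\rho_C$, $g_\mu$ is strictly convex on a neighborhood $\mathcal N$ of $\rho^{(k)}$ of fixed size.

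The main obstacle will be the localization step: to show that any minimizer lies in $\mathcal N$. Since $\mu\in[-1,1]^S$ is a small perturbation and the boundary data $\rho_{C^c}$ is of $k$-type (i.e., lies in $\mathcal X^{(k)}_{C^c}$), I would compare $g_\mu(\hat\rho_C)$ with the test value $g_\mu(\chi^{(k)}_C)$ and exploit the cross term $(\rho_C,V^*_{\ga,t}\rho_{C^c})$, which for $k$-type boundary data strictly penalizes configurations sitting in competing mean-field basins by an amount uniform in small $\ga$, to rule out minimizers outside $\mathcal N$. Combined with the strict convexity on $\mathcal N$ this yields the unique minimizer $\hat\rho_C(\cdot;\mu,\tau)$.
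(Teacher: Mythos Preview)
Your existence argument and the derivation of \eqref{appK.1.0.1} from the fixed-point equation $\rho_C=T_\mu\rho_C$ are correct and essentially match the paper.

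The uniqueness argument, however, has a genuine gap in the localization step. You call $\mu\in[-1,1]^S$ a ``small perturbation'', but it is not: the lemma is stated for all $\mu$ in a fixed cube of side $2$. By the very next lemma in the paper (Lemma~\ref{lemmaappK.1.3}), the minimizer satisfies $\hat\rho_C(x,s;\mu,\tau)\approx \rho^{(k)}_s e^{\beta\mu_s}$, which for $|\mu_s|$ of order $1$ is far from $\rho^{(k)}_s$. So there is no reason the minimizer should lie in your neighborhood $\mathcal N$ of $\rho^{(k)}$, and the cross-term ``penalization of competing basins'' does not help: the one-body potential coming from the boundary data plus the chemical shift $\mu$ has its minimum near $\rho^{(k)}_s e^{\beta\mu_s}$, not near $\rho^{(k)}_s$. (Your spectral decomposition of the Hessian is also only valid when $\rho_C(x,s)$ is constant in $x$, since otherwise the diagonal part does not commute with the spatial diagonalization; you are aware of this, but it means the convexity is genuinely only local in your argument.)

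The paper avoids all of this by a much more direct route: it shows that $T_\mu$ is a contraction on $O=\{0\le\rho_C\le e^{\beta(\la_\beta+1)}\}$ in the sup norm. The key point you are missing is that $C$ is a $\mathcal D^{(\ell_-)}$ cube, hence \emph{much smaller than the interaction range} $\ga^{-1}$; consequently $\sum_{x'\in C}\hat V_{\ga,t}(x,x')$ is small (of order $(\ga\ell_-)^d$), which makes the Lipschitz constant of $T_\mu$ small for $\ga$ small. This gives uniqueness of the fixed point immediately, with no localization and no appeal to the mean-field convexity \eqref{2.4}. Your convexity strategy can in fact be salvaged by exploiting the same smallness: the operator norm of $\tau V^*_{\ga,t}$ restricted to $C$ is $O((\ga\ell_-)^d)$, so the full Hessian $\tau V^*_{\ga,t}+\mathrm{diag}\big((\beta\rho_C)^{-1}\big)$ is strictly positive on all of $O$ (diagonal dominates), and no localization near $\rho^{(k)}$ is needed.
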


\vskip.5cm

{\bf Proof.} Since $\{g_\mu \le n\}$ are compact for any $n>0$ and
$g_\mu$ is smooth, then $g_\mu$ has a minimum. Any minimizer (which
is strictly positive because of the entropy term in $g_\mu$) is also
a critical point, namely a fixed point of the map $T_\mu$, such that
recalling \eqref{6.12},
   \begin{equation}
     \label{appK.1.0.2}
  \rho_C(x,s) = \exp \Big\{-\beta \Big(\tau\sum_{s'\ne s}\sum_{x'\in
C^c\cap\ga^{-1/2}\mathbb{Z}^d}  \hat V_{\ga,t}(x,x') \rho_C(x',s')
  + \psi(x,s) -\la_\beta
  -\mu_{s}+(1-t)r^{(k)}_{s}\Big)\Big\}
     \end{equation}
Since  $\rho_C(x,s)> 0$,
 $\psi(x,s)> 0$, $t\le 1$  and $r^{(k)}_{s}>0$,
 then \eqref{appK.1.0.1} holds for any minimizer.
Thus any minimizer belongs to the set $O=\{\rho_C:\rho_C(x,s)\in
[0,e^{\beta (\la_\beta+1)}]\}$ and $O$ is invariant under $T_\mu$.
Moreover, for any $x\in C\cap\ga^{-1/2}\mathbb{Z}^d$,
   \begin{equation*}
  \hskip-1cm \frac{\partial (T_\mu\rho_C)(x,s)}{\partial
  \rho_C(x',s')} = -\beta \tau(T_\mu\rho_C)(x,s)  V^*_{\ga,t}(x,s,x',s'),\quad
    | \frac{\partial (T_\mu\rho_C)(x,s)}{\partial
  \rho_C(x',s')}| \le \beta  e^{\beta (\la_\beta+1)} c\ga^d\ga^{-d/2}
     \end{equation*}
such that
        \begin{equation*}
 | (T_\mu\rho''_C)(x,s)- (T_\mu\rho'_C)(x,s)|
  \le \Big(\beta e^{\beta (\la_\beta+1)} c\ga^{d/2}\Big)
   \max_{s,x\in C\cap\ga^{-1/2}\mathbb{Z}^d} | \rho''_C(x,s)-  \rho'_C(x,s)|
     \end{equation*}
For $\ga$ small enough, $\beta e^{\beta (\la_\beta+1)} c\ga^{d/2}<1$
such that $T_\mu$ is a contraction and the fixed point is  unique.
\qed

\vskip1cm

 \begin{lemma}
 \label{lemmaappK.1.3}
There is $c>0$ such that for all $\ga$ small enough   and with $\hat
\rho_C(\cdot;\mu,\tau)$ as in Lemma \ref{lemma6.3},
  \begin{equation}
     \label{6.15}
 e^{- c \zeta }\le
 \frac{\hat \rho_C(x,s;\mu,\tau)}{ \rho^{(k)}_{s}e^{ \beta \mu_{s}}}
 \le  e^{ c \zeta}
    \end{equation}

 \end{lemma}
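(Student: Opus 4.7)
The plan is to read off \eqref{6.15} directly from the Euler--Lagrange (fixed-point) equation for $\hat\rho:=\hat\rho_C(\cdot;\mu,\tau)$ compared with the mean-field equation \eqref{2.3}. Taking logs in the fixed-point identity of Lemma \ref{lemma6.3} (with the first sum over $x'\in C$, as it should be for a minimizer of $g_\mu$) and subtracting $\log(\rho^{(k)}_s e^{\beta\mu_s})=-\beta(r^{(k)}_s-\lambda_\beta-\mu_s)$, obtained by twisting \eqref{2.3} by $\mu_s$, yields
\[
\log\frac{\hat\rho(x,s)}{\rho^{(k)}_s e^{\beta\mu_s}}
= -\beta\Big[\tau\sum_{s'\ne s}\sum_{x'\in C}\hat V_{\gamma,t}(x,x')\hat\rho(x',s')+\psi(x,s)-t\,r^{(k)}_s\Big].
\]
Because $\hat J_\gamma$ is normalized on $\gamma^{-1/2}\mathbb Z^d$, one has $\sum_{x'}\hat V_{\gamma,t}(x,x')=t$, and so $t\,r^{(k)}_s=\sum_{s'\ne s}\sum_{x'}\hat V_{\gamma,t}(x,x')\rho^{(k)}_{s'}$. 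Splitting this sum over $C$ and $C^c$ and regrouping rewrites the right-hand side as $-\beta\sum_{s'\ne s}\bigl(A_C(x,s')+A_{C^c}(x,s')\bigr)$, with
\[
A_C(x,s')=\!\!\sum_{x'\in C}\hat V_{\gamma,t}(x,x')\bigl(\tau\hat\rho(x',s')-\rho^{(k)}_{s'}\bigr),\quad
A_{C^c}(x,s')=\!\!\sum_{x'\in C^c}\hat V_{\gamma,t}(x,x')\bigl(\bar\rho_{C^c}(x',s')-\rho^{(k)}_{s'}\bigr).
\]

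For $A_C$ I use that $C$ is a single $\ell_-$-cube, small relative to the range $\gamma^{-1}$ of $\hat V_{\gamma,t}$: one has $\sum_{x'\in C}\hat V_{\gamma,t}(x,x')\le c(\gamma\ell_-)^d=c\gamma^{\alpha_-d}$, so the uniform $L^\infty$ bound \eqref{appK.1.0.1} combined with $\tau\in[0,1]$ gives $|A_C(x,s')|\le c'\gamma^{\alpha_-d}$. Since $\alpha_-d\ge\alpha_->a$, this is $o(\zeta)$ with $\zeta=\gamma^a$, hence absorbed into the desired $c\zeta$.

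The estimate of $A_{C^c}$ is the main step. Here I only know that $\bar\rho_{C^c}\in\mathcal X^{(k)}_{C^c}$, which by \eqref{appK.4} means its $\ell_-$-averages satisfy $|\operatorname{Av}(\bar\rho_{C^c};x',s')-\rho^{(k)}_{s'}|\le\zeta$; the pointwise values of $\bar\rho_{C^c}$ are \emph{not} controlled. The remedy is to exchange $\hat V_{\gamma,t}$ with its $\mathcal D^{(\ell_-)}$-coarse graining $V^{(\ell_-)}_{\gamma,t}$ via \eqref{appK.1.0.7}, whose error is $O(\gamma\ell_-)$ per unit mass and hence negligible compared with $\zeta$. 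Since $V^{(\ell_-)}_{\gamma,t}$ is constant on each $\ell_-$-cube and $C^c$ is $\mathcal D^{(\ell_-)}$-measurable, summing first over the inner lattice inside each cube collapses the sum to
\[
\ell_-^d\gamma^{d/2}\sum_{z'\in C^c\cap\ell_-\mathbb Z^d}V^{(\ell_-)}_{\gamma,t}(z_x,z')\bigl(\operatorname{Av}(\bar\rho_{C^c};z',s')-\rho^{(k)}_{s'}\bigr),
\]
which is bounded by $\zeta$ times $\ell_-^d\gamma^{d/2}\sum_{z'}V^{(\ell_-)}_{\gamma,t}(z_x,z')\le t\le 1$. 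Thus $|A_{C^c}(x,s')|\le c''\zeta$.

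Adding the contributions of all $s'\ne s$ produces $\bigl|\log(\hat\rho(x,s)/(\rho^{(k)}_s e^{\beta\mu_s}))\bigr|\le c\zeta$, which is \eqref{6.15}. The principal technical point is the $C^c$ estimate: one cannot bound $|\bar\rho_{C^c}-\rho^{(k)}|$ pointwise, so the smoothness of $\hat V_{\gamma,t}$ at scale $\ell_-$ (expressed by \eqref{appK.1.0.7}) must be used to pull the averaging operator onto $\bar\rho_{C^c}$ before invoking the defining $\ell_-$-average constraint of $\mathcal X^{(k)}_{C^c}$.
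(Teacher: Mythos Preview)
Your proof is correct and follows essentially the same route as the paper: both arguments plug the fixed-point equation \eqref{appK.1.0.2} into the comparison with $\rho^{(k)}_s e^{\beta\mu_s}=\exp\{-\beta(r^{(k)}_s-\la_\beta-\mu_s)\}$, use the coarse-graining estimate \eqref{appK.1.0.7} to pass from $\hat V_{\ga,t}$ to $V^{(\ell_-)}_{\ga,t}$ so that the constraint $\bar\rho_{C^c}\in\mathcal X^{(k)}_{C^c}$ on $\ell_-$-averages can be invoked, and use the a~priori bound \eqref{appK.1.0.1} together with the smallness of $\sum_{x'\in C}\hat V_{\ga,t}(x,x')=O((\ga\ell_-)^d)$ to control the contribution from inside $C$. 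Your explicit $A_C/A_{C^c}$ decomposition is a slightly cleaner organization of the same computation the paper carries out via separate upper and lower bounds on $\tau\sum_{C}\hat V_{\ga,t}\hat\rho_C+\psi$.
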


\vskip.5cm

{\bf Proof.} Recalling the definition \eqref{appK.1.0.6}
 we use the estimate \eqref{appK.1.0.7} to
replace  $\hat V_{\ga,t}$ with $V^{(\ell_-)}_{\ga,t}$.

Thus, recalling the definition of $\psi$ in \eqref {6.11}, since
$\bar\rho_{C^c}\in \mathcal{X}^{(k)}_{C^c}$ there are $c_2$ and
$c_3$ positive such that (below we shorthand $y\in C$ for $y\in
C\cap \ga^{-1/2}\mathbb{Z}^d$)
   \begin{equation*}
\sum_{s'\ne s}\sum_{y\in C} \tau \hat V_{\ga,t}(x,y)
\hat\rho_C(y,s';\mu,\tau)+\psi(x,s)
 \ge \psi(x,s)\ge t\sum_{s'\ne s}[\rho^{(k)}_{s'} -\zeta]
- c_2(\ga\ell_-)= t \,r^{(k)}_{s}-(S-1)\zeta - c_2(\ga\ell_-)
     \end{equation*}
and by \eqref{appK.1.0.1},
        \begin{equation*}
\sum_{s'\ne s}\sum_{y\in C} \tau \hat V_{\ga,t}(x,y)
\hat\rho_C(y,s';\mu,\tau)+\psi(x,s) \le t [ \,r^{(k)}_{s}+(S-1)\zeta
]+e^{\beta (\la_\beta+1)}c_3(\ga^d\ell_-^d)+ c_2(\ga\ell_-)
     \end{equation*}
Recalling that $\dis{\rho^{(k)}_s= \exp\{-\beta [r^{(k)}_s
-\la_\beta]\}}$, we then obtain \eqref{6.15} from \eqref{appK.1.0.2}
and \eqref{2.3} for $\ga$ small enough.  \qed

\vskip1cm

In the next Lemma we prove that the minimizer $\hat
\rho_C(\cdot;\mu,\tau)$ is  smooth.

\vskip1cm

            \begin{lemma}
 \label{lemmaappK.1.4}
$\hat \rho_C(\cdot;\mu,\tau)$ is a smooth function of $\mu\in
[-1,1]^S$ and $\tau\in [0,1]$ (derivatives of all orders exist) and
there is a constant $c$ such that
                \begin{equation}
     \label{appK.1.0.2.2}
 \frac{\partial \hat \rho_C(x,s;\mu,\tau)}{\partial \mu_s}  \ge \beta e^{\beta\mu_s}\rho^{(k)}_s -
 c\zeta,\;\;\; | \frac{\partial \hat \rho_C(x,s;\mu,\tau)}{\partial \mu_{s'}} |
  \le c (\ga\ell_-)^d,\;\;s'\ne s
      \end{equation}
    \begin{equation}
     \label{appK.1.0.2.2.1}
|\frac{\partial \hat \rho_C(x,s;\mu,\tau)}{\partial \tau} | \le c
(\ga\ell_-)^d
    \end{equation}

 \end{lemma}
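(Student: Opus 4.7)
The plan is to prove smoothness via the Banach implicit function theorem applied to the fixed-point equation \eqref{appK.1.0.2}, and then derive the quantitative estimates by differentiating that equation and solving the resulting linear system via a convergent Neumann series. The main ingredient throughout is the contraction estimate already established in Lemma \ref{lemma6.3}, combined with the a priori closeness $\hat\rho_C\approx\rho^{(k)}e^{\beta\mu}$ from Lemma \ref{lemmaappK.1.3}.

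For smoothness, I would extend the map $T_\mu$ of \eqref{6.12} to a map $T_\mu^{\tau}$ that also tracks the $\tau$-dependence, and view $\hat\rho_C(\cdot;\mu,\tau)$ as its unique fixed point. Since $T_\mu^\tau(\rho)$ is built from exponentials of polynomial expressions in $(\rho,\mu,\tau)$, it is $C^\infty$ jointly in all its arguments. Lemma \ref{lemma6.3} gives that the Fréchet derivative $D_\rho T_\mu^\tau$ has operator norm at most $\beta e^{\beta(\la_\beta+1)}c\,\ga^{d/2}<1$, uniformly for $(\mu,\tau)\in[-1,1]^S\times[0,1]$. Hence $I-D_\rho T_\mu^\tau$ is uniformly invertible and the Banach implicit function theorem delivers that $\hat\rho_C$ is $C^\infty$ in $(\mu,\tau)$, together with the identity $D_{(\mu,\tau)}\hat\rho_C=(I-D_\rho T_\mu^\tau)^{-1} D_{(\mu,\tau)}T_\mu^\tau\big|_{\rho=\hat\rho_C}$.

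For the quantitative bounds, differentiate \eqref{appK.1.0.2}. Writing $u^{(s_0)}(x,s):=\partial_{\mu_{s_0}}\hat\rho_C(x,s;\mu,\tau)$ and $v(x,s):=\partial_\tau\hat\rho_C(x,s;\mu,\tau)$, a direct computation gives
\begin{equation*}
(I+A)\,u^{(s_0)}(x,s)=\beta\,\hat\rho_C(x,s)\,\mathbf{1}_{s=s_0},\qquad (I+A)\,v(x,s)=-\beta\,\hat\rho_C(x,s)\,\Psi(x,s),
\end{equation*}
where
\begin{equation*}
(Aw)(x,s)=\beta\tau\,\hat\rho_C(x,s)\sum_{s'\ne s}\sum_{x'\in C\cap\ga^{-1/2}\mathbb Z^d}\hat V_{\ga,t}(x,x')\,w(x',s'),\quad \Psi(x,s)=\sum_{s'\ne s}\sum_{x'\in C}\hat V_{\ga,t}(x,x')\hat\rho_C(x',s').
\end{equation*}
By the same estimate used in Lemma \ref{lemma6.3}, $\|A\|_\infty=O(\ga^{d/2})$, so $(I+A)^{-1}$ is given by a convergent Neumann series and differs from the identity by $O(\ga^{d/2})$. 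Substituting $\hat\rho_C(x,s)=\rho^{(k)}_se^{\beta\mu_s}(1+O(\zeta))$ from Lemma \ref{lemmaappK.1.3} into the leading term $u^{(s_0)}(x,s)\approx\beta\hat\rho_C(x,s)\mathbf{1}_{s=s_0}$ yields the diagonal lower bound in \eqref{appK.1.0.2.2}. The off-diagonal piece of $u^{(s_0)}$ and the full $v$ both arise only through the action of $A$ on sources supported at the diagonal of indices $s$; since for fixed $x$ the kernel $\hat V_{\ga,t}(x,\cdot)$ has integral $t$ spread uniformly over a ball of radius $\ga^{-1}$, summing it against the near-constant function $\hat\rho_C(\cdot,s')$ over an $\ell_-$-cube contributes a factor of $t(\ga\ell_-)^d$, which produces the bounds \eqref{appK.1.0.2.2}--\eqref{appK.1.0.2.2.1}.

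The main obstacle is precisely this last point: extracting the small factor $(\ga\ell_-)^d$ in the off-diagonal and $\tau$-derivative estimates. A crude contraction argument only yields $O(1)$ bounds for these quantities, so one must carefully exploit (i) the near-constancy of $\hat\rho_C$ on $\ell_-$-scale from Lemma \ref{lemmaappK.1.3}, (ii) the structure of $\hat V_{\ga,t}$ as the averaged double convolution of the smooth Kac kernel \eqref{2.9}, and (iii) cancellations coming from the fact that $\hat\rho_C$ solves the Euler--Lagrange equation at the reference values $\mu=0$, $\tau=1$, where it equals $\rho^{(k)}$ up to an exponentially small boundary correction as in Theorem \ref{thmI.3.1}. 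All other steps reduce to routine bookkeeping once the scaling is identified.
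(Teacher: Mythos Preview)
Your overall strategy---implicit function theorem applied to $G(\rho,\mu,\tau)=\rho-T_\mu^\tau\rho$, invertibility of $I+A$ via a Neumann series, and reading off the derivative bounds---is exactly the paper's approach, and the diagonal lower bound via Lemma~\ref{lemmaappK.1.3} is correct.

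Where you go astray is in the last paragraph. You claim that extracting the factor $(\ga\ell_-)^d$ is the ``main obstacle'' and that a crude contraction argument yields only $O(1)$ off-diagonal bounds, requiring near-constancy of $\hat\rho_C$, cancellations, and even Theorem~\ref{thmI.3.1}. This is wrong: the $(\ga\ell_-)^d$ factor comes for free from the geometry, and your extra ingredients are unnecessary (indeed, invoking Theorem~\ref{thmI.3.1} would be circular, since its proof relies on the present lemma). The point is simply that the domain $C$ is a cube of side $\ell_-$ sitting inside an interaction range $\sim\ga^{-1}$, so for each fixed $x$,
\[
\sum_{s'\ne s}\sum_{x'\in C\cap\ga^{-1/2}\mathbb Z^d}\hat V_{\ga,t}(x,x')\;\le\; c\,(\ga\ell_-)^d,
\]
because the full-space sum $\sum_{x'}\hat V_{\ga,t}(x,x')=t$ and only a volume fraction $(\ell_-/\ga^{-1})^d$ of the mass falls inside $C$. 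This immediately gives $\|A\|_\infty\le c(\ga\ell_-)^d$ in your notation (equivalently $|A^{-1}-I|\le c(\ga\ell_-)^d$ in the paper's), using only the crude bound \eqref{appK.1.0.1} on $\hat\rho_C$. The off-diagonal derivative then follows from $u^{(s_0)}(x,s)=-(Au^{(s_0)})(x,s)$ for $s\ne s_0$, and the $\tau$-derivative from the same row-sum bound applied to $\Psi$. No near-constancy or cancellation is involved.
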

\vskip.5cm

{\bf Proof.} The   minimizer $\hat \rho_C(x,s;\mu)$ is implicitly
defined by the critical point equation $\partial g_\mu/\partial
\rho_C(x,s)=0$, see \eqref{appK.1.0.2}. Thus we call
        $$
G(\rho,\mu):= \rho- T_\mu\rho
    $$
and we observe that
    $$
\frac{\partial G(\rho,\mu)(x',s')}{\partial
\rho(x,s)}\Big|_{\rho=\hat\rho_C(\cdot,\mu,\tau)}:=
A(x,s,x',s'),\qquad x,x'\in C\cap\ga^{-1/2}\mathbb{Z}^d
    $$
where
    \begin{equation}
     \label{6.20}
 A(x,s,x',s')= \text{\bf 1}_{(x,s)=(x',s')}+ \beta \hat \rho_C(x,s;\mu,\tau)
 \tau  V^*_{\ga,t}(x,s,x',s')
     \end{equation}
 By \eqref{appK.1.0.1} this is a positive definite matrix and  by
 the definition of $\hat V_{\ga,t^*}$
    \begin{equation}
     \label{6.21}
| A^{-1}(x,s,x',s')- \text{\bf 1}_{(x,s)=(x',s')}|\le c
(\ga\ell_-)^d,\qquad x,x'\in C\cap\ga^{-1/2}\mathbb{Z}^d
     \end{equation}
By the implicit function theorem  we then conclude that the function
$\hat\rho_C(\cdot,\mu,\tau)$, such that $G(\hat\rho_C,\mu)=0$ is
differentiable and its derivative verifies
    %If $\hat \rho_C(i;\mu)$ was differentiable in
%$\mu$ we would then have
   \begin{equation*}
     %\label{6.11}
\sum_{x'\in C\cap\ga^{-1/2}\mathbb{Z}^d} A(x,s,x',s')\frac{\partial
\hat \rho_C(x',s';\mu,\tau)}{\partial \mu_{s}} = \beta \hat
\rho_C(x,s;\mu)%\text{\bf 1}_{s(i)=s}
     \end{equation*}
Thus from \eqref{6.15} and \eqref{6.21} we get \eqref{appK.1.0.2.2}.

Same argument applies for the derivative $\partial \hat
\rho_C(x,s;\mu,\tau)/\partial \tau$. \qed

\vskip1cm

  We next define
        \begin{equation}
     \label{appK.1.0.7.0}
R_s(\mu,\tau):= (\frac{\ga^{-1/2}}{\ell_-})^d \sum_{x\in
\ga^{-1/2}\mathbb Z^d \cap C} \hat \rho_C(x,s;\mu,\tau)
     \end{equation}
Since $C\in \mathcal D^{(\ell_-)}$, the number of cubes in $\mathcal
D^{(\ga^{-1/2})}$ contained in $C$ are
$\dis{(\frac{\ell_-}{\ga^{-1/2}})^d}$, thus  $R_s(\mu,\tau)$ is the
average density of the species $s$ when the chemical potential is
$\mu$. Our purpose is to prove that for any $b=(b_1,..b_S), b_s\in
[\rho^{(k)}_s-\zeta,\rho^{(k)}_s+\zeta], s=1,..,S$, there is $\mu
\in [-1,1]^S$ such that
        \begin{equation}
     \label{appK.1.0.7.1}
R_s(\mu,1) = b_s,\, s=1,..,S
     \end{equation}

     \vskip1cm

            \begin{lemma}
            \label{lemmaappK.1.4.1}
 For any $b_s\in
[\rho^{(k)}_s-\zeta,\rho^{(k)}_s+\zeta]$, $s\in \{1,..,S\}$, the
equation \eqref{appK.1.0.7.1}  has a solution $\mu\in [-1,1]^S$ and
there is a constant $c$ such that for all $s$, $|\mu_s|\le c \zeta$.

             \end{lemma}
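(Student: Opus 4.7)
\textbf{Proof plan for Lemma \ref{lemmaappK.1.4.1}.} The plan is to apply a quantitative inverse function theorem to the map $F:[-1,1]^S\to\mathbb R^S$ defined by $F_s(\mu):=R_s(\mu,1)$, using $\mu=0$ as the base point. First, I will evaluate $F$ at $\mu=0$. By Lemma \ref{lemmaappK.1.3}, applied with $\mu=0$ and $\tau=1$, one has $|\hat\rho_C(x,s;0,1)-\rho^{(k)}_s|\le c'\zeta\,\rho^{(k)}_s$ uniformly for $x\in C\cap\ga^{-1/2}\mathbb Z^d$. Averaging over such $x$ in the definition \eqref{appK.1.0.7.0} gives $|F_s(0)-\rho^{(k)}_s|\le c''\zeta$, so that for any admissible $b=(b_1,\dots,b_S)$ in the hypothesis we have $|b_s-F_s(0)|\le (1+c'')\zeta$.

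Second, I will analyze the Jacobian. By \eqref{appK.1.0.2.2}, $DF(\mu)=D(\mu)+E(\mu)$ where the diagonal part satisfies $D_{ss}(\mu)\ge \beta e^{\beta\mu_s}\rho^{(k)}_s-c\zeta$ and the off-diagonal entries of $E(\mu)$ are bounded by $c(\ga\ell_-)^d$. With the parameter choices in Definition \ref{contorni3}, namely $\zeta=\ga^a$ and $\ell_-=\ga^{-1+\alpha_-}$ with $a>0$ and $\alpha_-<1$, both $\zeta$ and $(\ga\ell_-)^d=\ga^{d\alpha_-}$ tend to $0$ with $\ga$. Hence for all $\ga$ small enough $DF(\mu)$ is strictly diagonally dominant on $\{|\mu|\le 1\}$, so it is invertible and $\|DF(\mu)^{-1}\|_\infty\le K$ for a constant $K$ depending only on $\beta$ and $\{\rho^{(k)}_s\}$.

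Third, I will run the contraction mapping $\Phi(\mu):=\mu-DF(0)^{-1}\bigl(F(\mu)-b\bigr)$ on the ball $B_{r}=\{\mu:|\mu|_\infty\le r\}$ with $r:=2K(1+c'')\zeta$. The bound on $\|DF(\mu)^{-1}\|_\infty$ together with the smoothness of $F$ shown in Lemma \ref{lemmaappK.1.4} and the smallness of $E(\mu)$ ensure $\Phi$ maps $B_r$ into itself and is a strict contraction. Its unique fixed point $\mu^*$ satisfies $F(\mu^*)=b$ and $|\mu^*|_\infty\le r\le c\zeta$. Since $c\zeta\le 1$ for $\ga$ small enough, $\mu^*\in[-1,1]^S$, proving both claims of the lemma.

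The main (only) obstacle is verifying that the cross-species terms in the Jacobian, which come from the kernel $\hat V_{\ga,t}$ and scale as $(\ga\ell_-)^d$, remain much smaller than the diagonal entropic terms $\beta\rho^{(k)}_s$; once this is in place everything reduces to the standard contraction argument. The smallness is guaranteed by the scaling $\ell_-=\ga^{-1+\alpha_-}$ with $\alpha_-<1$ and by taking $\ga$ sufficiently small.
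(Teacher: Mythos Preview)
Your approach is correct but genuinely different from the paper's. The paper does not apply the inverse function theorem at $\tau=1$; instead it uses a continuation (homotopy) argument in $\tau$. At $\tau=0$ the coupling term in \eqref{appK.1.0.2} drops out, so $R_s(\mu,0)=b_s$ is solved explicitly for each $s$ separately, yielding $\mu^*(0)$ with $|\mu^*(0)|\le c\zeta$. Then, differentiating $R_s(\mu(\tau),\tau)=b_s$ in $\tau$ produces an ODE for $\mu(\tau)$, and the same Jacobian estimates you invoke (diagonal $\ge \beta e^{-\beta}\rho^{(k)}_s-c\zeta$, off-diagonal $\le c(\ga\ell_-)^d$) together with the bound $|\partial R_s/\partial\tau|\le c(\ga\ell_-)^d$ from \eqref{appK.1.0.2.2.1} show $|\mu(\tau)-\mu^*(0)|\le c(\ga\ell_-)^d$, hence $|\mu(1)|\le c\zeta$.

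Your direct contraction argument at $\tau=1$ is more elementary in that it avoids the auxiliary parameter $\tau$ and the ODE altogether. What it costs you is that, to verify $\Phi$ is a contraction, you must control $\|DF(0)-DF(\mu)\|$ on $B_r$, which requires either second-derivative bounds or the observation that the diagonal entries vary as $\beta\rho^{(k)}_s(e^{\beta\mu_s}-1)=O(\zeta)$ over $|\mu|\le r=O(\zeta)$; this is true but not stated in Lemma~\ref{lemmaappK.1.4}, which gives only one-sided bounds on the diagonal. The paper's continuation route sidesteps this by trading the $\mu$-variation of $DF$ for the $\tau$-derivative of $R$, which is already shown small in \eqref{appK.1.0.2.2.1}. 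Either route works; yours is shorter once the missing two-sided estimate on $\partial R_s/\partial\mu_s$ is noted.
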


 {\bf
Proof.} We fix a vector $b_s\in
[\rho^{(k)}_s-\zeta,\rho^{(k)}_s+\zeta]$, $s\in \{1,..,S\}$, and we
first observe that the equation $R_s(\mu,0) = b_s$ has obviously a
solution $\mu^*(0)$, which, recalling \eqref{appK.1.0.2}, is
obtained by solving
   \begin{equation*}
(\frac{\ga^{-1/2}}{\ell_-})^d \sum_{x\in \ga^{-1/2}\mathbb Z^d \cap
C}
 e^{-\beta (
   \psi(x,s) -\la_\beta
  -\mu^*_{s}(0))-(1-t)r^{(k)}_s}= b_s
     \end{equation*}
and by the same arguments used in the proof of Lemma
\ref{lemmaappK.1.3}, $|\mu^*(0)|\le c\zeta$.

We then for look for a function $\mu(\tau)$, $\tau\in[0,1]$  such
that
        \begin{equation}
     \label{appK.1.0.7.2}
R_s(\mu(\tau),\tau) = b_s,\, s=1,..,S
     \end{equation}
By differentiating the above equation  we get the following Cauchy
problem for $\mu(\tau)$:
        \begin{equation}
     \label{appK.1.0.7.3}
\sum_{s'}\frac{ \partial R_s(\mu(\tau),\tau)}{\partial \mu_{s'}}\;
\frac{ d\mu_{s'}(\tau)}{d\tau} = - \frac{ \partial
R_s(\mu(\tau),\tau)}{\partial \tau}, \quad \mu(0)=\mu^*(0)
     \end{equation}
By \eqref{appK.1.0.2.2} the $S\times S$ matrix
 $\{\partial
R_s(\mu(\tau),\tau)/\partial \mu_{s'}, s,s'=1,\dots,S\}$ has
diagonal elements
        \begin{equation}
     \label{appK.1.0.7.3diag}
 \frac{ \partial R_s(\mu,\tau)}{\partial \mu_{s}}\ge \beta e^{-\beta}\rho^{(k)}_s -
 c\zeta
     \end{equation}
     while the non diagonal elements are bounded by
             \begin{equation}
     \label{appK.1.0.7.4}
| \frac{ \partial R_s(\mu,\tau)}{\partial \mu_{s'}}|\le
 c (\ga\ell_-)^d
     \end{equation}
Thus (for $\ga$ small)  the matrix $\{\partial
R_s(\mu(\tau),\tau)/\partial \mu_{s'}, s,s'=1,\dots,S\}$ is positive
definite and invertible and depends smoothly on $\mu$.  As a
consequence the Cauchy problem  \eqref{appK.1.0.7.3} has a unique
solution and since by \eqref{appK.1.0.2.2.1}
    \begin{equation}
     \label{appK.1.0.7.5}
|\frac{ \partial R_s(\mu,\tau)}{\partial \tau} | \le c
(\ga\ell_{-,\ga})^d
    \end{equation}
the solution $\mu(\tau)$ satisfies the bound $|\mu_s(\tau)|\le
c\zeta$ for all $s$.  \qed

 \vskip1cm
We now have all the ingredients for the proof of the
Proposition \ref{prop6.1} stated at the beginning of this
Subsection.

\medskip

{\bf Proof  of Proposition \ref{prop6.1}}.

\noindent {\it Proof of (i)}. Let $\hat \rho_C\in
\mathcal{X}^{(k)}_C$ be a minimizer of $f(\cdot;\bar
 \rho_{C^c})$  and call
   \begin{equation}
   \label{6.31}
b_s= \big(\frac{\ga^{-1/2}}{\ell_-}\big)^d \sum_{x\in
\ga^{-1/2}\mathbb Z^d \cap C} \hat \rho_C(x,s)
    \end{equation}
Since $\hat \rho_C\in \mathcal{X}^{(k)}_C$, (see \eqref{appK.4}),
$b_s$ is as in Lemma \ref{lemmaappK.1.4.1}  and therefore, for $\ga$
small enough, there is
 $\mu=(\mu_s, s=1,..,S)$ such that
   \begin{equation}
   \label{6.32}
    % \frac{\ell_{-,\ga}^d}{\ell_0^d}
   \big(\frac{\ga^{-1/2}}{\ell_-}\big)^d \sum_{x\in
\ga^{-1/2}\mathbb Z^d \cap C}\hat \rho_C(x,s;\mu,1)=b_s,\quad
s=1,..,S
    \end{equation}
Then writing $f(\rho_C)$ for $f(\rho_C;\bar
 \rho_{C^c})$ and using \eqref{6.31},
   $$
f(\hat\rho_C)= g_\mu(\hat\rho_C;1)+
\frac{\ell_-^d}{\ga^{-d/2}}\sum_{s=1}^S\mu_sb_s
   $$
and using \eqref{6.32}
   $$
   f(\hat\rho_C) \le  f(\hat \rho_C(\cdot;\mu,1))= g_\mu(\hat
   \rho_C(\cdot;\mu,1);1)+
\frac{\ell_-^d}{\ga^{-d/2}}\sum_{ s=1}^S b_s \mu_s
   $$
Hence  $g_\mu(\hat\rho_C) \le g_\mu(\hat \rho_C(\cdot;\mu,1))$ and
since by Lemma \ref{lemma6.3} the minimizer is unique we get that
$\hat\rho_C=\hat \rho_C(\cdot;\mu,1)$. On the other hand by
\eqref{6.15},  there is $c^*$ such that for all $\ga$  small enough,
$|\hat \rho_C(x,s;\mu,1)-\rho^{(k)}_{s(i)}|\le c^*\zeta$ for all
$x\in C\cap\ga^{-1/2}\mathbb{Z}^d$.

{\it Proof of (ii).}  Let $\hat \rho_{\La}\in \mathcal{X}^{(k)}_\La$
be a minimizer of $f(\cdot;\bar \rho_{\La^c})$ and let $C\subset\La$
be a $\mathcal{D}^{(\ell_-)}$ cube. We write $\hat \rho_{\La}=(\hat
\rho_{C}, \hat \rho_{\La\setminus C})$ and we prove that $\hat
\rho_{C}\in \mathcal{X}^{(k)}_C$ minimizes $f(\cdot; \hat
\rho_{\La\setminus C})$. In fact
        $$
f(\hat \rho_{\La};\bar \rho_{\La^c})=f(\hat \rho_{\La\setminus
C};\bar \rho_{\La^c})+f(\hat \rho_{C}; \hat \rho_{\La\setminus C})
    $$
and if $\hat \rho_{C}$ were not a minimizer than for any minimizer
$\bar\rho^*_C$, calling $\hat \rho^*_{\La}=(\hat \rho_{\La\setminus
C},\hat \rho^*_C)$, we would have
    $$
f(\hat \rho_{\La};\bar \rho_{\La^c})>f(\hat \rho_{\La\setminus
C};\bar \rho_{\La^c})+f(\hat \rho^*_{C}; \hat \rho_{\La\setminus C})
   >f(\hat \rho^*_{\La};\bar \rho_{\La^c}) $$
and this would be a contradiction. Thus $\hat \rho_{C}$ is a
minimizer and by  (i) $|\hat \rho_\La(x,s)-\rho^{(k)}_{s}|\le
c^*\zeta$, $x\in C\cap\ga^{-1/2}\mathbb{Z}^d$.  The proof of (ii)
then follows from the arbitrariness of $C\subset \La$.  \qed

\vskip1cm

We will next consider $\eps>0$ and start by  proving the analogue of
Lemma 5.2 of \cite{DMPV2}:

\vskip1cm

    \begin{lemma}
     \label{lemmaappK.1.5}

There is a constant $c>0$ such that for all $\eps>0$ small enough
any minimizer $\hat \rho_{\La,\eps}\in \mathbf{W}^{(k)}_\La$ of
$f_\eps$ is also a critical point and  for all $x\in \La$, and all
$s=1,\dots,S$
    \begin{eqnarray}
     \label{appK.1.1}
&& \big|{\rm Av}(\hat\rho_{\La,\eps};x,s)-\rho^{(k)}_{s}\big| \le
\zeta +c \{(\frac{\ell_+}{\ga^{-1/2}})^{d/4}\log
\ga^{-1}\}\eps^{1/4}
     \end{eqnarray}

       \end{lemma}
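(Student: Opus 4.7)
The plan is to prove the pointwise bound \eqref{appK.1.1} first; the critical-point property will then follow automatically, because \eqref{appK.1.1} places every ${\rm Av}(\hat\rho_{\La,\eps};x,s)$ strictly inside the open interval $(\rho^{(k)}_s-b,\rho^{(k)}_s+b)$ for $\eps$ small (since $b$ is a fixed constant while $\zeta+c(\cdots)\eps^{1/4}\to0$), so $\hat\rho_{\La,\eps}$ lies in the interior of $\mathbf{W}^{(k)}_\La$ and the first-order condition $D_\La f_\eps=0$ must hold at any minimizer. Existence of a minimizer on the closed set $\mathbf{W}^{(k)}_\La$ is standard: $f_\eps$ is lower semicontinuous and coercive (the entropic term $\frac1\beta(1_\La,\rho\log\rho)$ is superlinear and positive for large $\rho$, the penalty confines the $\mathcal{D}^{(\ell_-)}$-averages, and the quadratic part is non-negative).

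For the bound, the key idea is a localized competitor comparison. Fix the $\ell_-$-cube $C_0$ and species $s_0$ at which the excess $M:=|{\rm Av}(\hat\rho_{\La,\eps};\cdot,s_0)-\rho^{(k)}_{s_0}|-\zeta$ is maximal, and let $\La_0$ be the unique $\mathcal{D}^{(\ell_+)}$-cube of $\La$ containing $C_0$. Define the competitor $\rho^*$ to equal $\hat\rho_{\La,\eps}$ on $\La\setminus \La_0$ and $\chi^{(k)}_{\La_0}$ on $\La_0$. Then $P(\rho^*)\le P(\hat\rho_{\La,\eps})-(\ell_-/\ga^{-1/2})^d M^4/(4\eps)$ (we have removed at least the contribution of $C_0$), and minimality of $\hat\rho_{\La,\eps}$ yields
\begin{equation*}
\frac{(\ell_-/\ga^{-1/2})^d}{4\eps}\,M^4\;\le\;f(\rho^*)-f(\hat\rho_{\La,\eps}).
\end{equation*}

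The right-hand side is localized to $\La_0$ (plus an $O(\ga^{-1})$ halo absorbed by $\La_0$ itself), and I would bound it by $c_1\,\ell_+^d\ga^{d/2}\log\ga^{-1}$ as follows. First one derives pointwise a priori bounds $\ga^{c}\le\hat\rho_{\La,\eps}(x,s)\le e^{\beta(\la_\beta+1)}$ in the spirit of Lemma~\ref{lemma6.3}: the upper bound by analyzing the fixed-point form of the critical-point equation (the penalty gradient is non-negative exactly where $\rho$ is above target, only reinforcing the bound), and the lower bound by a bootstrap that uses the crude excess bound $M\le b$ from the $\mathbf{W}^{(k)}_\La$ constraint and tolerates a logarithmic loss. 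These yield $|\log\hat\rho_{\La,\eps}|\le c\log\ga^{-1}$, so the entropic contribution to $f(\rho^*)-f(\hat\rho_{\La,\eps})$ is $O(\log\ga^{-1}\cdot\ell_+^d\ga^{d/2})$, while the quadratic and linear pieces give $O(\ell_+^d\ga^{d/2})$. Plugging into the displayed inequality and using $(\ell_-/\ga^{-1/2})^d=\ell_-^d\ga^{d/2}$ gives $M^4\le c_2\eps(\log\ga^{-1})(\ell_+/\ell_-)^d$. Since $\alpha_-<1/2$ one has $\ell_+/\ell_-\le \ell_+/\ga^{-1/2}$, and taking fourth roots produces \eqref{appK.1.1}.

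The main obstacle is the pointwise lower bound on $\hat\rho_{\La,\eps}$: without yet knowing that $\hat\rho_{\La,\eps}$ is a critical point, one cannot invoke the exponential-form identity of \eqref{appK.1.0.2} directly, and accommodating the penalty gradient (of size $\eps^{-1}M^3$) requires a bootstrap coupling the variational inequalities coming from small local replacements with the excess estimate being proved.
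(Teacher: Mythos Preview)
Your localized-competitor strategy is sound in spirit, but the ``main obstacle'' you flag is illusory, and by leaving it unresolved your argument has a genuine gap. You do \emph{not} need any pointwise control on $\hat\rho_{\La,\eps}$ to bound $f(\rho^*)-f(\hat\rho_{\La,\eps})$ from above. Write this difference as $f_{\La_0}(\chi^{(k)}\mid\hat\rho_{\La_0^c})-f_{\La_0}(\hat\rho_{\La_0}\mid\hat\rho_{\La_0^c})$. For the first term, the only piece depending on the unknown $\hat\rho$ is the cross interaction $(\chi^{(k)}_{\La_0},V^*_{\ga,t}\hat\rho_{\text{halo}})$; since $\hat V_{\ga,t}$ averages on scale $\ga^{-1}\gg\ell_-$, this is controlled by the $\ell_-$-averages of $\hat\rho$, which lie in $[\rho^{(k)}_s-b,\rho^{(k)}_s+b]$ by membership in $\mathbf{W}^{(k)}_\La$. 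For the second term, all interaction and linear-in-$\rho$ pieces of $f_{\La_0}$ are nonnegative, and the entropic piece satisfies $u(\log u-1-\beta\la_\beta)\ge -e^{\beta\la_\beta}$ for every $u\ge 0$, so $f_{\La_0}(\hat\rho_{\La_0}\mid\cdots)\ge -cS(\ell_+/\ga^{-1/2})^d$ with no further input. The a-priori bounds $\ga^c\le\hat\rho_{\La,\eps}\le e^{\beta(\la_\beta+1)}$ and the bootstrap you describe are unnecessary here; they are established \emph{later} (Proposition~\ref{propappK.1.3}) precisely by using the present lemma to know that $\hat\rho_{\La,\eps}$ is a critical point.

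The paper's argument is simpler still and bypasses localized competitors. From $f_\eps(\hat\rho_{\La,\eps})\le f_\eps(\rho_\La)=f(\rho_\La)$ for every $\rho_\La\in\mathcal X^{(k)}_\La$, combined with $f(\hat\rho_{\La,\eps})\ge\phi'':=\inf_{\mathbf{W}^{(k)}_\La}f$, one gets the global bound
\[
\tfrac{1}{4\eps}\,\psi(\hat\rho_{\La,\eps})\;\le\;\phi'-\phi'',\qquad \phi':=\inf_{\mathcal X^{(k)}_\La}f.
\]
Then $\phi'$ is bounded above by evaluating $f$ at a fixed element of $\mathcal X^{(k)}_\La$, and $\phi''$ from below by exactly the entropy-coercivity observation above---no information about the minimizer beyond $\hat\rho_{\La,\eps}\in\mathbf{W}^{(k)}_\La$ is used. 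Each summand of $\psi$ is then $\le 4\eps(\phi'-\phi'')$, giving \eqref{appK.1.1}. Your approach and the paper's are two packagings of the same coercivity input; the paper's is cleaner because it never reasons about $\hat\rho_{\La,\eps}$ itself.
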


\vskip.5cm

{\bf Proof.} We denote by
    $$
\psi(\rho_\La)=\sum_{i\in \La} \{({\rm
Av}(\rho_\La;x,s)-[\rho^{(k)}_{s}+\zeta])_+\}^4 +\{({\rm
Av}(\rho_\La;x,s)- [\rho^{(k)}_{s}-\zeta])_-\}^4
    $$
Then for all $\rho_{\La}\in \mathbf{W}^{(k)}_\La$,
   $$
\frac 1{4\eps}\psi(\hat \rho_{\La,\eps})\le f(\rho_\La;\bar
\rho_{\La^c})-f(\hat \rho_{\La,\eps};\bar \rho_{\La^c})+\frac
1{4\eps}\psi(\rho_\La)
    $$
and since $\psi$ vanishes on $\mathcal{X}^{(k)}_\La$:
   $$
\frac 1{4\eps}\psi(\hat \rho_{\La,\eps})\le \inf_{\rho_{\La}\in
\mathcal{X}^{(k)}_\La}  f(\rho_\La;\bar \rho_{\La^c}) - f(\hat
\rho_{\La,\eps};\bar \rho_{\La^c})
    $$
and, calling $\dis{\phi'=\min_{\rho_{\La}\in
\mathcal{X}^{(k)}_\La}f(\rho_\La;\bar \rho_{\La^c})}$,\,\,
$\dis{\phi''=\min_{\rho_{\La}\in W^{(k)}_\La}f(\rho_\La;\bar
\rho_{\La^c})}$ we have
    \begin{equation}
    \label{6.38}
\{({\rm Av}(\hat\rho_{\La,\eps};x,s)-[\rho^{(k)}_{s}+\zeta])_+\}^4
+\{({\rm Av}(\hat\rho_{\La,\eps};x,s)- [\rho^{(k)}_{s}-\zeta])_-\}^4
\le 4\eps (\phi'-\phi'')
     \end{equation}
Let
    \begin{equation}
    \label{6.34}
R= \max_{s}(\rho^{(k)}_s+\zeta),\qquad c'>0 {\text { such that
}}|\ga^{-1/2}\mathbb Z^d\cap
    \La|\le c'\frac{\ell_+^d}{\ga^{-d/2}}
        \end{equation}
Given $x,y\in$ $\ga^{-1/2}\mathbb Z^d$ we denote by $z_x\in
\ell_-\mathbb Z^d$ the point such that $x\in C^{(\ell_-)}_{z_x}$ and
analogously  $z_y\in \ell_-\mathbb Z^d$ is such that $y\in
C^{(\ell_-)}_{z_y}$. Then for $V^{(\ell_-)}_{\ga,t}$  as in
\eqref{appK.1.0.6}, there is a constant $c$  such that
    \begin{equation}
    \label{6.35}
\hat V_{\ga,t}(x,y)\le c V^{(\ell_-)}_{\ga,t}(z_x,z_y)
    \end{equation}
Observe that
    $$
\sum_s\sum_{x\in\La\cap\ga^{-1/2}\mathbb{Z}^d}\rho_\La(x,s)\log
\rho_\La(x,s)=
\sum_s\sum_{x\in\La\cap\ga^{-1/2}\mathbb{Z}^d}V^{(\ell_-)}_\ga\star\big[\rho_\La\log
\rho_\La](z_x)
    $$
Thus by Jensen inequality,
    $$
\sum_s\sum_{x\in\La\cap\ga^{-1/2}\mathbb{Z}^d}\rho_\La(x,s)\log
\rho_\La(x,s)\le \sum_s\sum_{x\in\La\cap\ga^{-1/2}\mathbb{Z}^d}
V^{(\ell_-)}_{\ga,t}\star\rho_\La\log[V^{(\ell_-)}_{\ga,t}\star\rho_\La](z_x)
    $$
 Then, from \eqref{6.34} we get that for all $\rho_{\La}\in
\mathcal{X}^{(k)}$,
    \begin{equation}
    \label{6.37}
\sum_s\sum_{x\in\La\cap\ga^{-1/2}\mathbb{Z}^d}\rho_\La(x,s)\log
\rho_\La(x,s)\le\frac{Sc'\ell_+^d}{\beta\ell_-^d}
  \frac{R \ell_-^d}{\ga^{-d/2}} \log \frac{R \ell_-^d}{\ga^{-d/2}}
    \end{equation}
By using \eqref{6.35} we then get
    \begin{eqnarray*}
&& \phi' \le c (\rho_\La, V^{(\ell_-)}_{\ga,t} \rho_\La) +c
(\rho_\La, V^{(\ell_-)}_{\ga,t}\bar \rho_{\La^c})+
\frac{Sc'\ell_+^d}{\beta\ell_-^d}
  \frac{R \ell_-^d}{\ga^{-d/2}} \log \frac{R \ell_-^d}{\ga^{-d/2}}
+(1-t^*+1+\la_\beta) RS \frac{c'\ell_+^d}{\ga^{-d/2}}
\\&&\hskip1cm \le  cc'  \frac{\ell_+^d}{\ga^{-d/2}} SR^2+C
  \frac{\ell_+^d}{\ga^{-d/2}} \log \ga^{-1}
     \end{eqnarray*}
An upper bound for $\phi''$ is  $\dis{ \phi'' \ge
   \frac{c'\ell_{+,\ga}^d}{\ell_{0}^d}S \frac{e^{\beta \la_\beta}}{ \beta}
}$. Thus there is $c$ such that
    \begin{equation}
    \label{6.39}
\phi'-\phi''\le c  \frac{\ell_+^d}{\ga^{-d/2}} \log \ga^{-1}
    \end{equation}
From \eqref{6.38} and \eqref{6.39}, \eqref{appK.1.1} follows.

From  \eqref{appK.1.1} and by choosing $\eps$ so small that
$\zeta +c \{(\frac{\ell_+}{\ga^{-1/2}})^{d/4}\log
\ga^{-1}\}\eps^{1/4}<2\zeta<b$,  we conclude that the
minimizer is in the interior of  $\mathbf{W}^{(k)}_\La$ and
thus is a critical point. \qed

\vskip1cm

 \begin{prop}
 \label{propappK.1.3}
There is $c^*$ such that for any $\ga>0$ small enough, for all
$\eps>0$ small enough (depending on $\ga$), if $\hat
\rho_{\La,\eps}$ minimizes $f_\eps$ in $\mathbf{W}^{(k)}_\La$ then
    \begin{equation}
    \label{a6.40}
    |\hat
\rho_{\La,\eps}(x,s)-\rho^{(k)}_{s}|\le c^*\zeta,\qquad {\text{ for
all }}x\in \La, s=1,\dots,S
     \end{equation}
  \end{prop}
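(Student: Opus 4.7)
The plan is to reduce, cube by cube, to the critical-point analysis already carried out for $f$ in Proposition~\ref{prop6.1}, exploiting the structural fact that the $\eps$-penalty in $f_\eps$ depends on $\rho_\La$ only through the cube averages ${\rm Av}(\rho_\La;x,s)$. First I would fix $\eps\le\eps_0(\ga)$ small enough that Lemma~\ref{lemmaappK.1.5} yields $|{\rm Av}(\hat\rho_{\La,\eps};x,s)-\rho^{(k)}_s|\le 2\zeta$ for all $x,s$; then $\hat\rho_{\La,\eps}$ sits in the interior of $\mathbf{W}^{(k)}_\La$ and is, in particular, a critical point of $f_\eps$.

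For an arbitrary $\mathcal D^{(\ell_-)}$-cube $C\subset\La$, the splitting argument from the proof of Proposition~\ref{prop6.1}(ii) shows that the restriction $\hat\rho_{C,\eps}$ of $\hat\rho_{\La,\eps}$ to $C$ minimizes $f_\eps(\cdot;\hat\rho_{\La\setminus C,\eps},\bar\rho_{\La^c})$ over $\mathbf{W}^{(k)}_C$. Set $a_s:={\rm Av}(\hat\rho_{C,\eps};\cdot,s)$. Since the $\eps$-term becomes constant once the averages in $C$ are frozen at $a=(a_1,\dots,a_S)$, the very same $\hat\rho_{C,\eps}$ minimizes the unperturbed $f$ over $\{\rho_C\in\mathbf{W}^{(k)}_C:{\rm Av}(\rho_C;\cdot,s)=a_s\ \forall s\}$. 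Introducing a Lagrange multiplier $\mu\in\mathbb{R}^S$ for this linear average-constraint, $\hat\rho_{C,\eps}$ is therefore a critical point of $g_\mu(\cdot;\bar\rho_{C^c},1)$ as in \eqref{appK.1.0.0}, which by Lemma~\ref{lemma6.3} is unique and coincides with $\hat\rho_C(\cdot;\mu,1)$.

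To close the argument I would bound $\mu$. Because $a_s\in[\rho^{(k)}_s-2\zeta,\rho^{(k)}_s+2\zeta]$, a routine extension of Lemma~\ref{lemmaappK.1.4.1} to this slightly enlarged $b$-window — its proof uses only the Jacobian estimates \eqref{appK.1.0.7.3diag}--\eqref{appK.1.0.7.5}, which are uniform in the size of the window and imply diagonal dominance of $\partial R/\partial\mu$ for small $\ga$ — yields some $\mu$ with $|\mu_s|\le c_0\zeta$ realizing those averages; diagonal dominance also gives local uniqueness of such a $\mu$, so it must coincide with the one obtained above. Plugging $|\mu_s|\le c_0\zeta$ into Lemma~\ref{lemmaappK.1.3} gives $|\hat\rho_C(x,s;\mu,1)-\rho^{(k)}_s|\le c^*\zeta$ for all $x\in C\cap\ga^{-1/2}\mathbb Z^d$ once $\ga$ is small, which is \eqref{a6.40} by arbitrariness of $C$. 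The main conceptual subtlety — but a mild one — is reconciling two seemingly different representations of $\mu$: the direct Euler--Lagrange equation of $f_\eps$ identifies $\mu_s$ with $\eps^{-1}\{(a_s-[\rho^{(k)}_s+\zeta])_+^3+(a_s-[\rho^{(k)}_s-\zeta])_-^3\}$, which only yields the weak and $\eps$-divergent bound $|\mu_s|\lesssim\eps^{-1/4}$, whereas the Lagrange/average-matching viewpoint produces the correct bound $|\mu_s|\lesssim\zeta$; uniqueness in Lemma~\ref{lemma6.3} forces the two representations to agree, so the stronger estimate is the one in force.
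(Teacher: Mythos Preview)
Your argument is correct, but it takes a more roundabout path than the paper's. You recycle the Lagrange-multiplier machinery of Proposition~\ref{prop6.1}: localize to a cube $C$, observe that the $\eps$-penalty is constant on the hyperplane of fixed cube averages, identify the restriction of the minimizer with $\hat\rho_C(\cdot;\mu,1)$ for an appropriate $\mu$, and then bound $\mu$ via (an extension of) Lemma~\ref{lemmaappK.1.4.1}. One small wrinkle worth tightening: diagonal dominance of $\partial R/\partial\mu$ only gives \emph{local} uniqueness of the $\mu'$ produced by average-matching, which does not by itself force the Lagrange multiplier (for which you have no a~priori smallness) to equal $\mu'$. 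The clean fix is to note that the unique global minimizer $\hat\rho_C(\cdot;\mu',1)$ of $g_{\mu'}$ lies on the hyperplane ${\rm Av}=a$; since $g_{\mu'}$ and $f$ differ by a constant there, $\hat\rho_C(\cdot;\mu',1)$ and $\hat\rho_{C,\eps}$ are both minimizers of $g_{\mu'}$ over that hyperplane, hence both global minimizers of $g_{\mu'}$, hence equal by Lemma~\ref{lemma6.3}.

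The paper bypasses all of this. It works directly with the Euler--Lagrange equation \eqref{6.40} of $f_\eps$ and the observation that for $x,y$ in the same $\ell_-$-cube the penalty contributions coincide, so
\[
|\log\hat\rho_{\La,\eps}(x,s)-\log\hat\rho_{\La,\eps}(y,s)|\le\sum_{s''\ne s}\sum_z|\hat V_{\ga,t}(x,z)-\hat V_{\ga,t}(y,z)|\,\hat\rho_{\La,\eps}(z,s''),
\]
which, by \eqref{appK.1.0.7} together with the bound on cube averages from Lemma~\ref{lemmaappK.1.5}, is $O(\ga\ell_-)=O(\ga^{\alpha_-})$. Thus point values differ from their cube averages by $O(\ga^{\alpha_-})$, and those averages are within $2\zeta$ of $\rho^{(k)}$, giving \eqref{a6.40} directly. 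Your route has the merit of reducing literally to the case already analyzed in Proposition~\ref{prop6.1}; the paper's route is quicker and avoids having to revisit the range of validity of Lemmas~\ref{lemma6.3}--\ref{lemmaappK.1.4.1} when the boundary data in $C^c$ lie only in the $2\zeta$-window rather than in $\mathcal{X}^{(k)}$.
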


\vskip.5cm

{\bf Proof.}  By Lemma \ref{lemmaappK.1.5} given any $\ga$ if $\eps$
is small enough, then $\hat \rho_{\La,\eps}$ is a critical point,
namely it satisfy the following equation.
     \begin{eqnarray}
    \nn
&&\hskip-1cm\log\hat \rho_{\La,\eps}(x,s)= -\beta
\Big(\tau\sum_{s'\ne s}\sum_{y\in \La\cap\ga^{-1/2}\mathbb{Z}^d}
\hat V_{\ga,t}(x,y) \hat \rho_{\La,\eps}(y,s')
  + \psi(x,s) -\la_\beta
  -\mu_{s}+(1-t)r^{(k)}_{s}\Big)
  \\&&\hskip1cm+ \frac 1{\eps} \{({\rm Av}(\hat\rho_{\La,\eps};x,s)-
  [\rho^{(k)}_{s}+\zeta])_+\}^3
+\{({\rm Av}(\hat\rho_{\La,\eps};x,s)- [\rho^{(k)}_{s}-\zeta])_-\}^3
    \label{6.40}
     \end{eqnarray}
where
    $$\psi(x,s)=\sum_{s'\ne s}\sum_{y\in \La\cap\ga^{-1/2}
    \mathbb{Z}^d}\hat
V_{\ga,t}(x,y)\bar\rho_{ \La^c}(y,s')$$

Let $C\subset \La$ a $\mathcal{D}^{(\ell_-)}$ cube and let $x,y\in
C$. Then ${\rm Av}(\hat\rho_{\La,\eps};x)={\rm
Av}(\hat\rho_{\La,\eps};y)$ and from \eqref{6.40} we get
   \begin{equation*}
|\log \hat \rho_{\La,\eps}(x,s)-\log \hat \rho_{\La,\eps}(y,s)| \le
\sum_{s''\ne s} \sum_{z\in \La\cap\ga^{-1/2}
    \mathbb{Z}^d} |\hat V_{\ga,t}(x,z) -\hat
V_{\ga,t}(y,z)|\hat \rho_{\La,\eps}(z,s'')
    %\label{6.41}
     \end{equation*}
We use \eqref{appK.1.0.7}  and  we get
    \begin{equation*}
|\log \hat \rho_{\La,\eps}(x,s)-\log \hat \rho_{\La,\eps}(y,s)| \le
c (\ga\ell_-)=c \ga^{\alpha_-}
     \end{equation*}
Thus for all $x\in \La \cap\ga^{-1/2}
    \mathbb{Z}^d$
    \begin{equation}
| \rho_{\La,\eps}(x,s)-\rho^{(k)}_{s}|\le |
\rho_{\La,\eps}(x,s)-{\rm Av}(\hat\rho_{\La,\eps};x,s)|+|{\rm
Av}(\hat\rho_{\La,\eps};x,s)-\rho^{(k)}_{s}\big| \le c'
\ga^{\alpha_-}+ 2\zeta
    \label{6.41}
     \end{equation}
where we have used   \eqref{appK.1.1} and the fact that for all
$\eps$ small enough, $c \{(\frac{\ell_+}{\ga^{-1/2}})^{d/4}\log
\ga^{-1}\}\eps^{1/4} <\zeta$. \qed

\vskip1cm

    \begin{lemma}
     \label{lemmaappK.1.5bis}

$\hat \rho_{\La,\eps}$ converges by subsequences and any limit point
$\hat \rho_{\La}$ is a minimizer of $f$ in $\mathcal{X}^{(k)}_\La$.

   \end{lemma}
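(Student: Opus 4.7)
The plan is to combine the uniform pointwise bound of Proposition \ref{propappK.1.3} (which gives precompactness in a finite-dimensional ball) with the average bound of Lemma \ref{lemmaappK.1.5} (which pushes the limit into the constraint set $\mathcal{X}^{(k)}_\La$) and the nonnegativity of the penalty term (which promotes minimality for $f_\eps$ on $\mathbf{W}^{(k)}_\La$ into minimality for $f$ on $\mathcal{X}^{(k)}_\La$).

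First, since $\hat\rho_{\La,\eps}$ is a $\mathcal{D}^{(\ga^{-1/2})}$-measurable function on the bounded set $\La$, it is an element of a finite-dimensional Euclidean space. By Proposition \ref{propappK.1.3}, for all $\eps$ sufficiently small (depending on $\ga$) one has $|\hat\rho_{\La,\eps}(x,s)-\rho^{(k)}_s|\le c^*\zeta$ for all $x\in\La\cap\ga^{-1/2}\mathbb Z^d$ and all $s$. Thus the family $\{\hat\rho_{\La,\eps}\}_{\eps>0}$ is confined to a compact subset of $\tilde{\mathcal M}_{\La,\ga^{-1/2}}$, so by Bolzano--Weierstrass there is a sequence $\eps_n\downarrow 0$ along which $\hat\rho_{\La,\eps_n}\to \hat\rho_\La$ pointwise, and the limit still satisfies $|\hat\rho_\La(x,s)-\rho^{(k)}_s|\le c^*\zeta$.

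Next I would verify that $\hat\rho_\La\in\mathcal{X}^{(k)}_\La$. By Lemma \ref{lemmaappK.1.5},
\[
\big|{\rm Av}(\hat\rho_{\La,\eps_n};x,s)-\rho^{(k)}_s\big|\le \zeta + c\Big(\frac{\ell_+}{\ga^{-1/2}}\Big)^{d/4}(\log\ga^{-1})\eps_n^{1/4}.
\]
Letting $n\to\infty$ (with $\ga$ fixed so the prefactor is a finite constant), pointwise convergence of $\hat\rho_{\La,\eps_n}$ implies pointwise convergence of the averages, and we obtain $|{\rm Av}(\hat\rho_\La;x,s)-\rho^{(k)}_s|\le \zeta$ for every $x$ and $s$; this is precisely the condition (\ref{appK.4}) that defines $\mathcal{X}^{(k)}_\La$.

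Finally, for minimality, fix any competitor $\rho_\La\in\mathcal{X}^{(k)}_\La$. On $\mathcal{X}^{(k)}_\La$ the penalty in $f_\eps$ vanishes identically, so $f_\eps(\rho_\La;\bar\rho_{\La^c})=f(\rho_\La;\bar\rho_{\La^c})$. Since $\mathcal{X}^{(k)}_\La\subset \mathbf{W}^{(k)}_\La$ and $\hat\rho_{\La,\eps_n}$ minimizes $f_\eps$ on $\mathbf{W}^{(k)}_\La$, and since the penalty term is nonnegative,
\[
f(\hat\rho_{\La,\eps_n};\bar\rho_{\La^c})\le f_{\eps_n}(\hat\rho_{\La,\eps_n};\bar\rho_{\La^c})\le f_{\eps_n}(\rho_\La;\bar\rho_{\La^c})=f(\rho_\La;\bar\rho_{\La^c}).
\]
The functional $f(\cdot;\bar\rho_{\La^c})$ is continuous on the compact set in which our sequence lives (the entropy term stays bounded away from the boundary because $\hat\rho_{\La,\eps_n}(x,s)$ stays close to $\rho^{(k)}_s>0$), so passing to the limit $n\to\infty$ yields $f(\hat\rho_\La;\bar\rho_{\La^c})\le f(\rho_\La;\bar\rho_{\La^c})$, proving that $\hat\rho_\La$ is a minimizer of $f$ in $\mathcal{X}^{(k)}_\La$. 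There is no essential difficulty here: all the real work has already been done in establishing the uniform a priori bounds, and the only mild subtlety is ensuring the closed constraint $\mathcal{X}^{(k)}_\La$ is preserved under the limit, which we handled by taking $\eps\to 0$ with $\ga$ fixed. \qed
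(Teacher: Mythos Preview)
Your proof is correct and follows essentially the same approach as the paper's own proof, which also argues compactness, membership of the limit in $\mathcal{X}^{(k)}_\La$, and then minimality via $f(\rho_\La)=f_\eps(\rho_\La)\ge f_\eps(\hat\rho_{\La,\eps})$. If anything, your version is more careful: the paper appeals only to the pointwise bound \eqref{a6.40} to place the limit in $\mathcal{X}^{(k)}_\La$, whereas you correctly use the average bound \eqref{appK.1.1} from Lemma~\ref{lemmaappK.1.5} (the pointwise bound gives $c^*\zeta$, not $\zeta$); and by inserting the inequality $f(\hat\rho_{\La,\eps_n})\le f_{\eps_n}(\hat\rho_{\La,\eps_n})$ you avoid having to argue that the penalty term vanishes in the limit, passing to the limit via continuity of $f$ alone.
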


\vskip.5cm

{\bf Proof.} The proof is exactly the same as that of Lemma 5.3 in
\cite{DMPV2}. Compactness implies convergence by subsequences and by
\eqref{a6.40} any limiting point is in $ \mathcal{X}^{(k)}_\La$.
Since for any $\rho_\La\in  \mathcal{X}^{(k)}_\La$
$f(\rho_\La)=f_\eps(\rho_\La)\ge f_\eps(\hat\rho_{\La,\eps})$, by
taking the limit $\eps\to 0$ along a  subsequence converging to some
$\hat\rho_\La$, we get $f(\rho_\La)\ge f(\hat\rho_\La)$, thus
$\hat\rho_\La$ is a minimizer. \qed

\vskip2cm

%%%%%%%%%%%%%%%%%%%%%%%%%%%%%%%%%%%%%%%%%%%%%%%%%%%%%%%%%%%%%%%%

\subsection{Convexity and uniqueness}

       \label{subsec:appK.2}

%%%%%%%%%%%%%%%%%%%%%%%%%%%%%%%%%%%%%%%%%%%%%%%%%%%%%%%%%%%%%%%

In this subsection we prove convexity of $f$ and $f_\eps$ and from
this we will deduce the uniqueness of their minimizers.

\vskip.5cm

\begin{thm}
 \label{thmappK.2.1}

Given any $b\ge 2$ and $\kappa \in (0,\kappa^*)$ ($\kappa^*$ as in
\eqref{2.4}), for all $\ga$ small enough the following holds. Let
$\rho_\La \in \mathcal{X}^{(k)}_\La$ and
$|\rho_\La(x,s)-\rho^{(k)}_{s}| <b\zeta$ for all $x\in
\La\cap\ga^{-1/2} \mathbb{Z}^d$, $s\in\{1,\dots S\}$, then the
matrix $B:=D^2_\La f(\rho_\La;\bar \rho_{\La^c})$, $\bar
\rho_{\La^c}\in \mathcal{X}^{(k)}_{\La^c}$, is strictly positive:
   \begin{equation}
      \label{Ie.2.16}
\big(u,B u\big) \ge \kappa (u,u),\;\;\text{for
 all $u \in \mathcal H$}
     \end{equation}
Same statements holds for $B_\eps:=D^2_\La
f_\eps(\rho_\La;\bar \rho_{\La^c})$, $\eps>0$.

\end{thm}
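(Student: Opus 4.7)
The plan is to compute $B$ explicitly, reduce the problem to the frozen Hessian $B^{(k)}$ at $\rho_\La=\rho^{(k)}$ by a first-order perturbation estimate, and then exploit the factorization of $\hat V_{\ga,t}$ in \eqref{6.8} to reduce the infinite-dimensional convexity to a pointwise application of the mean-field bound \eqref{2.4}. Twice differentiating \eqref{6.7}, the linear and quadratic chemical-potential terms drop out, the entropy term $-\beta^{-1}(1_\La,\mathcal I(\rho_\La))$ yields the diagonal piece $1/(\beta\rho_\La(x,s))$, and the quadratic interaction yields the (symmetric) operator $V^*_{\ga,t}$, so
\[
B(x,s;x',s')=\mathbf 1_{s\ne s'}\hat V_{\ga,t}(x,x')+\mathbf 1_{(x,s)=(x',s')}\,\frac{1}{\beta\rho_\La(x,s)}.
\]
Since $|\rho_\La(x,s)-\rho^{(k)}_s|<b\zeta$ with $\zeta=\ga^a$, a first-order expansion gives $1/(\beta\rho_\La(x,s))\ge 1/(\beta\rho^{(k)}_s)-c_0\zeta$, so $B\ge B^{(k)}-c_0\zeta\,I$, where $B^{(k)}$ is the same expression evaluated at $\rho_\La\equiv\rho^{(k)}$. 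Everything then reduces to proving $(u,B^{(k)}u)\ge\kappa^*(u,u)$ uniformly in $t\in[0,1]$.

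Introduce the averaged field $w_s(y):=\ga^{-d/2}\sum_x\hat J_\ga(x,y)u(x,s)$. The identity \eqref{6.8} gives the key reformulation
\[
(u,V^*_{\ga,t}u)=t\sum_y\sum_{s\ne s'}w_s(y)w_{s'}(y),
\]
so \eqref{2.4} applied pointwise to $w(y)=(w_1(y),\dots,w_S(y))\in\mathbb R^S$ yields
\[
\sum_{s\ne s'}w_s(y)w_{s'}(y)\ge\kappa^*\|w(y)\|^2-\sum_s \frac{w_s(y)^2}{\beta\rho^{(k)}_s}.
\]
On the entropy side, Cauchy--Schwarz together with $\ga^{-d/2}\sum_x\hat J_\ga(x,y)=1$ (from \eqref{6.41y} and the symmetry $\hat J_\ga(x,y)=\hat J_\ga(y,x)$) supplies the Jensen-type estimate $\sum_y w_s(y)^2\le\sum_x u(x,s)^2$. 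Writing $c_s:=\sum_x u(x,s)^2$, $d_s:=\sum_y w_s(y)^2$ and $a_s:=1/(\beta\rho^{(k)}_s)$, combining the last two displayed inequalities with the diagonal contribution to $(u,B^{(k)}u)$ gives
\[
(u,B^{(k)}u)\ge\sum_s\bigl[a_s c_s+t(\kappa^*-a_s)d_s\bigr].
\]

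The decisive observation is that $\kappa^*\le a_s$ for every $s$: test \eqref{2.4} against the basis vector $e_s\in\mathbb R^S$ to read off $\kappa^*\le L^{(k)}(s,s)=a_s$. Because $\kappa^*-a_s\le 0$, the Jensen bound $d_s\le c_s$ is in exactly the right direction to be inserted, giving
\[
a_s c_s+t(\kappa^*-a_s)d_s\ge a_s c_s+t(\kappa^*-a_s)c_s=[(1-t)a_s+t\kappa^*]c_s\ge\kappa^* c_s,
\]
uniformly in $t\in[0,1]$. Summing on $s$ yields $(u,B^{(k)}u)\ge\kappa^*\|u\|^2$, and combined with $B\ge B^{(k)}-c_0\zeta\,I$ and $\ga$ chosen small enough that $c_0\zeta<\kappa^*-\kappa$, we conclude $(u,Bu)\ge\kappa(u,u)$. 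The statement for $B_\eps$ is immediate: $f_\eps-f$ is a non-negative sum of compositions of the convex functions $x\mapsto x_+^4$ and $x\mapsto x_-^4$ with the linear maps $\rho_\La\mapsto\mathrm{Av}(\rho_\La;x,s)$, hence convex in $\rho_\La$, so its Hessian is positive semi-definite and $B_\eps\ge B\ge\kappa I$. The main conceptual obstacle is spotting the sign $\kappa^*-a_s\le 0$; without it Jensen's inequality would push in the wrong direction and one would only obtain $\min(a_{\min},t\kappa^*)$-type bounds that degenerate near $t=0$ or $t=1$. With this sign in hand the rest is purely computational.
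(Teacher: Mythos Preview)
Your proof is correct and follows essentially the same route as the paper's: factor the interaction through $\hat J_\ga$ to obtain the averaged field (the paper's $U$, your $w$), apply the mean-field Hessian bound \eqref{2.4} pointwise, and then use the Jensen/Cauchy--Schwarz contraction $\sum_y w_s(y)^2\le\sum_x u(x,s)^2$ together with the sign $1/(\beta\rho^{(k)}_s)-\kappa^*\ge 0$ to close the estimate. The only cosmetic differences are that you do the perturbation $\rho_\La\to\rho^{(k)}$ first rather than last, and you track the interpolation parameter $t$ explicitly (the paper's displayed computation silently sets $t=1$, though the argument goes through unchanged for general $t$ exactly as you show).
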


\vskip.5cm

{\bf Proof.}  The proof is analogous to that of Theorem 5.5 in
\cite{DMPV2} for completeness we sketch it.

  We will prove the theorem only in the case $\eps=0$.
Denoting by $\rho_\La^{-1}$ below the diagonal matrix with entries
$\rho_\La(x,s)^{-1}$
    \begin{equation*}
(u,B u) = (u,  V^*_{\ga,t} u)   +\frac{1}{\beta} (u,\rho_\La^{-1} u)
     \end{equation*}
Extend $u$  and $B$ as equal to 0 outside $\La$ and set
   $$
   U(x,s)=  \ga^{-d/2} \sum_{y\in  \ga^{-1/2} \mathbb Z^d} \hat J_\ga(x,y)u(y,s),
   \qquad x\in  \ga^{-1/2}\mathbb Z^d
   $$
Then, by \eqref{6.8} and using that $\hat J_\ga$ is symmetric,
            \begin{eqnarray*}
&& (u,B u) =    \sum_{s\ne s'} \sum_{x\in  \ga^{-1/2}\mathbb
Z^d}U(x,s)U(x,s') +\frac{1}{\beta} (u,\rho_\La^{-1}  u)  \nn\\&&
\hskip1.5cm =  \Big\{  \sum_{s\ne s'} \sum_{x\in  \ga^{-1/2}\mathbb
Z^d}U(x,s)U(x,s') + \sum_s\sum_{x\in  \ga^{-1/2}\mathbb Z^d}
[\frac{1}{\beta\rho^{(k)}(s)} -\kappa^*] U(x,s)^2\Big\} \nn\\&&
\hskip2cm -  \sum_s\sum_{x\in  \ga^{-1/2}\mathbb
Z^d}[\frac{1}{\beta\rho^{(k)}_s} -\kappa^*] U(x,s)^2
+\frac{1}{\beta} (u,\rho_\La^{-1}  u)
            \end{eqnarray*}
By \eqref{2.4} the curly bracket is non negative as well as
$[1/\beta\rho^{(k)}(s)-\kappa^*]$.  Since, by Cauchy
Schwartz inequality, for each $s$
      $$
\sum_{x\in  \ga^{-1/2}\mathbb Z^d}  U(x,s)^2 \le \sum_{x\in
 \ga^{-1/2}\mathbb Z^d} u(x,s)^2
       $$
then
      $$
\sum_s\sum_{x\in  \ga^{-1/2}\mathbb Z^d}[\frac{1}{\beta\rho^{(k)}_s}
-\kappa^*] U(x,s)^2 \le ( u,[\frac{1}{\beta\rho^{(k)}}-\kappa^*]u)
       $$
Thus
   \begin{equation*}
 (u,B  u) \ge   (u, [\kappa^*+ (\beta\rho_\La)^{-1} -(\beta\rho^{(k)})^{-1}]  u)
     \end{equation*}
and \eqref{Ie.2.16} is proved.

 \qed

\vskip1cm

A minimizer $\hat \rho_\La$ of $f$ is not necessary  a solution of
$D_\La f=0$. However a property analogous to the one states in
 Lemma 5.4 of \cite{DMPV2} holds.

\medskip

            \begin{lemma}
            \label{lemmaappK.2.1}
Any minimizer $\hat\rho_\La$  of $\{f(\rho_\La;\bar
\rho_{\La^c})$, $ \rho_{\La}\in \mathcal{X}_{\La}^{(k)}\}$,
is ``a critical point'' in the sense:

$\bullet$\; If for some $(x,s)$, $x\in \La$, $s=1,\dots,S$,
$|\text{\rm Av} (\hat\rho_\La;x,s)-\rho^{(k)}_{s}| <\zeta$
(strictly!), then
   \begin{equation}
      \label{I.3.2.14}
\frac{\partial f(\hat \rho_\La;\bar \rho_{\La^c})}{\partial
\rho_\La(y,s')} (y,s')  =0, \quad \text{for all $y\in
C^{(\ell_-)}_{z_x}: s'=s$}
%s\in \{1,..,S\}
     \end{equation}

$\bullet$\; If instead  $ \text{\rm Av}
(\hat\rho_\La;x,s)=\rho^{(k)}_{s}\pm \zeta$, then  for all $\xi$
with positive average, i.e. $\text{\rm Av} (\xi;x,s)>0$
   \begin{equation}
      \label{I.3.2.15}
\sum_{y\in  C^{(\ell_-)}_{z_x}\cap  \ga^{-1/2}\mathbb Z^d}\xi(y)
\frac{\partial f(\hat \rho_\La;\bar \rho_{\La^c})}{\partial
\rho_\La(y,s)}  (y,s) \le 0,\;\;\text{respectively $\ge 0$}
     \end{equation}
     while for all  $\xi$ with null average, i.e. $\text{\rm Av}
(\xi;x,s)=0$
        \begin{equation}
      \label{I.3.2.16}
\sum_{y\in  C^{(\ell_-)}_{z_x}\cap \ga^{-1/2}\mathbb Z^d} \xi(y)
\frac{\partial f(\hat \rho_\La;\bar \rho_{\La^c})}{\partial
\rho_\La(y,s)} (y,s)= 0
     \end{equation}

\end{lemma}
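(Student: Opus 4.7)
The statement is exactly the first-order KKT characterization for the constrained minimization of $f(\cdot; \bar\rho_{\La^c})$ over $\mathcal{X}^{(k)}_\La$, whose defining inequalities are
$$|\text{Av}(\rho_\La; x, s) - \rho^{(k)}_s| \le \zeta, \qquad (z_x, s) \in (\ell_- \mathbb{Z}^d \cap \La) \times \{1,\dots,S\}.$$
The plan is a localized perturbation argument: vary $\hat\rho_\La$ only at the sites in a single cube $C^{(\ell_-)}_{z_x}$ and in a single spin component $s$, so that exactly one of these constraints is affected, and then read off the sign conditions on the corresponding partial derivatives.

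Concretely, fix $(x, s)$ and a test function $\xi : C^{(\ell_-)}_{z_x} \cap \ga^{-1/2}\mathbb{Z}^d \to \mathbb{R}$, and set
$$\rho_\La^{(\eps)}(y', s') = \hat\rho_\La(y', s') + \eps\, \xi(y')\, \mathbf{1}_{\{s' = s,\; y' \in C^{(\ell_-)}_{z_x}\}}.$$
Proposition \ref{prop6.1} gives $\hat\rho_\La(y, s) \ge \rho^{(k)}_s - c^*\zeta > 0$ for $\zeta$ small, so $\rho_\La^{(\eps)}$ stays nonnegative for all $|\eps|$ small, while $\text{Av}(\rho_\La^{(\eps)}; x', s') = \text{Av}(\hat\rho_\La; x', s')$ whenever $(z_{x'}, s') \ne (z_x, s)$, so every other constraint defining $\mathcal{X}^{(k)}_\La$ is automatically preserved. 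The only relevant remaining constraint is $|\text{Av}(\hat\rho_\La; x, s) + \eps\,\text{Av}(\xi; x, s) - \rho^{(k)}_s| \le \zeta$, which determines the admissible signs of $\eps$.

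If $|\text{Av}(\hat\rho_\La; x, s) - \rho^{(k)}_s| < \zeta$ the constraint is inactive, so both signs of small $\eps$ are admissible; minimality of $\hat\rho_\La$ yields $\sum_{y} \xi(y)\, \partial f(\hat\rho_\La;\bar\rho_{\La^c})/\partial \rho_\La(y, s) = 0$, and since $\xi$ is arbitrary each partial derivative must vanish, proving \eqref{I.3.2.14}. If instead $\text{Av}(\hat\rho_\La; x, s) = \rho^{(k)}_s + \zeta$ and $\text{Av}(\xi; x, s) > 0$, only $\eps \le 0$ is admissible; dividing $f(\rho_\La^{(\eps)}) \ge f(\hat\rho_\La)$ by $\eps < 0$ and letting $\eps \to 0^-$ gives $\sum_y \xi(y)\, \partial f/\partial \rho_\La(y, s) \le 0$, which is the upper-boundary half of \eqref{I.3.2.15}; the lower boundary $\rho^{(k)}_s - \zeta$ is symmetric and yields the ``respectively $\ge 0$'' case. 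Finally, when $\text{Av}(\xi; x, s) = 0$ the active constraint is preserved to first order in $\eps$ on both sides, so bilateral variations are admissible and \eqref{I.3.2.16} follows. This is a routine Lagrange-multiplier computation; the only model-specific ingredient is the uniform positivity bound of Proposition \ref{prop6.1}, which prevents the nonnegativity of $\rho_\La$ from ever becoming an additional active constraint, and no step presents a genuine obstacle.
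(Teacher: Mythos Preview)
Your argument is correct: this is exactly the standard first-order (KKT) necessary condition for a minimizer on the box-constrained set $\mathcal{X}^{(k)}_\La$, carried out by localized perturbations in a single cube and spin component, with Proposition~\ref{prop6.1} (or directly the entropy term in $f$) guaranteeing that the positivity constraint $\rho_\La\ge 0$ is never active. The paper itself does not write out a proof of Lemma~\ref{lemmaappK.2.1}; it only remarks that the statement is analogous to Lemma~5.4 of the companion paper \cite{DMPV2}, so your argument supplies precisely the omitted routine details and is in line with what the authors intend.
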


\vskip1cm

The following holds.

 \vskip.5cm

\begin{thm}
 \label{thmappK.2.2}
Given any $b\ge 2$ and $\kappa \in (0,\kappa^*)$ ($\kappa^*$ as in
\eqref{2.4})  for all $\ga$ and $\eps$ small enough, as well as for
$\eps=0$ the following holds. Let $\hat\rho_{\La,\eps}$ be either a
minimizer of $\{f_\eps(\rho_\La, \bar \rho_{\La^c})$, $\rho_\La\in
\mathbf{W}^{(k)}_\La\}$ or, if $\eps=0$, a ``critical point'' (in
the sense of Lemma \ref{lemmaappK.2.1}) satisfying  the inequality
$|\hat\rho_\La(x,s)-\rho^{(k)}_{s}|\le b\zeta$ for all $(x,s)$,
$x\in \La\cap  \ga^{-1/2}\mathbb Z^d$
  $s\in\{1,\dots, S\}$.
Then for any $\rho_\La$ such that $|\rho_\La(x,s)-\rho^{(k)}_{s}|
\le b\zeta$ for all $(x,s)$ as above,
   \begin{equation}
      \label{I.3.2.17}
f_\eps(\rho_\La; \bar \rho_{\La^c})\ge
 f_\eps(\hat\rho_\La;
\bar \rho_{\La^c})
 + \frac{\kappa}{2}\; \big(\rho_\La-\hat\rho_{\La,\eps},
 \rho_\La-\hat\rho_{\La,\eps}\big),\quad \text{both for
 }\eps>0\text{ and  for }\eps=0
     \end{equation}

\end{thm}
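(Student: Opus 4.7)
I would prove \eqref{I.3.2.17} by a second-order Taylor expansion of $f_\eps$ along the segment from $\hat\rho_{\La,\eps}$ to $\rho_\La$. Writing $h := \rho_\La - \hat\rho_{\La,\eps}$ and using that $f_\eps$ is smooth, one has the exact identity
\begin{equation*}
f_\eps(\rho_\La;\bar\rho_{\La^c}) - f_\eps(\hat\rho_{\La,\eps};\bar\rho_{\La^c}) = (D_\La f_\eps(\hat\rho_{\La,\eps}),h) + \int_0^1 (1-t)\,(h, D^2_\La f_\eps(\hat\rho_{\La,\eps}+th)\,h)\,dt.
\end{equation*}
The first term carries the Euler--Lagrange information about $\hat\rho_{\La,\eps}$, while the second is controlled by the strict convexity of Theorem \ref{thmappK.2.1}.

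\textbf{Quadratic term.} By hypothesis $|\rho_\La(x,s)-\rho^{(k)}_s|\le b\zeta$, and Proposition \ref{propappK.1.3} (when $\eps>0$) or its analogue for $\eps=0$ gives $|\hat\rho_{\La,\eps}(x,s)-\rho^{(k)}_s|\le b\zeta$, so the whole segment $\hat\rho_{\La,\eps}+th$, $t\in[0,1]$, lies in the pointwise $b\zeta$-neighborhood of $\rho^{(k)}$. Theorem \ref{thmappK.2.1} therefore yields $(h,D^2_\La f_\eps(\hat\rho_{\La,\eps}+th)h)\ge \kappa(h,h)$ uniformly in $t\in[0,1]$, and the elementary identity $\int_0^1(1-t)\,dt = 1/2$ produces exactly the factor $\kappa/2$ in \eqref{I.3.2.17}.

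\textbf{Linear term.} For $\eps>0$, Lemma \ref{lemmaappK.1.5} ensures that $\hat\rho_{\La,\eps}$ is a genuine critical point of $f_\eps$ in the interior of $\mathbf W^{(k)}_\La$, so $D_\La f_\eps(\hat\rho_{\La,\eps})=0$ and the linear term vanishes. For $\eps=0$ the point $\hat\rho_\La$ is only critical in the constrained sense of Lemma \ref{lemmaappK.2.1}, which must be combined with the admissibility $\rho_\La\in \mathcal X^{(k)}_\La$ to yield non-negativity of $(D_\La f(\hat\rho_\La),h)$. I would argue cube by cube in $\mathcal D^{(\ell_-)}$: on each cube $C^{(\ell_-)}_{z_x}$ decompose $h$ into its $\ell_-$-average and its mean-zero complement. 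Equation \eqref{I.3.2.16} annihilates the mean-zero contribution on every cube. On cubes where the inner constraint is inactive, \eqref{I.3.2.14} annihilates the remaining constant contribution as well. On a cube where the upper constraint $\mathrm{Av}(\hat\rho_\La;x,s)=\rho^{(k)}_s+\zeta$ is active, $\rho_\La\in \mathcal X^{(k)}_\La$ forces $\mathrm{Av}(h;x,s)\le 0$, and \eqref{I.3.2.15} gives the matching sign so the contribution is $\ge 0$; the lower-constraint case is symmetric.

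\textbf{Main obstacle.} The routine ingredients are the Taylor remainder estimate and the uniform Hessian bound from Theorem \ref{thmappK.2.1}; the only genuinely delicate step is the $\eps=0$ case of the linear term, where one must verify that the one-sided information in Lemma \ref{lemmaappK.2.1} is compatible, cube by cube, with the direction in which $\rho_\La$ can differ from $\hat\rho_\La$ while staying in $\mathcal X^{(k)}_\La$. The key observation is that whenever the average constraint is active at $\hat\rho_\La$, admissibility of $\rho_\La$ automatically aligns the sign of $\mathrm{Av}(h;x,s)$ with the inward-pointing half-space singled out by Lemma \ref{lemmaappK.2.1}, which is precisely what is needed for the first-order term to drop out with the correct sign.
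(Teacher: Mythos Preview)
Your proposal is correct and follows essentially the same route as the paper: interpolate along the segment, bound the quadratic remainder uniformly via Theorem~\ref{thmappK.2.1} (your $\int_0^1(1-t)\,dt$ is exactly the paper's $\int_0^1\int_0^\theta d\theta'\,d\theta$), and handle the first-order term using Lemma~\ref{lemmaappK.1.5} for $\eps>0$ and the one-sided conditions of Lemma~\ref{lemmaappK.2.1} for $\eps=0$. Your cube-by-cube sign analysis in the $\eps=0$ case, invoking the admissibility $\rho_\La\in\mathcal X^{(k)}_\La$ so that $\mathrm{Av}(h;x,s)$ has the correct sign on active cubes, is precisely what the paper's one-line appeal to Lemma~\ref{lemmaappK.2.1} relies on (even though that hypothesis is not explicitly written into the theorem's statement).
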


{\bf Proof.}  The proofs is the same as the one of Theorem 5.6 in
\cite{DMPV2}. Given any $\rho_\La$ as in the statement of the
Theorem, we interpolate by setting $\rho_{\La,\eps}(\theta)=
\theta\rho_\La+ (1-\theta) \hat \rho_{\La,\eps}$, $ \theta\in [0,1]
$, then with $ f_\eps(\theta):= f_\eps(\rho_{\La,\eps}(\theta)
;\bar\rho_{\La^c}) $
  \begin{eqnarray*}
     \label{4.16}
f_\eps(1) -  f_\eps(0)
&=& \int_0^1 \big(D_\La f_\eps(\theta),\rho_\La- \hat\rho_{\La,\eps}\big) \, d\theta\\
&=& \int_0^1  \int_0^\theta \big(D^2_\La f_\eps(\theta')[\rho_\La-
\hat\rho_{\La,\eps}],\rho_\La- \hat\rho_{\La,\eps}\big) \,d\theta'
\,d\theta + \big(D_\La f_\eps(0),\rho_\La- \hat\rho_{\La,\eps}\big)
    \end{eqnarray*}
 Since $\hat\rho_{\La,\eps}$ is a minimizer, then
 $\big(D_\La f_\eps(0),\rho_\La - \hat\rho_{\La,\eps}\big)\ge 0$. This
is immediate in the case $\eps >0$, while it follows from Lemma
\ref{lemmaappK.2.1} applied to $\xi = \rho_\La - \hat\rho_\La$ in
the case $\eps=0$.  Hence \eqref{I.3.2.17} follows from Theorem
\ref{thmappK.2.1}. \qed

\vskip.5cm

As shown in \cite{DMPV2}, an immediate consequence of
Theorem \ref{thmappK.2.2} is:

\vskip.5cm

 \begin{coro}
 \label{coroappK.2.1}
For any $\ga$ and $\eps>0$ small enough the minimizer of $f_\eps$ is
unique, same holds at $\eps=0$ for $f$. For $\eps>0$ (and small
enough) there is a unique critical point in the space $\{|\rho_{\La}
-\rho^{(k)} \big| \le 2\zeta\}$; such a critical point minimizes
$f_\eps$. Analogously, when $\eps=0$ there is a unique critical
point in the sense of Lemma \ref{lemmaappK.2.1}. Such a critical
point minimizes $f$. The minimizer of $f_\eps$, $\eps>0$, converges
as $\eps\to 0$ to the minimizer of $f$.

 \end{coro}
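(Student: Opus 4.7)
The corollary is the payoff from strict convexity, so my plan is to feed Theorem \ref{thmappK.2.2} into the standard ``quadratic lower bound forces uniqueness'' template and then pair it with Proposition \ref{propappK.1.3} (respectively Proposition \ref{prop6.1}) to know that all relevant objects actually live in the region where convexity holds. More precisely, I would fix $b$ large enough that $c^* \le b$ and apply Theorem \ref{thmappK.2.2} with this choice, so that both minimizers and the pointwise-near-$\rho^{(k)}$ critical points are in the convexity window.

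For uniqueness of minimizers of $f_\eps$ (and of $f$), let $\hat\rho_1, \hat\rho_2$ be two minimizers in $\mathbf{W}^{(k)}_\La$. By Proposition \ref{propappK.1.3} (resp. Proposition \ref{prop6.1}(ii)) both satisfy $|\hat\rho_i(x,s) - \rho^{(k)}_s| \le c^*\zeta$ pointwise, so I can apply \eqref{I.3.2.17} in both directions: taking $\hat\rho_{\La,\eps}=\hat\rho_1$ and $\rho_\La=\hat\rho_2$ gives $f_\eps(\hat\rho_2) \ge f_\eps(\hat\rho_1) + \tfrac{\kappa}{2}\|\hat\rho_2 - \hat\rho_1\|^2$, and swapping roles gives the reverse. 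Since $f_\eps(\hat\rho_1)=f_\eps(\hat\rho_2)$, adding forces $\|\hat\rho_1-\hat\rho_2\|=0$. Exactly the same argument, using Lemma \ref{lemmaappK.2.1} at the step where Theorem \ref{thmappK.2.2} invokes the first-order minimizer inequality, handles the $\eps=0$ case.

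For uniqueness of critical points in $\{|\rho_\La - \rho^{(k)}| \le 2\zeta\}$, I would re-examine the interpolation proof of Theorem \ref{thmappK.2.2}. The only place minimality was used there was to conclude $(D_\La f_\eps(0), \rho_\La - \hat\rho_{\La,\eps}) \ge 0$. For a critical point (either $\eps>0$ and $D_\La f_\eps=0$, or $\eps=0$ and the critical-point conditions \eqref{I.3.2.14}--\eqref{I.3.2.16} of Lemma \ref{lemmaappK.2.1}) that inner product vanishes outright, so the same computation yields
\begin{equation*}
f_\eps(\rho_\La) \ge f_\eps(\rho^{\rm cp}) + \frac{\kappa}{2}\|\rho_\La - \rho^{\rm cp}\|^2
\end{equation*}
for every $\rho_\La$ in the convexity window. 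Applying this inequality to a pair of critical points $\rho_1,\rho_2$ in that window and adding the two symmetric inequalities gives $\kappa \|\rho_1-\rho_2\|^2 \le 0$, hence uniqueness. The same inequality applied with $\rho_\La$ equal to the (pointwise-close-to-$\rho^{(k)}$) minimizer shows the critical point beats or ties the minimum, so the critical point \emph{is} the minimizer.

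The convergence as $\eps\to 0$ is then immediate: Lemma \ref{lemmaappK.1.5bis} says every weak/subsequential limit of $\hat\rho_{\La,\eps}$ is a minimizer of $f$ in $\mathcal X^{(k)}_\La$, and we have just proved that such a minimizer is unique, so the full family converges. I do not expect any genuine obstacle here; the only subtlety is bookkeeping in the choice of the constant $b$ (it must dominate $c^*$ from Proposition \ref{propappK.1.3} and also be at least $2$ for the stated critical-point region) and checking that the $\eps=0$ version of the first-order inequality in the proof of Theorem \ref{thmappK.2.2} is supplied by Lemma \ref{lemmaappK.2.1}, which it is.
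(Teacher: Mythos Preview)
Your proposal is correct and matches the paper's approach: the paper does not give a proof here, simply noting that the corollary is ``an immediate consequence of Theorem \ref{thmappK.2.2}'' as shown in \cite{DMPV2}, and your argument is exactly the standard strict-convexity-implies-uniqueness deduction the paper has in mind. The symmetric application of \eqref{I.3.2.17}, the observation that critical points make the first-order term in the interpolation proof vanish (so the quadratic lower bound still holds), and the appeal to Lemma \ref{lemmaappK.1.5bis} plus uniqueness for the $\eps\to 0$ convergence are precisely the expected steps.
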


\vskip2cm

%%%%%%%%%%%%%%%%%%%%%%%%%%%%%%%%%%%%%%%%%%%%%%%%%%%%%%%%%%%%%%%%

\subsection{Exponential decay}

       \label{subsec:appK.3}

%%%%%%%%%%%%%%%%%%%%%%%%%%%%%%%%%%%%%%%%%%%%%%%%%%%%%%%%%%%%%%%
\vskip.5cm

\begin{prop}
 \label{propappK.3.1}
For all $\eps>0$ small enough, the minimizer %$ \rho^{\rm pf}_\La$
of $\{f_\eps(\rho_\La; \rho^{(k)}\mathbf{1}_{\La^c}), \rho_\La\in
\mathbf{W}^{(k)}_\La\}$ is $\rho^{(k)}\mathbf{1}_{\La}$ which is
also the minimizer of $\{f(\rho_\La; \rho^{(k)}\mathbf{1}_{\La^c}),
\rho_\La\in \mathcal{X}^{(k)}_\La\}$.

\end{prop}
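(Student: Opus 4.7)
The plan is to check directly that $\hat\rho_\La:=\rho^{(k)}\mathbf 1_\La$ is a critical point of $f(\cdot;\rho^{(k)}\mathbf 1_{\La^c})$, then argue it is also critical for $f_\eps$ (because the soft constraint costs and their first derivatives both vanish there), and finally conclude uniqueness and minimality from Corollary \ref{coroappK.2.1}.

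First I would compute the Euler--Lagrange derivative of $f$ at $\hat\rho_\La$. Using \eqref{6.7}--\eqref{6.9} one has
\begin{equation*}
\frac{\partial f(\hat\rho_\La;\rho^{(k)}\mathbf 1_{\La^c})}{\partial\rho_\La(x,s)} \;=\; \sum_{s'\ne s}\sum_{y\in\ga^{-1/2}\mathbb Z^d}\hat V_{\ga,t}(x,y)\,\rho^{(k)}_{s'} + (1-t)r^{(k)}_s+\frac 1\beta\log\rho^{(k)}_s-\la_\beta.
\end{equation*}
Since $\hat V_{\ga,t}(x,\cdot)=t\,\hat J_\ga\star\hat J_\ga(x,\cdot)$ and $\hat J_\ga\star\mathbf 1=1$ by \eqref{6.41y}, the first sum equals $t\,r^{(k)}_s$; adding the $(1-t)r^{(k)}_s$ term collapses the $t$-dependence. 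The mean--field equation \eqref{2.3}, namely $\log\rho^{(k)}_s=-\beta(r^{(k)}_s-\la_\beta)$, then makes the expression vanish identically. Hence $\hat\rho_\La$ is a critical point of $f$ in the sense of Lemma \ref{lemmaappK.2.1}, with the strict inequality $|{\rm Av}(\hat\rho_\La;x,s)-\rho^{(k)}_s|=0<\zeta$, so the condition \eqref{I.3.2.14} applies.

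Next I would turn to $f_\eps$. Since ${\rm Av}(\hat\rho_\La;x,s)=\rho^{(k)}_s$ exactly, both penalty expressions $\{({\rm Av}(\hat\rho_\La;x,s)-[\rho^{(k)}_s+\zeta])_+\}^4$ and $\{({\rm Av}(\hat\rho_\La;x,s)-[\rho^{(k)}_s-\zeta])_-\}^4$ are zero at $\hat\rho_\La$, and moreover their $\rho_\La(y,s)$-derivatives (which involve the cube of those one--sided parts) vanish too. Consequently $D_\La f_\eps(\hat\rho_\La;\rho^{(k)}\mathbf 1_{\La^c})=D_\La f(\hat\rho_\La;\rho^{(k)}\mathbf 1_{\La^c})=0$, i.e.\ $\hat\rho_\La$ is an honest critical point of $f_\eps$, and it lies well inside $\mathbf W^{(k)}_\La$ since $\|\hat\rho_\La-\rho^{(k)}\|_\infty=0<b$.

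Finally, Corollary \ref{coroappK.2.1} asserts that for $\eps>0$ small enough the critical point of $f_\eps$ in the neighborhood $\{|\rho_\La-\rho^{(k)}|\le 2\zeta\}$ is unique and is the minimizer of $f_\eps$ on $\mathbf W^{(k)}_\La$; therefore this minimizer must be $\hat\rho_\La$. Letting $\eps\downarrow 0$, the last statement of Corollary \ref{coroappK.2.1} gives convergence of the $f_\eps$--minimizer to the $f$--minimizer on $\mathcal X^{(k)}_\La$; since the sequence is constantly equal to $\hat\rho_\La$, the limit is $\hat\rho_\La$, which is thus the unique minimizer of $f(\cdot;\rho^{(k)}\mathbf 1_{\La^c})$ on $\mathcal X^{(k)}_\La$. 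No genuine obstacle appears here; the only care required is to verify that the penalty derivatives vanish at $\hat\rho_\La$ so that the critical--point characterization transfers from $f$ to $f_\eps$ and the uniqueness corollary can be applied without any $\eps$--dependent correction.
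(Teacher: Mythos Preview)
Your proof is correct and follows essentially the same route as the paper: verify that $\rho^{(k)}\mathbf 1_\La$ satisfies the critical-point equation for $f$ (via $\hat J_\ga\star\mathbf 1=1$ and the mean-field relation \eqref{2.3}), observe that the penalty terms and their gradients vanish there so it is also critical for $f_\eps$, and then invoke Corollary \ref{coroappK.2.1}. The only minor redundancy is your final $\eps\downarrow 0$ argument: since you already showed $\hat\rho_\La$ is a critical point of $f$ in the sense of Lemma \ref{lemmaappK.2.1}, the $\eps=0$ clause of Corollary \ref{coroappK.2.1} gives directly that it is the unique minimizer of $f$ on $\mathcal X^{(k)}_\La$, without passing to the limit.
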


\vskip.5cm

{\bf Proof.}  Case $\eps=0$. If $D_\La
f(\rho_\La;\rho^{(k)}\mathbf{1}_{\La^c})=0$,  then for all $x \in
\La\cap\ga^{-1/2}\mathbb{Z}^d$ and $s\in\{1,\dots S\}$
   \begin{equation*}
  \rho_\La(x,s) = \exp\Big\{-\beta[ \sum_{s'\ne s}\sum_{y\in
  \ga^{-1/2} \mathbb{Z}^d}\hat V_{\ga,t}(x,y) \rho(y,s')
  -\la_\beta-(1-t)r^{(k)}_{s}]\Big\}
     \end{equation*}
where $\rho(x,s)$ on the r.h.s.\ is equal to $\rho_\La(x,s)$ if
$x\in \La\cap \ga^{-1/2} \mathbb{Z}^d$ and to $\rho^{(k)}_s$ if
$x\in \La^c\cap \ga^{-1/2} \mathbb{Z}^d$. The function
$\rho_\La(x,s)=\rho^{(k)}_{s}$ solves the above equation and by
Corollary \ref{coroappK.2.1} it is unique and the only minimizer of
$\{f(\rho_\La; \rho^{(k)}\mathbf{1}_{\La^c}), \rho_\La\in
\mathcal{X}^{(k)}_\La\}$.

 Case $\eps>0$ and small. Being a critical point of $f_\eps$ as well,
$\rho^{(k)}\mathbf{1}_{\La}$ is by Corollary \ref{coroappK.2.1} also
a minimizer of $f_\eps$.
 \qed

\vskip1cm

\begin{thm}
 \label{thmappK.3.1}
There are $\om>0$ and $c>0$ such that for all $\ga$ small enough the
minimizer $\hat {\rho}_\La$ of $\{f(\rho_\La;\bar{\rho}_{\La^c}),
\rho_\La \in \mathcal{X}^{(k)}_{\La}\}$ satisfies
   \begin{equation}
      \label{appK.3.1}
|\hat {\rho}_\La(x,s)- \rho^{(k)}_{s}| \le c \ga^{-d/2}e^{-\om \ga
\text{\rm dist}(x, \La^c)},\qquad x\in \La\cap
\ga^{-1/2}\mathbb{Z}^d, \quad s\in\{1,\dots, S\}
     \end{equation}

\end{thm}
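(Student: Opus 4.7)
The plan is to linearize the critical-point equation around the uniform profile $\rho^{(k)}\mathbf{1}_\La$, which by Proposition \ref{propappK.3.1} is itself the minimizer when the boundary data are the perfect ones $\rho^{(k)}\mathbf{1}_{\La^c}$. To avoid the complications of the generalized critical-point condition of Lemma \ref{lemmaappK.2.1}, I would work first with the penalized functional $f_\eps$ whose minimizer $\hat\rho_{\La,\eps}$ is a genuine critical point by Lemma \ref{lemmaappK.1.5}, and finally pass to the limit $\eps\to 0$ using Corollary \ref{coroappK.2.1}. Setting $u_\La := \hat\rho_{\La,\eps} - \rho^{(k)}\mathbf{1}_\La$ and $\bar u_{\La^c} := \bar\rho_{\La^c} - \rho^{(k)}\mathbf{1}_{\La^c}$, subtracting the Euler--Lagrange equations at $\bar\rho_{\La^c}$ and at $\rho^{(k)}\mathbf{1}_{\La^c}$ (where, by \eqref{2.3}, $\log\rho^{(k)}_s=-\beta(r^{(k)}_s-\la_\beta)$) and Taylor-expanding $\log(1+u/\rho^{(k)})$ yields
\begin{equation}
\label{plan:linearized}
(\mathcal L u_\La)(x,s) \;=\; -g(x,s) + R_s(u_\La)(x),
\end{equation}
where
\begin{equation*}
(\mathcal L u)(x,s) := \frac{u(x,s)}{\beta\rho^{(k)}_s} + \sum_{s'\neq s}(\hat V_\ga\star u_\La)(x,s'),\qquad g(x,s) := \sum_{s'\neq s}(\hat V_\ga\star\bar u_{\La^c})(x,s'),
\end{equation*}
and $|R_s(u_\La)|$ is quadratically small in $u_\La$ because Proposition \ref{prop6.1} gives $|u_\La|\le c^*\zeta$. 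The source $g$ is supported inside the $2\ga^{-1}$-collar of $\La^c$ in $\La$, since $\hat V_\ga$ has range $2\ga^{-1}$.

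The next step is to show that $\mathcal L$ is strictly positive on $\La$-supported functions, that is, $(u,\mathcal L u)\ge \tilde\kappa(u,u)$ for any $\tilde\kappa<\kappa^*$ and all $\ga$ small enough. This is a direct adaptation of the argument proving Theorem \ref{thmappK.2.1}: one uses $\hat J_\ga\star\mathbf 1=1$ together with the mean-field coercivity \eqref{2.4}. Exponential decay is then extracted by an Agmon-type weighted estimate. Let $w(x):=e^{\om\ga\,\mathrm{dist}(x,\La^c)}$ with $\om>0$ to be chosen small. Since $\mathrm{dist}(\cdot,\La^c)$ is $1$-Lipschitz and $\hat V_\ga(x,y)$ is supported on $|x-y|\le 2\ga^{-1}$, the conjugated kernel $w(x)\hat V_\ga(x,y)w(y)^{-1}$ differs from $\hat V_\ga(x,y)$ by a multiplicative factor in $[e^{-2\om},e^{2\om}]$. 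Consequently $\mathcal L_w := w\mathcal L w^{-1}$ still satisfies $(v,\mathcal L_w v)\ge (\tilde\kappa-C\om)(v,v)$ and remains uniformly coercive provided $\om$ is chosen small enough.

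Applying $\mathcal L_w^{-1}$ to the weighted form of \eqref{plan:linearized} for $v := w u_\La$ produces the $L^2$-bound $\|v\|_2 \le C\|wg\|_2$. Since $g$ is supported in a $2\ga^{-1}$-collar of $\La^c$ where $w\le e^{O(\om)}$, and since $|g|\le c\zeta$, one concludes $\|v\|_2 \le C\zeta\ga^{-d/2}$; the power of $\ga$ simply counts the $\ga^{-1/2}$-lattice points in a normal direction across the collar (width $\ga^{-1}$ divided by lattice spacing $\ga^{-1/2}$). Converting this $L^2$ bound on $v$ back to a pointwise bound on $u_\La(x,s)=w(x)^{-1}v(x,s)$ via Cauchy--Schwarz on a single $\ga^{-1/2}$-cube yields \eqref{appK.3.1}; the quadratic remainder $R_s(u_\La)$ is absorbed into the linear part by a standard bootstrap because $c^*\zeta\ll\tilde\kappa-C\om$. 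Finally, the convergence $\hat\rho_{\La,\eps}\to\hat\rho_\La$ from Corollary \ref{coroappK.2.1} carries the estimate to $\eps=0$.

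The main technical difficulty is the Agmon weighted coercivity: controlling the commutator of $\hat V_\ga$ with the exponential weight $w$ uniformly in $\ga$ and $\La$, so that $\mathcal L_w$ remains positive definite with a constant independent of both. Once that is in hand, the remaining steps (linearization of the Euler--Lagrange equation, $L^2$-to-pointwise conversion via Cauchy--Schwarz, absorption of the nonlinear remainder, and the $\eps\to 0$ limit) are standard and parallel the analogous treatment for the LMP model in \cite{leipzig}.
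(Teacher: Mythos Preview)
Your Agmon-weight strategy is a sensible alternative in spirit, but the sketch has two concrete gaps.

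First, the penalty term is missing. The Euler--Lagrange equation for $f_\eps$ contains $\eps^{-1}\{({\rm Av}(\hat\rho_{\La,\eps};x,s)-[\rho^{(k)}_s\pm\zeta])_\pm\}^3$, see \eqref{6.40}, and this is absent from your linearized equation. It is not ``quadratically small in $u_\La$'' uniformly in $\eps$: by Lemma \ref{lemmaappK.1.5} one only has $|({\rm Av}(u_\La)\mp\zeta)_\pm|=O(\eps^{1/4})$, so after cubing and dividing by $\eps$ the term is $O(\eps^{-1/4})$. It can in fact be handled (its pairing with $u_\La$ has the right sign, up to a harmless error coming from the variation of $w$ across an $\ell_-$-cube), but you do not address it, and it cannot simply be absorbed into the Taylor remainder of the logarithm.

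Second, and more seriously, the $L^2$-to-pointwise step loses uniformity in $\La$. From coercivity of $\mathcal L_w$ you obtain $\|wu_\La\|_2\le C\|wg\|_2$ and hence $|u_\La(x)|\le \|wg\|_2\,e^{-\om\ga\,{\rm dist}(x,\La^c)}$. But $\|wg\|_2^2$ sums $O(\zeta^2)$ over every $\ga^{-1/2}$-lattice point in the $2\ga^{-1}$-collar of $\partial\La$; this count is proportional to the surface area of $\La$, not to a fixed power of $\ga$. Your sentence about counting lattice points ``in a normal direction across the collar'' ignores the tangential directions. Since the theorem asserts constants $c,\om$ independent of $\La$, this is a genuine obstacle. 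A correct Agmon argument would use, for each target point $x_0$, the decaying weight $e^{-\om\ga|y-x_0|}$ (equivalently, first establish exponential decay of the kernel of $\mathcal L^{-1}$); the exponential factor then makes the collar sum $\La$-uniform.

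For comparison, the paper avoids both issues by a different mechanism: it interpolates the boundary data via $\bar\rho^{(\theta)}_{\La^c}=\theta\bar\rho_{\La^c}+(1-\theta)\rho^{(k)}\mathbf 1_{\La^c}$ and differentiates the minimizer in $\theta$, obtaining an \emph{exactly linear} equation $Bu=v$ in which the penalty enters $B$ as the nonnegative diagonal piece $\phi_{\eps,\theta}$ of \eqref{appK.3.7} rather than as a remainder. It then proves the weighted $\ell^1$ bound \eqref{6.58} on the matrix elements of $B^{-1}$ by writing $B=A+R$ with $A$ the $\ell_-$-coarse-grained operator \eqref{appK.3.8}, expanding in a Neumann series, and reducing the analysis of $A^{-1}$ to the $\ell_-$-lattice problem already treated in \cite{DMPV2}. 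That pointwise kernel decay, applied to the boundary source \eqref{6.55}, gives \eqref{appK.3.1} directly with constants independent of $\La$.
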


\vskip.5cm

{\bf Proof.}  By Proposition \ref{propappK.3.1},
$\rho^{(k)}\mathbf{1}_{\La}$ is
 the minimizer  of $f_\eps(\cdot; \rho^{(k)}\mathbf{1}_{\La^c})$,
(both for $\eps>0$ and for $\eps=0$)
 thus the difference $\hat {\rho}_{\La,\eps}(x,s)- \rho^{(k)}_{s}$
$x\in\La\cap \ga^{-1/2}\mathbb{Z}^d$, can be seen as the difference
of two minimizers relative to two different boundary conditions and
we can then proceed as in the proof of Theorem 5.9 of \cite{DMPV2}.
However the proof is different: the complication comes from the fact
that the constraint is here on the averages and not on the
elementary variables $\rho_\La(x,s)$ as it was in \cite{DMPV2}. Thus
the strategy of the proof of \eqref{appK.3.1} is to reduce to the
case already treated in \cite{DMPV2}.

Define for $\theta\in [0,1]$,
    \begin{equation}
     \label{appK.3.2}
\bar \rho_{\La^c}^{(\theta)}:= \theta \bar \rho_{\La^c} + (1-\theta)
\rho^{(k)}\mathbf{1}_{\La^c}
     \end{equation}
and call $\hat \rho_{\La,\eps}(x,s;\theta)$, $x\in \La\cap
\ga^{-1/2}\mathbb{Z}^d$, the minimizer of $f_\eps(\rho_\La;\bar
\rho_{\La^c}^{(\theta)})$ (both for $\eps>0$ and for $\eps=0$).  The
same argument used in the proof of Theorem 5.9 of \cite{DMPV2} shows
that $\hat \rho_{\La,\eps}(\cdot;\theta)$ is differentiable in
$\theta$ such that
   \begin{equation}
      \label{appK.3.3}
|{\hat \rho}_\La(x,s)-\rho^{(k)}_{s}| \le \lim_{\eps\to 0} \int_0^1
\Big|\frac{d\hat \rho_{\La,\eps}(x,s;\theta)}{d\theta}\Big| d\theta
     \end{equation}
We now estimate $\dis{|\frac{d\hat
\rho_{\La,\eps}(x,s;\theta)}{d\theta}|}$ uniformly in $\eps$ and
$\theta$ and this  will give \eqref{appK.3.1}.

Shorthand
   \begin{equation}
      \label{appK.3.4}
u:=\frac{d\hat \rho_{\La,\eps}(\cdot;\theta)}{d\theta}\hat ,\;\;\;\;
 v=-  \frac {d}{d\theta'}
D_\La f_{\eps}(\hat \rho_{\La,\eps}(\cdot;\theta); \bar
\rho_{\La^c}^{(\theta')}) \Big|_{\theta'=\theta}
     \end{equation}
we have that
    \begin{equation}
      \label{app7.49y}
Bu=v
          \end{equation}
where
        \begin{equation}
      \label{appK.3.5}
B:=D^2_\La f_{\eps}(\hat \rho_{\La,\eps}(\cdot;\theta))
     \end{equation}
So  we need to bound $|u(x,s)|$, $x \in \La\cap
\ga^{-1/2}\mathbb{Z}^d$. Explicitly, (recall the notation in
Subsection \ref{subsec:6.1})
   \begin{equation}
     \label{appK.3.6}
B(x,s,y,s') =  \hat V_{\ga,t}(x,y)\mathbf{1}_{s\ne s'} + \frac
{\text{\bf 1}_{(x,s)=(y,s')}} {\beta
\hat\rho_{\La,\eps}(x,s;\theta)} + \phi_{\eps,\theta}(x,s)\,\,
\text{\bf 1}_{\{C^{(\ell_-)}_{z_x}=C^{(\ell_-)}_{z_y}, s'=s\}}
     \end{equation}
and
   \begin{equation}
      \label{appK.3.7}
\phi_{\eps,\theta}(x,s)=3\eps^{-1}  \Big( \{({\rm
Av}(\hat\rho_{\La,\eps}(\cdot;\theta);x,s)
-[\rho^{(k)}_{s}+\zeta])_+\}^2 +\{({\rm
Av}(\hat\rho_{\La,\eps}(\cdot;\theta);x,s) -
[\rho^{(k)}_{s}-\zeta])_-\}^2 \Big)
     \end{equation}
Finally
    \begin{equation}
    \label{6.55}
v(x,s)=-\sum_{s'\ne s}\sum_{y\in\La^c\cap\ga^{-1/2}\mathbb{Z}^d}\hat
V_{\ga,t}(x,y)[\bar\rho_{\La^c}(y,s')-\rho_{s'}^{(k)}]
    \end{equation}

Recalling the definition of $V^{(\ell_-)}_{\ga,t}$ in
\eqref{appK.1.0.6} we define for all $x,y\in
\La\cap\ga^{-1/2}\mathbb{Z}^d$ and the corresponding $z_x,z_y\in
\ell_-\mathbb{Z}^d$,
        \begin{equation}
     \label{appK.3.8}
A(x,s,y,s') = V^{(\ell_-)}_{\ga,t}(z_x,z_y) \, \mathbf{1}_{s\ne s'}+
\frac {\text{\bf 1}_{(x,s)=(y,s')}} {\beta \rho^{(k)}_{s}} +
\phi_{\eps,\theta}(x,s)\,\,
 \text{\bf 1}_{\{C^{(\ell_-)}_{z_x}=C^{(\ell_-)}_{z_y}, s'=s\}}
     \end{equation}
calling $R(x,s,y,s')=B(x,s,y,s')-A(x,s,y,s')$. We extend these three
matrices to all $x,y\in \ga^{-1/2}\mathbb{Z}^d$ by setting their
entries equal 0 outside $\La$.

We denote by $\|E\|$, the norm of the matrix $E$ in $\mathcal{H}$
($\dis{\|E\|^2=\sup_{u\ne 0}(Eu,Eu)/(u,u)}$), and we observe that
    \begin{equation}
    \label{6.56}
\|E\|\le
\max_{x,s}\Big(\sum_{y,s'}|E(x,s,y,s')|,\sum_{y,s'}|E(y,s',x,s)|
\Big)
    \end{equation}

By Theorem \ref{thmappK.2.1}, $B$ is a positive matrix such that the
inverse $B^{-1}$ is well defined. Since the same proof applies also
to the matrix $A$, we have that $A$ as well is a positive matrix
with a well defined inverse $A^{-1}$ and $\|A^{-1}\|\le C$.

Moreover by \eqref{appK.1.0.7} and \eqref{a6.40} we get, for $\ga$
small enough,
    \begin{eqnarray*}
&&\hskip-2cm\sum_{s'}\sum_{y\in\La \cap
\ga^{-1/2}\mathbb{Z}^d}|R(x,s,y,s')|\le \sum_{s'}\sum_{y\in\La \cap
\ga^{-1/2}\mathbb{Z}^d}|V_{\ga,t}(x,s,y,s')-V^{(\ell_-)}_{\ga,t}(z_x,s,z_y,s')|
                \\&&\hskip2.8cm
+\big| \frac {1} {\beta \hat\rho_{\La,\eps}(x,s;\theta)}-\frac {1}
{\beta \rho^{(k)}_{s}}\big|\\&& \le c_1(\ga\ell_-)(\ga^{-d/2}\ga^d)
\sum_{s'}\sum_{y\in\ga^{-1/2}\mathbb{Z}^d}\text{\bf 1}_{|x-y|\le 2
\ga^{-1}}+ \hat c^*\zeta\le S\bar c(\ga\ell_-+\zeta)\le  c\zeta
    \end{eqnarray*}
Thus, from \eqref{6.56} we have that $\|R\|\le c\zeta$.

Since $\|A^{-1}\|\|R\|\le Cc\zeta<1$, for $\ga$ small enough, we
have that the series below is convergent and
        \begin{equation}
     \label{appK.3.9}
B^{-1} = \sum_{n=0}^\infty (-A^{-1}R)^n A^{-1}
     \end{equation}
For a proof of this statement see Theorem A.1 in Appendix A of
\cite{DMPV2}.

 We are going to prove that
there are $\om$ and $c$ positive such that for $x\in \La\cap
\ga^{-1/2}\mathbb{Z}^d$,
        \begin{equation}
     \label{6.58}
\sum_{s'}\sum_{y\in \La\cap
\ga^{-1/2}\mathbb{Z}^d}|B^{-1}(x,s,y,s')| e^{\om \ga |x-y|}\le C^*
     \end{equation}
We first show why the estimate \eqref{6.58} concludes the proof of
the Theorem. From the definition \eqref{app7.49y} and using
\eqref{6.58}, we have that for all $x\in \La\cap
\ga^{-1/2}\mathbb{Z}^d$, and for all $s\in\{1,\dots,S\}$,
    \begin{eqnarray}
    \nn
&&\hskip-.5cm|u(x,s)|=\sum_{s'}\sum_{y\in \La\cap
\ga^{-1/2}\mathbb{Z}^d}[ B^{-1}(x,s,y,s')e^{\om \ga |x-y|}][ e^{-\om
\ga |x-y|}v(y,s')]
            \\&&\hskip1cm
\le C^* \max_{s', y\in \La\cap \ga^{-1/2}\mathbb{Z}^d}[ e^{-\om \ga
|x-y|}v(y,s')]
    \label{6.59}
    \end{eqnarray}
Recalling \eqref{6.55}
        $$
\max_{s', y\in \La\cap \ga^{-1/2}\mathbb{Z}^d}[ e^{-\om \ga
|x-y|}v(y,s')]\le c\frac{\ga^{-d}}{\ga^{-d/2}} \max_{y\in \La\cap
\ga^{-1/2}\mathbb{Z}^d}[ e^{-\om \ga
|x-y|}\mathbf{1}_{\text{dis}(y,\La^c)\le \ga^{-1}}]
        $$
thus \eqref{appK.3.1} follows from \eqref{6.59} and
\eqref{appK.3.3}.

\bigskip {\bf Proof of \eqref{6.58}}. From \eqref{appK.1.0.7} we have
         \begin{eqnarray}
        \nn
&&\hskip-3cm \sum_{s'}\sum_{y\in \La\cap
\ga^{-1/2}\mathbb{Z}^d}|R(x,s,y,s')| e^{\om \ga |x-y|}\le
c_1(\ga\ell_-)(\ga^{-d/2}\ga^d)
            \\&&\hskip1.7cm
\times \sum_{s'}\sum_{y\in \La\cap \ga^{-1/2}\mathbb{Z}^d}e^{\om \ga
|x-y|}\text{\bf 1}_{|x-y|\le 2 \ga^{-1}}+c\zeta\le \bar c\zeta
     \label{appK.3.9.1}
     \end{eqnarray}
We will prove that
         \begin{equation}
     \label{appK.3.9.2}
\sum_{s'}\sum_{y\in \La\cap
\ga^{-1/2}\mathbb{Z}^d}|A^{-1}(x,s,y,s')| e^{\om \ga |x-y|}\le c''
     \end{equation}
then, by \eqref{appK.3.9}
        \begin{equation}
     \label{appK.3.10.0}
\sum_{s'}\sum_{y\in \La\cap
\ga^{-1/2}\mathbb{Z}^d}|B^{-1}(x,s,y,s')| e^{\om \ga |x(i)-x(j)|}\le
\frac{ c''}{1-c''\bar c\zeta}
     \end{equation}
such that we are reduced to the proof of \eqref{appK.3.9.2}.

\medskip

We define $e_{(x,s)}(y,s')=\text{\bf 1}_{(x,s)=(y,s')}$, then
    $A^{-1}(x,s,y,s')=(e_{(x,s)}, A^{-1}e_{(y,s')})
$. Thus
        \begin{equation}
     \label{appK.3.13}
\text{for } u=A^{-1}e_{(y,s')},\qquad A^{-1}(x,s,y,s')=u(x,s)
     \end{equation}
We say that two pairs $(x,s)$, $(y,s')$ are equivalent,
    \begin{equation}
    \label{6.65}
(y,s')\sim (x,s),\quad \Leftrightarrow \quad
C^{(\ell_-)}_{z_y}=C^{(\ell_-)}_{z_x}, \quad \text {and } s'=s
    \end{equation}
We call $\dis{ N= (\frac{\ell_-}{\ga^{-1/2}})^d}$, and for any
$(x,s)$, $x\in \La\cap \ga^{-1/2}\mathbb{Z}^d$ we define
    \begin{equation}
     \label{appK.3.11}
e_{(x,s)}^*=\frac 1N\sum_{(x',s')'\sim (x,s)} e_{(x',s')}, \quad
e_{(x,s)}^{\perp}=\frac{N-1}N \big(e_{(x,s)}-\frac
1{N-1}\sum_{(x',s')\sim (x,s), (x',s')\ne (x,s)} e_{(x',s')}\big)
         \end{equation}
With these definitions,
    \begin{equation}
    \label{6.66}
(e_{(x,s)}^{\perp},e_{(x,s)}^*)=0,\qquad
e_{(x,s)}=e^*_{(x,s)}+e_{(x,s)}^{\perp}
    \end{equation}
and then, according to \eqref{appK.3.13}  we need to solve $Au=
e_{(y,s')}^*+ e_{(y,s')}^{\perp}$.

%By direct inspection we have
%        \begin{equation*}
%Ae_{(y,s')}^{\perp}(x,s)=\frac {\text{\bf 1}_{(x,s)=(x',s')}}{\beta
%\rho^{(k)}_{s}}
%        \end{equation*}
Recalling \eqref{appK.3.8} and observing that for all
$x\in\La\cap\ga^{-1/2}\mathbb{Z}^d$ and all $s$
$\phi_{\eps,\theta}(x,s)=\phi_{\eps,\theta}(z_x,s)$, we call
        \begin{equation}
     \label{6.68}
U^{(\ell_-)}_{\ga,t}(z_x,s,z_y,s') =
 V^{(\ell_-)}_{\ga,t}(z_x,z_y)\mathbf{1}_{s\ne s'}+\phi_{\eps,\theta}(z_x,s)\,\,
 \text{\bf 1}_{\{C^{(\ell_-)}_{z_x}=C^{(\ell_-)}_{z_y}, s=s'\}}
     \end{equation}
and we observe that $U^{(\ell_-)}_{\ga,t}(z_x,s,z_y,s')$ assume the
same value in all points equivalent to $(x,s)$ and to $(y,s')$. Thus
$A$ can be considered also as an operator from $\mathcal H_{\ell_-}$
onto itself in the following way. Let $\bar u\in \mathcal
H_{\ell_-}$, $\bar u=(\bar u(z,s), z\in \ell_-\mathbb{Z}^d)$, and
let $ u$ its extension to $\mathcal H_{\ga^{-1/2}}$ given by $
u(x,s)=\frac 1N\bar u(z,s)$ for all $x\in
C^{(\ell_-)}_z\cap\ga^{-1/2}\mathbb{Z}^d$. Then for any
$x\in\ga^{-1/2}\mathbb{Z}^d$ and any $s$,
    \begin{eqnarray*}
Au(x,s)=N\sum_{s'\ne s}\sum_{z'\in \ell_-\mathbb{Z}^d}
U^{(\ell_-)}_{\ga,t}(z_x,s,z',s')\frac {\bar u(z',s')}N+\frac
1{\beta\rho^{(k)}_{s}} \frac {\bar u(z_x,s)}N =:A^{(\ell_-)}\bar
u(z_x,s)
    \end{eqnarray*}
where
        $$
A^{(\ell_-)}(z,s,z',s')= U^{(\ell_-)}_{\ga,t}(z,s,z',s')+\frac
1{N\beta\rho^{(k)}_{s}} \mathbf{1}_{(z,s)=(z's')}
        $$
Observe that %for all $x\in\ga^{-1/2}\mathbb{Z}^d$ and any $s$,
$e_{(x,s)}^*(y,s')=\frac 1 N \bar e^*_{(z_x,s)}(z_y,s')$ where $\bar
e^*_{(z,s)}(z',s')=\mathbf{1}_{z=z',s=s'}$ is a vector in $ \mathcal
H_{\ell_-}$. Thus we find $u= u^*+ u^{\perp}$ by solving separately
$Au^{\perp}=e_{(x,s)}^{\perp}$ and
  $Au^*=e_{(x,s)}^*$ with $u^*=\frac 1N\bar u^*$, $\bar u^*\in \mathcal
  H_{\ell_-}$ and $A^{(\ell_-)}\bar u^*= \bar e_{(z,s)}^*$.
  By direct inspection we have
        %\begin{equation*}
%Ae_{(y,s')}^{\perp}(x,s)=\frac {\text{\bf 1}_{(x,s)=(x',s')}}{\beta
%\rho^{(k)}_{s}}
%        \end{equation*}
             \begin{equation}
     \label{appK.3.14}
 u^{\perp}(y,s')= e_{(x,s)}^{\perp}(y,s')\beta
\rho^{(k)}_{s}
     \end{equation}
It is easy to see that
          \begin{equation}
     \label{appK.3.15}
\sum_{s'}\sum_{y\in \La\cap\ga^{-1/2}\mathbb{Z}^d} e^{\om \ga
|x-y|}|e_{(y,s')}^{\perp}(x,s)\beta \rho^{(k)}_{s}|  \le c
     \end{equation}

We prove below that there is a constant $\bar c$ so that
    \begin{equation}
     \label{6.73}
\sum_{s'}\sum_{z'\in \La\cap\ell_-\mathbb{Z}^d} e^{\om \ga |z-z'|} |
\bar u^*(z',s)|\le \bar c,\qquad A^{(\ell_-)}\bar u^*=\bar
e_{(z',s')}^*
     \end{equation}
\eqref{6.73}  concludes the proof of \eqref{appK.3.9.2} in fact
    \begin{equation*}
    % \label{appK.3.15}
\sum_{s'}\sum_{y\in \La\cap\ga^{-1/2}\mathbb{Z}^d} e^{\om \ga
|x-y|}|u^*(y,s')|  \le e^{2\om\ga\ell_-}\sum_{s'}\sum_{z'\in
\La\cap\ell_-\mathbb{Z}^d} e^{\om \ga |z_x-z'|}|\bar u^*(z',s')|\le
C
     \end{equation*}

\medskip

{\bf Proof of \eqref{6.73}} The matrix $A^{(\ell_-)}$ acting on
functions constant on the scale $\ell_-$ is the same operator
considered in Theorem 5.9 of
 \cite{DMPV2} where a statement stronger than \eqref{6.73} has been proven.
 Thus the proof of \eqref{6.73} is contained in the proof of  Theorem 5.9 of
 \cite{DMPV2}, but, for the reader convenience we sketch it.
 To have the same notation as in \cite{DMPV2},
 we  use the label $i$ for a pair $(x,s)$,
$x\in\ell_- \mathbb Z^d, s \in\{1,.,S\}$, writing $x(i)=x$, $s(i)=s$
if $i=(x,s)$ and shorthand $|i-j|$ for $|x(i)-x(j)|$ and $i\in \La$
for $x(i)\in \ell_-\mathbb Z^d\cap\La$. We call $A_0$ the matrix
    \begin{equation}
     \label{6.75}
A_0(i,j) = V^{(\ell_-)}_{\ga,t}(i,j)\text{\bf 1}_{s(i)=s(j)}+ \frac
{\text{\bf 1}_{i=j}} {N\beta \rho^{(k)}_{s(i)}},\qquad i,j\in
\La\cap \ell_-\mathbb{Z}^d
     \end{equation}
so that  $A^{(\ell_-)}=A_0+\phi_{\eps,\theta}$ and
$\phi_{\eps,\theta}$ is a diagonal matrix in $\mathcal H_{\ell_-}$.

We need to distinguish values $i$ where $\phi_{\eps,\theta}(i)$ is
large, and so we define
    \begin{equation}
      \label{6.76}
G=\big\{i\in\La: \phi_{\eps,\theta}(i) \ge b\big\},\quad \mathcal
H_G= \big\{u\in \mathcal H_{\ell_-}: u(i)=0,\; \text{for all}\;i\in
G^c\big\}
     \end{equation}
Let $Q$ be the orthogonal projection on $ \mathcal H_G$ and $P=1-Q$,
thus $Q$ selects the sites where $\phi_{\eps,\theta}$ is large and
$P$ those where it is small. Let
    \begin{equation}
      \label{6.77}
C= PAP - PA_0(QAQ)^{-1}QA_0
     \end{equation}
We look for $\bar u^*$ such that $A^{(\ell_-)}u^*=\bar e_j^*$. In
\cite{DMPV2} (see eqs (5.34)--(5.38) in \cite{DMPV2} ) it is proven
that  $C$ is invertible on the range of $P$ and that
   \begin{equation}
      \label{6.78}
P\bar u^*=C^{-1}\{ Pe_j^*-PA_0(QAQ)^{-1}Qe_j^*\}
     \end{equation}
    \begin{equation}
      \label{6.79}
    Q\bar u^*  = (QAQ)^{-1}\{Qe_j^*- QA_0Pu^*\}
         \end{equation}
The matrix $C$ satisfies the hypothesis of Theorem A.1 and Theorem
A.2 of \cite{DMPV2} so that there are $c_1$, $c_2$ and $\kappa$ such
that
    \begin{equation}
      \label{6.80}
|C^{-1}(i,j)| \le  c_1\exp\Big\{-\kappa \ga|x(i)-x(j)| \Big\}
         \end{equation}
         \begin{equation}
      \label{6.81}
\sum_i |(QAQ)^{-1}(i,j)| e^{\ga |x(i)-x(j)|} \le \frac {c_2}b
     \end{equation}
Furthermore if $b$ is large enough and $c\ge\|A_0\|$, there is $c'$
such that
   \begin{equation}
      \label{6.82}
\| PA_0(QAQ)^{-1}QA_0\|\le  \frac {2c^2}{b},\quad \|
PA_0(QAQ)^{-1}QA_0\|_\infty\le  c'\frac {2c^2}{b}
     \end{equation}
By observing that for any $i,j\in\ell_-\mathbb{Z}^d$, $\bar
e_j^*(i)=1_{i=j}$, from \eqref{6.78}--\eqref{6.82}, inequality
\eqref{6.73} follows.
 \qed

\vskip1cm

%%%%%%%%%%%%%%%%%%%%%%%%%%%%%%%%%%%%%%%%%%%%%
\section{{\bf Surface corrections to the pressure}}
    \label{sec:5}
%%%%%%%%%%%%%%%%%%%%%%%%%%%%%%%%%%%%%%%%%%%%%

In this Section we fix the chemical potential $\la=\la_{\beta,\ga}$
such that \eqref{p2.30} holds and we prove Theorem \ref{thm:2.14}.
We often omit to write explicitly the dependence on
$\la_{\beta,\ga}$.

%%%%%%%%%%%%%%%%%%%%%%%%%%%%%%%%%%%%%%%%%%%%%
\subsection{Interpolating Hamiltonian}
    \label{sec:5.1}
%%%%%%%%%%%%%%%%%%%%%%%%%%%%%%%%%%%%%%%%%%%%%
 We write an interpolation formula for the partition functions
$Z^{(k)}_{\La} (\chi^{(k)}_{\La^c}) \equiv
Z^{(k)}_{\La,\la_{\beta,\ga}} (\chi^{(k)}_{\La^c})$,  $k\in\{1,\dots
,S+1\}$ defined in \eqref{I.4.2}. Here $\La$ is a
$\mathcal{D}^{\ell_+}$- measurable set and the boundary condition
$\chi^{(k)}_{\La^c}=\rho^{(k)}\text{\bf 1}_{r\in \La^c}$.

The reference hamiltonian $h(\rho_\La)$,
$\rho_\La=\rho(\cdot;q_\La))$, is the one defined in \eqref{aa2.47}
and it has been chosen in such a way that
 for all $x$  $\in\ga^{-d/2}\cap
 \La$ and all $s\in\{1,\dots,S\}$,
     \begin{equation}
    \label{aa2.48}
\frac 1 Z \int e^{-\beta
h(\rho_\La)}\rho(x,s;q_\La)\nu(dq_\La)=\rho^{(k)}_s,\qquad Z=\int
e^{-\beta h(\rho_\La)}\nu(dq_\La)
    \end{equation}
that can be proved by using the mean field equation \eqref{2.3}.

 Recalling \eqref{I.4.3a}, we call
    \begin{equation}
     \label{aa2.49}
Z^{(k)}_{\La,0} :=\int_{\mathcal{X}_\La^{(k)}} e^{-\beta
h(\rho_\La)} \mathbf{X}^{(k)}_{\La,\la_{\beta,\ga}} ( q_{\La})
\nu(dq_\La)
    \end{equation}
For any $t\in(0,1)$ we let
 ($H_{\La}=H_{\La,\la_{\beta,\ga}}$ below)
    \begin{equation}
     \label{aa2.51}
H_{\La,t}(q_\La|\chi^{(k)}_{\La^c})=t
H_\La(q_\La|\chi^{(k)}_{\La^c})+(1-t)h(\rho_\La)
    \end{equation}
and ($\mathbf{X}^{(k)}_{\La}
=\mathbf{X}^{(k)}_{\La,\la_{\beta,\ga}}$ below)
    \begin{equation}
     \label{aa2.52}
Z^{(k)}_{\La,t}(\chi^{(k)}_{\La^c})
:=\int_{\mathcal{X}_\La^{(k)}}e^{- \beta
H_{\La,t}(q_\La|\chi^{(k)}_{\La^c})}\mathbf{X}^{(k)}_{\La} (
q_{\La}) \nu(dq_\La)
    \end{equation}

We thus have a formula for our partition function $Z^{(k)}_{\La}
(\chi^{(k)}_{\La^c})$, namely
    \begin{equation}
     \label{aa2.53}
\frac 1 \beta\log Z^{(k)}_{\La}(\chi^{(k)}_{\La^c})=\frac 1\beta
\log Z^{(k)}_{\La,0} -\int_0^1 \mathbb{E}_{\La,t}
\big(H_\La(q_\La|\chi^{(k)}_{\La^c})-h(\rho_\La)\big)dt
    \end{equation}
where $\mathbb{E}_{\La,t}$ is the expectation with respect to the
following probability measure $\mu_{\La,t,k}$ with support on the
set $\mathcal{X}_\La^{(k)}$
    \begin{equation}
     \label{aa2.54}
\mu_{\La,t,k}(dq_\La)=\frac 1 {Z^{(k)}_{\La,t}(\chi^{(k)}_{\La^c})}
e^{-\beta H_{\La,t}(q_\La|\chi^{(k)}_{\La^c})}
\mathbf{X}^{(k)}_{\La} ( q_{\La})\, {\bf
1}_{\{q_\La\in\mathcal{X}_\La^{(k)}\}}\nu(dq_\La)
    \end{equation}

\bigskip

We call
    \begin{equation}
    \label{a5.10b}
\La_0=\La\setminus \delta_{\rm{in}}^{\ell^*_+}(\La),\qquad \La_{00}=
\La_0\setminus \delta_{\rm{in}}^{\ell^*_+}(\La_0),\qquad
\ell^*_+=\ga^{-1-\alpha^*_+}, \quad\alpha^*_+>\alpha_+
    \end{equation}

\bigskip

%%%%%%%%%%%%%%%%%%%%%%%%%%%%%%%%%%%%%%%%%%%%%
\subsection{Estimate close to the boundaries}
    \label{sec:5.2}
%%%%%%%%%%%%%%%%%%%%%%%%%%%%%%%%%%%%%%%%%%%%%

In this subsection we prove the following Theorem.
\bigskip

    \begin{thm}
    \label{thm52.1}
There is $ c_b>0$ such that
    \begin{equation}
    \label{5.11b}
\Big|\int_{\mathbb R^d\setminus\La_{00}}\mathbb{E}_{\La,t}\Big(\big[
e^{\rm mf}\big(J_\ga\star q_\La\cup \chi^{(k)}_{\La^c} \big) -e^{\rm
mf}(\rho^{(k)})\big]\Big)\Big| \le
c_b\ga^{1/4}\,|\La\setminus\La_{00}|
    \end{equation}
where
           \begin{equation}
    \label{emf}
e^{\rm mf}(u)=\frac\12\sum_{s,s':s\ne s'} u_su_{s'}
             \end{equation}
Furthermore letting $\mathfrak{L}_\ga=\ga^{-d/2}\cap
 \mathbb{Z}^d$,
    \begin{equation}
    \label{a5.11b}
\Big|\sum_s\ga^{-d/2}\sum_{x\in \La\setminus\La_{00}\cap
\mathfrak{L}_\ga }\mathbb{E}_{\La,t}\Big(\hat
J_\ga\star[\rho_\La(x,s,q_\La) - \rho^{(k)}(s)] \Big)\Big|\le
c_b\ga^{1/4}\,|\La\setminus\La_{00}|
    \end{equation}
    \end{thm}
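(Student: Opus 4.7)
The driving idea is that $\mu_{\La,t,k}$ is the Gibbs measure associated with the free-energy functional $f_{\La,t}(\cdot|\chi^{(k)}_{\La^c})$ studied in Section \ref{sec:I.2} (the interpolating Hamiltonian was calibrated by \eqref{aa2.48} precisely for this to be so); hence, after a Lebowitz--Penrose coarse graining, expectations under $\mu_{\La,t,k}$ concentrate on the unique minimizer $\hat\rho_\La\in\mathcal X^{(k)}_\La$ of $f_{\La,t}$, and by Theorem \ref{thmappK.3.1} this minimizer differs from $\rho^{(k)}$ only through the exponential boundary tail $c\ga^{-d/2}e^{-\om\ga\,\mathrm{dist}(\cdot,\La^c)}$ capped at the uniform bound $\zeta$. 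Both \eqref{5.11b} and \eqref{a5.11b} are then reduced to integrating this tail over the boundary layer $\La\setminus\La_{00}$ of thickness $2\ell^*_+=2\ga^{-1-\alpha^*_+}$; the gain $\ga^{1/4}$ comes from the fact that $\ell^*_+$ is much larger than the inner decay scale $\ga^{-1}$ on which the tail is $O(1)$, provided $\alpha^*_+>1/4$ is chosen.

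The first step is the coarse graining on the mesh $\ga^{-1/2}$, analogous to Theorem \ref{propA.2} but applied to the interpolated functional $f_{\La,t}$; since $h(\rho_\La)$ is a linear functional, the LP bound of Theorem \ref{propA.2} goes through with essentially unchanged constants. Combined with the strict convexity of Theorems \ref{thmappK.2.1}--\ref{thmappK.2.2} and the uniqueness Corollary \ref{coroappK.2.1}, this yields Gaussian concentration of $\mu_{\La,t,k}$ around $\hat\rho_\La$, with per-site variance $O(\ga^d)$; in particular, for any local observable $G$ depending polynomially on the coarse-grained profile $\rho_\La$, one obtains $\mathbb{E}_{\La,t}(G(\rho_\La))=G(\hat\rho_\La)+O(\ga^{1/2})$ up to an exponentially small LP error.

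Next, expand
\[
e^{\rm mf}(u)-e^{\rm mf}(\rho^{(k)})=\sum_{s\ne s'}\rho^{(k)}_{s}\bigl(u_{s'}-\rho^{(k)}_{s'}\bigr)+\tfrac12\sum_{s\ne s'}\bigl(u_s-\rho^{(k)}_s\bigr)\bigl(u_{s'}-\rho^{(k)}_{s'}\bigr)
\]
at $u=J_\ga\star(q_\La\cup\chi^{(k)}_{\La^c})$. After the reduction above, the linear term equals, modulo LP and fluctuation errors, $(De^{\rm mf})(\rho^{(k)})\cdot\bigl(J_\ga\star(\hat\rho_\La\mathbf 1_\La+\rho^{(k)}\mathbf 1_{\La^c})(r)-\rho^{(k)}\bigr)$. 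For $r$ at distance $>\ga^{-1}$ from $\partial\La$, this reduces to $J_\ga\star(\hat\rho_\La-\rho^{(k)})(r)$, which is bounded by \eqref{appK.3.1}; for $r$ within $\ga^{-1}$ of $\partial\La$, apply a crude $O(1)$ bound (this sublayer has volume $O(\ga^{-1})$ times the surface area of $\La$, and hence is $o(\ga^{1/4}|\La\setminus\La_{00}|)$). Integrating the exponential bound over the remainder of $\La\setminus\La_{00}$ yields the desired $O(\ga^{1/4})|\La\setminus\La_{00}|$. The quadratic term is dominated by the covariance of $J_\ga\star q_\La$, which by LP Gaussian fluctuations is $O(\ga^d)$ per point, hence negligible. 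The second estimate \eqref{a5.11b} follows from the same reduction applied to the linear observable $\hat J_\ga\star(\rho_\La-\rho^{(k)})$, bypassing the quadratic analysis entirely.

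The main obstacle is quantifying the concentration of $\mu_{\La,t,k}$ around $\hat\rho_\La$ uniformly in $t\in[0,1]$: the constraint $\mathcal X^{(k)}_\La$ only controls block averages on the scale $\ell_-$, not the fine variables on the scale $\ga^{-1/2}$, so translating the strict convexity of $f_{\La,t}$ into concentration of observables depending on those fine variables requires both the LP Gaussian bound and a careful handling of the relaxed constraint used in Section \ref{subsec:appK.1}. This part parallels the analogous argument for the LMP model in Chapter 12 of \cite{leipzig}; what is model-specific here is verifying that the weights $\mathbf{X}^{(k)}_{\La}$ entering \eqref{aa2.54} do not spoil the bounds, which holds because of the cut-off estimate \eqref{I.4.3} that renders $\mathbf{X}^{(k)}_{\La}$ essentially equal to $1$ on typical configurations.
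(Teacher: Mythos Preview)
Your outline diverges from the paper's proof and contains a real gap at the concentration step. You claim that the Lebowitz--Penrose bound (Theorem \ref{propA.2}) combined with the strict convexity of Theorems \ref{thmappK.2.1}--\ref{thmappK.2.2} yields $\mathbb E_{\La,t}\bigl(G(\rho_\La)\bigr)=G(\hat\rho_\La)+O(\ga^{1/2})$ for local polynomial observables. But Theorem \ref{propA.2} controls $\log Z$ only up to an error $\bar c\,\ga^{1/2}|\La|$ proportional to the \emph{full} volume. Feeding this into the quadratic lower bound \eqref{I.3.2.17} gives useful concentration only for deviations whose free-energy cost exceeds $\bar c\,\ga^{1/2}|\La|$; for a local observable supported in the boundary strip this is far too coarse. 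Your asserted ``per-site variance $O(\ga^d)$'' does not follow from the tools you cite, and without it the reduction to $G(\hat\rho_\La)$ collapses. The later claim that the weights $\mathbf X^{(k)}_\La$ are ``essentially equal to $1$'' is also too quick: they depend on $q_\La$ through all contours touching $\La$ and must be decoupled explicitly.

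The paper avoids both issues by a good/bad set argument localised to the boundary layer. One fixes a threshold set $\mathcal A_\Delta=\bigl\{\sum_s\int_{\mathbb R^d\setminus\La_{00}}|\hat J_\ga\!\star(\rho-\rho^{(k)})|\le \bar C\ga^{1/4}|\La\setminus\La_{00}|\bigr\}$ and shows that $\mu_{\La,t,k}(\bar{\mathcal A}_\Delta)\le e^{-\bar C\ga^{1/2}|\Delta|}$ by comparing partition functions. The point is that the LP error and the free-energy gain live on the \emph{same} region $\Delta\cup B$ (a strip around $\partial\La$): on $\bar{\mathcal A}_\Delta$ Cauchy--Schwarz converts the $L^1$ excess $\bar C\ga^{1/4}|\Delta|$ into an $L^2$ excess $\bar C^2\ga^{1/2}|\Delta|$, and strict convexity of $F^{\rm mf}$ and of $-\beta^{-1}\mathcal S+h$ turns this into a free-energy cost $\gtrsim\bar C^2\ga^{1/2}|\Delta|$, which beats the LP error $\bar c\,\ga^{1/2}|\Delta\cup B|$ once $\bar C$ is chosen large. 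Thus $\ga^{1/4}$ is not a ratio of geometric scales as you suggest, but the $L^1$ threshold at which the convexity gain matches the LP error. The contour weights are handled by the sandwich \eqref{5.18b}, and Theorem \ref{thmI.3.1} is used not to approximate $\hat\rho_\La$ globally but to insert a ``perfect'' layer $B_0$ inside $B$ that decouples $\Delta$ from the bulk.
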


\bigskip
    {\bf Proof.}
We first observe that (recall \eqref{2.8})
    \begin{equation}
    \label{5.12b}
\big|J_\ga\star q_\La\cup \chi^{(k)}_{\La^c} -\hat
J_\ga\star\rho\big|\le c\ga^{1/2}
    \end{equation}
where, recalling \eqref{a3.1},
    \begin{equation}
      \label{5.13b}
\rho(r,s)= \rho^{(\ga^{-1/2})}(r,s;q_\La)+ \rho^{(k)}{\text{\bf
1}}_{\La^c}
     \end{equation}
We write
    \begin{equation}
[ e^{\rm mf}\big(\hat J_\ga\star\rho\big) -e^{\rm
mf}(\rho^{(k)})\big](x)= e^{\rm mf}\big(\hat J_\ga\star
[\rho-\rho^{(k)}]\big)(x)+\sum_{s,s';s\ne s'} \rho^{(k)}_s \hat
J_\ga\star [\rho-\rho^{(k)}](x,s')
    \label{5.14b}
     \end{equation}
     We call
    \begin{equation}
    \label{a5.18b}
\Delta=(\La\setminus\La_{00})\cup \big(\delta^{(2\ga^{-1})}_{\rm
{out}}[\La\setminus\La_{00}]\cap \La\big)
     \end{equation}
We define the set (recall the definition \eqref{5.13b} of $\rho$)
    \begin{equation}
\mathcal{A}_\Delta=\Big\{q_{\Delta}\in
\mathcal{X}_\Delta^{(k)}:\sum_s\int_{\mathbb R^d\setminus\La_{00}}
\big|\hat J_\ga\star [\rho-\rho^{(k)}]\big|\le \bar{C}\ga^{1/4}
|\La\setminus\La_{00}|\,\Big\},
    \label{5.15b}
     \end{equation}
where $\bar{C}$ is some large constant. We also call
     \begin{equation}
\bar{\mathcal  A}_\Delta=  \mathcal{X}_\Delta^{(k)}\setminus
\mathcal{A}_\Delta
    \label{5.15bb}
     \end{equation}
From \eqref{5.12b}, \eqref{5.14b} and the fact that the interaction
range is $2\ga^{-1}$, we have that
    \begin{equation}
    \label{5.16b}
\Big|\int_{\mathbb R^d\setminus\La_{00}}\mathbb{E}_{\La,t}\Big(\big[
e^{\rm mf}\big(J_\ga\star\ q_\La\cup \chi^{(k)}_{\La^c} \big)
-e^{\rm mf}(\rho^{(k)})\big]\Big)\Big|\le
c[\ga^{1/2}+\bar{C}\ga^{1/4}+
\mu_{\La,t,k}(\bar{\mathcal{A}}_\Delta)] \,\big|\Delta\big|
    \end{equation}
We are left with the estimate of the probability on the r.h.s of
\eqref{5.16b}. We first write
    \begin{eqnarray}
    \label{a5.17b}
&&\hskip-1cm\mu_{\La,t,k}(\bar{\mathcal A}_\Delta)=\frac
{Z^{(k)}_{\La,t}(\bar{\mathcal  A}_\Delta|\chi^{(k)}_{\La^c})}
{Z^{(k)}_{\La,t}(\chi^{(k)}_{\La^c})}
\\&&\hskip-1cm
Z^{(k)}_{\La,t}(\bar{\mathcal  A}_\Delta|\chi^{(k)}_{\La^c})
=\int_{\bar{\mathcal  A}_\Delta}e^{-\beta
H_{\La,t}(q_\La|\chi^{(k)}_{\La^c})} \mathbf{X}^{(k)}_{\La} (
q_{\La})\nu(dq_\La)
    \label{5.17b}
    \end{eqnarray}
and we estimate separately the numerator and the denominator in
\eqref{a5.17b} starting from the numerator.
    \bigskip

The following estimate holds for the weights of the contours (see
Lemma \ref{Lemma2} in Appendix \ref{app:bounds})
        \begin{equation}
    \label{5.18b}
\mathbf{X}^{(k)}_{\La\setminus \big(\Delta\cup \delta_{\rm
{out}}^{(2\ell_+)}(\Delta)\big)} \le \mathbf{X}^{(k)}_{\La} \le
\mathbf{X}^{(k)}_{\La\setminus \big(\Delta\cup \delta_{\rm
{out}}^{(2\ell_+)}(\Delta)\big)}\big[1+ e^{-c
\zeta^2\ell_{-}^d}\big]^{| \Delta\cup \delta_{\rm
{out}}^{(2\ell_+)}(\Delta)|/\ell^d_+}
     \end{equation}

 We partition $\La$ in $\La=\Delta\cup B\cup(\La_{00}\setminus B)$ where
    \begin{equation}
B=\delta_{\rm{out}}^{\ell_+}[\Delta]\cap\La_{00}%
    \label{5.20b}
    \end{equation}
We also define the set $B_0\subset B$ such that
    \begin{equation}
    \label{b0}
\text{dist}(B_0,B^c)>\ell_+/8
    \end{equation}
    We write
    \begin{equation}
    \label{5.21b}
H_{\La,t}(q_\La|\chi^{(k)}_{\La^c})=H_{\La_{00}\setminus
B,t}(q_{\La_{00}\setminus
 B})+ H_{\Delta\cup B,t}(q_{\Delta\cup B
}|q_{\La_{00}\setminus B } \cup \chi^{(k)}_{\La^c})
        \end{equation}
and we notice that from \eqref{5.18b} we get
        \begin{eqnarray}
    \nn
&&Z^{(k)}_{\La,t}(\bar{\mathcal  A}_\Delta|\chi^{(k)}_{\La^c}) \le
\int_{\mathcal{X}^{(k)}_{\La_{00}\setminus B}}e^{-\beta
H_{\La_{00}\setminus B,t}(q_{\La_{00}\setminus
 B})}\mathbf{X}^{(k)}_{\La\setminus \big(\Delta\cup \delta_{\rm
{out}}^{(2\ell_+)}(\Delta)\big)}\big[1+ e^{-c
\zeta^2\ell_{-}^d}\big]^{[| \Delta\cup \delta_{\rm
{out}}^{(2\ell_+)}(\Delta)|]/\ell^d_+}
 \\&&\hskip2.5cm\,\,\times\hat
Z_{\Delta\cup B,t}(\bar{\mathcal  A}_\Delta|q_{\La_{00}\setminus B }
\cup \chi^{(k)}_{\La^c})
        \label{5.22b}
        \end{eqnarray}
where $ \hat Z^{(k)}_{\Delta\cup B,t}$ is as in \eqref{2.41nn} but
with the energy given by $H_{\Delta\cup B,t}$. We call $\bar
\rho_{\La_{00}\setminus B}$ the density on the scale $\ga^{-1/2}$
corresponding to the configuration $q_{\La_{00}\setminus B }$ and we
call $\bar \rho_{(\Delta\cup B)^c}$ the density corresponding to the
configuration $q_{\La_{00}\setminus B } \cup \chi^{(k)}_{\La^c}$.

A result analogous to  Theorem \ref{propA.2} holds also for the
partition function with the interpolated hamiltonian. Thus,
recalling  Definition \ref{dens}, we have that
    \begin{equation}
 \log\Big( \hat
Z_{\Delta\cup B,t}(\bar{\mathcal  A}_\Delta|q_{\La_{00}\setminus B }
\cup \chi^{(k)}_{\La^c})\Big) \le -\beta \inf_{\rho\in \bar{\mathcal
A}^*_\Delta}f_{\Delta\cup B, \la_{\beta,\ga},t} (\rho|\bar
\rho_{(\Delta\cup B)^c})
 + \bar c \ga^{1/2} |\Delta\cup B|
  \label{5.23b}
     \end{equation}
were analogously to \eqref{appK.1}, for any region $\Om$
    \begin{equation}
   f_{\Om,\la, t}(\rho_\Om|\bar\rho_{\Om^c}) = tF^*_{\Om,\la}
(\rho_\Om|\bar\rho_{\Om^c})+(1-t)[h(\rho_\Om)- \frac 1\beta\mathcal
S(\rho_\Om)]
     \label{flat}
    \end{equation}
Since we want to use Theorem \ref{thmI.3.1} where $\la=\la_\beta$,
we first change the chemical potential by using that there is $c$
such that
        $$
| F^*_{\Delta\cup B,\la_{\beta,\ga}}(\rho_{\Delta\cup B}|\bar
\rho_{(\Delta\cup B)^c})-F^*_{\Delta\cup
B,\la_\beta}(\rho_{\Delta\cup B}|\bar\rho_{(\Delta\cup B)^c})|\le c
\ga^{1/2}|\Delta\cup B|
        $$
Thus \eqref{5.23b} holds with $f_{\Delta\cup B,t}\equiv
f_{\Delta\cup B,\la_\beta,t}$ in place of $f_{\Delta\cup
B,\la_{\beta,\ga},t}$ and with a new constant $c$.
 We have
    \begin{equation}
    \label{5.24b}
    \inf_{\rho\in
\bar{\mathcal  A}^*_\Delta} f_{\Delta\cup B,t}(\rho|\bar
\rho_{(\Delta\cup B)^c})=\inf_{\rho_\Delta\in \bar{\mathcal
A}^*_\Delta}\Big\{\inf_{\rho_B\in {\mathcal{X}}^{(k)}_B}
f_{B,t}(\rho_B|\bar \rho_{\La_{00}\setminus B} +\rho_\Delta)+
f_{\Delta,t}(\rho_\Delta|\chi^{(k)}_{\La^c})\Big\}
    \end{equation}

Recalling that $\rho_\Delta\in  \mathcal X^{(k)}_\Delta$, we can use
Theorem \ref{thmI.3.1} concluding  that there is a function
$\rho^*_B\in \mathcal{X}^{(k)}_B$ (that depends on $\rho_\Delta$)
such that $\rho^*_B=\rho^{(k)}$ in $B_0$ and
    \begin{equation}
    \label{5.25b}
\inf_{\rho_B\in {\mathcal{X}}^{(k)}_B} f_{B,t}(\rho_B|\bar
\rho_{\La_{00}\setminus B}+\rho_\Delta) \ge f_{B,t}(\rho^*_B|\bar
\rho_{\La_{00}\setminus B}+\rho_\Delta) -ce^{-\omega \ell_+}
    \end{equation}
Calling $B_1$ the subset of $B\setminus B_0$ that is connected to
$\Delta$, we observe that the functional on ${\Delta\cup B_1}$
depends on the boundary conditions only in $B_0$ and $\La^c$ where
they are equal to the pure phase  $\rho^{(k)}$.  Thus
    \begin{eqnarray}
    \nn
&&\hskip-2cm f_{B,t}(\rho^*_B|\bar \rho_{\La_{00}\setminus
B}+\rho_\Delta)+ f_{\Delta,t}(\rho_\Delta|\chi^{(k)}_{\La^c})=
f_{\Delta\cup B_1,t}(\rho_{\Delta\cup B_1}|\chi^{(k)}_{(\Delta\cup
B_1)^c}) \\&&\hskip2cm+ f_{B\setminus (B_1\cup
B_0),t}(\rho^*_{B\setminus (B_1\cup B_0)}|\chi^{(k)}_{B_0}+\bar
\rho_{\La_{00}\setminus B})+f_{B_0,t}(\chi^{(k)}_{B_0})
    \label{5.26b}
    \end{eqnarray}
To get rid of the dependence on $\rho_\Delta$ of $\rho_{B\setminus
(B_1\cup B_0)}^*$, we minimize the second term on the right hand
side of \eqref{5.26b} using again Theorem \ref{thmI.3.1}. We thus
get the existence of $\rho^{**}_{B\setminus (B_1\cup B_0)}$ such
that, going back to \eqref{5.24b},
    \begin{eqnarray}
   \nn
 &&  \hskip-2cm \inf_{\rho\in
\bar{\mathcal{A}}^*_\Delta} f_{\Delta\cup B,t}(\rho|\bar
\rho_{(\Delta\cup B)^c})\ge f_{B\setminus (B_1\cup
B_0),t}(\rho^{**}_{B\setminus (B_1\cup B_0)}|\chi^{(k)}_{B_0}+\bar
\rho_{\La_{00}\setminus B})-ce^{-\omega \ell_+}
\\&&\hskip2cm +\inf_{\rho\in \bar{\mathcal{A}}^*_\Delta}
f_{\Delta\cup B_1,t}(\rho|\chi^{(k)}_{(\Delta\cup
B_1)^c})+f_{B_0,t}(\chi^{(k)}_{B_0})
     \label{5.27b}
    \end{eqnarray}
We are left with the estimate of the $\inf$ on the r.h.s. of
\eqref{5.27b}. Recalling \eqref{flat}, we estimate separately the
first term, namely $tF^*_{\Delta\cup B_1}$ and the second namely
$(1-t)[h(\rho)-\beta^{-1}\mathcal{S}(\rho)]$. For the former we use
Lemma \ref{lemma4.16} with $\Omega= \Delta\cup B_1$. Then, calling $
\mathcal R=\hat J_\ga*(\rho_\Om+\chi^{(\und k)}_{\Omega^c})$, from
\eqref{I.1.9.4}, \eqref{D.8b}, \eqref{D.5c}, \eqref{6.48y} and the
fact that $F_2\ge 0$, we get that
    \begin{equation}
    \label{5.28b}
F^*_{\Delta\cup B_1,\la_\beta}(\rho_{\Delta\cup
B_1}|\chi^{(k)}_{(\Delta\cup B_1)^c})|\ge
\ga^{-d/2}\mathbf{I}_k((\Delta\cup
B_1)^c)+\ga^{-d/2}\sum_{x\in\Omega\cap \mathfrak L_\ga}F_{\la_\beta
}^{\rm mf}( \mathcal R(x,1),..,\mathcal R(x,S))
    \end{equation}
where $F_{\la_\beta }^{\rm mf}$, the mean field free energy, is
quadratic around $\rho^{(k)}$. Since $\rho_{\Delta\cup B_1}\in
\mathcal{X}^{(k)}_{\Delta\cup B_1}$, we can take Taylor expansion up
to the second order, getting
    \begin{eqnarray}
   \nn
&&\hskip-2.6cm\ga^{-d/2}\sum_{x\in\Delta\cup B_1\cap \mathfrak
L_\ga}F_{\la_\beta }^{\rm mf}( \mathcal R(x,1),..,\mathcal
R(x,S))\ge \phi|\Delta\cup B_1|\\&&\hskip1cm+\frac{\kappa^*}2\sum_s
\ga^{-d/2}\sum_{x\in(\Delta\cup B_1)\cap \mathfrak L_\ga}\hat
J_\ga*\big[\rho_{\Delta\cup B_1}+\chi^{(\und k)}_{(\Delta\cup
B_1)^c}-\rho^{(k)}\big]^2
     \label{5.29b}
    \end{eqnarray}
If $\rho_{\Delta\cup B_1}\in \bar{\mathcal{A}}_\Delta^*$, observing
that the integral over $\mathbb R^d\setminus\La_{00}\cap
\mathfrak{L}_\ga$ in \eqref{5.15b} can be restricted to a sum over
$\Delta\cap \mathfrak{L}_\ga$ we get
    \begin{eqnarray}
    \nn
&&\hskip-1.6cm \bar{C}\ga^{1/4}|\Delta\cup B_1|\le
\sum_s\ga^{-d/2}\sum_{x\in\Delta\cap \mathfrak
L_\ga}J_\ga*\big|\rho_{\Delta\cup B_1}+\chi^{(\und k)}_{(\Delta\cup
B_1)^c}-\rho^{(k)}\big|\\&&\le |\Delta\cup B_1
|^{1/2}\sum_s\ga^{-d/2}\Big\{\sum_{x\in\Delta\cap \mathfrak
L_\ga}\hat J_\ga*\big[\rho_{\Delta\cup B_1}+\chi^{(\und
k)}_{(\Delta\cup B_1)^c}-\rho^{(k)}\big]^2\Big\}^{1/2}
     \label{5.30b}
    \end{eqnarray}
We next observe that
    \begin{equation}
    \label{a5.30b}
\ga^{-d/2}\mathbf I_k((\Delta\cup B_1)^c)+\phi|\Delta\cup
B_1|=F^*_{\Delta\cup B_1,\la_\beta}(\chi^{(k)}_{\Delta\cup
B_1}|\chi^{(k)}_{(\Delta\cup B_1)^c})
    \end{equation}
Using again that $|\la_\beta-\la_{\beta,\ga}|\le \ga^{1/2}$ and
taking $\bar{C}>2$, from \eqref{5.28b}, \eqref{5.29b},\eqref{5.30b}
and \eqref{a5.30b} we get that
    \begin{equation}
    \label{aa5.33b}
    F^*_{\Delta\cup B_1,\la_{\beta,\ga}}(\rho_{\Delta\cup
B_1}|\chi^{(k)}_{(\Delta\cup B_1)^c})\ge F^*_{\Delta\cup
B_1,\la_{\beta,\ga}}(\chi^{(k)}_{\Delta\cup
B_1}|\chi^{(k)}_{(\Delta\cup B_1)^c})+ 2\bar C|\Delta\cup
B_1|\ga^{1/2}
    \end{equation}
    To estimate the other term in $f_{\Delta\cup B_1,t}$  we observe
that the function of $u=(u_1,\dots,u_S)$,
$-\frac{1}{\beta}\mathcal{S}(u)+h(u)$  is strictly convex and by
\eqref{aa2.47} and \eqref{2.3} the only minimum is in $\rho^{(k)}$.
Thus, calling
$\phi^*_k=-\frac{1}{\beta}\mathcal{S}(\rho^{(k)})+h(\rho^{(k)})$
there is $c^*$ such that
    \begin{eqnarray}
&&\hskip-4cm \ga^{-d/2}\sum_{\Delta\cup B_1\cap\mathfrak L_\ga}
\Big[\frac {-1}\beta \mathcal{S}(\rho)+h(\rho)\Big]\ge
\phi^*_k|\Delta\cup B_1|+c^*\sum_s \ga^{-d/2}\sum_{\Delta\cup
B_1\cap\mathfrak L_\ga}[\rho-\rho^{(k)}]^2
    \label{5.32b}
    \end{eqnarray}

For $\rho\in \bar{\mathcal A}^*_\Delta$, observing that the integral
over $\mathbb R^d\setminus\La_{00}\cap \mathfrak{L}_\ga$ in
\eqref{5.15b} can be restricted to a sum over $\Delta\cap
\mathfrak{L}_\ga$, and using that $\rho-\rho^{(k)}=0$ in
$(\Delta\cup B_1)^c$ and that $\hat J_\ga\star 1=1$, (see
\eqref{6.41y})  we get
    \begin{eqnarray}
    \nn
&&\hskip-2cm\frac{\bar{C}\ga^{1/4}|\Delta|}{|\Delta|^{1/2}}\le
\sum_s \Big\{\ga^{-d/2}\sum_{x\in\Delta \cap \mathfrak L_\ga}\hat
J_\ga*\big[\rho-\rho^{(k)}\big]^2\Big\}^{1/2}
\\&& \nn\le \sum_s
\Big\{\ga^{-d/2}\sum_{x\in\Delta \cap \mathfrak
L_\ga}\ga^{-d/2}\sum_{y\in \Delta \cup B_1}\hat
J_\ga(x,y)\big[\rho(y,s)-\rho^{(k)}\big]^2\Big\}^{1/2}
\\&& \le \sum_s
\Big\{\ga^{-d/2}\sum_{y\in \Delta \cup B_1}
\big[\rho(y,s)-\rho^{(k)}\big]^2\ga^{-d/2}\sum_{x\in \mathfrak
L_\ga}\hat J_\ga(x,y)\Big\}^{1/2}
    \label{5.33b}
    \end{eqnarray}
Using again that $\hat J_\ga* 1=1$ and provided $\bar C>2/c^*$, from
\eqref{5.32b} and \eqref{5.33b}, we get that
    \begin{equation}
    \label{5.34b}
 \ga^{-d/2} \sum_{\Delta\cup B_1\cap\mathfrak
L_\ga} \Big[\frac {-1}\beta \mathcal{S}(\rho)+h(\rho)\Big]\ge
\phi^*_k|\Delta\cup B_1|+ 2\bar{C}\ga^{1/2}|\Delta|,\qquad \forall
\rho\in \bar{\mathcal A}^*_\Delta
    \end{equation}
From \eqref{aa5.33b} and \eqref{5.34b}, we thus get for all
$\rho_\Delta\in \mathcal{A}_\Delta^c$
    \begin{eqnarray}
     \nn
&&\hskip-1cm f_{\Delta\cup B_1,t}(\rho_{\Delta\cup
B_1}|\chi^{(k)}_{(\Delta\cup B_1)^c})\ge tF^*_{\Delta\cup
B_1,\la_{\beta,\ga}}(\chi^{(k)}_{\Delta\cup
B_1}|\chi^{(k)}_{(\Delta\cup B_1)^c})\\&&\nn\hskip
3cm+(1-t)\sum_s\ga^{-d/2}\sum_{\Delta\cup B_1\cap\mathfrak
L_\ga}[\frac{-1}{\beta}\mathcal{S}(\rho^{(k)})+h(\rho^{(k)})]+
4\bar{C} |\Delta|\ga^{1/2}
\\&& \hskip3cm \geq  f_{\Delta\cup B_1,t}
(\chi^{(k)}_{\Delta\cup B_1}|\chi^{(k)}_{(\Delta\cup B_1)^c})+
4\bar{C} |\Delta|\ga^{1/2},
     \label{5.41b}
    \end{eqnarray}

\vskip .5cm

By estimating $ce^{-\omega \ell_+}\le c|\Delta|\ga^{1/2}$, and
calling
        $$\bar\rho_{\Delta\cup
B}^*:=\rho^{**}_{B\setminus (B_1\cup B_0)}+\chi^{(k)}_{\Delta\cup
B_1\cup B_0}
    $$
from \eqref{5.27b} and \eqref{5.41b} we get that for all $\rho\in
\bar{\mathcal{A}}^*_\Delta$, and $\bar C> c$,
    \begin{eqnarray}
    \nn
  && \hskip-1cm  \inf_{\rho\in \mathcal{\bar A}^*_\Delta} f_{\Delta\cup
B,t}(\rho|\bar \rho_{(\Delta\cup B)^c})\ge f_{B\setminus (B_1\cup
B_0),t}(\rho^{**}_{B\setminus (B_1\cup B_0)}|\chi^{(k)}_{B_0}+\bar
\rho_{\La_{00}\setminus B})
\\&&\nn\hskip2.5cm+ f_{\Delta\cup
B_1,t}(\chi^{(k)}_{\Delta\cup B_1}|\chi^{(k)}_{(\Delta\cup
B_1)^c})+f_{B_0,t}(\chi^{(k)}_{B_0}) + (4\bar{C}-c)\ga^{1/2}|\Delta|
\\&&\nn\hskip1cm \ge
f_{B\setminus (B_1\cup B_0),t}(\rho^{**}_{B\setminus (B_1\cup
B_0)}|\chi^{(k)}_{B_0}+\bar \rho_{\La_{00}\setminus B})+
f_{\Delta\cup B_1\cup B_0,t}(\chi^{(k)}_{\Delta\cup B_1\cup
B_0}|\chi^{(k)}_{(\Delta\cup B_1)^c})
\\&&\nn\hskip3cm+3\bar{C}\ga^{1/2}|\Delta|
\\&&\nn\hskip1cm \ge f_{\Delta\cup
B,t}(\bar\rho_{\Delta\cup B}^*|\bar \rho_{(\Delta\cup B)^c})
+3\bar{C}\ga^{1/2}|\Delta|,
     \label{5.31b}
      \end{eqnarray}

\vskip .5cm

Going back to \eqref{5.22b}-\eqref{5.23b} and observing that we can
choose $\bar{C}>\bar c$ so large that
    \begin{eqnarray*}
  -3\bar{C}\ga^{1/2}|\Delta|+ \bar c \ga^{1/2} |\Delta\cup B| +{| \Delta\cup
\delta_{\rm {out}}^{(2\ell_+)}(\Delta)|\ell^{-d}_+}\log\Big(\big[1+
e^{-c \zeta^2\ell_{-}^d}\big]\Big)\le - 2\bar{C}\ga^{1/2}|\Delta|,
   \end{eqnarray*}
we get
    \begin{equation}
\hat Z^{(k)}_{\Delta\cup
B,t}(\mathcal{A}_\Delta^c|\chi^{(k)}_{\La^c}) \le e^{ - 2\bar{C}
\ga^{1/2}|\Delta| }\int_{\mathcal{X}^{(k)}_{\La_{00}\setminus
B}}e^{-\beta H_{\La_{00}\setminus B,t}(q_{\La_{00}\setminus
 B})}\mathbf{X}^{(k)}_{\La\setminus \big(\Delta\cup \delta_{\rm
{out}}^{(2\ell_+)}(\Delta)\big)}e^{-\beta f_{\Delta\cup
B,t}(\bar\rho_{\Delta\cup B}^*|\bar \rho_{(\Delta\cup B)^c})}
       \label{5.36b}
        \end{equation}
Observe that the function $\bar\rho_{\Delta\cup B}^*$ is in the set
$\mathcal{X}^{(k)}_{\Delta\cup  B}$, thus from Theorem \ref{propA.2}
we get that
    \begin{eqnarray}
    %\nn
-\beta f_{\Delta\cup B,t}(\bar\rho_{\Delta\cup B}^*|\bar
\rho_{(\Delta\cup B)^c})
    \le \log \hat Z_{\Delta\cup
B}(\bar \rho_{(\Delta\cup B)^c})+ \bar c\ga^{1/2}|\Delta\cup B|
          \label{5.37b}
    \end{eqnarray}
 We next observe that the denominator in
\eqref{a5.17b} can be bounded using the first inequality in
\eqref{5.18b} getting (recall that $\bar \rho_{(\Delta\cup B)^c}$ is
the density corresponding to the configuration $q_{\La_{00}\setminus
B } \cup \chi^{(k)}_{\La^c}$)
    \begin{eqnarray}
\hskip-1cm Z^{(k)}_{\La,t}(\chi^{(k)}_{\La^c}) \ge
\int_{\mathcal{X}^{(k)}_{\La_{00}\setminus  B}}e^{-\beta
H_{\La_{00}\setminus B,t}(q_{\La_{00}\setminus
 B})}\mathbf{X}^{(k)}_{\La\setminus \big(\Delta\cup \delta_{\rm
{out}}^{(2\ell_+)}(\Delta)\big)} \hat Z_{\Delta\cup B}(\bar
\rho_{(\Delta\cup B)^c})
    \label{5.38b}
    \end{eqnarray}
Thus from \eqref{5.36b}, \eqref{5.37b}, \eqref{5.38b} using again
that $\bar C>\bar c$, we get
            \begin{equation}
     \label{8.43}
\hat Z^{(k)}_{\Delta\cup
B,t}(\mathcal{A}_\Delta^c|\chi^{(k)}_{\La^c}) \le e^{ - \bar{C}
\ga^{1/2}|\Delta| } Z^{(k)}_{\La,t}(\chi^{(k)}_{\La^c})
   \end{equation}
Using that there is $c$ such that $|\Delta|\le c |\La\setminus
\La_{00}|$,   \eqref{8.43}, \eqref{a5.17b} and \eqref{5.16b},
conclude the proof of \eqref{5.11b}.

The proof of \eqref{a5.11b} is similar: with $\Delta$ as in
\eqref{a5.18b} and $\mathcal{A}_\Delta$ as in \eqref{5.15b}, we get
that
    \begin{eqnarray*}
 &&  \hskip-2cm  \Big|\sum_s\ga^{-d/2}\sum_{x\in \La\setminus\La_{00}\cap
\mathfrak{L}_\ga }\mathbb{E}_{\La,t}\Big(\hat
J_\ga\star[\rho_\La(x,s,q_\La) - \rho^{(k)}(s)] \Big)\Big| \le
c[\ga^{1/4}+ \mu_{\La,t,k}(\mathcal{A}_\Delta^c)] \,|\Delta|
\\&&\hskip4cm \le c\,\,|\Delta|\,[\ga^{1/4}+ e^{-\bar{C}\ga^{1/2}|\Delta|}],
    \end{eqnarray*}
 \qed

\bigskip

%%%%%%%%%%%%%%%%%%%%%%%%%%%%%%%%%%%%%%%%%%%%%
\subsection{Infinite volume limit}
    \label{sec:5.3}
%%%%%%%%%%%%%%%%%%%%%%%%%%%%%%%%%%%%%%%%%%%%%

 The following result will be used to control the
first term on the r.h.s. of \eqref{aa2.53}.

        \begin{thm}
    \label{thm:press}
For any van Hove sequence $\La_n\to\mathbb{R}^d$ of $\mathcal
D^{\ell_+}$- measurable region the following limit exists
        \begin{equation}
        \label{product}
\lim_{n\to\infty} \frac{1}{\beta|\La_n|} \log
Z^{(k)}_{\La_n,0}=:p_{0}^{(k)}
        \end{equation}
Furthermore for any $c_{\rm pol}<c_w$ there is $c$ and for any $\ga$ small enough
    \begin{equation}
        \label{a5.45b}
|R_\La|\le c \big|\delta_{\rm in}^{\ell_+}[\La]\big|\,e^{-c_{\rm
pol}\ell^d_-},\qquad R_\La:=\frac{1}{\beta} \log
Z^{(k)}_{\La,0}-|\La|p_{0}^{(k)}
        \end{equation}
for any $\mathcal
D^{\ell_+}$- measurable region $\La$
\end{thm}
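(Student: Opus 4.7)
The plan is to isolate a factorized ``no-contour'' base from the exponentially suppressed contour activity, and to control the latter by a Kotecky--Preiss cluster expansion that converges thanks to the Peierls bound automatically satisfied by the cut-off weights. The key structural input is that $h(\rho_\La)$ is a sum of one-body terms indexed by $x\in\La\cap\mathfrak L_\ga$, while the indicator of $\mathcal X_\La^{(k)}$ is a conjunction of constraints on the densities averaged over $\mathcal D^{(\ell_-)}$-cubes of $\La$. Since $\La$ is $\mathcal D^{(\ell_+)}$-measurable and $\ell_+$ is a multiple of $\ell_-$, $\La$ is $\ell_-$-measurable and both $h$ and the constraint factorize over $\ell_-$-cubes. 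Hence the ``base'' partition function
\begin{equation*}
\tilde Z^{(k)}_{\La,0}:=\int_{\mathcal X_\La^{(k)}}e^{-\beta h(\rho_\La)}\,\nu(dq_\La)
\end{equation*}
equals $z_0^{|\La|/\ell_-^d}$, where $z_0$ is the single-$\ell_-$-cube partition function, independent of the cube by translation invariance of $h$. This produces an \emph{exact} bulk contribution $\tilde p^{(k)}:=(\beta\ell_-^d)^{-1}\log z_0$ with no boundary error.

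Next, expand $\mathbf X^{(k)}_{\La}=\sum_{\und\Ga\in\mathcal B^{k}_\La}\prod_i W(\Ga_i|q_\La)$. Each $W(\Ga|q)$ satisfies the Peierls bound $W(\Ga|q)\le e^{-\beta c_w\zeta^2\ell_-^d N_\Ga}$ and depends on $q$ only in the $2\ga^{-1}$-skin outside $c(\Ga)$. Dividing by $\tilde Z^{(k)}_{\La,0}$ and using that the base expectation $\mathbb E_{\tilde Z}$ factorizes over $\ell_-$-cubes, the ratio $Z^{(k)}_{\La,0}/\tilde Z^{(k)}_{\La,0}$ rewrites as a polymer partition function whose polymers are the individual contours $\Ga$ (with supports enlarged by $2\ga^{-1}$, so as to contain all the $q$-dependence of $W(\Ga|\cdot)$), and whose activities
\begin{equation*}
\hat\zeta(\Ga):=\mathbb E_{\tilde Z}\big[W(\Ga|q)\big],\qquad |\hat\zeta(\Ga)|\le e^{-\beta c_w\zeta^2\ell_-^d N_\Ga},
\end{equation*}
inherit the Peierls bound. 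Since the enlarged supports of distinct contours in a valid $\und\Ga$ remain mutually disjoint (because $2\ga^{-1}\ll\ell_+$), the joint expectation factorizes into $\prod_i\hat\zeta(\Ga_i)$, giving a genuine polymer representation.

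Now apply a standard cluster expansion. The number of abstract contours with $N_\Ga=N$ rooted at a given $\mathcal D^{(\ell_+)}$-cube is bounded by $K^N$ for a constant $K=K(d,S)$ (from the $(S+2)^{(\ell_+/\ell_-)^d N}$ possible $\eta_\Ga$-labellings and the polyomino count for $\text{sp}(\Ga)$, as in the proof of Theorem \ref{thmz3.1}). For all $\ga$ small, $\beta c_w\zeta^2\ell_-^d$ dominates $\log K$ by an arbitrarily large margin, so for any $c_{\rm pol}<c_w$ the Kotecky--Preiss criterion holds and
\begin{equation*}
\log\frac{Z^{(k)}_{\La,0}}{\tilde Z^{(k)}_{\La,0}}=\sum_{X}\phi_T(X),\qquad |\phi_T(X)|\le e^{-\beta c_{\rm pol}\zeta^2\ell_-^d N_X}.
\end{equation*}
Splitting the clusters into those deep in $\La$ (which by translation invariance contribute $|\La|\cdot q^{(k)}$ for a shape-independent $q^{(k)}$) and those intersecting $\delta^{\ell_+}_{\rm in}[\La]$ (each weighing at most $e^{-\beta c_{\rm pol}\zeta^2\ell_-^d}$ and numbering $O(|\delta^{\ell_+}_{\rm in}[\La]|/\ell_+^d)$), I set $p^{(k)}_0:=\tilde p^{(k)}+q^{(k)}$ and, after relabeling the constant in the exponent to fit the stated form, obtain the bound on $R_\La$. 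Existence of the van Hove limit is then immediate.

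The main technical obstacle is the polymer reduction in the second step: $W(\Ga|q)$ is a genuine functional of $q$ on its $2\ga^{-1}$-skin, so the polymers are not literally independent. The remedy is to enlarge each polymer's support to include that skin, which preserves the Peierls bound up to a constant multiplier in $N_\Ga$ since $2\ga^{-1}\ll\ell_+$, at the price of replacing $c_w$ by a slightly smaller constant still strictly above $c_{\rm pol}$. This is routine Pirogov--Sinai bookkeeping of the sort carried out explicitly in Chapter 11 of \cite{leipzig}.
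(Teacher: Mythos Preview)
Your proposal is correct and follows essentially the same route as the paper's proof: factor out the product base measure $\tilde Z^{(k)}_{\La,0}$ (the paper's $Z^{(k,0)}_{\La,0}$), observe that the expectation of $\prod_i W(\Ga_i|q)$ factorizes into $\prod_i\psi^{(k)}(\Ga_i)$ because the $2\ga^{-1}$-skins of disconnected $\ell_+$-supports are automatically disjoint, and then run a standard cluster expansion on the resulting polymer gas to extract a translation-invariant bulk pressure plus a boundary remainder controlled by \eqref{8z.3.2.17}. Your closing worry about ``enlarging polymers'' is unnecessary: since disconnected $\mathcal D^{(\ell_+)}$-supports are at distance $\ge\ell_+\gg 4\ga^{-1}$, the skins are already disjoint and the factorization of the base expectation holds without any modification of the polymer supports or any loss in the Peierls constant.
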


\vskip.5cm

{\bf Proof.} For any $\mathcal D^{(\ell_+)}$ measurable region $\La$
we write $Z_{\La,0}^{(k)} = Z_{\La,0}^{(k,0)} \mathbb E_{\La,0}
({\mathbf X}^{(k)}_{\La,\la_{\beta,\ga}})$ where $ \mathbb
E_{\La,0}$ denotes expectation w.r.t.\ $\dis{d\mu_\La:=\frac
{e^{-\beta h(\rho_{\La})}} {Z_{\La,0}^{(k,0)}} \nu(dq_{\La})}$ and
$\dis{Z_{\La,0}^{(k,0)}= \int_{\mathcal{X}_{\La}^{(k)}} e^{-\beta
h(\rho_{\La})} \nu(dq_{\La})}$.

Let $\und\Ga\in \mathcal  B^{k}_{\La}$. Since $\dis{d\mu_\La = \prod_{C^{(\ell_-)}\subset \La}d\mu_{C^{(\ell_-)}}}$ and since
 for
$\Ga\ne \Ga'$ in $\und \Ga$,
$\dis{\delta_{\rm out}^{2\ga^{-1}}[{\rm sp}(\Ga)]\cap \delta_{\rm out}^{2\ga^{-1}}[{\rm sp}(\Ga')]=\emptyset}$
    \begin{equation}
       \label{5.2.3}
 \mathbb E_{\La,0}\Big( \prod_{\Ga\in \und \Ga}W_{\la_{\beta,\ga}}^{(k)}(\Ga|q) \Big)=
\prod_{\Ga\in \und \Ga} \mathbb E_{\La,0}\Big( W_{\la_{\beta,\ga}}^{(k)}(\Ga|q) \Big)=:\prod_{\Ga\in \und \Ga} \psi^{(k)}(\Ga)
        \end{equation}
We  have
   \begin{equation}
       \label{5.2.3a}
 \psi^{(k)}(\Ga) \le e^{-c_w \zeta^2 \ell_-^d N_\Ga}
        \end{equation}
because $\psi^{(k)}(\Ga)$ is the expectation of $W^{(k)}(\Ga|q) $ which satisfies the same bound independently of
$q$.  By \eqref{5.2.3a} as a direct consequence of the cluster expansion, see for instance Theorem 11.4.3.1 of \cite{leipzig}, for all $\ga$ small enough
    \begin{equation}
       \label{5.2.3b}
 \sum_{\und\Ga\in \mathcal
             B^{k}_{\La}} \prod_{\Ga\in \und \Ga} \psi^{(k)}(\Ga) = e^{-\beta K^{(k)}_\ga(\La)}
        \end{equation}
 where the ``hamiltonian''     $K^{(k)}_\ga(\La)$ can be written as
$\dis{ K^{(k)}_\ga(\La) = \sum_{\Delta  \subseteq \La}
U^{(k)}_{\ga,\Delta}}$, $\Delta$ ranging over the connected
$\mathcal D^{(\ell_+)}$-measurable sets; the  potentials
$U^{(k)}_{\ga,\Delta}$ are translational invariant (in
$\ell_+\mathbb Z^d$) and satisfy the bound: for any  $c_{\rm pol}<
c_w$ and for any $b>0$
     \begin{equation}
       \label{8z.3.2.17}
\beta  \sum_{\Delta \ni r} e^{b N_\Delta} |U^{(k)}_{\ga,\Delta}
|  \leq e^{-\beta c_{\rm pol} (\zeta^2
\ell_{-}^d)},\quad r\in \La
    \end{equation}
for all   $\ga$
  small   enough ($N_\Delta$ being the number of $\mathcal D^{(\ell_+)}$ cubes in $\Delta$).     We are now ready to conclude the proof of Theorem \ref{thm:press} that we will prove with
            \begin{equation}
       \label{5.2.3c}
p_0^{(k)} = \frac {\log Z_{C^{(\ell_-)}_0,0}^{(k,0)}}{\beta |C^{(\ell_-)}_0|} - \ell_+^{-d} \sum
_{\Delta \ni 0}  \frac{U^{(k)}_{\ga,\Delta}}{N_\Delta}
        \end{equation}
because $\dis{\frac {\log Z_{C^{(\ell_-)}_0,0}^{(k,0)}}{\beta |C^{(\ell_-)}_0|}= \frac {\log Z_{\La,0}^{(k,0)}}{\beta |\La|}}$.  We have
$\dis{|R_\La| \le \sum_{C^{(\ell_+)}\in \delta_{\rm out}^{\ell_+}[\La]} \sum_{\Delta\supset C^{(\ell_+)}}|U^{(k)}_{\ga,\Delta}|}$
        such that \eqref{a5.45b} follows from \eqref{8z.3.2.17}.
        \qed

\bigskip

The next results are the main tools for dealing with the ''bulk''
part of the expectation on the  integral on the r.h.s. of
\eqref{aa2.53}.

\medskip
 The
following theorem is a corollary of Theorem 3.1 of \cite{DMPV2}
whose statement is given in the proof below.

\bigskip

\begin{thm}
        \label{thm:infgibbs}
There are $\ga_0$, $\om>0$ and $c$ such that for all $\ga\le \ga_0$
and all $t\in [0,1]$ there is a probability measure
$\mu_{\infty,t,k}$ on $\mathcal{X}^{(k)}$ which is invariant under
translations  in $\ell_+\mathbb{Z}^d$ and such that the following
holds. For any bounded $\mathcal D^{\ell_+}$- measurable region
$\La\in \mathbb{R}^d$ and for any $\Delta\subset \La_{00}$,
($\La_{00}$ is defined in \eqref{a5.10b}) and any cylinder function
$f$ with basis in $\Delta$,
            \begin{equation}
     \label{5.46b}
|\mathbb{E}_{\La,t,k}(f)-\mathbb{E}_{\infty,t,k}(f)|\le c
\|f\|_\infty \,\,e^{-\omega\ga^{1+\alpha_+} {\rm
dist}(\Delta_{\ell_+},\La^c)}
    \end{equation}
 where $\Delta_{\ell_+}$ is the smallest $\mathcal
 D^{(\ell_+)}$-measurable set that contains $\Delta$ and
where $\mathbb{E}_{\La,t,k}$, respectively $\mathbb{E}_{\infty,t,k}$
denote the expectation w.r.t $\mu_{\La,t,k}$, respectively
$\mu_{\infty,t,k}$.
        \end{thm}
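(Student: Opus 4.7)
The plan is to transcribe the argument used in Subsections \ref{sec:z4} and \ref{sec:z5} (existence of $\mu^{(k)}$ and its relativized uniqueness) with the interpolated Hamiltonian $H_{\La,t}$ in place of $H_\La$, thereby obtaining both the limit measure $\mu_{\infty,t,k}$ and the decay estimate \eqref{5.46b} in one shot. The key observation that makes this transcription work is that the extra term $(1-t)h(\rho_\La)$ appearing in \eqref{aa2.51} is a sum of single-site contributions on the $\ga^{-1/2}$-lattice, amounting to a species-dependent shift of the chemical potential. It does not affect the decomposition of $\La$ into contours and their exterior, it respects the spatial factorization over $\mathcal D^{(\ell_+)}$-cubes on which the contour expansion is built, and the contour weights $\mathbf X^{(k)}_{\La,\la_{\beta,\ga}}$ and true weights $W^{k,\text{true}}(\Ga|q)$ entering \eqref{aa2.52} are unchanged, so the Peierls bound \eqref{z3.4} continues to apply.

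\textbf{Contour representation and coupling.} First I would establish the analogue of Theorem \ref{thm14n.2.1-1} for $\mu_{\La,t,k}$: a representation
\[
\mathbb E_{\La,t,k}(f) \;=\; \sum_{\und\Ga\in\mathcal B^k_\La} \int F(q_\La;\phi_{\rm ext}(\und\Ga))\, dp^{(k)}_{t,\La}(q_\La,\und\Ga),
\]
where $dp^{(k)}_{t,\La}$ differs from \eqref{z5.0.9} only by the Radon-Nikodym factor $e^{-\beta(1-t)h(\rho_\La)+\beta(1-t)H_\La}$ packaged inside the restricted-ensemble measure. Because this factor is local on the $\mathfrak L_\ga$ scale and the true weights satisfy \eqref{z3.4}, the hypotheses of Theorem 3.1 of \cite{DMPV2} remain in force uniformly in $t\in[0,1]$: the exponential decay in the restricted ensemble used there depends only on the Peierls bound for the contour weights and on the Dobrushin-type mixing of the restricted measure, and the latter is only improved by the extra convex single-site term (at $t=0$ the restricted measure factorizes over $\mathcal D^{(\ga^{-1/2})}$-cubes and mixing is trivial). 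This furnishes, for any two $\mathcal D^{(\ell_+)}$-measurable $\La_1\supset\La_2\supset \Delta_{\ell_+}$, a coupling $d\mathcal P^{(t)}$ of $dp^{(k)}_{t,\La_1}$ and $dp^{(k)}_{t,\La_2}$ under which, on an event of probability $\ge 1-|\Delta|e^{-\om\ga^{1+\alpha_+}\mathrm{dist}(\Delta_{\ell_+},(\La_1\cap\La_2)^c)}$, both configurations restrict to the same function on a $\mathcal D^{(\ell_+)}$-region containing $\Delta_{\ell_+}$. Arguing exactly as in the proof of Theorem \ref{thmz5.0.1} then yields
\[
|\mathbb E_{\La_1,t,k}(f)-\mathbb E_{\La_2,t,k}(f)|\;\le\; c\|f\|_\infty|\Delta|\,e^{-\om\ga^{1+\alpha_+}\mathrm{dist}(\Delta_{\ell_+},(\La_1\cap\La_2)^c)}.
\]

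\textbf{Limit and translation invariance.} For an increasing sequence $\La_n\nearrow\mathbb R^d$ of $\mathcal D^{(\ell_+)}$-measurable sets the previous estimate shows $\{\mathbb E_{\La_n,t,k}(f)\}$ is Cauchy for every cylinder $f$, and the limit is independent of the sequence; by a standard Kolmogorov consistency argument (cf.\ the proof of Theorem \ref{thmz4.0.2}, where one also checks $M^0$-type tightness bounds as in \eqref{z4.0.4} to secure that the limiting functional defines a probability measure supported on $\mathcal X^{(k)}$) one obtains the measure $\mu_{\infty,t,k}$. The estimate \eqref{5.46b} follows by letting $\La_2\to\mathbb R^d$ in the coupling bound, recalling that $\mathrm{dist}(\Delta_{\ell_+},\La^c)\le\mathrm{dist}(\Delta,\La^c)+\ell_+$ and $\ga\cdot\ell_+$ is bounded. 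Translation invariance under $\ell_+\mathbb Z^d$ is immediate: for $i\in\ell_+\mathbb Z^d$ the Hamiltonian $H_{\La,t}$, the set $\mathcal X^{(k)}_\La$ and the contour weights are all translation-equivariant, so $\mu_{\tau_i\La_n,t,k}=\tau_i\mu_{\La_n,t,k}$; since both sides converge (the left to $\mu_{\infty,t,k}$, the right to $\tau_i\mu_{\infty,t,k}$) one concludes $\tau_i\mu_{\infty,t,k}=\mu_{\infty,t,k}$.

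\textbf{Main obstacle.} The delicate point is verifying that the proof of Theorem 3.1 of \cite{DMPV2}, which is formulated for the pure model at $t=1$, extends to the full range $t\in[0,1]$ without a loss in the rate $\om$. One must inspect the Dobrushin-type couplings constructed there and check that the single-site perturbation $(1-t)h(\rho_\La)$ only strengthens the contractive estimates on restricted-ensemble conditional distributions (intuitively, it pushes the conditional mean further toward the mean-field values $\rho^{(k)}_s$). Once this uniformity in $t$ is established, all subsequent arguments are straightforward analogues of those in Sections \ref{sec:z4}--\ref{sec:z5}.
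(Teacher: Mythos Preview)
Your proposal is correct and follows essentially the same route as the paper: invoke the coupling of Theorem~3.1 of \cite{DMPV2} for the interpolated measures $\mu_{\La,t,k}$, deduce a two-volume comparison of expectations, pass to the limit along an exhaustion to construct $\mu_{\infty,t,k}$, and read off $\ell_+\mathbb Z^d$-invariance from the translation-equivariance of all ingredients. The paper is terser---it simply asserts that Theorem~3.1 of \cite{DMPV2} already covers the interpolated setting and quotes the coupling bound \eqref{e3.6.1} directly---whereas you spell out why the single-site perturbation $(1-t)h(\rho_\La)$ does not spoil the hypotheses; your ``main obstacle'' is exactly the point the paper leaves implicit. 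One small cosmetic mismatch: your intermediate coupling bound carries a prefactor $|\Delta|$ (as in Theorem~\ref{thmz5.0.1}), while the paper's quoted form \eqref{e3.6.1} and the final statement \eqref{5.46b} do not; this is harmless here since in the application $\Delta$ sits inside a fixed $\mathcal D^{(\ell_+)}$-cube, so the factor can be absorbed into the constant $c$.
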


\vskip.5cm

{\bf Proof.} In Theorem 3.1 of \cite{DMPV2} it has been proved that
for any bounded, $\mathcal D^{(\ell_+)}$-measurable regions $\La$
and $\La'\supset \La$ and any boundary conditions ${\bar q}_{\La^c}$
and ${\bar q}'_{{\La'}^c}$ the following holds. Let
$\mu_{\La,t,k}(dq_\La|{\bar q}_{\La^c})$ and
$\mu_{\La',t,k}(dq_{\La'}|{\bar q}'_{{\La'}^c})$ be the
probabilities on $\mathcal{X}_\La^{(k)}$ and respectively on
$\mathcal{X}_{\La'}^{(k)}$ defined as in \eqref{aa2.54} but with the
boundary conditions ${\bar q}_{\La^c}$ and ${\bar q}'_{{\La'}^c}$
instead of $\chi^{(k)}$. Then there is a coupling $dQ$ of
$\mu_{\La,t,k}(dq_\La|{\bar q}_{\La^c})$ and
$\mu_{\La',t,k}(dq_{\La'}|{\bar q}'_{{\La'}^c})$ such that if
$\Delta$ is any $\mathcal D^{(\ell_+)}$-measurable subset of $\La$:
    \begin{equation}
      \label{e3.6.1}
      \dis{
  Q\Big(\{\text{\,$(q'_\La,\und \Ga')$ and $(q''_{\La'},\und \Ga'')$
  agree in $ \Delta$} \}\Big)
  \ge 1- c_1 e^{- c_2 \frac {{\rm
  dist}(\Delta,\La^c)}{\ell_+}}}
     \end{equation}
where $(q,\und \Ga)$ agrees with $(q',\und \Ga')$ in $\Delta$
 if all
$\Ga\in \und \Ga$  such that the closure of sp$(\Ga)$ intersects
$\Delta$ are also in $\und \Ga'$ and viceversa and moreover
    \begin{equation}
      \label{e3.6.0}
q\cap \Delta^* = q'\cap \Delta^*,\quad \Delta^*:= \Delta
\bigcup_{\Ga\in \und\Ga} \{{\rm sp}(\Ga)\cup \delta_{\rm
out}^{(\ell_+)} [{\rm sp}(\Ga)]\}
    \end{equation}
Inequality \eqref{e3.6.1} implies that for all cylinder functions
$f$ with basis $\Delta\subset\La$, for $\Delta_{\ell_+}\supset
\Delta$ as in the statement of the Theorem,
    \begin{eqnarray}
    \nn
&&|\mathbb{E} _{\La,t,k}(f|\bar q_{\La^c})-\mathbb{E}
_{\La',t,k}(f|\bar q_{\La'^c})|\le c \|f\|_\infty
Q\Big(\{\text{\,$(q'_\La,\und \Ga')$ and $(q''_{\La'},\und \Ga'')$
 do not agree in $ \Delta_{\ell_+}$} \}\Big)
    \\&&\hskip4cm
 \le c
\|f\|_\infty e^{-\omega\ga^{1+\alpha_+} \,{\rm
dist}(\Delta_{\ell_+},\La^c)}
    \label{aa2.56}
    \end{eqnarray}
Then for any sequence $\La_n\to\mathbb{R}^d$ of $\mathcal
D^{\ell_+}$- measurable region, and for any $f$ as above, the
sequence $\{\mu_{\La_n,t,k}(f|{\bar q}_{\La_n^c})\}$ is a Cauchy
sequence and the limit $\mu_{\infty,t,k}(f)$ defines a probability
on $\mathcal{X}^{(k)}$ which is invariant under translations  in
$\ell_+\mathbb{Z}^d$. From the uniformity on the boundary conditions
defining $\mu_{\La',t,k}(f|\bar q_{\La'\setminus\La}\cup \bar
q'_{\La'^c})$ it follows that for any $\La_n\supset\La$ ($\La_n$ an
element of the sequence defining $\mu_{\infty,t,k}$)
    \begin{equation*}
     %\label{aa2.56}
|\mathbb{E}_{\La,t,k}(f|\bar
q_{\La^c})-\mathbb{E}_{\La_n,t,k}(f)|\le \|f\|_\infty
e^{-\omega\ga^{1+\alpha_+} \,{\rm dist}(\Delta_{\ell_+},\La^c)}
    \end{equation*}
and this implies \eqref{5.46b}.\qed

\bigskip
We will use the following consequence of Theorem \ref{thm:infgibbs}.

    \begin{coro}
    \label{thm:la00}
For any $\mathcal D^{\ell_+}$- measurable region $\La\subset
\mathbb{R}^d$, for any $r\in \La_{00}$
    \begin{equation}
    \label{5.44b}
\Big| \mathbb{E}_{\La,t,k}\big(e^{\rm mf}\big(J_\ga\star q_\La
\big)(r)\big)-\mathbb{E}_{\infty,t,k}\big(e^{\rm mf}\big(J_\ga\star
q_\La \big)(r)\big)\Big|\le c \,\,e^{-\omega\ga^{1+\alpha_+}\, {\rm
dist}(r,\La^c)}
    \end{equation}
where $e^{\rm mf}$ is defined in \eqref{emf}.
        \end{coro}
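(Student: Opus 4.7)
The function $f(q):=e^{\rm mf}(J_\ga\star q_\La)(r)$ is cylindrical in $B:=B_{\ga^{-1}}(r)$, the ball of radius $\ga^{-1}$ around $r$, since the interaction kernel $J_\ga$ has range $\ga^{-1}$. For $r\in\La_{00}$ we have $\mathrm{dist}(r,\La^c)\gg\ga^{-1}$, so $B\subset\La$; letting $\Delta$ be the smallest $\mathcal D^{(\ell_+)}$-measurable set containing $B$, we have $\Delta\subset\La_{00}$ (with a harmless constant-size adjustment if $r$ sits near $\partial\La_{00}$, which is permissible since $\ell^*_+\gg\ell_+$) and $\mathrm{dist}(\Delta,\La^c)\ge \mathrm{dist}(r,\La^c)-C\ell_+$. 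The plan is to invoke Theorem \ref{thm:infgibbs} with this $\Delta$; the only obstruction is that $f$ is not bounded a priori, being quadratic in the particle counts in $B$.

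The required boundedness is recovered by exploiting the common support of the two measures. Both $\mu_{\La,t,k}$ and $\mu_{\infty,t,k}$ charge only configurations in $\mathcal X^{(k)}$, on which every $\ell_-$-cube holds at most $(\rho^{(k)}_s+\zeta)\ell_-^d$ particles of spin $s$. Since $J_\ga$ is a probability kernel varying on the scale $\ga^{-1}\gg\ell_-$, averaging yields
\begin{equation*}
J_\ga\star q_\La(r,s)\;\le\;\rho^{(k)}_s+\zeta+O(\ga\ell_-)\;\le\;\rho^{(k)}_s+2\zeta\qquad (\ga\text{ small}),
\end{equation*}
so $|f(q)|\le M:=\tfrac12\sum_{s\ne s'}(\rho^{(k)}_s+2\zeta)(\rho^{(k)}_{s'}+2\zeta)$ on that support.

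To formalize, let $\tilde\Delta$ be the union of $\ell_-$-cubes meeting $B$ and set $\tilde f(q):=f(q)\,\mathbf{1}_{\{\eta^{(\zeta,\ell_-)}(y;q)=k\text{ for all }y\in\tilde\Delta\}}$. Then $\tilde f$ is a bounded cylinder function with $\|\tilde f\|_\infty\le M$ and basis $\tilde\Delta\subset\Delta\subset\La_{00}$; moreover $\mathbb{E}_{\La,t,k}(\tilde f)=\mathbb{E}_{\La,t,k}(f)$ and $\mathbb{E}_{\infty,t,k}(\tilde f)=\mathbb{E}_{\infty,t,k}(f)$, since $(\mathcal X^{(k)})^c$ has measure zero under both. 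Applying Theorem \ref{thm:infgibbs} to $\tilde f$ gives
\begin{equation*}
|\mathbb{E}_{\La,t,k}(f)-\mathbb{E}_{\infty,t,k}(f)|\;\le\; cM\,e^{-\omega\ga^{1+\alpha_+}\mathrm{dist}(\Delta_{\ell_+},\La^c)}\;\le\; c'\,e^{-\omega\ga^{1+\alpha_+}\mathrm{dist}(r,\La^c)},
\end{equation*}
the last step absorbing the factor $e^{\omega C\ga^{1+\alpha_+}\ell_+}$ into the constant via the identity $\ga^{1+\alpha_+}\ell_+=1$. This yields \eqref{5.44b}; the only real difficulty, the a priori unboundedness of $e^{\rm mf}$, was dissolved by the truncation to $\mathcal X^{(k)}$ above, and everything else is a direct reduction to the exponential decay of correlations provided by Theorem \ref{thm:infgibbs}.
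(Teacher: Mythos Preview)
Your proof is correct and follows essentially the same route as the paper: both arguments identify $e^{\rm mf}(J_\ga\star q_\La)(r)$ as a cylinder function in a neighborhood of $r$, observe that on the common support $\mathcal X^{(k)}$ it is uniformly bounded (the paper gets the cruder bound $\|J\|_\infty(\rho^{(k)}_s+\zeta)$ via \eqref{5.45b}, you get the slightly sharper $\rho^{(k)}_s+2\zeta$), apply Theorem \ref{thm:infgibbs}, and then absorb the $\ell_+$-shift in the distance into the constant. Your explicit truncation $\tilde f$ makes rigorous what the paper leaves implicit, but the substance is identical.
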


\vskip.5cm

{\bf Proof.} We denote by  $\Delta_r$ the smallest  $\mathcal
D^{\ell_+}$- measurable set that contains the set $\{r':|r-r'|\le
2\ga^{-1}\}$. We have that
    \begin{equation}
    \label{5.45b}
\sup_{q_\La\in\mathcal{X}_\La^{(k)}} J_\ga\star q_\La(r,s)\le
\ga^d\|J\|_\infty
\,\ell_-^d(\rho^{(k)}_s+\zeta)\,\frac{\ga^{-d}}{\ell_-^d}
    \end{equation}
where $\ell_-^d(\rho^{(k)}_s+\zeta)$ is a bound for the number of
particles in a cube of $\mathcal D^{\ell_-}$ and
$\dis{\frac{\ga^{-d}}{\ell_-^d}}$ is a bound for the number of cubes
in $\mathcal D^{\ell_-}$ that intersect the set $\Delta_r$. Thus
from \eqref{5.46b} and \eqref{5.45b} we get
    \begin{equation}
     \label{5.50b}
     \Big| \mathbb{E}_{\La,t,k}\big(e^{\rm mf}\big(J_\ga\star q_\La
\big)(r)\big)-\mathbb{E}_{\infty,t,k}\big(e^{\rm mf}\big(J_\ga\star
q_\La \big)(r)\big)\Big| \le c \,\,e^{-\omega\ga^{1+\alpha_+} {\rm
dist}(\Delta_{r},\La^c)}
    \end{equation}
Since ${\rm dist}(\Delta_{r},\La^c)\ge {\rm dist}(r,\La^c)-
3\ell_+$, \eqref{5.44b} follows from \eqref{5.50b}. \qed

%%%%%%%%%%%%%%%%%%%%%%%%%%%%%%%%%%%%%%%%%%%%%
\subsection{Proof of Theorem \ref{thm:2.14}}
    \label{sec:7n}
%%%%%%%%%%%%%%%%%%%%%%%%%%%%%%%%%%%%%%%%%%%%%

As a consequence of Theorem \ref{thm:press} and Corollary
\ref{thm:la00} we have the following result.

        \begin{thm}
        \label{thm5.5}
Let $P_{\la_{\beta,\ga}}$ and $p_{0}^{(k)}$ be the pressures defined
in Theorem \ref{plabeta} and Theorem \ref{thm:press} respectively.
Then
        \begin{equation}
        \label{eqn:infty}
p_{0}^{(k)} = P_{\la_{\beta,\ga}} +  \int_{0}^{1}
\mathbb{E}^*_{\infty,t,k}(A) dt
        \end{equation}
where,
        \begin{equation*}
A(q)  = \mintone{C_{0}^{(\ell_+)}}  e^{\rm mf}[J_\ga \star q] -
\sum_{s}(\varphi_{k}(s)+\la_{\beta,\ga}) \frac{| q(s)\cap
C_{0}^{(\ell_+)} |}{\ell_+^d}.
        \end{equation*}
with
     \begin{equation}
        \label{var}
 \varphi_{k}(s)= r^{(k)}_s\big)-\la_\beta,\qquad
  r^{(k)}_s=\sum_{s'\ne s}\rho^{(k)}_{s'}
     \end{equation}

\end{thm}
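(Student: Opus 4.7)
The plan is to divide the interpolation identity \eqref{aa2.53} by $\beta|\La_n|$ along a van Hove sequence $\{\La_n\}$ of $\mathcal D^{(\ell_+)}$-measurable regions and pass to the limit. By Theorem \ref{plabeta} the left-hand side converges to $P_{\la_{\beta,\ga}}$, and by Theorem \ref{thm:press} the term $|\La_n|^{-1}\beta^{-1}\log Z^{(k)}_{\La_n,0}$ converges to $p_0^{(k)}$. It then remains to show that
\[\lim_{n\to\infty}\frac{1}{|\La_n|}\mathbb E_{\La_n,t}\big(H_{\La_n}(q_{\La_n}|\chi^{(k)}_{\La_n^c})-h(\rho_{\La_n})\big)=\mathbb E^*_{\infty,t,k}(A)\]
uniformly enough in $t\in[0,1]$ to interchange $\lim_n$ with $\int_0^1 dt$.

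The first step is an algebraic rewriting. Since $J_\ga$ is a probability kernel, $\int J_\ga\star f\,dr=\int f\,dr$; combining this with the identification $\ga^{-d/2}\sum_{x\in\La\cap\mathfrak L_\ga}\rho_\La(x,s)=|q_\La(s)|$ and the definition \eqref{var} of $\varphi_k$, one finds
\[H_\La(q_\La|\chi^{(k)}_{\La^c})-h(\rho_\La)=\int\!\big[e^{\rm mf}(J_\ga\star q)-e^{\rm mf}(J_\ga\star\chi^{(k)}_{\La^c})\big]dr-\sum_s(\varphi_k(s)+\la_{\beta,\ga})|q_\La(s)|,\]
with $q:=q_\La\cup\chi^{(k)}_{\La^c}$. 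I then decompose $\La=\La_{00}\cup(\La\setminus\La_{00})$; since $\text{dist}(\La_{00},\La^c)\ge 2\ell_+^*\gg\ga^{-1}$, the term $e^{\rm mf}(J_\ga\star\chi^{(k)}_{\La^c})$ vanishes on $\La_{00}$, and the bulk part of the first integral reduces to $\int_{\La_{00}}e^{\rm mf}(J_\ga\star q_\La)\,dr$.

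For the bulk, Corollary \ref{thm:la00} gives, for any $r\in\La_{00}$,
\[\big|\mathbb E_{\La_n,t}(e^{\rm mf}(J_\ga\star q_\La)(r))-\mathbb E_{\infty,t,k}(e^{\rm mf}(J_\ga\star q)(r))\big|\le c\,e^{-\om\ga^{1+\alpha_+}\text{dist}(r,\La_n^c)},\]
and an analogous estimate, obtained by applying \eqref{5.46b} to the bounded cylinder function $|q(s)\cap C^{(\ell_+)}_0|/\ell_+^d$, handles the density observable. Integrating over $\La_{00}$, dividing by $|\La_n|$ and using invariance of $\mu_{\infty,t,k}$ under $\ell_+\mathbb Z^d$-translations, the bulk contribution converges to
\[\mathbb E_{\infty,t,k}\Big(\mintone{C^{(\ell_+)}_0}\!\!e^{\rm mf}(J_\ga\star q)(r)\,dr\Big)-\sum_s(\varphi_k(s)+\la_{\beta,\ga})\,\mathbb E_{\infty,t,k}\Big(\frac{|q(s)\cap C^{(\ell_+)}_0|}{\ell_+^d}\Big)=\mathbb E^*_{\infty,t,k}(A).\]
For the boundary region $\La\setminus\La_{00}$, Theorem \ref{thm52.1} provides the $L^1$ bounds \eqref{5.11b} and \eqref{a5.11b}: after adding and subtracting $e^{\rm mf}(\rho^{(k)})$ (respectively $\rho^{(k)}_s$), both the stochastic fluctuation and the deterministic remainder $\int_{\mathbb R^d\setminus\La_{00}}[e^{\rm mf}(J_\ga\star\chi^{(k)}_{\La^c})-e^{\rm mf}(\rho^{(k)})]\,dr$ are bounded by $c|\La\setminus\La_{00}|$, which divided by $|\La_n|$ tends to zero along a van Hove sequence.

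The bounds above are uniform in $t\in[0,1]$ and the integrands $|\La_n|^{-1}\mathbb E_{\La_n,t}(H_\La-h)$ are themselves uniformly bounded in $t$ (via the a~priori moment estimates coming from the non-negativity of the interaction, cf.\ \eqref{z4.0.3}), so dominated convergence allows the interchange of $\lim_n$ and $\int_0^1 dt$, completing the argument. The main care is in the boundary bookkeeping: one must make sure that the subtraction of the reference self-energy and the small mismatch between $\la_{\beta,\ga}$ and $\la_\beta$ are absorbed into terms of order $|\La\setminus\La_{00}|$. The two genuinely hard inputs, the surface estimate of Theorem \ref{thm52.1} and the bulk exponential decay of Corollary \ref{thm:la00}, are already established in this section, so no further analytical obstacle arises.
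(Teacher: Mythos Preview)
Your proposal is correct and follows essentially the same route as the paper: start from the interpolation identity \eqref{aa2.53}, divide by $|\La|$, split into a bulk part over $\La_{00}$ and a boundary part, use Theorem \ref{thm:infgibbs} plus $\ell_+\mathbb Z^d$-invariance of $\mu_{\infty,t,k}$ on the bulk, and conclude via Theorems \ref{plabeta} and \ref{thm:press}.

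The one noteworthy difference is in how you handle the boundary term. You invoke Theorem \ref{thm52.1} (the refined $\ga^{1/4}$ estimate), which is more than is needed here: since we only want the boundary contribution divided by $|\La_n|$ to vanish in the van Hove limit, the crude pointwise bound $\sup_{q\in\mathcal X^{(k)}_\La}|J_\ga\star q_\La(r,s)|\le c$ coming directly from the constraint $\eta\equiv k$ (cf.\ \eqref{5.56b}) already gives an $O(|\La\setminus\La_{00}|)$ bound, and that is all the paper uses. Theorem \ref{thm52.1} is the hard analytical input reserved for the \emph{next} step (Corollary \ref{cor5.7} and the proof of \eqref{aa2.57}), where one needs an actual smallness prefactor in front of the surface volume, not merely $o(|\La|)$. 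Your appeal to \eqref{z4.0.3} for the uniform-in-$t$ bound justifying dominated convergence is slightly misplaced; the relevant uniform bound again comes straight from $q\in\mathcal X^{(k)}_\La$, not from the free-measure moment estimate.
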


\begin{proof}[{\bf Proof.}] We will get \eqref{eqn:infty} by dividing
\eqref{aa2.53} by $|\La|$ and letting $\La \to \infty$. We first
write (below we set $(\mathfrak{L}_{\ell_+}=\ell_+\mathbf{Z}^d$),
    \begin{equation*}
h(\rho_{\La}) + \la_{\beta,\ga}|q_{\La}| =
\ell_{+}^{d}\sum_{x\in\mathfrak{L}_{\ell_+}\cap\La}\sum_{s}
(\varphi_{k}(s)+\la_{\beta,\ga}) \frac{| q(s)\cap C_{x}^{(\ell_+)}
|}{\ell_{+}^{d}} =:
\ell_{+}^{d}\sum_{x\in\mathfrak{L}_{\ell_+}\cap\La} a_x(q_{\La}),
\end{equation*}
We write
        \begin{eqnarray}
            \nn
&&\hskip-3cm H_{\La} (q_\La| \chi_{\La^c}^{(k)} ) + \la_{\beta,\ga}
|q_\La| = \int_{(\La\cup\delta_{\rm
out}^{2\ga^{-1}}[\La])\setminus\La_{00}} e^{\rm mf}[J_{\ga}\star
q_\La\cup \chi_{\La^c}^{(k)}] - e^{\rm mf}[J_{\ga}\star
\chi_{\La^c}^{(k)}]
        \\&&
+\sum_{x\in\mathfrak{L}_{\ell_+}\cap\La_{00}}
\int_{C_{x}^{(\ell_+)}}  e^{\rm mf}[J_{\ga}\star q_\La]
    \label{5.54b}
        \end{eqnarray}
Taking expectation w.r.t $\mu_{\La,t,k}$ we get
\begin{eqnarray}
            \nn
&&\hskip-2cm
    \mathbb{E}_{\La,t,k}\Big(H_{\La} (q_\La| \chi_{\La^c}^{(k)} )
    -h(\rho_\La)\Big)=\int_{(\La\cup\delta_{\rm
out}^{2\ga^{-1}}[\La])\setminus\La_{00}}
\mathbb{E}_{\La,t,k}\Big(e^{\rm mf}[J_{\ga}\star q_\La\cup
\chi_{\La^c}^{(k)}] - e^{\rm mf}[J_{\ga}\star
\chi_{\La^c}^{(k)}]\Big)
\\&&+ \ell_+^d \sum_{x\in\mathfrak{L}_{\ell_+}\cap\La_{00}}
     \mathbb{E}_{\La,t,k}\Big(\mintone{C_{x}^{\ell_+}}
     e^{\rm mf}[J_\ga \star q]
- \sum_{s}(\varphi_{k}(s)+\la_{\beta,\ga}) \frac{| q(s)\cap
C_{x}^{\ell_+} |}{\ell_+^d}\Big)
    \label{5.55b}
        \end{eqnarray}
As in the proof of \eqref{5.45b} we have that
    \begin{equation}
    \label{5.56b}
\sup_{q_\La\in\mathcal{X}_\La^{(k)}} J_\ga\star q_\La(r,s)\le
\ga^d\|J\|_\infty
\,\ell_-^d(\rho^{(k)}_s+\zeta)\,\frac{\ell_+^d}{\ell_-^d}
    \end{equation}
Thus
    \begin{equation*}
\lim_{\La\to\mathbb{R}^d}\frac 1{|\La|}\int_{(\La\cup\delta_{\rm
out}^{2\ga^{-1}}[\La])\setminus\La_{00}}
\mathbb{E}_{\La,t,k}\Big(e^{\rm mf}[J_{\ga}\star q_\La\cup
\chi_{\La^c}^{(k)}] - e^{\rm mf}[J_{\ga}\star
\chi_{\La^c}^{(k)}]\Big)=0
    \end{equation*}
From Theorem \ref{thm:infgibbs}, using the invariance under
$\ell_+$-translation of the limiting measure $\mu_{\infty,t,k}$ and
noticing that $|\mathfrak{L}_{\ell_+}\cap\La|=|\La|/\ell_+^d$, we
get
    \begin{eqnarray}
 \lim_{\La\to\mathbb{R}^d}\frac 1{|\La|} \mathbb{E}_{\La,t,k}
 \Big(H_{\La} (q_\La| \chi_{\La^c}^{(k)} )
    -h(\rho_\La)\Big)= \mathbb{E}_{\infty,t,k}(A)
    \label{a5.55b}
        \end{eqnarray}
From Theorem \ref{plabeta}, \eqref{product} and \eqref{a5.55b},
\eqref{eqn:infty} follows.
\end{proof}

\vskip.1cm

    \begin{coro}
    \label{cor5.7}
We call
    \begin{equation}
        \label{aa5.10b}
\Psi(J_\ga\star q_\La;\chi^{(k)}_{\La^c})= \int_{\La^c}  \left(
e^{\rm mf}\big[J_\ga\star q_\La\cup \chi^{(k)}_{\La^c}\big] -e^{\rm
mf}\big[J_\ga\star\chi^{(k)}_{\La^c}\big]\right) -\int_\La e^{\rm
mf}\big[J_\ga\star\chi^{(k)}_{\La^c}\big]
        \end{equation}
Then the following holds.
\begin{equation}
    \label{5.58b}
\frac 1 \beta\log
Z^{(k)}_{\La}(\chi^{(k)}_{\La^c})=P_{\la_{\beta,\ga}}|\La|+ R_\La+
\mathbf{I}_k(\La^c) - \int_0^1[\mathcal{R}_1+\mathcal{R}_2 +
\mathcal{R}_3- \mathcal{R}_4- \mathcal{R}_5]
    \end{equation}
where $R_\La$ is defined in \eqref{a5.45b} and $\mathbf{I}_k$ in
\eqref{I.1.3},
    \begin{eqnarray*}
        &&\hskip-1cm
\mathcal{R}_1= \int_{\La_{00}}\Big\{\mathbb{E}_{\La,t,k} \big(
e^{\rm mf}\big(J_\ga\star q_\La  \big)- \mathbb{E}_{\infty,t,k}(
e^{\rm mf}\big(J_\ga\star q_\La \big)\Big\}dt
                \\&&\hskip-1cm
\mathcal{R}_2=\int_{\mathbb
R^d\setminus\La_{00}}\big[\mathbb{E}_{\La,t,k}\Big(\big[ e^{\rm
mf}\big(J_\ga\star q_\La\cup \chi^{(k)}_{\La^c} \big) -e^{\rm
mf}(\rho^{(k)})\big]\Big)\\&&\hskip1.6cm-
\mathbb{E}_{\infty,t,k}\Big(\big[ e^{\rm mf}\big(J_\ga\star
q_\La\cup \chi^{(k)}_{\La^c} \big) -e^{\rm mf}(\rho^{(k)})\big]\Big)
    \\&&\hskip-1cm
    \mathcal{R}_3=\mathbb{E}_{\La,t,k}
    \Big(\Psi(J_\ga\star q_\La;\chi^{(k)}_{\La^c})\Big)-\mathbf{I}_k(\La^c)
        \end{eqnarray*}
Finally, recalling  \eqref{2.8},
    \begin{eqnarray*}
        &&\hskip-1cm
\mathcal{R}_4=\sum_s[\varphi_{k}(s)+\la_{\beta,\ga}]\ga^{-d/2}
\sum_{x\in
\La_{00}\cap\mathfrak{L}_\ga}[\mathbb{E}_{\La,t,k}\big(\hat
J_\ga\star \rho_\La\big)-\mathbb{E}_{\infty,t,k}\big(\hat J_\ga\star
\rho_\La\big)]
    \\&&\hskip-1cm
\mathcal{R}_5=\sum_s[\varphi_{k}(s)+\la_{\beta,\ga}]\ga^{-d/2}
\sum_{x\in
\La\setminus\La_{00}\cap\mathfrak{L}_\ga}[\mathbb{E}_{\La,t,k}\big(\hat
J_\ga\star
(\rho_\La-\rho^{(k)})\big)-\mathbb{E}_{\infty,t,k}\big(\hat
J_\ga\star (\rho_\La-\rho^{(k)})\big)]
    \end{eqnarray*}
        \end{coro}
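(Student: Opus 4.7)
The identity in \eqref{5.58b} is obtained by a direct algebraic rearrangement combining the three main ingredients established above. First, I combine the interpolation formula \eqref{aa2.53} with Theorem~\ref{thm:press}, which gives $\beta^{-1}\log Z^{(k)}_{\La,0}=|\La|\,p_0^{(k)}+R_\La$, and with the pressure identity \eqref{eqn:infty} from Theorem~\ref{thm5.5}, which gives $p_0^{(k)}=P_{\la_{\beta,\ga}}+\int_0^1\mathbb{E}_{\infty,t,k}(A)\,dt$. Substituting yields
\begin{equation*}
\frac{1}{\beta}\log Z^{(k)}_\La(\chi^{(k)}_{\La^c})=P_{\la_{\beta,\ga}}|\La|+R_\La+\int_0^1\big[|\La|\,\mathbb{E}_{\infty,t,k}(A)-\mathbb{E}_{\La,t,k}(H_\La(q_\La|\chi^{(k)}_{\La^c})-h(\rho_\La))\big]\,dt,
\end{equation*}
so that it suffices to verify, for each $t\in[0,1]$, the pointwise identity $|\La|\,\mathbb{E}_{\infty,t,k}(A)-\mathbb{E}_{\La,t,k}(H_\La-h)=\mathbf{I}_k(\La^c)-(\mathcal{R}_1+\mathcal{R}_2+\mathcal{R}_3-\mathcal{R}_4-\mathcal{R}_5)$.

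Next I expand both sides of this pointwise identity. For $\mathbb{E}_{\La,t,k}(H_\La-h)$ I use the representation \eqref{5.54b}: since the outer-boundary integrand vanishes wherever $J_\ga\star q_\La=0$, the integral over $\delta_{\rm out}^{2\ga^{-1}}[\La]\setminus\La_{00}$ can be extended to all of $\La^c$ and then recognized, through the definition \eqref{aa5.10b} of $\Psi$, as $\Psi(J_\ga\star q_\La;\chi^{(k)}_{\La^c})$ plus the term $\int_\La e^{\rm mf}(J_\ga\star\chi^{(k)}_{\La^c})$, which vanishes on $\La_{00}$ and combines with the boundary-layer contribution to give $\int_{\La\setminus\La_{00}}\mathbb{E}_{\La,t,k}(e^{\rm mf}(J_\ga\star(q_\La\cup\chi^{(k)}_{\La^c})))$. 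For $|\La|\,\mathbb{E}_{\infty,t,k}(A)$ I invoke the $\ell_+\mathbb Z^d$-translation invariance of $\mu_{\infty,t,k}$ from Theorem~\ref{thm:infgibbs}, together with the $\mathcal D^{(\ell_+)}$-measurability of $\La$, to rewrite it as
\begin{equation*}
|\La|\,\mathbb{E}_{\infty,t,k}(A)=\int_\La\mathbb{E}_{\infty,t,k}\big(e^{\rm mf}(J_\ga\star q)\big)-\sum_s(\varphi_k(s)+\la_{\beta,\ga})\,\ga^{-d/2}\sum_{y\in\La\cap\mathfrak L_\ga}\mathbb{E}_{\infty,t,k}(\rho(y,s;q)).
\end{equation*}

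Subtracting the two decompositions and splitting every integral/sum over $\La$ into its $\La_{00}$ and $\La\setminus\La_{00}$ pieces, I recognize each contribution in turn. The bulk $e^{\rm mf}$ difference on $\La_{00}$ equals $-\mathcal{R}_1$, since for $r\in\La_{00}$ the distance to $\La^c$ exceeds $2\ga^{-1}$ and hence $J_\ga\star q(r)=J_\ga\star q_\La(r)$. The term $\mathbb{E}_{\La,t,k}(\Psi)$ gives $\mathbf{I}_k(\La^c)+\mathcal{R}_3$ by the very definition of $\mathcal{R}_3$. The boundary $e^{\rm mf}$ difference on $\La\setminus\La_{00}$ together with the $\int_{\La^c}$ piece (coming from the extension of $\Psi$) assemble into $-\mathcal{R}_2$, the constant $e^{\rm mf}(\rho^{(k)})$ appearing in the definition of $\mathcal{R}_2$ cancelling in the expectation difference. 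Finally, the chemical-potential contributions yield $\mathcal{R}_4+\mathcal{R}_5$ once $\rho(y,s;q)$ is replaced by $\hat J_\ga\star\rho(y,s;q)$; this substitution uses $\ga^{-d/2}\sum_y\hat J_\ga(y,z)=1$ and the translation invariance of $\mu_{\infty,t,k}$, and is exact on the infinite-volume side. Collecting signs gives the claim.

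The main (minor) obstacle is the bookkeeping in two places. First, the substitution $\rho_\La\mapsto\hat J_\ga\star\rho_\La$ on the finite-volume side is not exact but produces a boundary correction supported in $\delta_{\rm in}^{2\ga^{-1}}[\La]\subset\La\setminus\La_{00}$, which must be absorbed into the $\La\setminus\La_{00}$-sum defining $\mathcal{R}_5$. Second, matching the $\La^c$-part of $\mathcal{R}_2$ against the $\int_{\La^c}$ piece of $\mathbb{E}_{\La,t,k}(\Psi)$ and against $\mathbf{I}_k(\La^c)$ requires expanding $e^{\rm mf}(u+v)=e^{\rm mf}(u)+e^{\rm mf}(v)+\sum_{s\ne s'}u_s v_{s'}$ and carefully tracking the quadratic cross-terms between $\chi^{(k)}_{\La^c}$ and $q_\La$ near the $\La/\La^c$ interface.
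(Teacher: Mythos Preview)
Your proposal is correct and follows essentially the same route as the paper: combine \eqref{aa2.53}, \eqref{a5.45b} and \eqref{eqn:infty} to obtain \eqref{5.11bb}, then algebraically match the integrand against the definitions of $\mathcal R_1,\dots,\mathcal R_5$ after splitting $\La=\La_{00}\cup(\La\setminus\La_{00})$.

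The one place where the paper is cleaner than your sketch is the chemical-potential term. You treat the replacement $\rho_\La\mapsto\hat J_\ga\star\rho_\La$ on the finite-volume side as inexact, generating a boundary correction to be absorbed into $\mathcal R_5$. The paper instead extends $\rho_\La$ by zero to the full lattice and writes the \emph{exact} identity
\[
\la_{\beta,\ga}|q_\La|+h(\rho_\La)=\sum_s[\varphi_k(s)+\la_{\beta,\ga}]\,\ga^{-d/2}\sum_{x\in\mathfrak L_\ga}\hat J_\ga\star\rho_\La(x,s),
\]
which follows directly from $\hat J_\ga\star 1=1$ (symmetry of $\hat J_\ga$); only afterwards is the sum split into $\La_{00}$ and $\La\setminus\La_{00}$. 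Similarly, the $e^{\rm mf}$ part is packaged via \eqref{5.10b} as $\int_\La e^{\rm mf}(J_\ga\star[q_\La\cup\chi^{(k)}_{\La^c}])+\Psi$, so that the cross-terms you worry about in your second ``obstacle'' never need to be expanded explicitly: adding and subtracting $\mathbf I_k(\La^c)$ and $e^{\rm mf}(\rho^{(k)})$ yields $\mathcal R_2$ and $\mathcal R_3$ directly. Both shortcuts remove the bookkeeping you flag, but the underlying argument is the same.
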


{\bf Proof.} Coming back to \eqref{aa2.53} and using
\eqref{eqn:infty}  and \eqref{a5.45b} we get
        \begin{equation}
        \label{5.11bb}
\frac 1 \beta\log Z^{(k)}_{\La}(\chi^{(k)}_{\La^c}) =
P_{\la_{\beta,\ga}}|\La| + R_\La  - \int_0^1
\left[\mathbb{E}_{\La,t,k} \big(H_\La(q_\La|\chi^{(k)}_{\La^c})
-h(\rho_\La)\big)-|\La|
\mathbb{E}_{\infty,t,k}(A) \right]dt \\
            \end{equation}
Recalling \eqref{5.54b} and definition \eqref{aa5.10b}, we get
    \begin{equation}
        \label{5.10b}
H_{\La}(q_\La|\chi^{(k)}_{\La^c})+\la_{\beta,\ga}|q_\La|=\int_{\La}
e^{\rm mf}\big(J_\ga\star [q_\La \cup \chi^{(k)}_{\La^c}] \big) +
\Psi(J_\ga\star q_\La;\chi^{(k)}_{\La^c})
    \end{equation}
Thus adding and subtracting $\mathbf{I}_k(\La^c) $ we get that the
term corresponding to the energy $e^{\rm mf}$ in the integral on the
r.h.s. of \eqref{5.11bb} is given by
    \begin{eqnarray*}
&&\hskip-1cm \int_{\La}\mathbb{E}_{\La,t,k}(e^{\rm
mf}\big(J_\ga\star [q_\La \cup \chi^{(k)}_{\La^c}]\big)-
\mathbb{E}_{\infty,t,k}(e^{\rm mf}\big(J_\ga\star q\big) +
 \mathcal{R}_3+\mathbf{I}_k(\La^c)
 \\&&=\int_{\La_{00}}\big[\mathbb{E}_{\La,t,k}(e^{\rm
mf}\big(J_\ga\star q_\La \big)- \mathbb{E}_{\infty,t,k}(e^{\rm
mf}\big(J_\ga\star q_\La\big)\big] + \mathcal{R}_2+
 \mathcal{R}_3+\mathbf{I}_k(\La^c)
    \end{eqnarray*}
We have used ``back'' $\ell_+$-translation invariance of
$\mu_{\infty,t,k}$ and to get the term $\mathcal{R}_2$ we have added
and subtract $e^{\rm mf}(\rho^{(k)})$.

For the remaining terms in the expectation on the right hand side of
\eqref{5.11bb}, we consider  $\rho_\La=\rho(\cdot;q_\La))$ as a
function defined on the whole lattice $\mathfrak{L}_\ga$ by setting
$\rho_\La=0$ outside $\La$, so as to we write
    \begin{equation}
    \label{a5.12}
\la_{\beta,\ga}|q_\La|+h(\rho_\La)=\sum_s[\varphi_{k}(s)+\la_{\beta,\ga}]\ga^{-d/2}
\sum_{x\in \mathfrak{L}_\ga}\hat J_\ga\star \rho_\La
    \end{equation}
where we used that $\hat J_\ga\star 1=1$. We then split the
sum for $x\in \mathfrak{L}_\ga$ in a sum over $\La_{00}$
plus the sum over $\La\setminus\La_{00}$. In this last sum
we add and subtract $\rho^{(k)}$ thus getting
$\mathcal{R}_4$ and $\mathcal{R}_5$. \qed

\vskip1cm

From \eqref{5.58b} it follows that in order to conclude the proof of
Theorem \ref{thm:2.14}, namely of \eqref{aa2.57}, we need to show
that
            \begin{equation}
     \label{5.65}
\Big|\mathcal{R}_1+\mathcal{R}_2+\mathcal{R}_3-
\mathcal{R}_4- \mathcal{R}_5+R_\La\Big|\le c\ga^{1/4}
|\delta_{\rm in}^{\ell_+}(\La)|
    \end{equation}

\bigskip

{\bf Proof of \eqref{5.65}}. Since the number of cubes $C\in\mathcal
D^{\ell_-}$ that are in $\delta_{\rm{in}}^{\ga^{-1}}[\La]$ is
bounded by
    \begin{equation*}
\frac {\delta_{\rm{in}}^{\ell_+}[\La]}{\ell^d_+}\frac{\ell_+^{d-1}}
{(\ga^{-1})^{d-1}}\frac{(\ga^{-1})^d}{\ell_-^d}\le
c\,\delta_{\rm{in}}^{\ell_+}[\La]\,\ga^{(1-\alpha_-)d +\alpha_+}
    \end{equation*}
From \eqref{5.44b} we then get
    \begin{equation}
    \label{5.59b}
|\mathcal{R}_1|\le c |\delta_{\rm{in}}^{\ell_+}[\La]
|\ga^{(1-\alpha_-)d+\alpha_+}\int_{|r|\ge\ell^*_+}
e^{-\ga^{1+\alpha_+}\omega r}\le c \ga^{-\alpha_-d+\alpha_+}
|\delta_{\rm{in}}^{\ell_+}[\La]|e^{-\omega\ga^{-(\alpha^*_+-\alpha_+)}}
    \end{equation}
and also
 \begin{equation}
    \label{a5.59b}
|\mathcal{R}_4|\le c |\delta_{\rm{in}}^{\ell_+}[\La]
|\ga^{(1-\alpha_-)d+\alpha_+}\sum_{\substack{x \in
\mathfrak{L}_{\ell_+}\\|x|\ge\ell_+}} e^{-\ga^{1+\alpha_+}\omega
x}\le c \ga^{-\alpha_-d+\alpha_+}
|\delta_{\rm{in}}^{\ell_+}[\La]|e^{-\omega\ga^{-(\alpha^*_+-\alpha_+)}}
    \end{equation}

From \eqref{5.11b} we then get
\begin{equation}
|\mathcal{R}_2|\le \bar c\ga^{1/4}\,|\La\setminus\La_{00}|
\end{equation}
while \eqref{a5.11b} yields
\begin{equation}
|\mathcal{R}_5|\le \bar c\ga^{1/4}\,|\La\setminus\La_{00}|
\end{equation}

In order to estimate $\mathcal{R}_3$ we observe that it is equal to
the expectation of
\begin{align*}
\Psi(J_\ga\star q_\La;\chi^{(k)}_{\La^c}) - \mathbf{I}_k(\La^c) & =
\int_{\delta_{\rm out}^{\ell_+}[\La]}
\{e^{\rm {mf}}(J_\ga*[q_\La \cup \chi^{(k)}_{\La^c}])- e^{\rm {mf}}(\rho^{(k)})\} \\
& \quad +  \int_{\delta_{\rm out}^{\ell_+}[\La] \cup \delta_{\rm
in}^{\ell_+}[\La]} \{e^{\rm {mf.}}(\hat J_\ga*\chi^{(k)}_{\La^c})-
e^{\rm {mf}}( J_\ga*\chi^{(k)}_{\La^c} )\}
\end{align*}
By \eqref{5.11b}, the expectation of the first term is  bounded by
$c\ga^{1/4}|\La\setminus\La_{00} |$, while  \eqref{5.13b} shows that
the expectation of the second term is  bounded by
$c\ga^{1/2}|\La\setminus\La_{00} | $ so $|\mathcal{R}_3|\le
2c\ga^{1/4}\,|\La\setminus\La_{00}|$.\qed

\vskip2cm

\appendix

\setcounter{equation}{0}

%%%%%%%%%%%%%%%%%%%%%%%%%%%%%%%%%%%%%%%%%%%%%
\section{{\bf Bounds on the weights of the contours and on the
energies}}
    \label{app:bounds}
%%%%%%%%%%%%%%%%%%%%%%%%%%%%%%%%%%%%%%%%%%%%%

In this appendix we will prove lower and upper bounds on the weights
$\mathbf{X}^{(k)}_{\La,\la} ( q_{\La})$ defined in \eqref{I.4.3a},
see Lemma \ref{Lemma1} and Lemma \ref{Lemma2} below. These results
are quite general thus their proof is equal to the one for the LMP
model given in \cite{leipzig}. We also give bounds on the energy in
Lemma \ref{Lemma3} below.

The subsets of $\mathbf{R}^d$ that we consider here are all bounded
$\mathcal{D}^{(\ell_+)}$ measurable regions. We will often drop the
dependence on $\la$ and $q_{\La}$ when no ambiguity may arise, thus
calling $\mathbf{X}^{(k)}_{\La}=\mathbf{X}^{(k)}_{\La,\la} (
q_{\La})$.

 We
extend the definition \eqref{I.4.3a} by setting for $\La\subseteq
\La'$ and $N\in \mathbb N\cup \infty$,
      \begin{equation}
                \label{4.3}
X^{(k),N}_{\La;\La'}=\sum_{\und \Ga\in \mathcal B^k_{\La'}}
\prod_{\Ga\in \und \Ga} \text{\bf 1}_{N_\Ga\le N, {\rm sp}(\Ga)\cap
\La \ne \emptyset}\,\, W_\la^{(k)}(\Ga|q)
    \end{equation}
Observe that, since sp$(\Ga)\cap\La\ne \emptyset$ if and only if
sp$(\Ga)\cap\delta_{\rm{in}}^{(\ell_+)}[\La]\ne \emptyset$, then
$\mathbf{X}^{(k),\infty}_{\La;\La}=\mathbf{X}^{(k)}_{\La}$, hence
\eqref{4.3} indeed extends the definition \eqref{I.4.3a}.

\vskip.5cm

    \begin{lemma}
        \label{Lemma1} [Lower bounds]
For any $N\ge 0$
      \begin{equation}
         \label{4.4}
X^{(k)}_\La\ge X^{(k),N}_{\La,\La},\quad X^{(k),0}_{\La,\La}=1
    \end{equation}
       \begin{equation}
         \label{4.5}
X^{(k)}_{\La,\La'}\ge X^{(k)}_{\Delta,\La'},\qquad \Delta\subset
\La\subseteq \La'
    \end{equation}
       \begin{equation}
         \label{4.6}
 X^{(k)}_\La\ge X^{(k)}_{\La\setminus
\Delta}X^{(k)}_{\Delta}
    \end{equation}
    \end{lemma}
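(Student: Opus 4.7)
The three inequalities are pure monotonicity/factorization statements for a sum of non-negative quantities: by \eqref{I.4.3} we always have $W^{(k)}_\la(\Ga|q)\ge 0$, and since $W^{(k)}_\la(\und\Ga|q)=\prod_{\Ga\in\und\Ga}W^{(k)}_\la(\Ga|q)$, products of weights are also non-negative. Combined with straightforward combinatorial observations about the families $\mathcal B^k_\La$, this is all that is required. I would handle the three assertions in turn.

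For \eqref{4.4} I observe that every contour in any $\und\Ga\in\mathcal B^k_\La$ has spatial support contained in $\La\setminus\delta_{\rm in}^{\ell_+}[\La]\subset\La$, so the indicator ${\mathbf 1}_{{\rm sp}(\Ga)\cap\La\ne\emptyset}$ in \eqref{4.3} is automatic. Thus $X^{(k),N}_{\La,\La}$ is simply the restriction of the sum defining $X^{(k)}_\La$ to the sub-family $\{\und\Ga : N_\Ga\le N\text{ for every }\Ga\in\und\Ga\}$; dropping the discarded (non-negative) terms yields the inequality. For $N=0$, recall $N_\Ga=|{\rm sp}(\Ga)|/\ell_+^d\ge 1$ because ${\rm sp}(\Ga)$ is a non-empty union of $\mathcal D^{(\ell_+)}$-cubes, so only the empty sequence $\und\Ga=\emptyset$ survives, producing the empty product $1$. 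Assertion \eqref{4.5} is equally quick: if $\Delta\subset\La\subseteq\La'$, then any $\und\Ga\in\mathcal B^k_{\La'}$ every contour of which meets $\Delta$ automatically has every contour meeting $\La$, so the family indexing $X^{(k)}_{\Delta,\La'}$ is contained in that indexing $X^{(k)}_{\La,\La'}$, and non-negativity closes the argument.

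The only step that needs care is \eqref{4.6}. Here I would define the map
\[
(\und\Ga_1,\und\Ga_2)\in\mathcal B^k_{\La\setminus\Delta}\times\mathcal B^k_\Delta\ \longmapsto\ \und\Ga_1\cup\und\Ga_2,
\]
and verify it lands inside $\mathcal B^k_\La$. This amounts to two geometric checks: (i) each spatial support sits in $\La\setminus\delta_{\rm in}^{\ell_+}[\La]$, which follows from the inclusions $\delta_{\rm in}^{\ell_+}[\La]\cap\Delta\subseteq\delta_{\rm in}^{\ell_+}[\Delta]$ and $\delta_{\rm in}^{\ell_+}[\La]\cap(\La\setminus\Delta)\subseteq\delta_{\rm in}^{\ell_+}[\La\setminus\Delta]$ (any cube of $\Delta$ connected to $\La^c\subset\Delta^c$ is \emph{a fortiori} connected to $\Delta^c$, and analogously for $\La\setminus\Delta$); and (ii) the supports arising from $\und\Ga_1$ and from $\und\Ga_2$ are mutually disconnected, because the former are at $\mathcal D^{(\ell_+)}$-distance $\ge\ell_+$ from $\Delta$ (they avoid $\delta_{\rm in}^{\ell_+}[\La\setminus\Delta]$) and the latter at $\mathcal D^{(\ell_+)}$-distance $\ge\ell_+$ from $\La\setminus\Delta$. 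The color $k$ is preserved trivially. Once the map is well-defined, the definition $W^{(k)}_\la(\und\Ga|q_\La)=\prod_\Ga W^{(k)}_\la(\Ga|q_\La)$ gives the exact factorization
\[
W^{(k)}_\la(\und\Ga_1\cup\und\Ga_2|q_\La)=W^{(k)}_\la(\und\Ga_1|q_\La)\,W^{(k)}_\la(\und\Ga_2|q_\La),
\]
so summing over $(\und\Ga_1,\und\Ga_2)$ reproduces $X^{(k)}_{\La\setminus\Delta}\,X^{(k)}_\Delta$; since this is a sub-sum of the non-negative sum $X^{(k)}_\La$, the inequality \eqref{4.6} follows.

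The main (mild) obstacle is the bookkeeping in item (i) above, i.e.\ checking that the inner boundary operator behaves monotonically under passing from $\La$ to its pieces $\Delta$ and $\La\setminus\Delta$; once this is in place the rest is a non-negative counting argument. I do not foresee any analytic difficulty.
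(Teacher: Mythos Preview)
Your proposal is correct and follows the standard combinatorial argument that the paper defers to \cite{leipzig} (the paper's own ``proof'' is simply the citation ``See Lemma 11.1.1.1 in \cite{leipzig}''). The three points you identify---non-negativity of the weights via \eqref{I.4.3}, the sub-sum structure for \eqref{4.4}--\eqref{4.5}, and the injective union map $\mathcal B^k_{\La\setminus\Delta}\times\mathcal B^k_\Delta\to\mathcal B^k_\La$ for \eqref{4.6}---are exactly the ingredients of that argument, and your geometric checks (i)--(ii) on the inner-boundary operator are the right ones.
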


\vskip.5cm

{\bf Proof.} See Lemma 11.1.1.1 in \cite{leipzig}.\qed

\vskip.5cm

    \begin{lemma}
    \label{Lemma2} [Upper bounds]
$\mathbf{X}^{(k)}_{\La;\La'}$, $\La\subset\La'$, is a non decreasing
function of $\La'$, namely
            \begin{equation}
            \label{A.9}
\mathbf{X}^{(k)}_{\La;\La'}(q_{\La'})\le
\mathbf{X}^{(k)}_{\La;\La''}(q_{\La''}), \qquad
\La\subset\La'\subset\La'',\quad q_{\La''}\cap\La'=q_{\La'}
    \end{equation}
and for any $\Delta\subset\La\subseteq\La'\subseteq \La''$,
    \begin{equation}
            \label{A.10a}
\mathbf{X}^{(k)}_{\La;\La'}\le
\mathbf{X}^{(k)}_{\La\setminus\Delta;\La''}X^{(k)}_{\Delta;\La'}
    \end{equation}
Moreover there is a constant $b>0$ such that
      \begin{equation}
         \label{A4.7}
 \mathbf{X}^{(k)}_{\La;\mathbb R^d}
 \le \Big(1+ e^b e^{ -\beta c_w \,
(\zeta^2 \ell_-^d)\, 3^d }\Big)^{|\La|/\ell_+^d}
    \end{equation}
For any $\Delta\subset \La$
    \begin{equation}
         \label{A4.7y}
 \mathbf{X}^{(k)}_{\La}\le \mathbf{X}^{(k)}_{\La\setminus\Delta}
 \Big(1+ e^b e^{ -\beta c_w \,
(\zeta^2 \ell_-^d)\, 3^d }\Big)^{|\Delta|/\ell_+^d}
    \end{equation}
      \begin{equation}
         \label{A4.8}
 \mathbf{X}^{(k)}_{\La;\mathbb R^d}
 \le  \mathbf{X}^{(k)}_\La
 \Big(1+ e^b e^{ -\beta c_w \,
(\zeta^2 \ell_-^d)\, 3^d
}\Big)^{|\delta_{\rm{in}}^{(\ell_+)}[\La]|/\ell_+^d}
    \end{equation}
Finally,
      \begin{equation}
         \label{A4.9}
 \mathbf{X}^{(k)}_\La
\le  \mathbf{X}^{(k),N}_\La \Big(1+ [e^b e^{ -\beta c_w \, (\zeta^2
\ell_-^d)}]^ N\Big)^{|\La|/\ell_+^d}
    \end{equation}
\end{lemma}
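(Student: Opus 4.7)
The plan is to prove the four assertions in turn: \eqref{A.9} and \eqref{A.10a} are purely combinatorial rewritings of the definition \eqref{4.3}, while \eqref{A4.7}--\eqref{A4.9} are Peierls-type estimates that take the cut-off Peierls bound in \eqref{I.4.3} as input and are established by the standard Zahradn\'ik polymer-counting trick.

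First I would prove \eqref{A.9} by observing the inclusion $\mathcal B^k_{\La'} \subset \mathcal B^k_{\La''}$ of admissible collections for $\La'\subset\La''$, together with the fact that $W^{(k)}_\la(\Ga|q)$ depends on $q$ only inside a $2\ga^{-1}$-neighborhood of $c(\Ga)\subset \La'$, so that the compatibility assumption $q_{\La''}\cap\La'=q_{\La'}$ equates the summands common to both sides and the additional terms on the right are non-negative. For \eqref{A.10a} I would split each $\und\Ga$ contributing to $\mathbf{X}^{(k)}_{\La;\La'}$ into $\und\Ga_\Delta \sqcup \und\Ga_{\La\setminus\Delta}$ according to whether ${\rm sp}(\Ga)$ meets $\Delta$ or only $\La\setminus\Delta$; the first part contributes to $\mathbf{X}^{(k)}_{\Delta;\La'}$ and the second to $\mathbf{X}^{(k)}_{\La\setminus\Delta;\La'}\le \mathbf{X}^{(k)}_{\La\setminus\Delta;\La''}$ by \eqref{A.9}, and dropping the mutual-disconnectedness constraint between the two subfamilies (an upper bound by nonnegativity of the weights) yields the product.

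The exponential bounds \eqref{A4.7}, \eqref{A4.7y}, \eqref{A4.8} would be proved by the marked-cube trick: label each contour in $\und\Ga$ by the lexicographically smallest $\ell_+$-cube in ${\rm sp}(\Ga)$ and factor
\[
 \mathbf{X}^{(k)}_{\La;\mathbb{R}^d}\;\le\; \prod_{C\in\mathcal D^{(\ell_+)},\,C\subset\La}\Bigl(1+\sum_{N\ge 3^d}\sum_{\substack{\Ga\text{ marked at }C\\ N_\Ga=N}}W^{(k)}_\la(\Ga|q)\Bigr),
\]
where the lower bound $N_\Ga\ge 3^d$ comes from the definition of $\Theta$ (a single violation of $\eta=k$ forces $\Theta=0$ on the $3^d$-neighborhood of $\ell_+$-cubes around it). Inserting the Peierls bound from \eqref{I.4.3} into each weight and bounding the number of marked contours of size $N$ by $\kappa^N (S{+}2)^{(\ell_+/\ell_-)^d N}$ (a geometric constant for connected $\ell_+$-polyominoes of size $N$ rooted at $C$, times the number of admissible specifications $\eta_\Ga$ on the $\ell_-$-cubes of ${\rm sp}(\Ga)$), the inner series is bounded by $e^b\,e^{-\beta c_w(\zeta^2\ell_-^d)3^d}$ with $b$ independent of $\ga$, yielding \eqref{A4.7}. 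Inequality \eqref{A4.7y} follows from \eqref{A.10a} with $\La''=\La$ combined with the same estimate restricted to the cubes of $\Delta$; \eqref{A4.8} follows from the observation that those contours in $\mathbf{X}^{(k)}_{\La;\mathbb{R}^d}$ whose marked cube lies outside $\delta_{\rm in}^{\ell_+}[\La]$ are already counted in $\mathbf{X}^{(k)}_\La$, so the excess product ranges over only $|\delta_{\rm in}^{\ell_+}[\La]|/\ell_+^d$ cubes; and \eqref{A4.9} is the same product estimate but with the inner sum restricted to $N>\mathrm{N}$ (the contours absent from $\mathbf{X}^{(k),\mathrm{N}}_\La$), producing the factor $[e^b e^{-\beta c_w \zeta^2 \ell_-^d}]^{\mathrm{N}}$ per cube.

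The main obstacle---and really the only delicate point---is the absolute convergence of the polymer series uniformly in $\ga$: one needs $(S{+}2)^{(\ell_+/\ell_-)^d}e^{-\beta c_w \zeta^2\ell_-^d}\ll 1$, i.e., with $\zeta=\ga^a$ and $\ell_+,\ell_-$ as in Definition \ref{contorni3}, $\ga^{-d(\alpha_++\alpha_-)}\log(S{+}2)\ll \beta c_w\,\ga^{2a-d(1-\alpha_-)}$, which is precisely guaranteed by the scaling inequalities \eqref{aI.3.11}--\eqref{aI.3.11b}. Once this smallness is available the constant $b$ in the statement is explicit and the four product bounds follow as above.
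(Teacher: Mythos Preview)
Your proposal is correct and follows exactly the standard argument that the paper invokes: the paper's own proof consists solely of the citation ``See Lemma 11.1.1.2 of \cite{leipzig}'', and what you have outlined---the combinatorial inclusions for \eqref{A.9}--\eqref{A.10a} together with the rooted-polymer (marked-cube) counting driven by the cut-off Peierls bound \eqref{I.4.3} for \eqref{A4.7}--\eqref{A4.9}---is precisely that argument. One small imprecision: in your derivation of \eqref{A4.8} the relevant observation is not that the \emph{marked} cube lies outside $\delta_{\rm in}^{\ell_+}[\La]$, but rather that any contour contributing to $\mathbf X^{(k)}_{\La;\mathbb R^d}$ which is not already in $\mathcal B^k_\La$ must have ${\rm sp}(\Ga)$ intersecting $\delta_{\rm in}^{\ell_+}[\La]$ (connectedness forces this), so one applies \eqref{A.10a} with $\Delta=\delta_{\rm in}^{\ell_+}[\La]$ and then \eqref{A4.7} to the $\Delta$-factor.
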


\vskip.5cm

{\bf Proof.}  See Lemma 11.1.1.2 of \cite{leipzig} \qed

\vskip.5cm

We now give bounds on the energy.

 \vskip.5cm

    \begin{lemma}
    \label{Lemma3}
Let $\rho_{max}$ be as in \eqref{2.6.0}. There is $c'$ such that for
any $q_\La$ such that $ \rho^{(\ell_-)}(r,s;q_\La ) \le\rho_{max}$,
for all $r\in \La$, and $s\in\{1,\dots S\}$ and for any particle
configuration or density function $\bar q_{\La^c}$ such that $
\rho^{(\ell_-)}(r,s;q_{\La^c}) \le\rho_{max}$, $r\in \La^c$ (in
particular if $q_\La\in \mathcal X^{(k)}_\La $ and $\bar
q_{\La^c}\in \mathcal X^{(k)}_{\La^c} $ for some $k$),
      \begin{equation}
         \label{A.10}
 |   H_{\La,\la}(q_\La |\bar q_{\La^c})|\le c'|\La|,\;\;\text{ for all
$|\la-\la_\beta|\le 1$}
    \end{equation}
    If also $\bar q'_{\La^c}$ is such that
    $\rho^{(\ell_-)}(r,s;q'_{\La^c}) \le\rho_{max}$,
  $r\in \La^c$, then
      \begin{equation}
         \label{A.11}
 |H_{\La,\la}(q_\La |\bar q_{\La^c})
 - H_{\La,\la}(q_\La|\bar q'_{\La^c})|\le c'
 |\partial \La|\ga^{-1},\;\;\text{ for all
$\la$}
   \end{equation}
Finally for all $q_\La$, $\bar q_{\La^c}$ and all $\la$,
         \begin{equation}
         \label{A.12}
  H_{\La,\la}(q_\La|\bar q_{\La^c}) \ge \la|q_\La|
   \end{equation}
    \end{lemma}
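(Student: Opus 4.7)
\textbf{Proof plan for Lemma \ref{Lemma3}.}

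The plan is to exploit three structural features of the Hamiltonian \eqref{z1.6}--\eqref{1.3}: the chemical‐potential part is linear in $q_\La$, the interaction part is a sum of non–negative cross terms between distinct species, and the integrand of $H_{\La,\la}(q_\La\,|\,\bar q_{\La^c})$ is supported in a $\ga^{-1}$‐neighbourhood of $\La$. Introduce the shorthand $u(r,s)=J_\ga\star q_\La(r,s)$, $v(r,s)=J_\ga\star\bar q_{\La^c}(r,s)$ and $v'(r,s)=J_\ga\star\bar q'_{\La^c}(r,s)$. Expanding the quadratic part of $e^{\rm mf}_\la$ in \eqref{z1.5} gives the identity
\begin{equation*}
H_{\La,\la}(q_\La\,|\,\bar q_{\La^c})=\int\!\Big\{\tfrac12\!\sum_{s\ne s'}u_s u_{s'}+\sum_{s\ne s'}u_s v_{s'}\Big\}dr-\la\int\sum_s u_s\,dr,
\end{equation*}
and since $\int u_s(r)\,dr=|q_\La(s)|$, the last term equals $-\la|q_\La|$.

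For \eqref{A.12}, discard the two non–negative cross‐term integrals in the identity above; the remainder is exactly $-\la|q_\La|$, which proves the bound (interpreted with the sign dictated by the convention of \eqref{z1.5}).

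For \eqref{A.10}, the integrand of $H_{\La,\la}$ vanishes outside $\{r:\mathrm{dist}(r,\La)\le \ga^{-1}\}$, since $J_\ga$ has range $\ga^{-1}$ and the difference with $\bar q_{\La^c}$ alone is zero where $J_\ga\star q_\La=0$. The density bound $\rho^{(\ell_-)}(\cdot\,;\,\cdot)\le\rho_{\max}$ on both sides of $\partial\La$, combined with $\|J_\ga\|_\infty\le c\ga^d$, gives $u(r,s), v(r,s)\le c\rho_{\max}$ pointwise; hence the integrand is bounded by $c(\rho_{\max}^2+|\la|\rho_{\max})\le c''$ for $|\la-\la_\beta|\le 1$. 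Integrating over a $\ga^{-1}$–neighbourhood of $\La$ yields $|H_{\La,\la}|\le c''(|\La|+c|\partial\La|\ga^{-1})\le c'|\La|$, the last inequality holding because $\La$ is $\mathcal D^{(\ell_+)}$-measurable with $\ell_+\gg\ga^{-1}$ so the surface term is absorbed.

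For \eqref{A.11}, observe that the chemical–potential contribution $-\la\int\sum_s u_s\,dr=-\la|q_\La|$ is independent of the boundary condition, so it cancels in the difference—this is what makes \eqref{A.11} hold uniformly in $\la$. Using the identity above,
\begin{equation*}
H_{\La,\la}(q_\La\,|\,\bar q_{\La^c})-H_{\La,\la}(q_\La\,|\,\bar q'_{\La^c})=\int\sum_{s\ne s'}u_s(r)\bigl[v_{s'}(r)-v'_{s'}(r)\bigr]\,dr.
\end{equation*}
The integrand is nonzero only where both $u(r)\ne 0$ and $(v-v')(r)\ne 0$; the former forces $\mathrm{dist}(r,\La)\le\ga^{-1}$ and the latter forces $\mathrm{dist}(r,\La^c)\le\ga^{-1}$. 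Thus the integration domain lies in $\{r:\mathrm{dist}(r,\partial\La)\le \ga^{-1}\}$, which has Lebesgue measure $\le c|\partial\La|\ga^{-1}$. Bounding the integrand pointwise by $c\rho_{\max}^2$ (using the hypothesis on $\rho^{(\ell_-)}$ for $q_\La,\bar q_{\La^c}$ and $\bar q'_{\La^c}$) yields \eqref{A.11}, with a constant independent of $\la$.

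The only mildly subtle point is the cancellation of the $\la$–linear terms in the proof of \eqref{A.11}; all remaining estimates are straightforward consequences of the range of $J_\ga$ and the hypothesized uniform density bound.
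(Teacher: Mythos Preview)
Your proof is correct and follows essentially the same approach as the paper's: bounding $J_\ga\star q$ pointwise via the density hypothesis, localizing the integrand to a $\ga^{-1}$-neighbourhood of $\La$ (respectively of $\partial\La$) for \eqref{A.10} and \eqref{A.11}, and dropping the non-negative cross terms for \eqref{A.12}. The only cosmetic difference is that for \eqref{A.12} the paper invokes the pair representation \eqref{z1.9} directly, whereas you obtain the same conclusion by expanding the integral form; your explicit observation that the $\la$-linear term equals $-\la|q_\La|$ and cancels in \eqref{A.11} is exactly the mechanism the paper uses implicitly.
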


{\bf Proof.} First notice that for any $r,r'$
    \begin{equation}
         \label{A.13}
J_\ga(r,r') \le \|J\|_\infty \ga^d \text{\bf 1}_{{\rm
dist}(r,C_{r'}^{(\ga^{-1})})\le \ga^{-1}}
  \end{equation}
Fix $r\in\La$, since there are at most $3^d$ cubes in $\mathcal
D^{(\ga^{-1})}$ at distance $\le \ga^{-1}$ from $r$, we have
      \begin{equation}
         \label{9.6e.2.3}
 J_\ga*q(r,s) \le 3^d\|J\|_\infty \rho_{\max},\qquad \forall s
    \end{equation}
 Thus recalling \eqref{z2.7} we have that
    \begin{eqnarray*}
&&\hskip-1cm|H_{\La,\la}(q_\La |\bar q_{\La^c})| = \Big|\int_{\La
\cup \delta_{\rm out}^{\ga^{-1}}[\La]} \{e_\la^{\rm mf}(
J_\ga*(q_\La\cup\bar q_{\La^c}))-e_\la^{\rm mf}( J_\ga*\bar
q_{\La^c})\}\Big|
\\&&\hskip2cm \le c 3^d\|J\|_\infty \rho_{\max}\,|\La|
  \end{eqnarray*}
thus proving \eqref{A.10} with a constant $c$ independent of $\la$
 if  $|\la-\la_\beta|\le 1$. Analogously we have
    \begin{eqnarray*}
&& |H_{\La,\la}(q_\La |\bar q_{\La^c})
 - H_{\La,\la}(q_\La|\bar q'_{\La^c})|\le \int_{\delta_{\rm
in}^{\ga^{-1}}[\La] \cup \delta_{\rm out}^{\ga^{-1}}[\La]} \Big|
\{e^{\rm mf}( J_\ga*(q_\La\cup\bar q_{\La^c}))-e^{\rm mf}(
J_\ga*\bar q_{\La^c})\}
\\&&\hskip1cm -\{e^{\rm mf}(
J_\ga*(q_\La\cup\bar q'_{\La^c}))-e^{\rm mf}( J_\ga*\bar
q'_{\La^c})\} \Big| \le c'
 |\partial \La|\ga^{-1}
   \end{eqnarray*}
 Finally notice that   for
 $q_\La=(\dots, r_i,s_i,\dots)$ and $\bar q_\La=(\dots, \bar r_i,\bar
 s_i,\dots)$,
    we can write
    \begin{equation*}
      %\label
H_{\La,\la}(q_\La |\bar q_{\La^c}) = \frac 12 \sum_{i\ne j:
r_i,r_j\in \La} (J_\ga\star J_\ga)(r_i,r_j) \text{\bf 1}_{s_i\ne
s_j}+ \sum_{i: r_i\in \La} \sum_{ j: \bar r_j\in \La^c}(J_\ga\star
J_\ga)(r_i,\bar r_j) \text{\bf 1}_{s_i\ne \bar s_j}-\la|q_\La|\ge
-\la|q_\La|
     \end{equation*}
thus proving \eqref{A.12}. \qed.

\vskip1cm

%%%%%%%%%%%%%%%%%%%%%%%%%%%%%%%%%%%%%%%%%%%%%
\section{{\bf Thermodynamic pressures}}
    \label{sec:a2.10}
%%%%%%%%%%%%%%%%%%%%%%%%%%%%%%%%%%%%%%%%%%%%%

In this Appendix we  prove Theorem \ref{plabeta}.

\vskip.5cm

$\bullet$\; {\sl For any $\la\in [\la_\beta-1,\la_\beta+1]$ and
$k\in \{1,..,S+1\}$ there is $p(k,\la)$ such that for any   van Hove
sequence $\La_n\to \mathbb R^d$
 of $\mathcal D^{(\ell_+)}$-measurable regions and any sequence $\bar q^{(n)}_{\La_n^c}\in \mathcal X^{(k)}_{\La_n^c}$}
      \begin{eqnarray*}
  \lim_{n\to \infty} \frac{1}{\beta|\La_n|}
\;\log{Z^{(k)}_{\La_n,\la} (\bar q^{(n)}_{\La_n^c})}  = p(k,\la)
%     \label{p9.6e.0.3}
    \end{eqnarray*}

\vskip.5cm

The proof is the same as the analogous one for the LMP model in
Subsection 11.7 of \cite{leipzig}. The latter in fact is based on
bounds on the energy and on the weights of the contours which are
the same as those proved in Appendix \ref{app:bounds}.  Existence of
pressure when the  phase space  is non compact it is not an easy
problem in general, the simplifying feature in  the LMP   and the
Potts hamiltonian being the bound
 $H_{\La,\la}(q_\La|\bar q_{\La^c}) \ge -b|q_\La|$ uniform on the boundary conditions,
see  \eqref{A.12}, which in general cannot be
expected to hold.  In this way the problem is essentially reduced to the case
of compact spins. With  the bounds proved in
Appendix \ref{app:bounds} the  contours weights are also easily controlled, the
argument is standard in statistical mechanics.

\vskip1cm

$\bullet$\; {\sl  $p(k,\la)=p(1,\la)$, $k\in \{1,..,S\}$. }

\vskip.5cm

Let $\psi_k(q)$   be the configuration obtained from $q$ by interchanging
spin $1$ and spin $k$, leaving all the other spins and all the positions unchanged.  By the symmetry of
the hamiltonian  $H_{\La,\la}(\psi_k(q_{\La})|\psi_k(\bar
q_{\La^c}))=H_{\La,\la}(q_{\La}|\bar q_{\La^c})$ and the Jacobian
$d\nu(q_\La)/d\nu(\psi_k(q_\La))=1$.  Moreover
$\psi_k: \mathcal{X}_\La^{(1)}$ $\to \mathcal
X_\La^{(k)}$ one-to-one  and onto and
$\mathbf{X}^{(k)}_{\La,\la} ( q_{\La})=\mathbf{X}^{(1)}_{\La,\la} (
\psi_k(q_{\La}))$.  Then $Z^{(k)}_{\La_n,\la} (\psi_k(\bar q^{(n)}_{\La_n^c}))
=Z^{(1)}_{\La_n,\la} ( \bar q^{(n)}_{\La_n^c})$, hence the thesis as we have already proved
independence on the boundary conditions.

\vskip1cm

$\bullet$\; {\sl  $P_\la^{(\rm
ord)}:=p(1,\la)$ and $P_\la^{(\rm
disord)}:=p(S+1,\la)$
 are continuous functions of $\la$.}

\vskip.5cm
By the bounds in Appendix \ref{app:bounds}
we reduce to the same setup as in LMP and the proof becomes the same as in   Subsection 11.7.3 of \cite{leipzig}.
Notice that the dependence on $\la$ is explicit in the hamiltonian but also
implicit in the contours weights.  The dependence on the former is differentiable while the dependence of the
cutoff weights
on $\la$ is only proved to be continuous.  The whole argument is quite standard.

\vskip1cm

$\bullet$\; {\sl There are $c_0$  and $\eps$ positive and
$\rho^{(k)}(\la)=\{\rho_{s}^{(k)}(\la)\}$, $s\in \{1,..,S\}$, $k\in
\{1,..,S+1\}$, $|\la-\la_\beta|\le c_0\eps$, such that: for all $s$,
$\rho^{(S+1)}_s(\la)=\rho^{(S+1)}_1(\la)$; for $k\le S$ and $s\ne
k$, $\rho^{(k)}_s(\la) = \rho^{(1)}_2(\la) <
\rho^{(1)}_1(\la)=\rho^{(k)}_k(\la)$; $\rho^{(k)}(\la)$ are
differentiable in $\la$; $ F^{\rm mf}_{\la }(\cdot)$ has local
minima at  $\rho^{(k)}(\la)$ and}
     \begin{equation}
    \label{2.433}
\frac{d}{d\la}\Big (F^{\rm mf}_{\la
}(\rho ^{(1)}(\la))-F^{\rm mf}_{\la
}(\rho ^{(S+1)}(\la))\Big)\Big|_{\la=\la_\beta} = \sum_{s=1}^S\{\rho^{(S+1)}_s-
\rho^{(1)}_s\}<0
    \end{equation}

\vskip.5cm

This is proved in \cite{DMPV2}.

\vskip1cm

$\bullet$\; {\sl There are  $c'_1>0$ and for any $\kappa>0$ there is
$\ga_1>0$ such that
 for any $\ga\le \ga_1$,
any $k\in \{1,\dots ,S+1\}$  and any $\la$ such that
$|\la-\la_\beta|\le \kappa \ga^{1/2}$
    \begin{equation}
    \label{2.44}
|P_\la^{(\rm ord)}-p_{\rm mf,\la}^{({\rm ord})}|\le
c'_1\ga^{1/2},\quad k=1,\dots S,\qquad |P_\la^{(\rm disord)}-p_{\rm
mf,\la}^{({\rm disord})}|\le c'_1\ga^{1/2}
    \end{equation}
where $p_{\rm
mf,\la}^{({\rm disord})}=-F^{\rm mf}_{\la
}(\rho ^{(S+1)}(\la))$ and
$p_{\rm
mf,\la}^{({\rm ord})}=-F^{\rm mf}_{\la
}(\rho ^{(1)}(\la))$.}

\vskip.5cm

By \eqref{4.4} and \eqref{A4.7}
            \begin{equation*}
      \label{4.122}
 \hat  Z_{\La,\la}(\mathcal X^{(k)}_\La|\chi^{(k)}_{\La^c}) \le    Z^{(k)}_{\La,\la}(\chi^{(k)}_{\La^c}) \le 2^{|\La|/\ell_+^d}
  \hat  Z_{\La,\la}(\mathcal X^{(k)}_\La|\chi^{(k)}_{\La^c})
     \end{equation*}
By
Theorem \ref{propA.2}
            \begin{equation*}
%      \label{4.122}
\Big| \log
 \hat  Z_{\La,\la}(\mathcal X^{(k)}_\La|\chi^{(k)}_{\La^c}) +\beta\inf_{\rho_\La: \eta(\rho_\La;r)=k, r\in \La} F^*_{\La,\la}(\rho_{\La}|\chi^{(k)}_{\La^c})\Big|\le
c \ga^{1/2} |\La|
     \end{equation*}
Postponing the proof that $\dis{ F^*_{\La,\la}(\rho_{\La}|\chi^{(k)}_{\La^c})
\ge F^*_{\La,\la}(\chi^{(k)}_{\La}|\chi^{(k)}_{\La^c})}$ we get
            \begin{equation*}
%      \label{4.122}
\frac 1 { |\La|} \Big| \log Z^{(k)}_{\La,\la}(\chi^{(k)}_{\La^c}) +\beta F^*_{\La,\la}(\chi^{(k)}_{\La}|\chi^{(k)}_{\La^c})\Big|
\le
c \ga^{1/2}+ \ell_+^{-d} \log 2
     \end{equation*}
Choose  $\La$ as a cube of side $L$,
then $|F^*_{\La,\la}(\chi^{(k)}_{\La}|\chi^{(k)}_{\La^c})-|\La| F^{\rm mf}_{\la
}(\rho ^{(k)}(\la))| \le c\ga^{-1}L^{d-1}$ and \eqref{2.44}
follows letting $L\to \infty$.  It thus remains to prove that
for any $|\la-\la_\beta|\le \kappa \ga^{1/2}$ and $\ga$ small enough,
  $\dis{ F^*_{\La,\la}(\rho_{\La}|\chi^{(k)}_{\La^c})
\ge F^*_{\La,\la}(\chi^{(k)}_{\La}|\chi^{(k)}_{\La^c})}$.
The proof is taken from Proposition 11.1.4.1 in \cite{leipzig}.

 Call $\rho$ the
function equal to $\rho_\La$ on $\La$ and to $\chi^{(k)}_{\La^c}$ on
${\La^c}$.  Since
 $\mathcal S(0)=0$,
$\mathcal S(\chi^{(k)}_{\La^c})=\mathcal
S(\chi^{(k)}_{\La^c})\text{\bf 1}_{\La^c}$,
       \begin{eqnarray*}
&&  F^*_{\La,\la}(\rho_\La|\chi^{(k)}_{\La^c})= \int_{\mathbb
R^d} \{F^{\rm mf}_{\la}(\hat J_\ga
* \rho)- F^{\rm mf}_{\la}(\hat J_\ga*\chi^{(k)}_{\La^c})\} +\frac 1
\beta \{ {\mathcal S(\hat J_\ga*\rho)} - \mathcal S(\rho)\}\nn\\&&
\hskip3cm -\frac 1 \beta \{ \mathcal S(\hat
J_\ga*\chi^{(k)}_{\La^c}) -\mathcal S(\chi^{(k)}_{\La^c})\}
     \end{eqnarray*}
We can write the integral of the sum as the sum of the integrals and
in the integral with $\{ {\mathcal S(\hat J_\ga*\rho)} - \mathcal
S(\rho)\}$ we can replace $\mathcal S(\rho)$ by $\hat J_\ga*\mathcal
S(\rho)$. Then $F^*_{\La,\la}(\rho_\La|\chi^{(k)}_{\La^c})$ becomes
       \begin{eqnarray*}
&& \int_{\mathbb R^d} \{F^{\rm mf}_{\la}(\hat J_\ga
* \rho)- F^{\rm mf}_{\la}(\hat J_\ga*\chi^{(k)}_{\La^c})\} +\frac 1
\beta \int_{\mathbb R^d} \{ {\mathcal S(\hat J_\ga*\rho)} - \hat
J_\ga* \mathcal S(\rho)\}\nn\\&& \hskip3cm -\frac 1 \beta
\int_{\mathbb R^d}\{ \mathcal S(\hat J_\ga*\chi^{(k)}_{\La^c})
-\mathcal S(\chi^{(k)}_{\La^c})\}
     \end{eqnarray*}
Since  $\eta(\rho_\La;\cdot)\equiv k$, for all $\ga$ small enough the
first curly bracket is minimized by setting $\rho_\La=
\chi^{(k)}_{\La}$; the second curly bracket by
convexity is  non negative and vanishes when  $\rho_\La=
\chi^{(k)}_{\La}$; the third one is independent of
$\rho_\La$, hence  $\dis{ F^*_{\La,\la}(\rho_{\La}|\chi^{(k)}_{\La^c})
\ge F^*_{\La,\la}(\chi^{(k)}_{\La}|\chi^{(k)}_{\La^c})}$.

\vskip1cm

The proof of \eqref{p2.30} follows because:
$P_{\la_{\beta,\ga}}^{(\rm ord)}-P_{\la_{\beta,\ga}}^{(\rm disord)}$
is continuous and there is $c>0$ such that for all $\ga$ small
enough
      \begin{equation}
         \label{tB.1}
 P_\la^{(\rm ord)}- P_\la^{(\rm disord)} \begin{cases} <0
 &\text{if $\la=\la_\beta - c  \ga^{1/2} $}\\>0
 &\text{if $\la=\la_\beta + c   \ga^{1/2}$} \end{cases}
    \end{equation}
\eqref{tB.1} holds because: $|\{P_\la^{(\rm ord)}- P_\la^{(\rm
disord)}\}-\{p_{\rm mf,\la}^{({\rm ord})}-p_{\rm mf,\la}^{({\rm
disord})}\} |\le 2c'_1\ga^{1/2}$.  By \eqref{2.433} and the
smoothness of $\{p_{\rm mf,\la}^{({\rm ord})}-p_{\rm mf,\la}^{({\rm
disord})}\}$, there is $a>0$ such that for any $\kappa$ and all
$\ga$ correspondingly small,
      \begin{equation*}
 \{p_{\rm mf,\la}^{({\rm ord})}-p_{\rm mf,\la}^{({\rm disord})}\} \ge a (\la-\la_\beta),\;\; \la \in [\la_\beta,\la_\beta+\kappa \ga^{1/2}]
    \end{equation*}
 Hence $ P_\la^{(\rm ord)}- P_\la^{(\rm disord)}\ge (a\kappa-c'_1)
 \ga^{1/2}$.

\vskip1cm

{\bf ACKNOWLEDGMENTS}  One of us (YV) acknowledges very kind
hospitality at the Math. Depts. of  Roma TorVergata and L'Aquila,
partially supported by PRIN  $2004028108/005$  and $2004028108/008$,
GREFI-MEFI (GDRE 224 CNRS-INdAM), CPT (UMR 6207) and Universit\'e de
la M\'editerran\'ee. I.M. acknowledges partial support of the GNFM
young researchers project ``Statistical mechanics of multicomponent
systems".

\bigskip

\bibliographystyle{amsalpha}

\end{document}